\newcommand{\Interest}{Interesting\ }
\newcommand{\bsni}{\bigskip\noindent}
\newcommand\restr[2]{{
  \left.\kern-\nulldelimiterspace 
  #1 
  \vphantom{\big|} 
  \right|_{#2} 
  }}
\newcommand{\eps}{\varepsilon}
\def\moverlay{\mathpalette\mov@rlay}
\def\mov@rlay#1#2{\leavevmode\vtop{%
   \baselineskip\z@skip \lineskiplimit-\maxdimen
   \ialign{\hfil$\m@th#1##$\hfil\cr#2\crcr}}}
\newcommand{\charfusion}[3][\mathord]{
    #1{\ifx#1\mathop\vphantom{#2}\fi
        \mathpalette\mov@rlay{#2\cr#3}
      }
    \ifx#1\mathop\expandafter\displaylimits\fi}
\newcommand{\poly}{\mathrm{poly}}
\newcommand{\floor}[1]{\lfloor #1 \rfloor}
\newcommand{\size}[1]{\mathrm{size}}
\newcommand{\set}[2][ ]{\{#2 \ifthenelse{\equal{#1}{ }}{ }{~|~#1}\}}
\newcommand{\comment}[1]{}
\newcommand{\seepage}[2][See]{
    \marginnote{
        \scriptsize {#1} p.~\pageref{#2}
    }
}
\newcommand{\reuse}[1]{
	\expandafter\stepcounter{#1_help}
    \expandafter\label{#1_app}
    \csname#1\endcsname*
}
\crefname{claim}{claim}{claims}
\newcommandtwoopt\Textbox[5][0cm][1cm]{%
\begin{tikzpicture}[remember picture,overlay]
  \coordinate (aux) at ([xshift=#1]#4);
  \node[inner ysep=3pt,yshift=0.6ex,draw=green!50!black,thick,
    fit=(#3) (aux),baseline] 
    (box) {};
  \node[text width=#2,anchor=north east,
    font=\sffamily\footnotesize,align=right] 
    at (box.north east) {#5};
\end{tikzpicture}%
}
\newtheorem*{theorem*}{Theorem}
\newcommand{\remove}[1]{}
\newcommand{\dw}[1]{#1^{\downarrow}}
\newcommand{\ShowComment}{false}
\newcommand{\mtodo}[1]{\textcolor{blue}{[\textbf{Michal}: #1]}}
\newcommand{\mtodo}[1]{}
\newcommand{\potintnum}{O(\log n)}
\newcommand{\info}{\mathsf{info}}
\newcommand{\lca}[1]{\text{LCA}(#1)}
\newcommand{\danupon}[1]{{\color{red} DANUPON: #1}}
\newcommand{\danupon}[1]{}
\renewcommand{\paragraph}[1]{\medskip\noindent{\bf #1}\xspace}
\newcommand{\cov}{\mathsf{Cov}}
\newcommand{\extcov}{\mathsf{Cov}^\mathsf{extr}}
\newcommand{\ECov}{\mathsf{CovSet}}
\newcommand{\covs}{\mathsf{Cov}_{\textsf{samp}}}
\newcommand{\ECovs}{\mathsf{CovSet}_{\textsf{samp}}}
\newcommand{\cut}{\mathsf{Cut}}
\newcommand{\ecut}{\mathsf{CutSet}}
\newcommand{\dfrag}{D_{\sf frag}}
\newcommand{\nfrag}{N_{\sf frag}}
\newcommand{\sfrag}{S_{\sf frag}}
\newcommand{\interesting}{\mathsf{Int}}
\newcommand{\expt}[2]{\underset{#1}{\mathbb{E}}\left[#2\right]}
\newcommand{\sagnik}[1]{{\bf \color{green!50!blue} [SAGNIK: #1]}}
\newcommand{\sagnik}[1]{}
\newcommand{\yuval}[1]{{\bf \color{red!50!black} YUVAL: #1}}
\newcommand{\yuval}[1]{}
\newcommand{\high}{P_{H_1}}
\newcommand{\way}{P_{H_2}}
\newcommand{\intpot}[1]{\mathsf{Int}_{\textsf{pot}}(#1)}
\renewcommand{\cT}{{\mathcal{T}}}
\title{Distributed Weighted Min-Cut in Nearly-Optimal Time}
\author[1]{Michal Dory\thanks{\texttt{smichald@cs.technion.ac.il}.}}
\author[2]{Yuval Efron\thanks{\texttt{efronyuv@gmail.com.} Most of this work was done while the author was affiliated with the Technion.}}
\author[3]{Sagnik Mukhopadhyay\thanks{\texttt{sagnik@kth.se}}}
\author[3]{Danupon Nanongkai\thanks{\texttt{danupon@kth.se}}}
\affil[1]{Technion, Israel}
\affil[2]{University of Toronto, Canada}
\affil[3]{KTH Royal Institute of Technology, Sweden}
\date{}
\begin{document}

\begin{titlepage}
	\maketitle
	\pagenumbering{roman}


\begin{abstract}
Minimum-weight cut (min-cut) is a basic measure of a network's connectivity strength. While the min-cut can be computed efficiently in the sequential setting [Karger STOC'96], there was no efficient way for a distributed network to compute its own min-cut without limiting the input structure or dropping the output quality: In the standard CONGEST model, existing algorithms with nearly-optimal time (e.g. [Ghaffari, Kuhn, DISC'13; Nanongkai, Su, DISC'14]) can guarantee a solution that is $(1+\epsilon)$-approximation at best while the exact $\tilde O(n^{0.8}D^{0.2} + n^{0.9})$-time algorithm [Ghaffari, Nowicki, Thorup, SODA'20] works only on {\em simple} networks (no weights and no parallel edges).\footnote{Throughout, $n$ and $D$ denote the network's number of vertices and hop-diameter, respectively.} For the weighted case, the best bound was $\tilde O(n)$ [Daga, Henzinger, Nanongkai, Saranurak, STOC'19]. 
	
In this paper, we provide an {\em exact} $\tilde O(\sqrt n + D)$-time algorithm for computing min-cut on {\em weighted} networks. Our result improves even the previous algorithm that works only on simple networks. Its time complexity matches the known lower bound up to polylogarithmic factors. 
At the heart of our algorithm are a clever routing trick and two structural lemmas regarding the structure of a minimum cut of a graph. These two structural lemmas considerably strengthen and generalize the framework of Mukhopadhyay-Nanongkai [STOC'20] and can be of independent interest.
\end{abstract}

	\newpage
	\setcounter{secnumdepth}{2}
	\setcounter{tocdepth}{2}
	\tableofcontents
\end{titlepage}

\newpage
\pagenumbering{arabic}

\section{Introduction}\label{sec:intro}

\paragraph{Min-cut.} Minimum cut (min-cut) is a 
basic mathematical concept that is of great importance from the network design perspective as it captures the connectivity of the network. 
%
Given a graph with $n$ vertices and $m$ (possibly weighted) edges, a cut is a set of edges removing which disconnects the graph, and the weight of the cut is the total weight of the edges participating in the cut.
%
%
%

In the sequential setting, a long line of work spanning over many decades since the 1950s \cite{EliasFS56,FordF87} was concluded by the STOC'95 $O(m\log^3 n)$-time randomized algorithm of Karger \cite{Karger00} (see \cite{mukhopadhyay2019weighted,GawrychowskiMW19} for recent improvements). 





\paragraph{Distributed min-cut.}
Efficient sequential algorithms, however, do not necessarily lead to an efficient way for a {\em distributed network} to compute the min-cut. The question of how a distributed network can compute its own min-cut has been actively studied in the CONGEST model of distributed networks (e.g. \cite{PritchardT11,Ghaffari-Kuhn,Nanongkai-Su,GhaffariH16,GhaffariN18,DagaHNS19,Parter19-smallcut,Ghaffari0T20}). 
%
%
In this model, a network is represented by an $n$-vertex unweighted graph of diameter $D$. Each edge $e$ is associated with weight $w(e)\in \set{1, 2, \ldots, \poly(n)}$ that does not affect the communication. In each communication round each vertex sends a message of $O(\log n)$ bits to each of its neighbors which arrives at the end of the round. The goal is to minimize the number of rounds to compute the value of the min-cut or to make every vertex realize which edges incident to it are in the min-cut (our and previous results can achieve both so we do not distinguish the two objectives in the discussion below).
Throughout, we use $\tO$ and $\tilde \Omega$ to hide $\poly\log(n)$ factors.

\paragraph{Previous works.}
%
For many graph problems in the CONGEST model such as minimum cut, minimum spanning tree, and single-source shortest paths, the ultimate goal is the $\tO(\sqrt n + D)$ round complexity. 
This is mainly because \cite{Das-Sharma} showed in STOC'11 an $\tilde{\Omega}(\sqrt n + D)$ lower bound for a number of fundamental graph problems, which holds even for $\poly(n)$-approximation algorithms (also see \cite{ElkinKNP14,KorKP13,Elkin06,PelegR00,Ghaffari-Kuhn}).
%
Since the work of \cite{Das-Sharma}, a lot of effort has been put on to match that lower bound by devising efficient algorithms for all of these problems, and for many of these problems, near-optimal upper bounds have been achieved (e.g. \cite{HenzingerKN-STOC16,BeckerKKL16,Nanongkai-STOC14,Nanongkai-Su,Ghaffari-Kuhn,GhaffariKKLP15}). For min-cut, the first algorithm towards this goal was by Ghaffari and Kuhn \cite{Ghaffari-Kuhn} which $(2+\epsilon)$-approximates the min-cut in  $\tO(\sqrt n + D)$ rounds. 
The approximation ratio was subsequently improved to $(1+\epsilon)$ by Nanongkai and Su \cite{Nanongkai-Su}. Obtaining efficient algorithms for the \emph{exact} case remained wide open.
%
%

Towards designing an efficient distributed algorithm for \textit{exact} min-cut, Daga, Henzinger, Nanongkai, and Saranurak~\cite{DagaHNS19} in STOC'19 gave the first algorithm that is {\em sublinear-time} ($\tilde O(n^{1-\epsilon}+D)$-time for some constant $\epsilon>0$). Their algorithm works on simple networks and guarantees $\tO(n^{1 - 1/353}D^{1/353} + n^{1 - 1/706})$ round complexity. This bound was recently improved in SODA'20 by Ghaffari, Nowicki and Thorup~\cite{Ghaffari0T20} to $\tO(n^{0.8}D^{0.2} + n^{0.9})$.\footnote{Prior to Daga-Henzinger-Nanongkai-Saranurak~\cite{DagaHNS19}, $O(D)$ bound was shown for finding min-cuts of values at most two by Prichard and Thurimella \cite{PritchardT11} 
%
and, later, $\tO(\sqrt n + D)$ bound was shown for finding min-cuts of values $O(\poly\log(n))$ by Nanongkai and Su \cite{Nanongkai-Su}. Parter \cite{Parter19-smallcut} recently improved the round complexity to $poly(D)$ when the min-cut has value $O(1)$, 
answering some open problems in \cite{DagaHNS19}. Additionally, distributed min-cut has been considered on fully-connected networks (congested clique) by Ghaffari and Nowicki \cite{GhaffariN18}.}
We emphasize that the algorithms of Daga-Henzinger-Nanongkai-Saranurak~\cite{DagaHNS19} and Ghaffari-Nowicki-Thorup~\cite{Ghaffari0T20} crucially exploit the fact that the network is {\em simple}, i.e. it is an unweighted graph without parallel edges. It is very unclear how to extend their techniques to work on even unweighted graphs with parallel edges. 
%
%
For exact min-cut on {\em weighted graphs}, the only known upper bound is an $\tO(n)$ one which follows from Daga-Henzinger-Nanongkai-Saranurak~(see Theorem~5.1, \cite{DagaHNS19}). To conclude, it was widely open whether exact min-cut on weighted graphs can be computed in sublinear time in $n$, and even in the simpler case of simple graphs, there was still a wide gap of at least $n^{0.4}$ between known upper and lower bounds. 

\setlength{\tabcolsep}{18pt}
\renewcommand{\arraystretch}{1.4}

\begin{table}[]
\centering
    \begin{tabular}{ |p{1.1cm}| c | c | c | }
 \hline
 \textbf{Authors} & \textbf{Variant} & \textbf{Approximation} & \textbf{Complexity}\\
 \hline
 \cite{Das-Sharma} & weighted & $n^c, c > 0$ & $\tilde{\Omega}(D + \sqrt n)$\\
 \cite{Ghaffari-Kuhn} & weighted & $(2+\epsilon)$ & $\tO(D + \sqrt n)$\\
  \cite{Nanongkai-Su} & weighted & $(1+\epsilon)$ & $\tO(D + \sqrt n)$\\
 \cite{DagaHNS19} & unweighted, simple & exact & $\tilde O(n^{1 - 1/353}D^{1/353}+ n^{1 - 1/706})$\\
 \cite{DagaHNS19} & weighted & exact & $\tilde O(n)$\\
 \cite{Ghaffari0T20} & unweighted, simple & exact & $\tilde O(n^{0.8}D^{0.2} + n^{0.9})$\\
 Here & weighted & exact & $\tilde O(D + \sqrt n)$\\
 \hline
\end{tabular}
\caption{\small Our results for distributed min-cut and comparison with other works. The $\tO(\cdot)$ notation hides polylogarithmic factors in $n$.} \label{tab:results-seq}
\end{table}

\paragraph{Our results.}
We present a randomized distributed algorithm that essentially resolves the distributed weighted exact min-cut problem (naturally, we also improve the upper bound of Ghaffari-Nowicki-Thorup~\cite{Ghaffari0T20} for simple graphs):



\begin{theorem}\label{thm:main}
In the CONGEST model, a min-cut of a (possible weighted) graph with $n$ vertices and with diameter $D$
can be found with high probability in $\tO(\sqrt n + D)$ rounds.\footnote{With high probability (w.h.p.) means with probability at least $1-1/n^c$ for an arbitrary constant $c$. ``Finding the min-cut'' refers to the standard definition where after the algorithm finishes every vertex knows the min-cut value and for every edge $\{u,v\}$ both $u$ and $v$ know whether edge $\{u,v\}$ is in the min-cut or not.}
\end{theorem}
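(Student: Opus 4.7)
The plan is to follow the Karger/Thorup framework that reduces weighted min-cut to the minimum $2$-respecting cut problem, and to implement both stages in $\tilde O(\sqrt n + D)$ CONGEST rounds. First, using Thorup's tree-packing approximation, I would compute a collection of $O(\log^2 n)$ spanning trees such that with high probability some tree in the packing $2$-respects (i.e.\ shares at most two edges with) the min-cut. Each tree can be built by a variant of Kutten--Peleg distributed MST on a reweighted graph in $\tilde O(\sqrt n + D)$ rounds, so the whole packing costs $\tilde O(\sqrt n + D)$. After this, it suffices to solve, for each packing tree $T$, the minimum $2$-respecting cut problem in that time and return the minimum over all trees together with the $1$-respecting minimum.

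For a fixed tree $T$, I would root it at a designated vertex, broadcast the structure of $T$ along a BFS tree so every node learns ancestor/path information, and first compute the $1$-respecting cut weights $w(e\downarrow)$ (non-tree edges separated by each $e\in T$) via a standard subtree-sum sweep in $O(D)$ rounds. The hard part is the $2$-respecting cuts $\{e,f\}$: the weight $\cost(e,f)$ admits the classical additive decomposition depending on whether $e$ and $f$ lie on a common root-to-leaf path or are incomparable, and one must find its minimum over all $\binom{n-1}{2}$ pairs without ever sending $n$ messages over a single edge.

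To control congestion I would decompose $T$ into a collection of root-to-leaf paths via a combination of the \emph{bough} decomposition of MN and a \emph{fragment} decomposition of the style of Dory--Ghaffari, tuned so that each fragment has diameter $\tilde O(\sqrt n)$ and the auxiliary tree over fragments has $\tilde O(\sqrt n)$ fragments. On each path the classical MN procedure computes, for every $e$ on the path, the best partner $f$ via a divide-and-conquer that exploits the \emph{monotonicity} of the optimal partner as one sweeps along the path. The main obstacle is that the naive implementation of this divide-and-conquer routes $\Theta(n)$ messages along the same edges of $G$; to bypass this I would invoke the two structural lemmas advertised in the introduction. The first generalizes the MN structural theorem and bounds, for each path, the set of ``interesting'' candidate partners to $\tilde O(\sqrt n)$ edges, so that the per-path workload fits the bandwidth budget; the second strengthens monotonicity to argue that in the divide-and-conquer the sets of candidates handled by disjoint sub-instances are themselves nearly disjoint, so the total communication over any edge of $G$ is $\tilde O(\sqrt n + D)$.

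Putting the pieces together, each of the $O(\log^2 n)$ packing trees is processed in $\tilde O(\sqrt n + D)$ rounds: $O(D)$ rounds to broadcast the decomposition and compute $1$-respecting values, then the combined bough/fragment decomposition plus the two structural lemmas (applied via the ``routing trick'') to compute all $2$-respecting minima within the same bound. Finally, a global aggregation along the BFS tree returns the overall minimum cut value and informs each edge whether it belongs to a min-cut. Correctness is w.h.p.\ from the tree-packing guarantee, and the round complexity is $\tilde O(\sqrt n + D)$, matching the $\tilde\Omega(\sqrt n + D)$ lower bound of Das Sarma et al.\ up to polylog factors.
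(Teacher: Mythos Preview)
Your high-level skeleton (tree packing $\to$ minimum $2$-respecting cut $\to$ fragment$+$bough decomposition $\to$ structural lemmas $+$ monotonicity) matches the paper, but several steps as you state them would not go through, and you are missing two of the three ingredients the paper actually relies on.

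First, two concrete errors. You propose to ``broadcast the structure of $T$ along a BFS tree'' and to compute the $1$-respecting values ``via a standard subtree-sum sweep in $O(D)$ rounds.'' Neither works: $T$ has $n-1$ edges, so broadcasting it costs $\Omega(n)$; and a subtree sum is over $T$, whose depth may be $\Theta(n)$, not $D$. The paper never broadcasts $T$; instead it computes LCA labels and cover values $\cov(e)$ using the fragment decomposition and an aggregate-routing lemma (Lemma~\ref{lemma:routing_lemma_aggregate}), both in $\tilde O(\sqrt n+D)$.

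Second, your description of the first structural lemma is too weak to close the argument. You say it ``bounds, for each path, the set of interesting candidate partners to $\tilde O(\sqrt n)$ edges.'' The paper's \emph{interesting path counting lemma} is much sharper: a path is potentially interested in only $O(\log n)$ other \emph{paths} (among those that are non-splittable and pairwise orthogonal with respect to it). This polylog bound, not a $\sqrt n$ bound, is what makes the per-fragment congestion $\tilde O(\sqrt n)$ rather than $\tilde O(n)$ when you sum over all paths. With only a $\tilde O(\sqrt n)$-edge bound per path and $\Omega(\sqrt n)$ paths, you would still be routing $\tilde\Omega(n)$ messages.

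Third, you omit the \emph{short-paths routing trick}, which is how two interested paths $P,P'$ of length $O(\sqrt n)$ actually exchange information. The paper case-splits: if there is a non-tree edge $f$ between $T(P^\downarrow)$ and $T(P'^\downarrow)$ (guaranteed whenever $P$ is potentially interested in $P'$), route $O(\sqrt n)$ words across $f$ and finish by local aggregate computations; if there is no such edge then $\cov_F(e,e')=0$ and the minimum factorizes, so $O(1)$ broadcast words per pair suffice. Without this trick you have no mechanism to compare paths within the bandwidth budget.

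Finally, the highway--highway case needs an extra idea you do not mention. Even with the interesting-path lemma, a super-highway may be matched against many others. The paper proves a \emph{pairing theorem} (Theorem~\ref{thrm:pairing_up_highways}) producing a set $R$ of super-highway pairs such that each fragment highway is \emph{active} in only $\poly\log n$ pairs; the divide-and-conquer of Claim~\ref{clm:two-long-highway} then runs in time proportional to the number of active fragments, which sums to $\tilde O(\sqrt n)$ over all pairs. Your second structural lemma (``disjoint sub-instances are nearly disjoint'') is the monotonicity-based \emph{path-partitioning} lemma, which handles the short-path-vs-long-path case; it does not by itself control the highway--highway global broadcast cost.
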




At the heart of our algorithm is an algorithm for the {\em minimum 2-respecting cut problem}. In this problem, we are given a spanning tree $T$ of the graph $G$, and the goal is to find a minimum cut in $G$ which contains at most two edges from $T$ (such a cut is called a \textit{2-respecting} cut). The seminal work of Karger \cite{Karger00} showed that the min-cut problem can be reduced to solving the min 2-respecting cut problem, this reduction also holds in the CONGEST model (see e.g. \cite{DagaHNS19}). This approach led to efficient algorithms for min-cut in various settings \cite{Karger00,DagaHNS19,Nanongkai-Su,Thorup07,mukhopadhyay2019weighted,GawrychowskiMW19,geissmann2018parallel}. 
In the distributed setting,  Nanongkai and Su \cite{Nanongkai-Su} could solve in $\tO(\sqrt n + D)$ rounds the easier {\em minimum 1-respecting cut problem}, where the goal is to find the min-cut that contains one tree edge, leading to their $(1+\epsilon)$-approximation result. 
%
The minimum 2-respecting cut problem, however, turns out to be much more challenging to solve efficiently. Recently, Daga-Henzinger-Nanongkai-Saranurak~\cite{DagaHNS19} devised an algorithm to solve this in $\tO(n)$ rounds---this was one of the main ingredients for  Daga-Henzinger-Nanongkai-Saranurak~\cite{DagaHNS19} and Ghaffari-Nowicki-Thorup~\cite{Ghaffari0T20} for obtaining the aforementioned sublinear time exact algorithm on simple graphs. 
Our main technical contribution is an efficient distributed algorithm for the minimum $2$-respecting cut problem which is mentioned in the following theorem.

\begin{theorem}\label{thm:intro:2-respect}
In the CONGEST model, the $2$-respecting cut problem can be solved with high probability in $\tO(\sqrt n + D)$ rounds. 
\end{theorem}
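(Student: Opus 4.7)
The high-level plan is to reduce, via Karger's tree-packing, the min-cut problem to finding the minimum 2-respecting cut of $O(\log n)$ spanning trees with high probability; this reduction is already known to run in $\tilde O(\sqrt n + D)$ CONGEST rounds (Daga--Henzinger--Nanongkai--Saranurak), so it suffices to solve one instance of the 2-respecting cut problem in that time for a fixed spanning tree $T$ of $G$. I would first build a BFS tree of the network for global aggregation, and then decompose $T$ simultaneously in two ways: a bough-style decomposition of $T$ into a collection of root-to-leaf paths (as in Mukhopadhyay--Nanongkai), and a Dory--Ghaffari-style fragment decomposition into $\tilde O(\sqrt n)$ connected subtrees of size $\tilde O(\sqrt n)$ each. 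The fragment decomposition serves as a routing backbone: heavy per-subtree local computation happens inside a fragment, while lighter coordination between fragments travels over the BFS tree.

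For every tree edge $e$ I would first compute the 1-respecting cost $\cov(e)$ (the total weight of non-tree edges whose fundamental cycle in $T$ contains $e$) by subtree aggregation within each fragment and then accumulation along the fragment-contracted tree through the BFS tree; this is doable in $\tilde O(\sqrt n + D)$ rounds. For a 2-respecting candidate $(e_1,e_2)$ the cost splits into the familiar formula $\cov(e_1)+\cov(e_2)-2 X(e_1,e_2)$, where $X(e_1,e_2)$ is an interaction term whose form depends on whether $e_1$ and $e_2$ are in ancestor--descendant relation (the ``comparable'' case) or lie in disjoint subtrees (the ``incomparable'' case). In the comparable case I would process each bough path by a divide-and-conquer that exploits the monotonicity of $X$ along the path; a path of length $\ell$ can be handled with $\tilde O(\ell + \sqrt n)$ communication by using the fragments to route intermediate subtree sums and pushing aggregates toward the path's ``root end''.

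For the incomparable case I would invoke the two structural lemmas announced in the introduction to drastically reduce the set of pairs that can realize a minimum: rather than the naive $\Theta(n^2)$ candidate pairs, only $\tilde O(n)$ pairs, distributed ``nicely'' across the decomposition, remain relevant, and for each bough path the set of interesting partner-paths should be identifiable with $\tilde O(\sqrt n)$ additional communication. The clever routing trick then schedules the resulting interaction-value queries concurrently through the BFS tree so that no BFS edge carries more than $\tilde O(1)$ tokens per round. Summing up, the total round count telescopes to $\tilde O(\sqrt n + D)$.

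The principal obstacle is the incomparable case: even $\tilde O(n)$ naive queries for $X(e_1,e_2)$ would blow the budget if they all funnel through a single edge, so the entire efficiency hinges on (i) structural lemmas that cap the number of ``interesting'' partner-pairs per path, and (ii) a bandwidth-aware routing that spreads the queries across the network. Establishing both, and showing that the intra-fragment local computations respect the $O(\log n)$-bit-per-edge-per-round CONGEST constraint, is where the bulk of the technical effort lies; the comparable case, the 1-respecting subproblem, and the reductions are comparatively straightforward adaptations of existing techniques.
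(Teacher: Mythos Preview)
Your outline has the right skeleton---combine fragment and bough decompositions, compute all $\cov(e)$, then use the interesting-path counting and path-partitioning lemmas together with a routing trick to control the pairwise terms---and you correctly identify the incomparable case as the bottleneck. But two of your concrete mechanisms are not the ones that actually work, and without the right versions the argument does not close.

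First, the routing trick. You describe it as scheduling $\tilde O(n)$ interaction queries through the BFS tree so that each BFS edge carries $\tilde O(1)$ tokens per round; there is no general way to do that. The paper's trick is different in kind: whenever a short path $P'$ is potentially interested in a short path $P$ in another fragment, the very definition of ``interested'' forces a \emph{direct graph edge} $f$ between $T(P'^\downarrow)$ and the fragment of $P$. That single edge $f$ is used to ship $O(\sqrt n)$ words between the two sides, after which the $\cut(e',e)$ values are computed by purely local aggregate computations inside one fragment. When no such $f$ exists, one has $\cov_F(e',e)=0$ for all relevant pairs and the minimization decouples into two independent one-sided problems plus $O(1)$ global broadcast. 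This win--win (direct edge, or structurally trivial) is the routing trick; BFS token-scheduling is not a substitute.

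Second, your comparable case says a bough of length $\ell$ is handled in $\tilde O(\ell+\sqrt n)$. A bough can have length $\Theta(n)$ (e.g.\ when $T$ is a path), so this is fatal as stated. The paper never recurses on individual tree edges; it recurses on \emph{fragment highways}, i.e.\ the $O(\sqrt n)$ edges of the skeleton tree, and each level of the divide-and-conquer costs $\tilde O(\sqrt n + D)$ because the atomic comparison is always between two length-$O(\sqrt n)$ subpaths. The path-partitioning lemma is exactly what makes ``short path vs.\ long super-highway'' cost $\tilde O(\sqrt n + D)$ rather than $\tilde O(\ell)$; it is essential, not an optimization. Relatedly, the paper's case split is non-highway/non-highway, non-highway/highway, highway/highway (the last needing a separate \emph{pairing theorem} to bound how many super-highway pairs each fragment highway participates in), rather than comparable/incomparable; this organization is what lets the fragment-local tools apply cleanly.
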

Our result builds on the framework of Mukhopadhyay and Nanongkai \cite{mukhopadhyay2019weighted}. In \cite{mukhopadhyay2019weighted}, efficient minimum cut algorithms are presented in the sequential, streaming, and query settings. These algorithms follow the same framework for solving the $2$-respecting cut problem. Implementing this framework in the distributed setting is however much more challenging, due to the locality and edge congestion. 
The key insights that allow us to overcome these challenges consist of one trick and two structural lemmas: 
\begin{itemize}[noitemsep]
    \item {\em Interesting path counting lemma:} Given a spanning tree $T$    decomposed into paths $P_1, P_2, \ldots$ (using some tree decomposition techniques), this lemma identifies a small ($\text{poly}\log(n)$) number of paths that each $P_i$ is {\em interested in}; here, the notion of ``interested in'' is defined in such a way that the minimum 2-respecting cut corresponds to two edges in two paths $P$ and $P'$ that are interested in each other.
    %
   %
   This lemma allows us to reduce the 2-respecting cut problem into the same problem over the pairs of paths in $T$ that are interested in each other (where we are allowed to cut only edges in the given pair of paths). 
    It strengthens and helps simplify the framework of \cite{mukhopadhyay2019weighted}. 
    %
    It provides a novel observation on the structure of the minimum cut in a given weighted graph, and we hope that it has applications beyond the scope of this work.
    
    \item  {\em Short-paths routing trick:} 
    This is a basic building block used in many subroutines of our algorithm. 
    Given two paths $P$ and $P'$ of length $k$ in $T$, this trick implies that either there is some edge nearby these paths that they can use to communicate  $\tilde O(k)$ bits of information to each other, or they can simply broadcast a small message to the whole network to find their min 2-respecting cut. This trick allows us to solve the 2-respecting cut problem on the interesting pairs of short paths (in particular, when $k=\tilde O(\sqrt{n})$). It exploits some simple property of cuts that might be of independent interest. 

    \item {\em Path-partitioning lemma:} 
    This lemma together with a divide-and-conquer technique allows us to solve the 2-respecting cut problem on the interesting pairs of long paths.
    Roughly, the lemma states that to solve the problem above for path $P$ and $P'$, we can partition $P$ into sub-paths $P_1, P_2, \ldots$ and $P'$ into sub-paths $P'_1, P'_2, \ldots$ so that we only have to solve the same problem on each pair of paths $(P_i,P'_i)$.
    This lemma about the structure of the 2-respecting cut problem might be of independent interest. To prove it, we exploit the {\em monotonicity property} observed in \cite{mukhopadhyay2019weighted} in a new way. 
\end{itemize}

\noindent
We describe these key insights more in the next section.

\section{Overview of the algorithm}\label{sec:overview}


As discussed above, our goal is to solve the min 2-respecting cut problem, i.e., given a tree $T$ find two tree
edges $e,e'$ such that the cut obtained by removing them from the tree is minimal. We denote by $\cut(e,e')$
the cut value defined by $e,e'$.
It is easy to show that if we fix two edges $e,e'$, we can compute $\cut(e,e')$ in $O(D)$ rounds by a simple computation over a BFS tree, where we sum the costs of all edges that cross the cut.\footnote{We use the tool of lowest common ancestors (LCA) labels to identify which edges cross the cut.} However, to find the two edges that define the minimum 2-respecting cut, we may need to go over all possible $\Omega(n^2)$ pairs of tree edges. This requires $\Omega(n^2)$ time, which is clearly too expensive. To get a faster algorithm, our general approach is as follows:

\begin{enumerate}
\item Use \emph{strucutral properties of cuts} to bound the number of values $\cut(e,e')$ we should compute.\label{bound_pairs}
\item Use a \emph{routing trick} to efficiently route information in the graph, and avoid using global communication over a BFS tree for all the computations.\label{bound_global} 
\end{enumerate} 

Next, we elaborate on our approach. We start by discussing helpful notation. Then, we discuss a simplified version of our algorithm for a \emph{spider graph} that already allows us to present some of the main ingredients of the algorithm. Finally, we discuss additional ingredients required for extending the algorithm to a general graph. The main goal of this section is to give an overview of the algorithm which is, on one hand, informal so that the reader can understand the main techniques and tools that are used; and, on the other hand, detailed enough so that the reader can convince herself of the correctness of the algorithm before progressing to the dry technicalities of the later sections.

\subsection{Basic notation: Cover values}

For a tree edge $e$, we denote by $\cov(e)$ the value of the cut obtained by removing $e$ from the tree (see Figure \ref{cover_pic}), this is the sum of costs of all edges that cross the cut. The notation implies that the edges crossing the cut \emph{cover} $e$. Note that these are exactly the edges $\{u,v\}$ such that $e$ is in the unique tree path between $u$ and $v$.
For two tree edges $e,e'$, we denote by $\cov(e,e')$ the sum of costs of all edges that cover both $e$ and $e'$ (see Figure \ref{cover_pic}). Note that the edges that cross the cut defined by $e,e'$ are exactly all edges that cover exactly one of $e,e'$. This immediately gives the following claim (see Section \ref{sec:cover} for a formal proof).

\setlength{\intextsep}{2pt}
\begin{figure}[h]
\centering
\setlength{\abovecaptionskip}{-2pt}
\setlength{\belowcaptionskip}{6pt}
\includegraphics[scale=0.5]{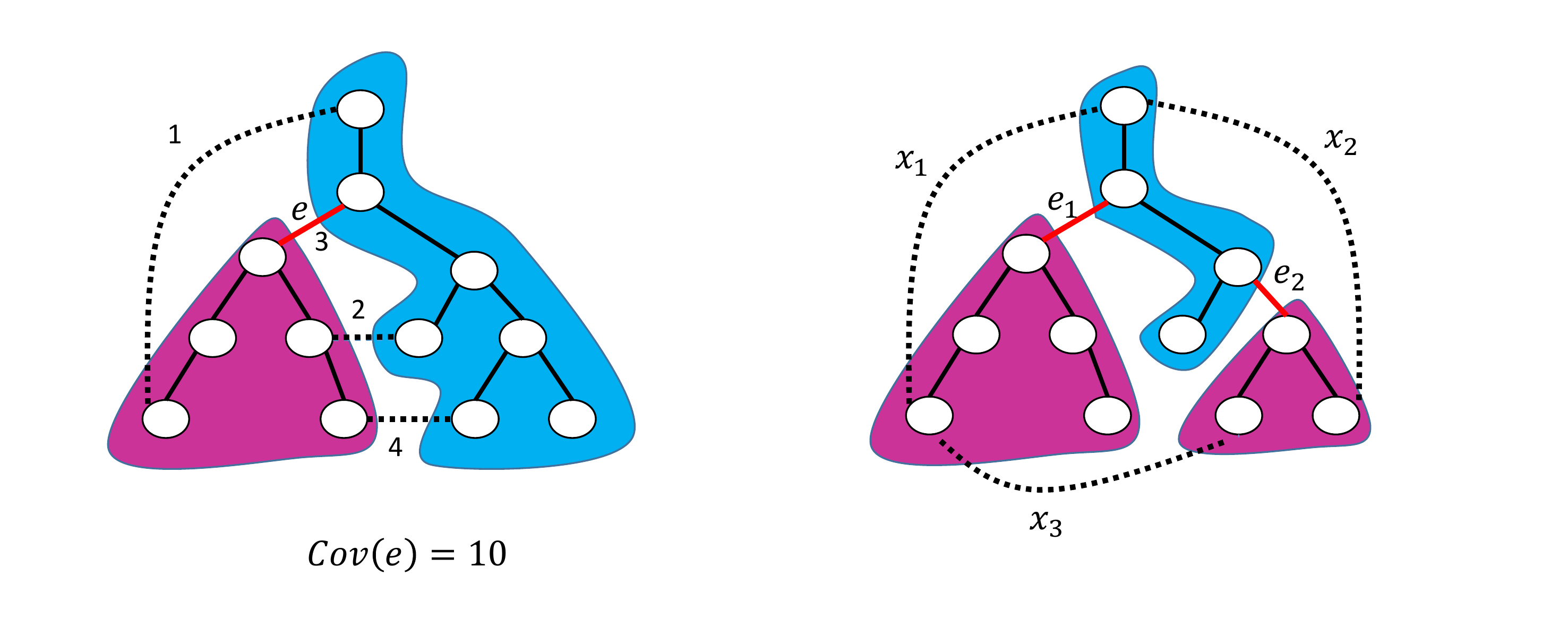}
 \caption{\small Illustration of cover values. Solid edges are tree edges, where dotted edges are non-tree edges. The purple-blue cut on the left is the 1-respecting cut defined by $e$. The purple-blue cut on the right is the 2-respecting cut defined by $e_1,e_2$. The edge $x_1$ is an example of an edge that covers $e_1$ but not $e_2$, the edge $x_2$ is an example of an edge that covers $e_2$ but not $e_1$ and the edge $x_3$ is an example of an edge that covers both $e_1,e_2$. The edges that cross the purple-blue cut are the edges that cover exactly one of $e_1,e_2$.
 }
\label{cover_pic}
\end{figure}

\begin{restatable}{claim}{claimCov}\label{claim_cover}
$\cut(e,e')=\cov(e)+\cov(e')-2 \cov(e,e').$
\end{restatable}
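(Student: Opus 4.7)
The plan is to verify the identity by a simple inclusion--exclusion argument over the edges of $G$, classified by which of $e, e'$ they cover.

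First, I would fix the two tree edges $e, e'$ and partition the edge set of $G$ (other than $e, e'$ themselves) into four groups according to the pattern of coverage: edges covering only $e$, edges covering only $e'$, edges covering both, and edges covering neither. Let $W_e$, $W_{e'}$, $W_{ee'}$, $W_0$ denote the total weights of edges in these four groups respectively. Then by the definitions given just before the claim,
\[
\cov(e) = W_e + W_{ee'}, \qquad \cov(e') = W_{e'} + W_{ee'}, \qquad \cov(e,e') = W_{ee'}.
\]

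Next I would identify $\cut(e,e')$ in the same terms. The statement in the excerpt already observes that the edges crossing the cut defined by removing $e$ and $e'$ are exactly those edges that cover \emph{exactly one} of $e, e'$; this is the only nontrivial step, and it is the main (mild) obstacle. To see it, note that removing $e$ and $e'$ from $T$ splits $T$ into three subtrees; a non-tree edge $\{u,v\}$ crosses the cut iff $u$ and $v$ lie in different subtrees, which happens iff the unique $u$--$v$ path in $T$ traverses exactly one of $e, e'$, i.e. iff $\{u,v\}$ covers exactly one of them. (Tree edges other than $e, e'$ cover neither of them by definition and never cross the cut.) Therefore
\[
\cut(e,e') = W_e + W_{e'}.
\]

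Finally, combining the two displays,
\[
\cov(e) + \cov(e') - 2\cov(e,e') = (W_e + W_{ee'}) + (W_{e'} + W_{ee'}) - 2 W_{ee'} = W_e + W_{e'} = \cut(e,e'),
\]
which is the desired identity. The argument is essentially bookkeeping; the only content is the characterization of the crossing edges, which follows directly from the tree structure and the definition of $\cov$.
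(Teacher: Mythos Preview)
Your approach is essentially the same as the paper's: both reduce the identity to showing that the edges crossing the cut are exactly those covering exactly one of $e,e'$ (equivalently, $\ecut(e,e')=\ECov(e)\triangle\ECov(e')$), after which the algebra is immediate.

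However, your justification of that key step contains a gap. You write that removing $e,e'$ from $T$ yields three subtrees, and that an edge crosses the cut iff its endpoints lie in different subtrees, which in turn happens iff its tree path uses exactly one of $e,e'$. Both of these individual equivalences are false. The $2$-respecting cut is still a \emph{bipartition}: two of the three subtrees lie on the same side. For instance, if $e,e'$ are orthogonal with subtrees $T_v,T_{v'}$ below them, the cut is $(T_v\cup T_{v'})$ versus the rest; an edge from $T_v$ to $T_{v'}$ has endpoints in different subtrees and its tree path uses \emph{both} $e$ and $e'$, yet it does not cross the cut. The correct argument (as in the paper) is direct: an edge crosses the bipartition iff its tree path crosses it an odd number of times, and since the only tree edges in the cut are $e$ and $e'$, this means the path contains exactly one of them.

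A minor point: drop the parenthetical ``other than $e,e'$ themselves.'' Since $e$ covers exactly $e$, it belongs in the $W_e$ group (and likewise $e'$ in $W_{e'}$); including them is what makes your displayed formulas $\cov(e)=W_e+W_{ee'}$ and $\cut(e,e')=W_e+W_{e'}$ correct as written.
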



Computing the values $\cov(e)$ can be done in $\tilde{O}(D+\sqrt{n})$ time using standard techniques  such that each tree edge $e$ knows the value $\cov(e)$ (such computations are done, for example, in \cite{Nanongkai-Su}). Hence, the question of efficiently computing $\cut(e,e')$ boils down to the question of efficiently computing the cover value $\cov(e,e')$ due to Claim \ref{claim_cover}.
To this end, note that when we fix two edges $e,e'$, we can compute $\cov(e,e')$ in $O(D)$ rounds by summing the costs of edges that cover $e$ and $e'$ where the communication happens over a BFS tree. The main challenge is to bound the number of such computations needed, and to be able to \emph{parallelize} the computations by diving them into disjoint sets of local computations in order to avoid high congestion over the global BFS tree. Next, we discuss the ingredients allowing us to do so.   


\subsection{Simple example: Spider graph}

We start by discussing a simplified version of our algorithm where the tree $T$ is a spider with the following structure: $T$ has a root $r$ and attached to it are $k=\sqrt{n}$ paths of length $\sqrt{n}$. We refer to these paths as the \emph{legs} of the spider.

\subsubsection*{(I) Short-paths routing trick for comparing two paths}

The first observation is that if we fix two paths $P,P'$ of the spider and want to find the minimum 2-respecting cut with one edge in $P$ and one edge in $P'$, we can do so in just $O(\sqrt{n})$ time, although we have $n$ different pairs of edges.

\begin{claim}(Routing trick for spider graph) \label{spider_routing}
Fix two legs of the spider $P,P'$. Finding the values $\{e,e',\cut(e,e')\}$ for a pair of edges $e \in P, e' \in P'$ that minimize the cut value, takes $O(\sqrt{n})$ time. 
\end{claim}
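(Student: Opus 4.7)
The plan is to use Claim \ref{claim_cover} to reduce the task to computing the joint cover values $\cov(e_i,e'_j)$, and then to show that for a spider these can be assembled in $O(\sqrt{n})$ rounds by routing through the shared root $r$ of the two legs. Index $P$ from root to leaf as $u_0=r,u_1,\dots,u_k$ with $e_i=(u_{i-1},u_i)$, and similarly $v_0=r,v_1,\dots,v_k$ for $P'$. As already noted in the excerpt, all single-edge covers $L_i:=\cov(e_i)$ and $R_j:=\cov(e'_j)$ can be precomputed in $\tilde O(D+\sqrt{n})=\tilde O(\sqrt{n})$ rounds, so the only non-trivial quantities are $\cov(e_i,e'_j)$.

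The key structural observation for a spider is that any non-tree edge $(x,y)$ covering both $e_i\in P$ and $e'_j\in P'$ must have $x\in P$ and $y\in P'$: the unique tree path between them passes through $r$, which is the only vertex separating the two legs. Letting $W(a,b)$ denote the total weight of non-tree edges between $u_a$ and $v_b$, we get the clean formula
\[
\cov(e_i,e'_j)\;=\;\sum_{a\ge i}\;\sum_{b\ge j} W(a,b).
\]

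With this in hand, I would proceed as follows. (a) In one round, each $v_b\in P'$ sends its index along every incident edge, so that each $u_a\in P$ learns, for every non-tree neighbor in $P'$, that neighbor's index; $u_a$ then locally forms the partial suffix sums $S_a(j):=\sum_{b\ge j}W(a,b)$ for all $j$. (b) Pipeline a leaf-to-root aggregation along $P$, so that each $u_i$ computes $T_i(j):=S_i(j)+T_{i+1}(j)=\cov(e_i,e'_j)$; since $k=O(\sqrt{n})$ words must traverse a path of length $k$, this completes in $O(\sqrt{n})$ rounds. (c) Route all $R_j$'s up $P'$ to $r$ and then down $P$ by pipelining, in another $O(\sqrt{n})$ rounds, so every $u_i$ knows all $R_j$. (d) Each $u_i$ locally computes $m_i:=\min_j(L_i+R_j-2T_i(j))=\min_j \cut(e_i,e'_j)$, remembering the minimizer $j$. (e) A standard upward minimization along $P$ in $O(\sqrt{n})$ rounds then returns the overall optimal pair $(e,e')$ together with its cut value.

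The only thing to verify is that no bottleneck arises: all long-distance communication happens along the path $P\cup P'$ of length $2\sqrt{n}$, all remaining messages travel across single non-tree edges whose endpoints are already adjacent, and each pipelined stream carries $O(\sqrt{n})$ words of $O(\log n)$ bits, so each of the stages above fits in $O(\sqrt{n})$ rounds. There is no deeper obstacle in this special case---the real difficulty, which the general short-paths routing trick in the full paper must handle, appears only once the two paths no longer share a convenient common meeting point for routing.
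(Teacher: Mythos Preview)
Your argument is correct for the stated claim, but it takes a different route from the paper and, in doing so, sidesteps the idea the paper actually cares about. You route the $O(\sqrt{n})$ cover values of $P'$ to $P$ through the shared root $r$, exploiting that in a spider the two legs meet at a common tree vertex. The paper instead splits into two cases: if there is a non-tree edge $f$ between $P$ and $P'$, it routes through $f$ and then does the same aggregate-in-$P$ computation you do; if there is no such edge, it observes that $\cov(e,e')=0$ for all pairs, so the minimum cut is just $\cov(e)+\cov(e')$ and each leg need only broadcast its single minimum cover value.

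What this buys the paper is twofold. First, routing through $f$ keeps all communication internal to $T(P)\cup T(P')$, so comparisons of \emph{disjoint} pairs of legs can run in parallel with no shared bottleneck; your approach funnels everything through $r$, which serializes across pairs. Second, the ``no-edge $\Rightarrow$ $\cov(e,e')=0$'' case is precisely the routing trick that survives when the two short paths no longer share a tree vertex at all---it is what lets the general algorithm avoid global communication for most pairs. Your final sentence correctly anticipates that the shared-root shortcut will not generalize; the paper's two-case structure is exactly the replacement it uses.
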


\noindent
The proof can be divided into the following two cases.
\begin{enumerate}
\item \textbf{There is an edge $f$ between $P$ and $P'$.} The main idea here is to use the edge $f$ to route information between $P$ and $P'$, and then compute the cut values locally via aggregate computations inside $P$ and $P'$. This allows us to work in parallel in different paths.
In more detail, by Claim \ref{claim_cover}, for any pair of edges $e \in P, e' \in P'$, we have $\cut(e,e')=\cov(e)+\cov(e')-2 \cov(e,e').$ As discussed before, let us assume that the values $\{\cov(e)\}_{e \in P}$ are known to edges in $P$, and the values $\{\cov(e')\}_{e' \in P'}$ are known to edges in $P'$. As the paths have length $O(\sqrt{n})$, in $O(\sqrt{n})$ time we can route all these cover values to one of the paths, say $P$, leading to a $O(\sqrt n)$ congestion in the edge $f$. Now we only need to compute the values $\cov(e,e')$. To this end, note that if we fix two edges $e \in P, e' \in P'$, the edges that cover both of them are exactly all the edges that have one endpoint below $e$, and one endpoint below $e'$, all these edges connect $P$ and $P'$. Hence, we can run aggregate computation inside $P$ to sum the costs of all such edges. In this computation we fix $e' \in P'$, and compute for each edge $e \in P$, the cost $\cov(e,e')$. This requires $\tO(1)$ congestion in the edges of $P$ (the dilation, however, is $O(\sqrt n)$). To compute these values for all $e' \in P'$, we use pipelining, which results in $O(\sqrt{n})$ complexity ($O(\sqrt n)$ congestion and dilation) for computing all cut values.
\item \textbf{There is no edge between $P$ and $P'$.} Here we do not have a direct edge for communication, however it turns out that the structure of the minimum cut is actually much simpler in this case. The crucial observation here is that if there is no edge between $P$ and $P'$, then $\cov(e,e')=0$ for any $e \in P, e' \in P'$. Then, from Claim \ref{claim_cover}, the question of minimizing $\cut(e,e')$ boils down to finding two edges $e \in P, e' \in P'$ that minimize the cover values in $P$ and $P'$, respectively. We can compute these values locally in the paths and then broadcast them to the whole graph. Since we only need to broadcast $O(1)$ pieces of information per path, we can compare \emph{all} pairs of paths with no edge between them by broadcasting $O(\sqrt{n})$ information to the whole graph.  
\end{enumerate}


To conclude, we are in a win-win situation. We either have an edge between the paths, in which case we can use it for routing and compute the cut values by internal computations inside $P,P'$---this is helpful for running such computations in parallel in different paths. Or we do not have an edge, in which case we need some global communication but we can actually limit the amount of global communication significantly as the structure of the minimum cut becomes much simpler.
While we focus here on a spider graph, the same principles can be extended to work for a more general setting, where we have two tree paths $P,P'$ of size $O(\sqrt{n})$ we want to compare.

\subsubsection*{(II). Structural lemma for bounding interesting paths}

While we showed that comparing two legs of the spider can be done in $O(\sqrt{n})$ time and that this can be done in parallel for \emph{disjoint} pairs of paths, if we want to use it to compare all pairs of legs of the spider, it requires $\Omega(n)$ time. 
This follows, as comparing two legs $P$ and $P'$ is based on running $\Omega(\sqrt{n})$ aggregate computations in one of the paths, which leads to congestion $\Omega(\sqrt{n})$. If we need to compare the same leg to all $\sqrt{n}$ legs, the total congestion is $\Omega(n)$.
%
To overcome it, our main goal now is to bound the number of pairs of paths we need to compare.
For this, we define a notion of \emph{interesting paths}. We show that we only need to compare pairs of paths that are \emph{interested} in each other, and we prove a structural lemma that shows that each path is interested in $O(\log{n})$ paths.
This structural lemma and the short-paths routing trick together lead to a complexity of $\tilde{O}(\sqrt{n})$ for computing the min 2-respecting cut in a spider graph, as each leg of the spider only participates in $O(\log{n})$ computations that take $O(\sqrt{n})$ time. 
Next, we elaborate on the notion of interesting paths.

\paragraph{Interesting paths.} The notion of \emph{interesting paths} is an extension of \cite{mukhopadhyay2019weighted} where they show the following: for each \emph{edge} $e$, there is only a small number of ancestor to descendant paths where the edge $e'$ that minimizes $\cut(e,e')$ can be.  We say that an edge $e$ is \emph{interested} in an edge $e'$ if $$\cov(e,e') > \cov(e)/2.$$ The crucial observation is the following (see the beginning of Section \ref{sec:lemma_and_samp} for an explanation).

\begin{claim}(\cite{mukhopadhyay2019weighted})
If the pair $\{e,e'\}$ participates in the min 2-respecting cut, then $e$ and $e'$ are interested in each other.
\end{claim}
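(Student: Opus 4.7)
The plan is to derive the two required inequalities $\cov(e,e') > \cov(e)/2$ and $\cov(e,e') > \cov(e')/2$ by comparing the minimum $2$-respecting cut value $\cut(e,e')$ against the two $1$-respecting cut values $\cov(e)$ and $\cov(e')$. The identity from \Cref{claim_cover} gives $\cut(e,e') = \cov(e) + \cov(e') - 2\cov(e,e')$, which algebraically converts any upper bound on $\cut(e,e')$ into a lower bound on $\cov(e,e')$, so this is really the only identity we need.

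First I would observe that every $1$-respecting cut is also a $2$-respecting cut (formally, one can think of dropping $e'$, or view $\cov(e)$ as the cut obtained from the pair by taking the second edge to be ``nothing''). Consequently, if $\{e,e'\}$ attains the minimum over all $2$-respecting cuts, then in particular $\cut(e,e') \le \cov(e)$ and $\cut(e,e') \le \cov(e')$. Plugging the first inequality into the identity of \Cref{claim_cover} yields $\cov(e) + \cov(e') - 2\cov(e,e') \le \cov(e)$, i.e. $\cov(e,e') \ge \cov(e')/2$; symmetrically the second gives $\cov(e,e') \ge \cov(e)/2$. These are already the ``interested'' inequalities up to strictness.

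The only subtle point is the strict inequality in the definition of interestedness ($\cov(e,e') > \cov(e)/2$ rather than $\ge$). I would handle this by a standard tie-breaking convention: assume the min $2$-respecting cut $\{e,e'\}$ is chosen as the unique optimum under a fixed lexicographic rule (e.g. break ties by vertex IDs). If equality held, say $\cov(e,e') = \cov(e)/2$, then the above calculation would yield $\cut(e,e') = \cov(e')$, meaning the $1$-respecting cut at $e'$ is also optimal; replacing $\{e,e'\}$ by that $1$-respecting witness (or by whichever representative the tie-breaking rule prefers) contradicts the chosen minimality. An alternative, cleaner route — and the one I would prefer if the surrounding text allows — is simply to adopt the non-strict version $\cov(e,e') \ge \cov(e)/2$ as the definition of interestedness, since the structural and algorithmic arguments that use it never rely on strictness.

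I do not expect any real obstacle here: the argument is essentially a one-line substitution once one has \Cref{claim_cover} and the observation that $1$-respecting cuts are $2$-respecting cuts. The only thing worth being careful about is making the ``$\{e,e'\}$ participates in the min $2$-respecting cut'' hypothesis precise enough to justify both $\cut(e,e') \le \cov(e)$ and $\cut(e,e') \le \cov(e')$ simultaneously, which is immediate once we settle on a tie-breaking rule for the optimum.
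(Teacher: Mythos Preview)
Your proof is correct and follows the same route as the paper: apply the identity $\cut(e,e')=\cov(e)+\cov(e')-2\cov(e,e')$ and compare against the two $1$-respecting cuts. The only difference is how strictness is obtained: the paper reads the hypothesis as ``the minimum $2$-respecting cut is an \emph{exact} $2$-respecting cut,'' i.e.\ strictly smaller than every $1$-respecting cut (the min $1$-respecting cut is computed separately by the algorithm), which gives $\cut(e,e')<\min\{\cov(e),\cov(e')\}$ and hence $\cov(e,e')>\tfrac12\max\{\cov(e),\cov(e')\}$ directly, with no tie-breaking convention needed.
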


\cite{mukhopadhyay2019weighted} defines the notion of \emph{edges} being interested in each other. In this work, we generalize this notion to \emph{paths}. For simplicity of presentation, we focus here on the spider graph. We say that an edge $e$ is interested in a leg $P$ of the spider if $e$ is interested in at least one edge in $P$. As was shown in \cite{mukhopadhyay2019weighted}, $e$ can be only interested in one leg $P$ where $e \not \in P$. The reason is simple. If $e$ is interested in $P$, it follows that more than half of the edges that cover $e$ go towards $P$. This can only happen for one path.

\paragraph{Interesting path counting lemma.} While the above discussion implies that each \emph{edge} is interested in one path, this is not enough in our case. The reason is that each edge of a path $P$ may be interested in a different leg of the spider, in which case we may need to apply Claim \ref{spider_routing} on all pairs of paths to find the min 2-respecting cut, which is too expensive. To overcome it, we show a stronger argument. We say that a path $P$ is interested in a path $P'$ if there is an edge in $P$ interested in $P'$.

\begin{lemma}(Interesting path counting lemma for spider graph) \label{overview_structural_lemma}
Each leg $P$ of the spider is interested in $O(\log{n})$ legs.
\end{lemma}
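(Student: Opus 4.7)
The plan is to bound the set of interested legs of $P$ by exhibiting a sequence of cover values $c_1, c_2, \ldots, c_K$ (one per interested leg) that is \emph{super-decreasing} in the sense $c_k > \sum_{k' > k} c_{k'}$; the polynomial weight bound then forces $K = O(\log n)$.

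First I would unpack the definition in the spider setup. For $e_i \in P$ and a leg $Q \neq P$, disjointness of $P$ and $Q$ forces any non-tree edge covering both $e_i$ and an edge $e'_j \in Q$ to run between $S_i \subseteq P$ and $S'_j \subseteq Q$, so $\cov(e_i, e'_j) = w(S_i, S'_j)$. Maximizing over $j$ is then achieved by the top edge of $Q$ (making $S'_j = Q$), so $e_i$ is interested in $Q$ iff $w(S_i, Q) > \cov(e_i)/2$. Because distinct legs $Q \neq P$ are pairwise disjoint, at most one of them can absorb more than half of $\cov(e_i)$; hence the leg $f(i)$ that $e_i$ is interested in, if any, is unique.

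Next, for each $Q$ in the interested set $\mathcal{L}$, I would take the canonical witness $i_Q := \min\{i : f(i) = Q\}$; distinct legs yield distinct indices, so one can enumerate $\mathcal{L} = \{Q^{(1)}, \ldots, Q^{(K)}\}$ with $i_1 < i_2 < \cdots < i_K$ and set $c_k := \cov(e_{i_k})$. Exploiting the monotone containment $S_{i_1} \supseteq S_{i_2} \supseteq \cdots$, for each $k' \geq k$ we get $w(S_{i_k}, Q^{(k')}) \geq w(S_{i_{k'}}, Q^{(k')}) > c_{k'}/2$. Summing over $k' \geq k$ and using that the legs $Q^{(k')}$ are pairwise disjoint and all disjoint from $S_{i_k}$ collapses these into
\[
c_k \;\geq\; \sum_{k' \geq k} w(S_{i_k}, Q^{(k')}) \;>\; \frac{1}{2}\sum_{k' \geq k} c_{k'},
\]
which rearranges to the super-decreasing estimate $c_k > \sum_{k' > k} c_{k'}$. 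Since each $c_k$ is a positive integer and edge weights lie in $\{1, \ldots, \poly(n)\}$, a routine induction from $c_K \geq 1$ gives $c_k \geq 2^{K-k}$, which combined with $c_1 \leq \poly(n)$ forces $K = O(\log n)$.

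The step I expect to require most care is the derivation of the super-decreasing inequality: it is the one place where many per-leg interestedness bounds $w(S_{i_{k'}}, Q^{(k')}) > c_{k'}/2$ must be transported to the single anchor subtree $S_{i_k}$ and then aggregated using disjointness of the legs. The two properties doing the work---monotone containment of the $S_{i_k}$'s (from ordering the witnesses) and pairwise disjointness of distinct legs from each other and from $P$---are exactly the features that, in the general tree-path version of the lemma, will have to be replaced by ``non-splittability'' and ``pairwise orthogonality''.
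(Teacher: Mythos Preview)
Your proof is correct and follows essentially the same doubling argument as the paper's. Both exploit that interestedness forces more than half the cover weight to go toward a single leg, then use disjointness of the legs together with the nesting $S_{i_1}\supseteq S_{i_2}\supseteq\cdots$ to aggregate these half-weight lower bounds into a geometric growth (or decay) that the polynomial weight bound caps at $O(\log n)$.

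The only cosmetic differences are: you track the full cover values $c_k=\cov(e_{i_k})$ and show they are super-decreasing, whereas the paper (both in the overview and in the formal Lemma~\ref{lemma:-cover_all_or_nothing_paths}) traverses from leaf to root, tracks the restricted quantity $\cov(e,\cP)$, and shows it doubles each time a new leg appears. Your phrasing is arguably tidier for the spider case since it avoids introducing $\cov(e,\cP)$, and your closing remark correctly anticipates that the two properties doing the work here---nesting of the $S_{i_k}$ and disjointness of the legs---are exactly what become ``non-splittability'' and ``pairwise orthogonality'' in the general lemma.
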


The proof idea is as follows. If some edge $e \in P$ is interested in some leg $P_1$, it means
that the total weight of edges that cover $e$ and some edge $e' \in P_1$ (and, hence, the total weight of such edges that \textit{go towards} $P_1$) is
at least half of the weight of edges that cover $e$. To find out which legs of the spider $P$ is interested in, let us start from the leaf of $P$ and traverse towards to root: While doing so, we count the total weight of non-tree edges that cover the current edge in $P$ and end somewhere outside. The crucial observation is the following: Each time we reach some edge $e \in P$ that is interested in some \textit{new} leg $P_i$, we know that the total weight of such \textit{new} non-tree edges that cover $e$ and go towards $P_i$ has to be at least the total weight of non-tree edges that we have counted so far---otherwise, $e$ would not be interested in $P_i$. So, every time we encounter such an edge in $P$ while traversing from the leaf to the root, the total weight of the non-tree edges that we count doubles. Since the total weight of edges is polynomially bounded, such edges in $P$ can be found only a logarithmic number of times, which shows that $P$ can be only interested in $O(\log n)$ different legs.


\setlength{\intextsep}{2pt}
\begin{figure}[h]
\centering
\setlength{\abovecaptionskip}{-2pt}
\setlength{\belowcaptionskip}{6pt}
\includegraphics[scale=0.5]{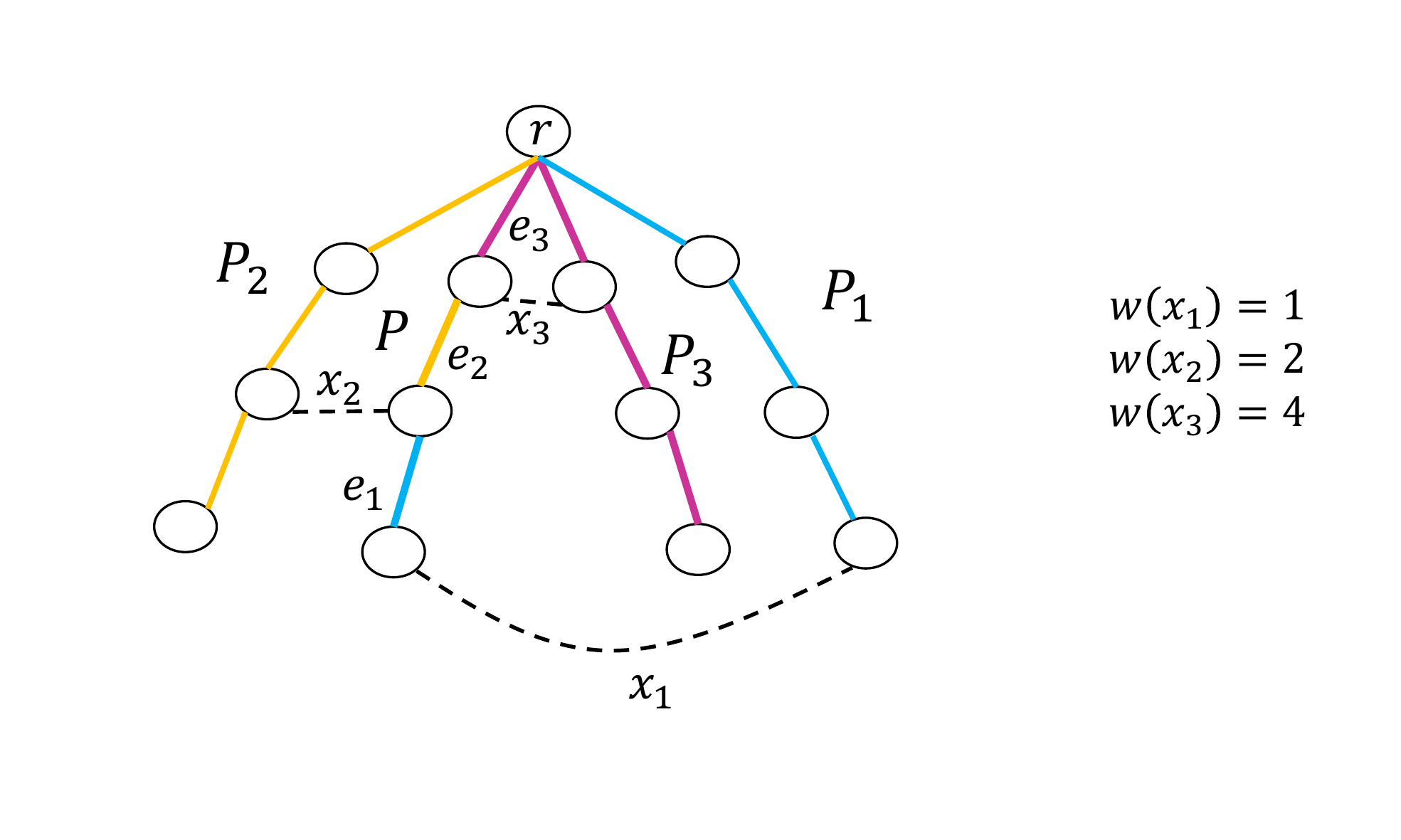}
 \caption{\small Illustration of the interesting path counting lemma. We have a path $P$, where the edge $e_i$ is interested in the path $P_i$. Assume that the weights of all tree edges are 0. Note that since $e_3$ is interested in $P_3$, we must have $w(x_3) > w(x_1)+w(x_2)$. Similarly, as $e_2$ is interested in $P_2$, we must have $w(x_2) > w(x_1).$}
\label{structural_lemma_pic}
\end{figure}


We remark that the proof of the interesting path counting lemma that we just mentioned crucially relies on the simple structure of the spider graph---in particular on the fact that the legs of the spider are edge-disjoint.\footnote{Even though edge-disjointness suffices to argue for the spider graph, we need a more restricted structure for general spanning tree. See Section \ref{overview:general} and \ref{ssec:-interesting_lemma} for more details.} In a general graph, however, we cannot guarantee such structure among the paths, and, hence, we can no longer show that each path is only interested in a small number of paths. To deal with it, we restrict our attention to paths that have a \textit{nice structure} and prove a variant of the interesting path counting lemma with respect to them. This is further discussed in Section \ref{overview:general}.

\paragraph{Finding interesting paths.} Even though Lemma \ref{overview_structural_lemma} bounds the number of legs of the spider that each leg can be interested in, we are still left with the job of identifying such interesting legs in order to complete the algorithm for this simple case of spider graph. One immediate approach is the following: Since an edge $e$ is interested in a path $P$ only if more than half of the edges that cover $e$ \textit{go towards} the path $P$, we can use sampling to identify the set of interesting legs w.r.t. an edge $e$. However, because of the nature of the sampling, we will not be able to pinpoint the set of interesting legs. Rather, we can obtain a set of legs which is a superset of the actually interesting legs. We denote that $e$ is \textit{potentially interested} in each leg of this set. As it turns out, this is sufficient for our purpose---we show that these paths still satisfy nice properties that allow us to prove that each leg $P$ of the spider is \emph{potentially interested} in $O(\log{n})$ legs. For simplicity of presentation, in this section we refer to all these paths as
paths $e$ is interested in. Combining everything, it is easy to see at this point how to implement the minimum 2-respecting cut algorithm in $O(D + \sqrt n)$ time complexity when the spanning tree is such a simple spider graph.

\subsubsection*{(III). Dealing with long paths via partitioning}   

As explained above, the short-paths routing trick and interesting path counting lemma allow us to find the min 2-respecting cut in a spider with legs of length $\sqrt{n}$. However, Claim \ref{spider_routing} relies on such short length of the legs. We next ease this restriction on the spider graph and explain how to handle a spider graph that may have longer legs using a \emph{partitioning} technique.

For simplicity, we start with a spider graph that is identical to the previous one, but we add to it one long leg of length $n$ (now the number of vertices is $2n$). The interesting path counting lemma (Lemma \ref{overview_structural_lemma}) still holds for this case, as the legs of the spider are still edge-disjoint. The only issue is that if we want to compare any leg of the spider to the new long leg, it requires time proportional to the length of the long leg, which is too expensive.

\paragraph{Comparing a short and a long leg.} 
Denote by $P_{long}$ the long leg of the spider, and by $P'$ some leg we want to compare to $P_{long}$. If we want to compare $P'$ and $P_{long}$ naively using Claim \ref{spider_routing}, we see that it requires $\Omega(n)$ time. To overcome it, a natural approach could be first to break $P_{long}$ to smaller sub-paths of length $\sqrt{n}$: denoted by $P_1,...,P_k$ (see Figure \ref{partitioning_spider_pic}), and then compare $P'$ to each one of the sub-paths $P_i$ separately. In doing so, we can route information from $P'$ to each one of the sub-paths $P_i$ separately, and then try to compute the cut values $\cut(e',e)$ for $e' \in P'$ and $e \in P_i$ internally inside $P_i$.\footnote{We assume for simplicity that there is an edge between $P'$ to each one of the sub-paths $P_i$, the case there is no edge is simpler.}
Unfortunately, we are faced with a delicate issue if we use this approach: $P_i$ alone does not have enough information to compute the cut values. To illustrate this, consider, for example, a pair of edges $e' \in P', e \in P_i$. Edges that cover $e',e$ may have both endpoints outside $P_i$ (see the left side of Figure \ref{partitioning_spider_pic}, where $P_i=P_2$), and hence $P_i$ cannot compute the value $\cov(e',e)$, without additional information from $P'$. More concretely, denote by $\extcov(e',P_i)$ the weight of the edges that cover $e' \in P'$ and the whole path $P_i$, and have both endpoints outside $P_i$.
If we want to compare all edges $e' \in P'$ to all sub-paths $P_i$, we need to send all the values $\extcov(e',P_i)$ from $P'$ to the sub-path $P_i$. Overall, since we have $\sqrt{n}$ edges in $P'$ and $\sqrt{n}$ sub-paths $P_i$, we need to send $\Omega(n)$ information from $P'$ to all other sub-paths in total. This may create $\Omega(n)$ congestion in $P'$ as we may need to collect $\Omega(n)$ information in one vertex in $P'$ before sending it---this is way more than the congestion we can afford. 
To overcome this issue, we partition $P'$.

\paragraph{Path-partitioning lemma.}
We show a path-partitioning lemma, that states that we can break the path $P'$ into sub-paths $P'_1,...,P'_k$ such that we only need to compare $P'_i$ to $P_i$ (see the right side of Figure \ref{partitioning_spider_pic} for illustration). Additionally, these sub-paths are almost disjoint. More concretely, we show the following:
\begin{itemize}
\item We can break the path $P'$ to subsets $P'_1,...,P'_k$, such that $\sum_{i=1}^k |P'_i| = O(\sqrt{n}).$ Here $|P'_i|$ refers to the number of edges in $P'_i$. 

\item If $\{e,e'\}$ define the min 2-respecting cut with $e \in P_{long}$ and $e' \in P'$, then there is an index $i$ such that $e \in P_i$ and $e' \in P'_i$. Hence, it is enough to solve the min 2-respecting cut problem on the pairs $\{P_i, P'_i\}_{i=1}^k$.

\end{itemize}

We prove the path-partitioning lemma using monotone structure of minimum cuts described in \cite{mukhopadhyay2019weighted} (see Section \ref{sec:mon_part}).
Based on this lemma, we can now deal with comparing $P'$ and $P_{long}$. As explained above, our goal is to route information of the form $\extcov(e',P_i)$ from $P'$ to each one of the sub-paths $P_i$. The crucial observation is that after the partitioning, each edge of $P'$ on average should be compared only to $O(1)$ sub-paths $P_i$, i.e, for each $e' \in P'$, we need to route $\extcov(e',P_i)$ for constant many $P_i$. Hence, the total amount of information to collect and send from $P'$ is now proportional to the number of edges in $P'$, $O(\sqrt{n})$, which leads to an efficient algorithm.\\[-7pt]

\paragraph{Dealing with two long paths.} If the spider graph has two long paths, we can use a variant of the path-partitioning lemma together with a divide-and-conquer approach described in \cite{mukhopadhyay2019weighted} to deal with comparing two long paths. We defer the elaboration on this case to the next section where we discuss the algorithm for a general spanning tree.

\setlength{\intextsep}{2pt}
\begin{figure}[h]
\centering
\setlength{\abovecaptionskip}{-2pt}
\setlength{\belowcaptionskip}{4pt}
\includegraphics[scale=0.4]{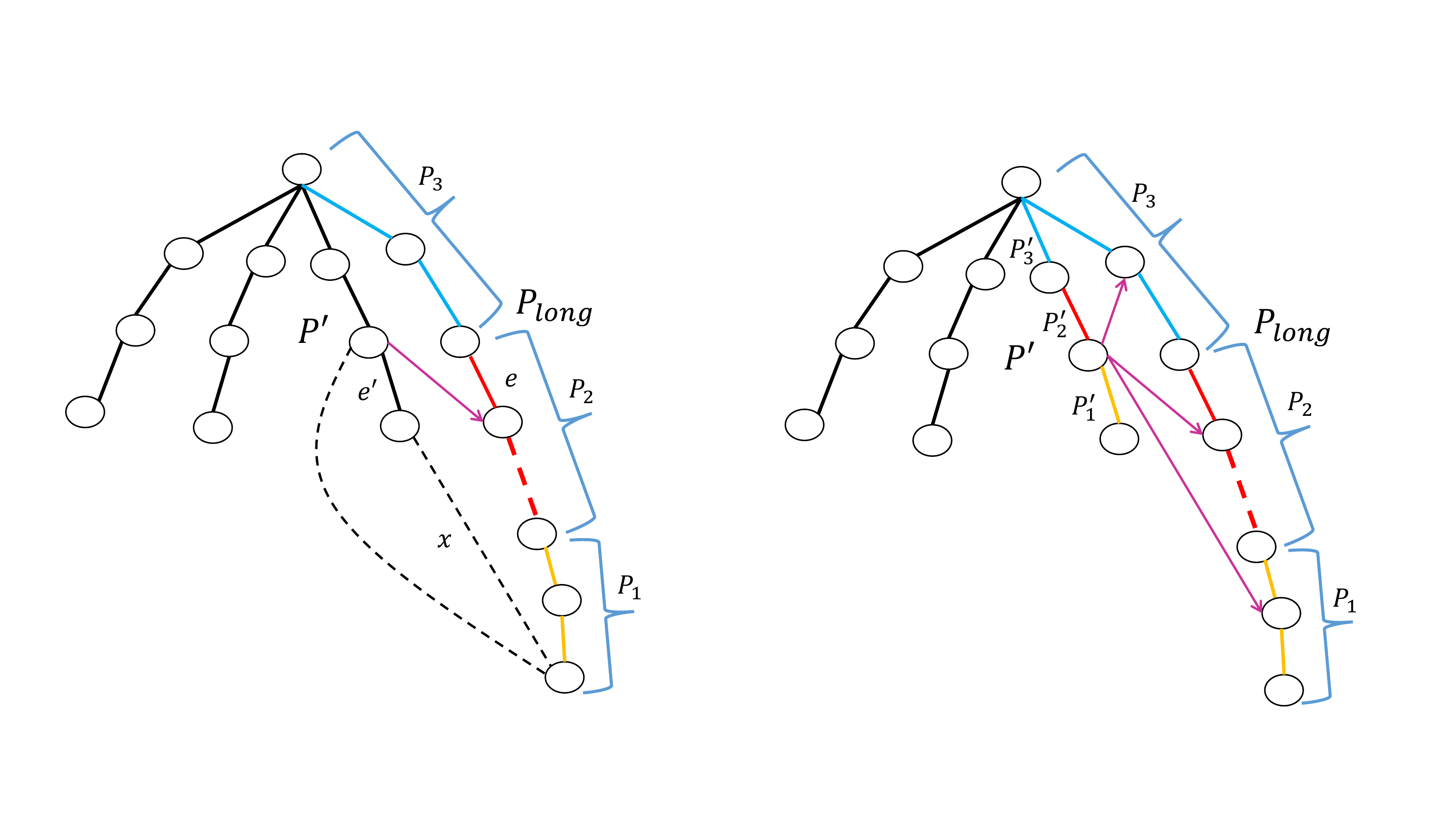}
 \caption{\small Illustration of the partitioning. In the left, note that the edge $x$ covers $e'$ and $e$ but have both endpoints outside $P_2$. The right part illustrates the partitioning, we only need to compare edges in $P'_i$ to $P_i$.}
\label{partitioning_spider_pic}
\end{figure}

\subsection{The algorithm for general graphs: Overview} \label{overview:general}

Up until now, we discussed 3 main ingredients that allow us to solve the spider example: a short-paths routing trick, an interesting path counting lemma, and a path-partitioning lemma. Next, we discuss variants of these tools that are useful for a general graph, as well as additional tools required such as \emph{tree decomposition}, a \emph{pairing theorem}, and a \emph{divide-and-conquer approach}. More details of these tools are provided in the next subsection. 




\subsubsection*{(I). Interesting path counting lemma for a general graph}

In a general graph, it is no longer true that each path $P$ in the graph is only interested in a small number of paths. However, we can show that if we restrict our attention to paths that have a certain structure we can still bound the number of interesting paths. To this end, we define the following: We say that two paths are pairwise orthogonal if the highest edges of each of these paths are on different root to leaf paths. Also, by denoting a path $P'$ to be completely above (or below) a path $P$, we mean that all vertices of $P'$ appear as ancestors (or descendants) of the top (or the bottom) vertex of  $P$ (See Figure  \ref{fig:non-splitting} for reference). For such paths, we show the following lemma: 

\begin{lemma}(Interesting path counting lemma for a general graph) A given path $P$ can be interested in at most $O(\log n)$ pairwise orthogonal paths that
are either orthogonal to $P$ or completely above or completely below $P$. 
\end{lemma}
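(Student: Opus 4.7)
The plan is to generalize the doubling-potential argument that the overview sketches for the spider graph. Let $Q_1,\ldots,Q_m$ be the pairwise orthogonal, non-splittable paths that $P$ is interested in, and for each $j$ pick a witness edge $e_j\in P$ that is interested in some $e_j'\in Q_j$, so $\cov(e_j,e_j')>\cov(e_j)/2$. I would split the $Q_j$'s into two classes: those entirely above $P$ or on an orthogonal branch (class $\mathcal{U}$), and those entirely below $P$ (class $\mathcal{D}$). Each class will be bounded by $O(\log n)$ via a separate but symmetric potential argument; the bound on $m$ is their sum.

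For class $\mathcal{U}$, reorder so that $e_1,e_2,\ldots$ appear along $P$ from the leaf-end toward the root-end. For each $j$ define $S_j$ to be the set of non-tree edges with one endpoint in the subtree hanging off the bottom of $e_j$ and the other endpoint in the tree region ``on the $Q_j$-side'' of $P$ (in both the above case and the orthogonal-branch case this region lies strictly above the top vertex of $P$). I would then verify three properties: (i) every edge contributing to $\cov(e_j,e_j')$ lies in $S_j$, so $w(S_j)>\cov(e_j)/2$; (ii) the $S_j$'s are pairwise disjoint, which is exactly where pairwise orthogonality of the $Q_j$'s is needed, since distinct $Q_j$'s occupy disjoint other-endpoint regions; (iii) for every $i<j$, every edge of $S_i$ still covers $e_j$, because its tree-path runs from below $e_i$ upward through the segment of $P$ strictly containing $e_j$ and exits above the top of $P$. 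Setting $\Phi_j=\sum_{i<j}w(S_i)$, property (iii) implies $\Phi_j\le\cov(e_j)$, while properties (i) and (ii) give $\Phi_{j+1}=\Phi_j+w(S_j)>\Phi_j+\cov(e_j)/2\ge(3/2)\Phi_j$. Geometric growth plus the $\mathrm{poly}(n)$ bound on edge weights forces $|\mathcal{U}|=O(\log n)$. Class $\mathcal{D}$ is handled by the mirror argument, ordering witnesses from the root-end toward the leaf-end and taking the ``other endpoint'' deep inside the subtree of $e_j'$, which lies below all of $P$; the reversed orientation is what ensures the analogue of property (iii).

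The main obstacle will be pinning down property (iii) uniformly across the geometric sub-cases. This is exactly where non-splittability is essential: it forbids $Q_i$ from branching off from the interior of $P$, which is what makes the ``above'' and ``orthogonal'' sub-cases merge cleanly into class $\mathcal{U}$ (all of their relevant other-endpoint regions lie above $P$'s top) and what lets the ``below'' case be attacked symmetrically. Without this hypothesis, a $Q_i$ attached to a middle vertex of $P$ could yield $S_i$-edges whose tree-paths bypass later witnesses, breaking the monotonicity of the potential. A minor secondary subtlety is that several $Q_j$'s may pick the same witness edge $e_j$; I expect to absorb this by collapsing them into a single step of the potential increase and paying only a constant-factor loss in the final $O(\log n)$ bound.
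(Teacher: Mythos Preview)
Your proposal is correct and follows essentially the same doubling-potential argument as the paper: split into the above/orthogonal class and the below class, order the witness edges along $P$ in the appropriate direction, and show that the cumulative weight of the witnessing cover-sets grows geometrically because each new set contributes more than half of $\cov(e_j)$ while the previously accumulated sets are all contained in $\ECov(e_j)$.

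One small point worth tightening, which the paper also handles only by a side remark: your property (ii) as stated is not literally true when class $\mathcal{U}$ contains the single ``above'' path $Q_k$ together with an orthogonal path $Q_i$ that branches off above $Q_k$. In that case the $Q_k$-side region (the complement of $T_{e_k'}$) genuinely contains $T_{e_i'}$, so $S_k$ and $S_i$ can overlap. This is harmless because pairwise orthogonality forces at most one path in $\mathcal{U}$ to be above $P$ (two ancestors of $P$ cannot be orthogonal to each other), so you can simply remove it and add $1$ to the final count; the paper makes exactly this observation in a footnote. Your anticipated subtlety about shared witness edges, on the other hand, does not actually arise in the ``interested'' (threshold $1/2$) version: for pairwise-orthogonal $Q_j$'s in either class, the sets $\ECov(e,Q_j)$ are disjoint, so a single edge $e\in P$ can be interested in at most one of them.
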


\subsubsection*{(II). Fragment decomposition}
Of course, not all the paths in the graph are above, below or orthogonal to $P$.
Hence, to use the interesting path counting lemma, we break all the paths in the graph into paths that satisfy some nice structure. To do so, we bring to our construction a variant of a fragment decomposition from \cite{ghaffari2016near,Dory18} (see Section \ref{ssec:-decompFragment}).
At a high-level, we decompose our tree into $O(\sqrt{n})$ edge-disjoint fragments of size $O(\sqrt{n})$. Each fragment $F$ has a very specific structure: it has one main path, called the \emph{highway} of the fragment, between two vertices that are called the root $r_F$ and descendant $d_F$ of the fragment, and additional sub-trees attached to the highway that are contained inside the fragment. The paths in these sub-trees are called \emph{non-highways}. The only vertices that may be connected directly to other fragments are $r_F$ and $d_F$. The following properties are useful for us later:
\begin{enumerate}
\item Non-highways have small length and are completely contained in one fragment.
\item We have $O(\sqrt{n})$ different highways.
\end{enumerate}

While non-highways are contained in one fragment, highways can connect to highways in other fragments and create long paths of highways. We sometimes refer to highways in a single fragment as \emph{fragment highways}, and long paths composed of highways as \emph{super-highways}, to distinguish between the two. One can think of fragment highways and super-highways as short and long legs in the spider example, respectively.

\paragraph{Combining the interesting path counting lemma and fragment decomposition.}
The importance of the fragment decomposition comes from the fact that paths in different fragments do not intersect each other.
We can use this structure of the fragments and the interesting path counting lemma to prove that each non-highway or highway within a fragment is only interested in a small number of ancestor to descendant paths outside their fragment. 
To deal with cuts that have two edges in the same fragment, we exploit the small size of the fragments to compute the cuts efficiently. 


\subsubsection*{(III). Short-paths routing trick for a general graph}

A very basic building block in comparing paths that are interested in each other is to compare two sub-paths of length $O(\sqrt{n})$---we informally denote these paths as \textit{short} paths. We can extend the routing trick (Claim \ref{spider_routing}) to deal with comparing two such short paths (See Section \ref{sec:alg_short_path})---The algorithm is divided into cases depending on whether these short paths are non-highways or fragment highways.

The simplest case is that the paths are non-highways: Here we can show that we have an edge connecting any two non-highways interested in each other, and we can run an algorithm similar to Case 1 in the proof of Claim \ref{spider_routing} to compare such paths.
When one or two of the sub-paths are fragment highways, we may also be in a case that there is no edge connecting the two sub-paths $P',P$ we compare, but similarly to Case 2 in Claim \ref{spider_routing}, we show that we can exploit this, and divide the computation to simple internal computations in each one of $P$ and $P'$ and broadcast of $\tO(1)$ pieces of information over a global BFS tree.

\subsubsection*{(IV). Path-partitioning lemma for a general graph}

We use a variant of the path-partitioning lemma (See Section \ref{sec:mon_part}) to compare a \emph{short path} of length $O(\sqrt{n})$ (either non-highway or a fragment highway) to a long path composed of highways (or \textit{super-highways}). This technique also serves as a building block in comparing two super-highways.

\subsubsection*{(V). Pairing theorem and divide-and-conquer approach}

Lastly, to compare two super-highways we need a few additional tools. To give an idea of the technical bottleneck we face in this case, we mention one main difference from the spider case: In a general graph, it is no longer true that each path is interested in a small number of paths (this only holds when we limit the structure of the paths). This creates a problem when we want to compare \emph{long paths} of super-highways that are interested in each other.

To deal with it, we prove a \emph{pairing theorem} that allows us to pair-up the super-highways into pairs we need to compare such that, in each such pair, only some subset of highways are \emph{active}. We can show that each \emph{fragment highway} is only active in $\text{poly}\log{n}$ pairs, and we show how to compare them in a complexity that \textit{depends only on the number of active highways in each pair}. To compare each such pair of super-highways, we use a divide-and-conquer approach. We elaborate more on this in the next section.

\subsection{The algorithm for general graphs: More details}

Here we discuss the algorithm for general graphs in more detail.
From a high-level, the algorithm works as follows.

\medskip
\fbox{%
  \parbox{0.9\textwidth}{
\begin{enumerate}
\item We compute the fragment decomposition.
\item For each non-highway or fragment highway we compute the paths it is interested in.
\item We compare non-highways that are interested in each other by the \emph{short-paths routing trick}.\label{compare_nh}
\item We compare non-highways and highways that are interested in each other via the \emph{path-partitioning lemma}.\label{compare_nh_h}
\item We compare paths of highways (i.e., super-highways) that are interested in each other via a \emph{pairing theorem} and \emph{divide and conquer} approach.\label{compare_h}
\end{enumerate}
  }%
}

\bigskip\noindent
We next elaborate more on steps \ref{compare_nh}-\ref{compare_h}.

\paragraph{Step \ref{compare_nh}: Comparing non-highways.} Here we use the short length of non-highways and the interesting path counting lemma to get an efficient algorithm. The algorithm is similar to the spider case with short legs, uses a variant of the short-paths routing trick (Claim \ref{spider_routing}) and is based on the following ideas:
\begin{itemize}
\item \textbf{Bounding the number of comparisons.} From the interesting path counting lemma, we only need to compare each non-highway to $\text{poly}\log{n}$ different non-highways in other fragments.
\item \textbf{Working locally.} We can show that if a non-highway $P$ is interested in a non-highway in the fragment $F$ there is an edge $f$ between the sub-tree rooted at $P$ and the fragment $F$ (this follows from the fact that many edges that cover $P$ go towards $F$, and in particular there are such edges). Hence, we can use the edge $f$ to route information about cover values from $F$ to $P$ and then run computations similarly to Case 1 in Claim \ref{spider_routing} in $P$ to compute the cut values. This results in $\tilde{O}(\sqrt{n})$ complexity.
\item \textbf{Parallelizing the computations.} Since we only used local computations inside $P$, we can run such computations in parallel for orthogonal non-highways. Hence, for example, we can do the computations in parallel for non-highways in different fragments.
To work efficiently in parallel in different non-highways in the same fragment, we use a certain \emph{layering} of the non-highway paths (see Section \ref{ssec:-layerDecomp}).
\end{itemize}

\paragraph{Step \ref{compare_nh_h}: Comparing non-highways and highways.} The interesting path counting lemma implies that each non-highway is interested in $\text{poly}\log{n}$ super-highways. We deal with this case in a similar fashion as we dealt with short and long legs when the spanning tree is a spider graph. The main ingredient is a variant of the \emph{path-partitioning lemma} that allows us to break the non-highway into smaller sub-paths that each one of them is only compared to one fragment highway in the long path of highways. Similar to what we described in part (III) of the spider case, the property of the path-partitioning that we use here is that each edge of the non-highway is needed to be compared with only a small number (constant many) of fragment highways on average. Hence, for each fragment highway, we need to route a small amount of information on average and we let them compute the cut values locally.
We next elaborate on two issues: 
\begin{enumerate}
\item How to do many such computations in parallel?
\item What happens if a non-highway is interested in some fragment highway but there is no direct edge between them?
\end{enumerate} 

\textbf{Working in parallel.} To deal with the first issue, we use the interesting path counting lemma on the highways. Basically, it implies that each fragment highway is only interested in a small number of non-highways in other fragments. Hence, even if there are many non-highways interested in some fragment highway $P$, $P$ only participates in computations with non-highways it is interested in, which is enough for computing the min 2-respecting cut. This can be done efficiently, as each fragment highway now only participates in a small number of computations.\\[-7pt]

\textbf{The case there is no edge.} For the second issue, we show that dealing with the case there is no edge is actually easier (similarly to Case 2 in the short-paths routing trick). First, we broadcast $O(\sqrt{n})$ pieces of information to the whole graph, about the minimum cover values of an edge in each fragment highway. Then, based on this alone, each non-highway $P'$ can compute internally in $O(\sqrt{n})$ time the minimum 2-respecting cuts that have one edge in $P'$ and one edge in any fragment highway $P$ where there is no edge between $P'$ and the fragment of $P$.

\paragraph{Step \ref{compare_h}: Comparing highways.} 
The two main ingredients we use here are a \emph{pairing theorem} and \emph{divide-and-conquer} approach.\\[-7pt]

\textbf{Pairing theorem.} While the interesting path counting lemma implies that each fragment highway is interested in a small number of super-highways, this is not enough to get a fast algorithm. One issue is that the same super-highway may need to participate in too many computations.
To deal with it we prove a pairing theorem, with the following guarantees.
\begin{itemize}
\item We partition all the highways in the graph to pairs of super-highways $(P_{H},P'_{H})$, such that in each pair of super-highways we denote a subset of fragment highways that are \emph{active}.
\item Each fragment highway is active in $\text{poly}\log{n}$ pairs.
\item If the min 2-respecting cut has 2 edges in the fragment highways $P,P'$, there is a pair of super-highways $(P_{H},P'_{H})$, such that $P \in P_H, P' \in P'_H$, and both $P,P'$ are active in this pair. 
\end{itemize}
This basically allows us to divide the computation into a series of comparisons between different pairs of super-highways. Next, we explain how to compare two such super-highways.\\[-7pt]

\textbf{Comparing two super-highways.} The basic idea here is to use a variant of the path-partitioning lemma, together with a divide-and-conquer approach from \cite{mukhopadhyay2019weighted}. First, we show that if we want to compare one fragment highway to a super-highway, we can do it efficiently using a variant of the path-partitioning lemma. Next, we use it as a black-box to compare two super-highways. 
For simplicity, we assume that all the fragment highways in the super-highways we consider are active.
The algorithm works as follows (see Figure \ref{two_highway_intro}). Let $P_{H_1},P_{H_2}$ be the two super-highways we want to compare. We first compare the middle fragment highway $P \in P_{H_1}$ to the super-highway $P_{H_2}$. Say that the min 2-respecting cut we find has the second edge in the fragment highway $P' \in P_{H_2}.$ We now use the black-box algorithm to compare $P'$ to the super-highway $P_{H_1}$. After this, we either found the min 2-respecting cut of $P_{H_1}$ and $P_{H_2}$, in case it has one edge in $P$ or $P'$, or we can use a monotone structure of minimum 2-respecting cuts to break the problem to two smaller disjoint problems we can solve in parallel (see Figure \ref{two_highway_intro}). We continue in the same manner until we remain with disjoint problems where one of the sides has only one fragment, they can then be solved directly using the
black-box algorithm. Overall we have $O(\log n)$ iterations, each one takes $\tilde{O}(D + \sqrt{n})$ time using the black-box algorithm, as we work on different disjoint problems in parallel.\\[-7pt]  

\setlength{\intextsep}{2pt}
\begin{figure}[h]
\centering
\setlength{\abovecaptionskip}{2pt}
\setlength{\belowcaptionskip}{2pt}
\includegraphics[scale=0.5]{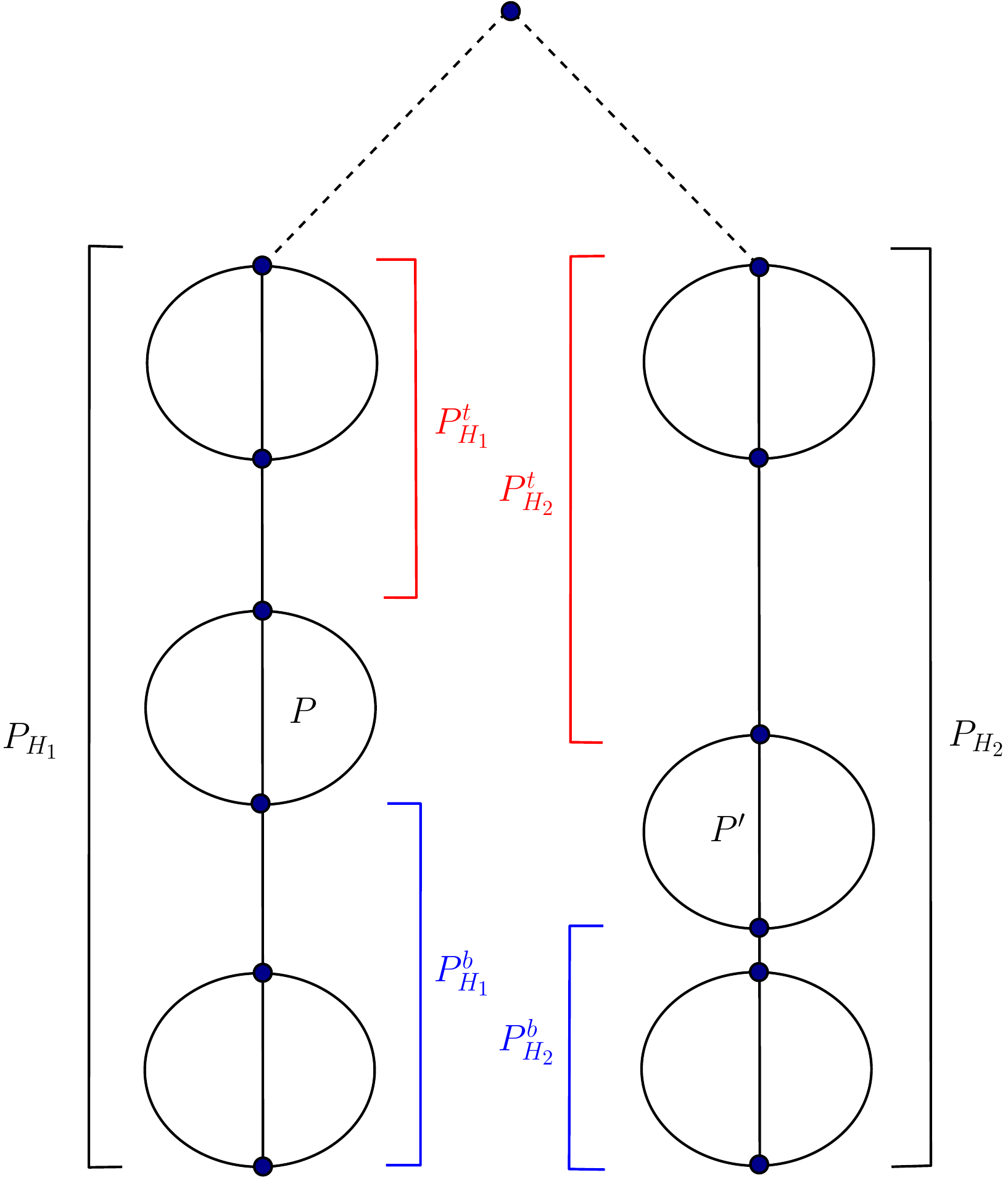}
 \caption{\small Illustration of the highway-highway case. After one iteration, we are left with the disjoint
red and blue problems we can solve in parallel.}
\label{two_highway_intro}
\end{figure}

\textbf{Working in parallel.} The algorithm for comparing super-highways boils down to many smaller computations where we compare two fragment highways (using a variant of the short-paths routing trick). Every time we compare two fragment highways $P$ and $P'$, the computation is divided to a local part where we route information between them and do local computations inside their fragments,\footnote{In the case there is no edge between them, the computation is easier and we do not have this part.} and a \emph{global} part, where global communication takes place in order to compute the cost of edges that cover both $P$ and $P'$ and have edges outside their fragments.\footnote{In the non-highway case, we didn't have this part because any edge that covers a non-highway has at least one endpoint in its fragment.} Hence, to get an efficient algorithm we should bound the number of pairs of fragment highways we compare, as for any such pair we need to use global communication of $O(1)$ information. Note that since the number of fragments is $O(\sqrt{n})$, there is a \emph{linear} number of possible pairs. 
We show that we can bound the amount of global communication by $\tilde{O}(\sqrt{n})$ using the following key ideas:
\begin{itemize}
\item We show that the amount of global communication required for comparing two super-highways is linear in the number of \emph{active} fragments in the pair.
\item The pairing theorem guarantees that each fragment is only active in $\text{poly}\log{n}$ pairs. Hence, comparing all pairs results in sending $\tilde{O}(\sqrt{n})$ global information.
\end{itemize} 
Using these ideas we get a complexity of $\tilde{O}(D+\sqrt{n})$.

\paragraph{Organization.} 
The paper is organized as follows. First, in Section \ref{sec:prelim}, we give some useful notation and claims.
In Section \ref{sec:building_blocks}, we discuss the fragment and layering decompositions. In Section \ref{sec:lemma_and_samp}, we explain how we compute and bound the number of paths each path is interested in, and prove the interesting path counting lemma. In Section \ref{sec:alg_short_path}, we show the short-paths routing trick for comparing two short paths. In Section \ref{sec:mon_part}, we discuss the variants of the path-partitioning lemma we use in our algorithm. Finally, in Section \ref{sec:2_respecting}, we combine all the ingredients to obtain our algorithm for finding the min 2-respecting cut. A schematic description of the algorithm appears in Section \ref{sec:schematic_description}.
\section{Preliminaries}\label{sec:prelim}

\subsection{The model and assumptions}\label{ssec:the_model}

Throughout the paper, we consider the CONGEST model of distributed computing. In this model, one is given a network on $n$ vertices in the form of a graph $G=(V,E)$. Initially, each vertex knows its own unique Id and the Id's of its neighbors in $G$. Communication takes place in synchronous rounds, i.e. in each round, each vertex can send a message of $O(\log n)$ bits to each of its neighbors. The given graph may be equipped with a weight function $w:E\to \mathbb{N}$, in which case each vertex knows also the weights of its incident edges. In case of the min cut problem, we assume that weights are integers and polynomially bounded, hence the weight of any given edge can be represented using $O(\log n)$ bits. At times, we refer to edges in the graph $G$ as performing computations, this means that one of the endpoints of a given edge is actually performing the computation. The specific endpoint is clear through context or specifically mentioned.


\subsection{2-respecting cuts \& tree packing}\label{ssec:def_claims} 

First of all, we discuss the reduction from finding the minimum cut in a given weighted graph $G=(V,E,w)$ to finding the minimum 2-respecting or 1-respecting cut in a given rooted spanning tree $T$ of $G$. We now define the relevant notions. 

For a given weighted graph $G=(V,E,w)$, and a cut $S\subseteq V$, we denote the \emph{value} of $S$ by $w(S)$ and define it to be $w(S)=\sum\limits_{e\in E(S,V\backslash S)} w(e)$. We denote by $E(S,V\backslash S)$ the set of edges of $G$ that cross the cut defined by $S$, i.e. $\set{(u,v)\in E\mid u\in S, v\not\in S}$.

\begin{definition}\label{def:k-min-respect}
Given a graph $G=(V,E)$ and a spanning tree $T=(V,E_T)$ of $G$, we say that a cut $S\subseteq V$ $k$-respects $T$ if it cuts at most $k$ edges of $T$, i.e., $|\set{e\in E_T\mid e\in E(S,V \setminus S)}|\leq k$. The minimum $k$-respecting cut is the cut $S$ with minimal value $w(S)$ among all $k$-respecting cuts.
\end{definition}

In this paper, we are interested in 2-respecting cuts. Figure \ref{fig:2-respect_example} illustrates some examples.
As mentioned in the introduction, the problem of finding a minimum cut of $G$ can be reduced to finding a 2-respecting cut w.r.t. a given spanning tree $T$. 
The seminal work of Karger \cite{Karger00} showed this reduction in the sequential setting. In this paper, we employ a theorem from \cite{DagaHNS19} which implements the reduction in the distributed setting for weighted graphs. More details about the reduction from min cut to 2-min respecting cut can be found in Appendix \ref{appen:redutction_to_respect}. 
\begin{theorem}[From \cite{DagaHNS19}]\label{thm:reduction_mincut_to_respect}
Given a weighted graph $G$, in $\tilde{O}(\sqrt{n}+D)$ rounds, we can find a set of spanning trees ${\cal T}=\set{T_1,...,T_k}$ for some $k=\Theta(\log ^{2.2} n)$ such that w.h.p. there exists a min-cut of $G$ which 2-respects at least one spanning tree $T\in \cal T$. Also, each node $v$ knows which edges incident to it are part of the spanning tree $T_i$, for $1\leq i\leq k$.
\end{theorem}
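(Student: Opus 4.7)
The plan is to follow Karger's reduction from min-cut to 2-respecting cut via tree packings, in its weighted-graph strengthening due to Thorup, and implement the resulting algorithm in CONGEST using distributed MST as a subroutine. The underlying structural fact to invoke is: if one constructs a \emph{greedy tree packing} $\mathcal{T}_0$ of $t = O(\log^{1.2} n)$ spanning trees (iteratively choosing each $T_i$ to be a minimum spanning tree under a weight function that penalizes previously used edges), then for any fixed min-cut $C$ of $G$, a constant fraction of the trees in $\mathcal{T}_0$ 2-respect $C$. Repeating the construction $O(\log n)$ times, or taking sufficiently many samples, boosts the probability that at least one tree in the resulting collection 2-respects the min-cut to $1-1/\poly(n)$, which gives the target $k = \Theta(\log^{2.2} n)$.

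The distributed implementation is then essentially a loop: for $i = 1, \dots, k$, compute $T_i$ via distributed MST under the current modified weights, then update the weights. The subroutine I would invoke is the classical Garay--Kutten--Peleg/Kutten--Peleg algorithm, which computes an MST in $\tilde{O}(\sqrt{n} + D)$ CONGEST rounds. For each edge $e = \{u,v\}$, both endpoints maintain the current load $\ell_i(e)$, initially zero, and locally compute the modified weight from $\ell_i(e)$ and the original $w(e)$. When $T_i$ is returned, each endpoint learns whether its incident edges lie in $T_i$; the load update (e.g.\ $\ell_{i+1}(e) \gets \ell_i(e) + 1/w(e)$ for $e \in T_i$) is then purely local and requires no extra communication. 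Summing over $k$ iterations gives total round complexity $k \cdot \tilde{O}(\sqrt{n} + D) = \tilde{O}(\sqrt{n} + D)$, absorbing $\log^{2.2} n$ into $\tilde{O}$. At the end, every vertex $v$ knows, for each $i \in [k]$, which of its incident edges are in $T_i$, which is exactly what is required.

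The main obstacle is quoting the right structural tree-packing result: Karger's original $\Theta(\log n)$-tree guarantee applies to unweighted multigraphs via Nash--Williams, and the weighted case incurs the extra $\log^{0.2}$ factor via Thorup's analysis of the greedy tree packing. A subsidiary subtlety is bandwidth: one must ensure that the modified weights used in each MST invocation remain representable in $O(\log n)$ bits. Since the original weights are $\poly(n)$ and only $O(\log^{2.2} n)$ load increments occur per edge, the loads can be represented as $O(\log n)$-bit rationals (or integers after common scaling) throughout the run, and so fit within the CONGEST message-size bound without any modification of the black-box MST routine.
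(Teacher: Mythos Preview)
Your proposal has a genuine gap: the structural fact you invoke is false as stated. A greedy tree packing of $O(\log^{1.2} n)$ trees does \emph{not} guarantee that a constant fraction 2-respect the min-cut of a general weighted graph. Thorup's bound (Lemma~\ref{lemma:lemma_tree_packing1} in the paper) requires $\Theta(\lambda \ln m)$ trees, where $\lambda$ is the min-cut value, and $\lambda$ may be $\poly(n)$ since weights are polynomial. This is intrinsic to the multiplicative-weights nature of greedy packing: the number of iterations needed for the packing to approximate the LP relaxation depends on $\lambda$, not just on $n$. Your load-update rule $\ell_{i+1}(e)\gets\ell_i(e)+1/w(e)$ correctly simulates treating a weight-$w(e)$ edge as $w(e)$ parallel copies, but it does nothing to remove the $\lambda$ dependence in the number of trees.

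The paper's proof fills this gap with two additional steps that you omit. First, it runs the $(1+\epsilon)$-approximate min-cut algorithm of Nanongkai--Su in $\tilde O(\sqrt n+D)$ rounds to obtain an estimate $\widetilde{OPT}$. Second, it uses Karger's random-sampling lemma (Lemma~\ref{lemma:reduction_sampling_appendix}): each (unweighted copy of an) edge is kept independently with probability $\tilde p=\Theta(\log^{1.1}n/\widetilde{OPT})$, which w.h.p.\ yields a subgraph $H$ with $OPT_H=\Theta(\log^{1.1}n)$ while preserving near-minimum cuts. Only then does the greedy tree packing, now needing $OPT_H\cdot\ln m=\Theta(\log^{2.1}n)$ trees, become affordable; this is the source of the $\log^{2+\Theta(1)}n$ exponent, not a ``weighted vs.\ unweighted'' correction to Nash--Williams. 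Your CONGEST loop over $k$ MST calls is correct once applied to the sampled graph $H$, but as written---packing directly in $G$---it would require $\Theta(\lambda\log n)$ MST computations and blow the round budget.
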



\begin{figure}
    \centering
    \includegraphics[scale=0.4]{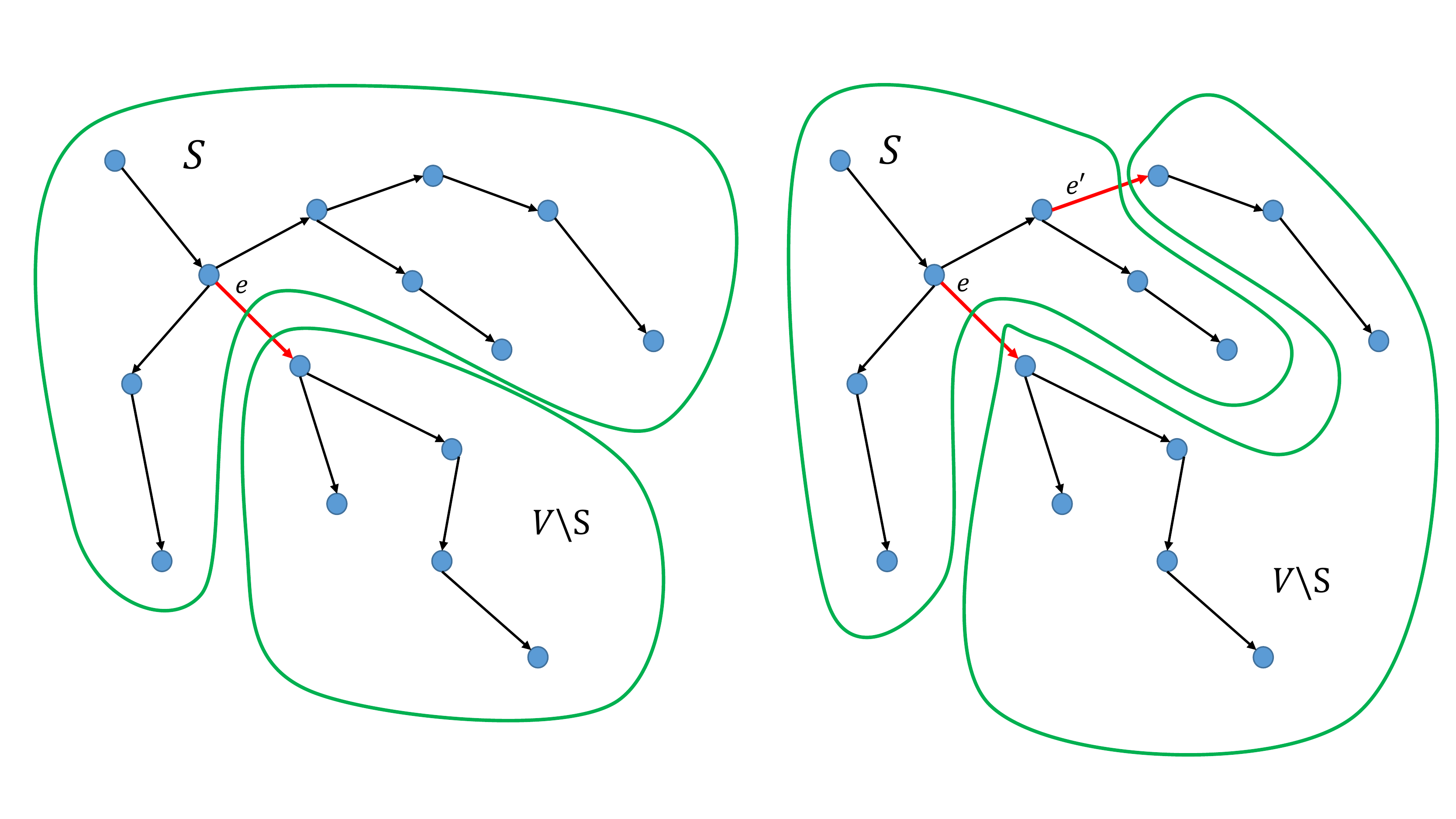}
    \caption{\small Examples of 1 respecting cut of the edge $e$ (Left), and 2-respecting cut of the edges $e,e'$ (Right), with non-tree edges omitted.}
    \label{fig:2-respect_example}
\end{figure}
\paragraph{}

 For a pair of tree edges $(e',e)$, we denote by $\ecut(e,e')$  the set of edges of $G$ that takes part in the 2-respecting cut defined by $e',e$ and $\cut(e',e)$ denotes the value of this 2-respecting cut, i.e., the total edge weight of the set $\ecut(e,e')$. See Figure \ref{fig:cover_and_cut_examples} for examples.

\subsection{Cover values} \label{sec:cover}

For a tree edge $e$, we say that an edge $x=\{u,v\}$ \emph{covers} $e$, if $e$ is in the unique $u-v$ path in the tree (See Figure \ref{fig:cover_and_cut_examples}). In particular, $e$ covers $e$, and all other edges that cover $e$ are non-tree edges. We denote by $\cov(e)$ the total weight of edges that cover $e$, and we denote by $\ECov(e)$ the set of edges that cover $e$. For two tree edges $(e',e)$, we denote by $\cov(e',e)$ the total weight of edges that cover both $e'$ and $e$, and we denote by $\ECov(e,e')$ the set of edges that cover both $e$ and $e'$.  
We denote by $p(v)$ the parent of $v$ in the tree. The following holds.

\begin{figure}[h]
    \centering
    \includegraphics[width=\textwidth]{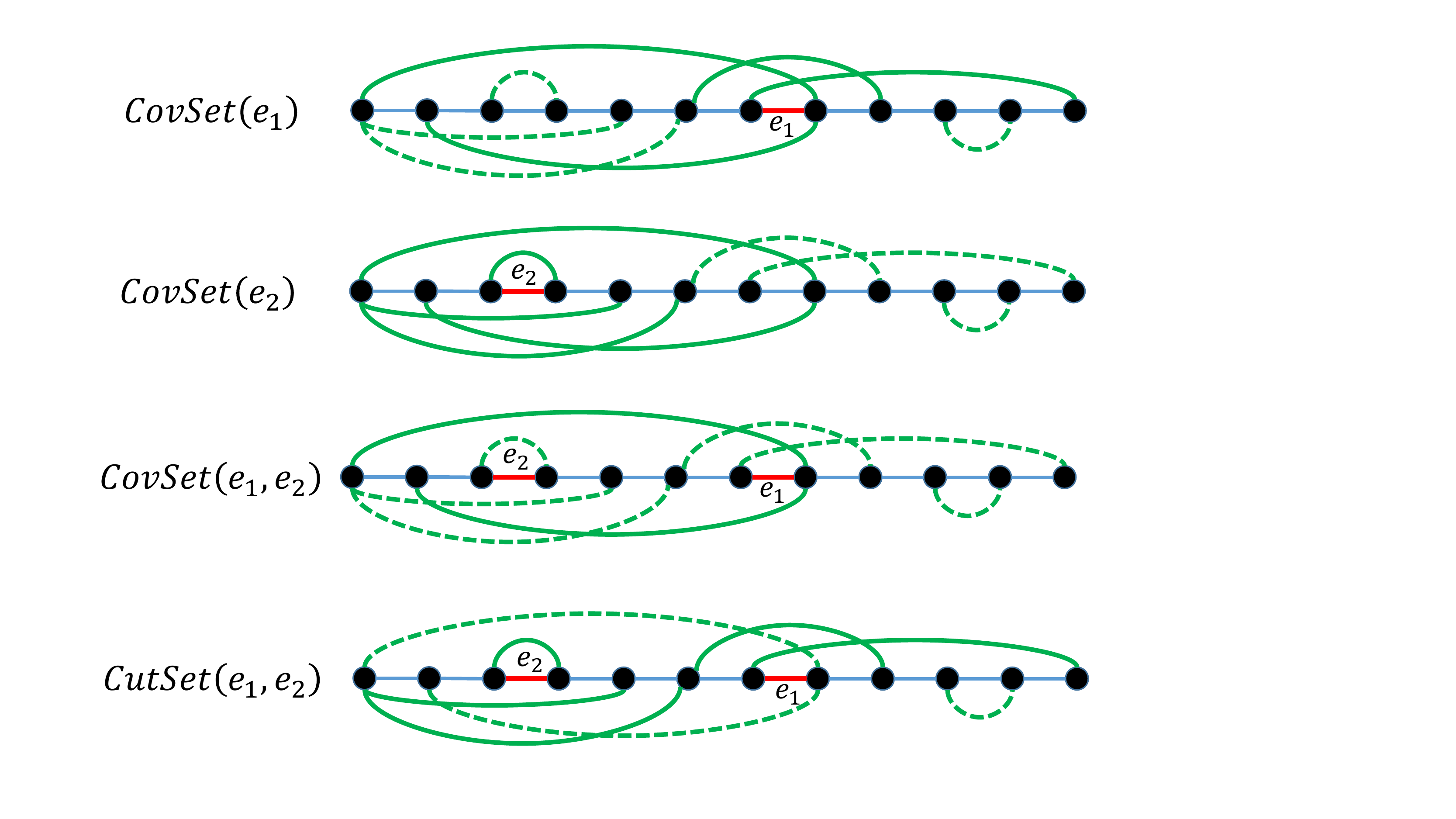}
    \caption{\small Examples of the notion of coverage, the tree $T$ is the central path in each figure. Green edges are non-tree edges. In each figure, the bold edges are the non-tree edges that cover the red tree edges. In the bottom figure, the bold edges represent the edges of $\cut(e_1,e_2)$, which are precisely edges that cover one of $e_1,e_2$, but not both.} 
    \label{fig:cover_and_cut_examples}
\end{figure}

\begin{claim} \label{claim_subtree}
Let $x$ be an edge that covers the tree edge $e=\{v,p(v)\}$, then $x$ has exactly one endpoint in the subtree $T_v$ rooted at $v$.
\end{claim}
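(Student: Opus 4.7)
The plan is to prove the claim by using the basic fact that removing a single edge from a tree disconnects it into exactly two components. Specifically, removing $e = \{v, p(v)\}$ from $T$ partitions $V$ into two connected components: the subtree $T_v$ rooted at $v$, and its complement $T \setminus T_v$ (which contains $p(v)$ and the root).

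Next, I would unpack the definition of ``covers''. By definition, $x = \{u, w\}$ covers $e$ exactly when $e$ lies on the unique $u$--$w$ path in $T$. This means that in the forest obtained by deleting $e$ from $T$, the vertices $u$ and $w$ are no longer connected (any path between them in $T$ was unique and necessarily used $e$). Therefore $u$ and $w$ must lie in different components of $T - e$, i.e., one of them is in $T_v$ and the other is in $T \setminus T_v$. This gives exactly one endpoint of $x$ in $T_v$, which is what we want.

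I do not expect any obstacle here: the argument is a one-line consequence of the fundamental property that a tree edge is a bridge. The only mild care needed is to observe that the argument does not require $x$ to be a non-tree edge — it applies uniformly to both the tree edge $e$ itself (whose endpoints are $v \in T_v$ and $p(v) \notin T_v$) and to any non-tree edge that covers $e$. Thus the proof will consist of essentially one paragraph stating the bridge property of $e$, followed by the observation that $u,w$ must lie on opposite sides of this bridge.
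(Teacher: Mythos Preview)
Your proposal is correct and follows essentially the same approach as the paper: both argue that removing $e$ disconnects $T$ into $T_v$ and its complement, so any edge whose covering path uses $e$ must have its endpoints in different components. Your write-up is slightly more detailed (explicitly invoking the bridge property and noting the case $x=e$), but the underlying idea is identical.
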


\begin{proof}
This follows as removing $e$ from the tree leaves $T_v$ as one of the connected components. Any tree path that contains $e$ must have exactly one of its border vertices in this component, which shows that any edge that covers $e$ must have one endpoint in $T_v$. 
\end{proof}

We next show that the cut value can be expressed easily using the cover values of the related edges, this would be later very useful in our algorithm, when we compute the cover values in order to compute the cut value. The proof is based on showing that the edges that cross the cut defined by two edges $e,e'$ are exactly the edges that cover exactly one of $e,e'$.

\claimCov*

\begin{proof}
In order to prove the claim, it suffices to show that $\ecut(e,e')=\ECov(e)\triangle \ECov(e')$. Here $\triangle$ represents the symmetric difference between the sets. First we show that $\ecut(e,e') \subseteq \ECov(e)\triangle \ECov(e')$. To this end, let $x=\set{u,v}\in \ecut(e,e')$ be any edge in the set $\ecut(e,e')$. The unique path in $T$ between $u$ and $v$ must cross the 2-respecting cut defined by $(e,e')$, and since the only tree edges that cross said cut are $e,e'$, we can deduce that $e$ or $e'$ are on the unique path in $T$ between $u$ and $v$. Note that only one of $e,e'$ can be on this unique path, since otherwise one would get that $u,v$ are both on the same side of the cut. Thus, by definition of covering, the edge $x$ covers  exactly one of $e,e'$, and thus $x\in \ECov(e)\triangle \ECov(e')$.

Next we show that $\ECov(e)\triangle \ECov(e') \subseteq \ecut(e,e')$. Let $x\in \ECov(e)\triangle \ECov(e')$, w.l.o.g assume that $x\in \ECov(e),x\not\in \ECov(e')$. Denote $x=\set{u,v}$, since $x\in \ECov(e),x\not\in \ECov(e')$, we deduce that the unique path in $T$ between $u$ and $v$ goes through $e$, but not through $e'$, thus this path crosses the 2-respecting cut defined by $e,e'$ only once. From which we can deduce that $u,v$ are on different sides of the 2-respecting cut defined by $e,e'$. Thus $x=\set{u,v}\in \ecut(e,e')$ as required. 
\end{proof}

We denote by $\cut(e)$ the value of the 1-respecting cut defined by $e$, i.e., the cut obtained after removing $e$ from the tree. It is easy to see that $\cut(e)=\cov(e)$.

\begin{claim} \label{claim_1_respecting}
$\cut(e)=\cov(e)$.
\end{claim}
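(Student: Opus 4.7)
The plan is to unfold both sides of the identity and show they count exactly the same set of edges with the same weights. First I would recall that the 1-respecting cut defined by $e = \{v, p(v)\}$ is the cut obtained by removing $e$ from $T$; since $T$ is a spanning tree, its removal leaves exactly two connected components, namely the subtree $T_v$ rooted at $v$ and its complement $V \setminus T_v$. By definition, $\cut(e)$ is the total weight of edges of $G$ with exactly one endpoint in $T_v$.

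Next I would show that these are precisely the edges counted by $\cov(e)$. By definition, $\cov(e) = \sum_{x \in \ECov(e)} w(x)$, where $\ECov(e)$ is the set of edges whose unique tree path contains $e$. For the forward direction, Claim~\ref{claim_subtree} gives immediately that every $x \in \ECov(e)$ has exactly one endpoint in $T_v$, hence contributes to the 1-respecting cut. For the reverse direction, if $x = \{u, w\}$ has exactly one endpoint (say $u$) in $T_v$, then the unique $u$-$w$ path in $T$ must leave $T_v$, and the only tree edge on the boundary of $T_v$ is $e$, so this path must pass through $e$; thus $x \in \ECov(e)$.

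Therefore the set of edges crossing the 1-respecting cut defined by $e$ coincides with $\ECov(e)$, and summing weights on both sides yields $\cut(e) = \cov(e)$. There is no real obstacle here: the whole argument is a one-line consequence of Claim~\ref{claim_subtree} together with the observation that removing $e$ from $T$ isolates exactly $T_v$, so I would expect the written proof to be only a couple of sentences long.
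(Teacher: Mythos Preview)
Your proposal is correct and matches the paper's proof essentially line for line: both identify the two components $T_v$ and $V\setminus T_v$ obtained by removing $e$, invoke Claim~\ref{claim_subtree} for one direction, and observe that any edge crossing between the components has a tree path forced through $e$ for the other. The paper's version is just slightly terser.
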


\begin{proof}
Let $e=\{v,p(v)\}$. As discussed in the proof of Claim \ref{claim_subtree}, removing $e$ from the graph leaves one component that is the subtree rooted at $v$, and the rest of the tree as the second component. The edges that cross the cut are exactly the edges that have exactly one endpoint in each one of the components. These are exactly all edges that cover $e$, as the tree paths defined by these edges move between these components, hence they must include $e$.
\end{proof}

\remove{
\begin{definition}\label{def:monotone_matrix}
Given a matrix $M$ of dimension $k \times \ell$, define $\min(j), j \in [\ell]$ to be the smallest row index $i$ of the column $j$ such that $M(i,j)$ is the minimum value of the column $j$. A matrix is monotone iff 
\begin{align*}
    \min(1) \leq \min(2) \leq \cdots \leq \min(\ell).
\end{align*}
A matrix $M$ is totally monotone if all submatrices of $M$ is monotone.
\end{definition}

\begin{claim} \label{clm:monotonicity}
Consider two paths $P, P'$ that are not in the same root-to-leaf path in $T$. Consider a matrix $M_{P,P'}$ where each row is labeled by edges $e \in P$, each column is labeled by edges $e' \in P'$ with the property that the edge occurring closer to the root of $T$ appears earlier in the row and column ordering of $M_{P.P'}$. Also, let $M_{P.P'}(e,e') = \cut(e,e')$. Then the matrix $M_{P.P'}$ is a totally monotone matrix.
\end{claim}

\begin{corollary} \label{claim_monotonicity}
Let $P',P$ be two paths not in the same root to leaf path in the tree.
If $e_1,e_2$ are edges in $P$, where $e_2$ is closer to the root, and $e'_1,e'_2$ are the edges in $P'$ such that $\cut(e'_1,e_1)$ is minimal for $e'_1 \in P'$ and $\cut(e'_2,e_2)$ is minimal for $e'_2 \in P'$, then either $e'_2=e'_1$ or $e'_2$ is closer to the root compared to $e'_1.$
\end{corollary} \mtodo{we may also want to assume that we take for example the highest edges from the ones that obtain the minimum, as otherwise I'm not sure that the claim is correct, think for example on a case that for any pair of edges $e' \in P', e \in P$ the value $\cut(e',e)$ is the same.}
}

\subsection{LCA labels} \label{sec:lca}

We use the tool of lowest common ancestor (LCA) labels to check easily if a tree edge is covered by some non-tree edge. We use the LCA labels from \cite[see Section 2.3.2]{dory2020} (See also section 5.2 in \cite{censor2019fast}), which adapt the sequential labeling scheme of  \cite{alstrup2004nearest} to the distributed setting. This allows to give any vertex in the graph a short label of $O(\log n)$ bits such that given the labels of two vertices $u,v$, we can infer the label of their LCA just from the labels. During the algorithm, when we send an edge, we always send its labels as well, which allows these computations. The time for computing the labels is $O(D+\sqrt{n}\log^{*}{n})$ as shown in \cite{dory2020,censor2019fast}. We next show that the labels allow to determine if a non-tree edge covers a tree edge. This is also used in \cite{censor2019fast,DBLP:conf/podc/DoryG19}.

\begin{claim} \label{claim_LCA_labels}
In $O(D+\sqrt{n}\log^{*}{n})$ time, we can assign all the vertices in the graph short labels, such that given the labels of a tree edge $e$ and a non-tree edge $x$, we can learn whether  $x$ covers $e$. Additionally, given the labels of two vertices $u,v$ we can deduce $\lca{u,v}$.
\end{claim}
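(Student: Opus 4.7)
The plan is to invoke the existing distributed LCA labeling construction from \cite{dory2020,censor2019fast} as a black box, and then observe that the coverage test between a tree edge and a non-tree edge reduces to two LCA queries on labels. Those schemes assign to each vertex $v$ an $O(\log n)$-bit label $\ell(v)$ in $O(D+\sqrt{n}\log^{*}n)$ rounds with the property that, given only $\ell(u)$ and $\ell(v)$, one can compute $\ell(\lca{u,v})$; this already yields the second half of the claim.

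For the first half, I would prove the following elementary structural fact: for any edge $x=\{u,v\}$ and any tree edge $e=\{a,p(a)\}$ with $a$ the child endpoint, $x$ covers $e$ if and only if $a$ is an ancestor of exactly one of $u,v$. The three cases are easy to check. If $a$ is an ancestor of neither $u$ nor $v$, then $\lca{u,v}$ lies in a subtree that does not contain $a$, so the $u$--$v$ path avoids $a$ entirely and hence avoids $e$. If $a$ is an ancestor of both, then $p(a)$ is also a common ancestor, so $\lca{u,v}$ is weakly above $p(a)$ and the $u$--$v$ path stays strictly above $e$. If $a$ is an ancestor of exactly one of them, say $u$, then the path from $u$ upward toward $\lca{u,v}$ must pass through $a$ and continue through $p(a)$ (since $\lca{u,v}$ is strictly above $a$), so $e$ lies on the $u$--$v$ path.

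To mechanize this test on labels, note that the statement ``$a$ is an ancestor of $w$'' is equivalent to $\lca{a,w}=a$, which, by the labeling scheme, can be decided purely from $\ell(a)$ and $\ell(w)$. Thus, whenever an algorithm transmits an edge, we attach the $O(\log n)$-bit labels of both its endpoints; any receiver that holds the labels of $a, p(a), u, v$ can locally compute $\lca{a,u}$ and $\lca{a,v}$ and then apply the XOR of the two ancestry tests to decide whether $x$ covers $e$. No extra communication is required beyond the one-time labeling.

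The only nontrivial step is the efficient construction of the LCA labels in CONGEST; this is exactly what is accomplished by \cite{dory2020,censor2019fast}, which gives the $O(D+\sqrt{n}\log^{*}n)$ round bound. The remaining ingredient is the two-line structural observation above, so I do not anticipate a genuine obstacle beyond citing the prior labeling result.
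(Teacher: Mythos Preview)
Your approach is essentially identical to the paper's: invoke the distributed LCA labeling of \cite{dory2020,censor2019fast} as a black box, then reduce the coverage test to two ancestry queries of the form $\lca{a,w}=a$. One small slip in your case analysis: when $a$ is an ancestor of \emph{both} $u$ and $v$, you claim $\lca{u,v}$ is weakly above $p(a)$ and the path stays above $e$, but in fact $\lca{u,v}$ lies in the subtree $T_a$ (weakly \emph{below} $a$), so the $u$--$v$ path is confined to $T_a$ and never traverses $e=\{a,p(a)\}$; the conclusion is right but the stated reason should be corrected.
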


\begin{proof}
Let $e=\{v,p(v)\}$ and $x=\{u,w\}$. From Claim \ref{claim_subtree}, we know that $x$ covers $e$ iff $x$ has exactly one endpoint in the subtree $T_v$ rooted at $v$. This can be easily checked using LCA labels. For any vertex $v'$, $v'$ is in the subtree rooted at $v$ iff $\lca{v',v}=v$, as $v$ is an ancestor of all vertices in $T_v$. Hence, to determine if $x$ covers $e$, we compute $\lca{u,v},\lca{w,v}$ and check whether the answer is $v$ in exactly one of the cases.
\end{proof}

LCA labels are also useful to infer which edges in the graph participate in a 2-respecting cut, as we show next.
\begin{observation} \label{obs:observation_learn_cut}
Given the labels of at most 2 edges $e,e'$ that define a 2-respecting cut, each vertex can learn exactly which of its incident edges cross the cut. This does not require any communication.
\end{observation}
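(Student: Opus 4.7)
The plan is to reduce the observation to the two earlier pieces already established: the structural characterization of the cut edges from Claim~\ref{claim_cover}, and the local coverage test provided by the LCA labels in Claim~\ref{claim_LCA_labels}.

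First, I would recall (from the proof of Claim~\ref{claim_cover}) that the edge set of the 2-respecting cut defined by $e,e'$ is exactly $\ECov(e)\triangle \ECov(e')$, i.e. the edges that cover exactly one of the two tree edges. (For the 1-respecting case of a single edge $e$, Claim~\ref{claim_1_respecting} gives the cut as $\ECov(e)$, which fits the same framework with $e'=e$.) Therefore, to decide whether an incident edge $x$ of a vertex $v$ crosses the cut, it suffices for $v$ to determine the two bits ``$x$ covers $e$?'' and ``$x$ covers $e'$?'' and take their XOR.

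Second, I would argue that $v$ already possesses all the information needed to compute these two bits without any new communication. After running the LCA labeling routine used in Claim~\ref{claim_LCA_labels}, every vertex in the graph holds its own $O(\log n)$-bit label, and each vertex has also exchanged labels with its neighbors once (this is part of the labeling setup), so for each incident edge $x=\{v,u\}$ the vertex $v$ knows both endpoint labels. By assumption, the labels of $e$ and $e'$ are given to $v$. Claim~\ref{claim_LCA_labels} says that from the labels of a tree edge and a (tree or non-tree) edge alone, one can locally determine the covering relation via an LCA computation on labels. Applying this test twice, once against $e$ and once against $e'$, $v$ obtains the required two bits for each incident $x$ and concludes whether $x$ lies in $\ECov(e)\triangle \ECov(e')$.

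Finally, the tree edges $e$ and $e'$ themselves must be flagged as cut edges. This is automatic under the same rule: an edge covers itself but not a distinct other tree edge on a disjoint vertex pair, so $e\in \ECov(e)\setminus \ECov(e')$ and symmetrically for $e'$, and the endpoints of $e$ (resp.\ $e'$), which hold the labels of $e$ (resp.\ $e'$) by assumption, will mark the edge accordingly. There is no real obstacle in this proof; the only mild subtlety is making explicit that neighbors' labels are already known from the LCA-labeling preprocessing, so that the XOR test for each incident edge is genuinely a local computation requiring zero additional rounds.
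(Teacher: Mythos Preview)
Your proposal is correct and follows essentially the same approach as the paper: both argue that an incident edge $f$ crosses the cut iff it covers exactly one of $e,e'$, and that each such coverage bit can be computed locally via the LCA-label test of Claim~\ref{claim_LCA_labels}. Your write-up is a bit more explicit (the XOR framing, the remark about neighbors' labels, and the treatment of $e,e'$ themselves), but the underlying argument is the same.
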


\begin{proof}
This observation follows from LCA checks that $v$ can do (Claim \ref{claim_LCA_labels}). 
First, consider the simple case that the cut is defined by one edge $e$. Then, an edge $e'$ crosses the cut iff it covers $e$ which can be deduced from Claim \ref{claim_LCA_labels}.

We next focus on the case that there are two tree edges $(e, e')$ that the cut respects (i.e., when the 2-respecting cut is an \textit{exact 2-respecting cut}). 

Consider any edge $f= \{u,v\}$ which is incident on $v$. Note that, given $(e,e')$, $v$ can do an LCA check to find out which edges among $e$ and $e'$ are covered by $f$. The edge $f$ takes part in the cut iff $f$ covers \textit{exactly} one edge among $e$ and $e'$---this again can be computed inside $v$ without any communication using Claim \ref{claim_LCA_labels}. Hence $v$ can infer which edges incident to it are in the cut by local computation.
\end{proof}

For two tree edges $e=\set{p(v),v},e'=\set{p(v'),v'}$, denote by $\lca{e,e'}$ the vertex $v^*$ such that $v^*=\lca{v,v'}$. Note that by Claim \ref{claim_LCA_labels}, given the labels of $e,e'$, one can deduce $\lca{e,e'}$, since the lemma allows one to deduce $\lca{v,v'}$.

\subsection{Useful notation}

We next define a subtree $T(P)$ related to a path $P$, this is later useful for our algorithm.
We always assume that the corresponding spanning tree is rooted at a root vertex which we denote as $r$. For a path $P$ between an ancestor $r_P$ and a descendant of it in the tree, we denote by $T(P)$ the subtree that includes the path $P$, and all the subtrees rooted at vertices in $P \setminus r_P$, and we denote by $T(P^{\downarrow})$ the subtree $T(P) \setminus r_P$. For a tree edge $e=\{u,v\}$, we denote by $\dw{e}$ the tree rooted at $v$, where $v$ is the node farther from the root.  
In this work, whenever we mention a path $P$, we always assume that it is an ancestor-to-descendant path, i.e., the path occurs as a subpath of a root-to-leaf path of $T$.

\section{Tree decompositions}\label{sec:building_blocks}


Before we present the algorithm, we discuss in this section two tree decompositions that are crucial for our algorithm: a fragment decomposition and a layering decomposition.

\subsection{Fragment decomposition}\label{ssec:-decompFragment}

Here we discuss a decomposition of a tree $T$ into edge-disjoint components, each with small size. This is a variant of a decomposition that appeared first in \cite{ghaffari2016near}, and was refined and used also in \cite{Dory18, DBLP:conf/podc/DoryG19}. In these works, the tree is decomposed into $O(\sqrt{n})$ edge-disjoint fragments of diameter $O(\sqrt{n})$. In our variant, we also make sure that the \emph{size} of each fragment is $O(\sqrt{n})$.

The fragment decomposition is defined by $\nfrag = O(\sqrt{n})$ many tuples, each of the form $t_F = (r_F, d_F)$, with the following properties (see Figure \ref{fig:fragment_decomposition_zoom_in_on_a_single_fragment}):
\begin{enumerate}
    \item Each tuple $t_F=(r_F, d_F)$ represents an edge-disjoint fragment (subtree) $F$ of $T$ rooted at $r_F$ with diameter $\dfrag = O(\sqrt{n})$ and size  $\sfrag = O(\sqrt{n})$. The vertex $r_F$ is an ancestor of all vertices in the fragment $F$ in $T$. 
    
    \item Each fragment $F$ has a special vertex $d_F$ which is called the unique descendant of the fragment. The unique path between $r_F$ and $d_F$ is called the \textit{highway} of the fragment. Each fragment has a single highway path. The vertices $r_F$ and $d_F$ are the only two vertices of the fragment $F$ which can occur in other fragments.
    
    \item All edges that are not part of the highway, are called \emph{non-highway} edges. Each non-highway path is completely contained inside a single fragment.  
    
    \item Each edge of $T$ takes part in exactly one fragment $F$.
\end{enumerate}


We show in Lemma \ref{lemma:strong_decomp_construct} how to compute the fragment decomposition employed in this paper.
We denote by $\cF$ the set of fragments in the fragment decomposition.
\begin{figure}
    \centering
    \includegraphics[scale=0.3]{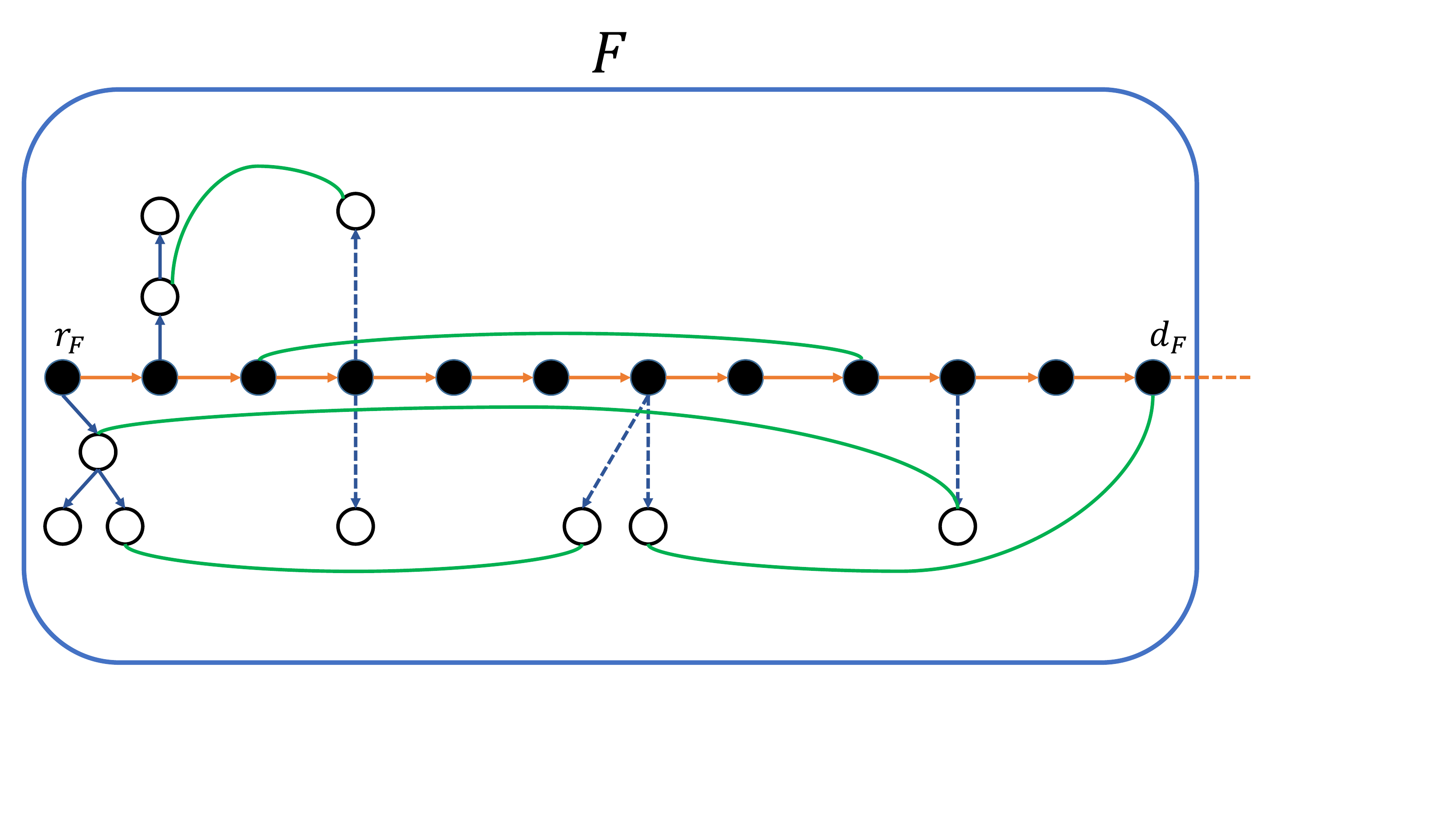}
    \caption{\small The internal topology of a single fragment. The path comprised of black nodes is the highway of the fragment $F$. Orange edges denote highway edges. Blue edges denote non-highway edges. Empty nodes are nodes that are adjacent to no highway edge. Dotted edges indicate arbitrarily long paths. Green undirected edges correspond to possible \emph{non-tree} edges in the graph $G$. }
    \label{fig:fragment_decomposition_zoom_in_on_a_single_fragment}
\end{figure}
Given such a decomposition of a tree $T$, we can associate a virtual \textit{skeleton tree} $T_S$ naturally to the decomposition in the following way (see Figure \ref{fig:skeleton_tree_example}):
\begin{enumerate}
    \item $T_S$ has $\nfrag + 1$ many vertices: For each vertex that is either $r_F$ or $d_F$ in one of the fragments, there is a vertex in $T_S$,
    
    \item The edges in $T_S$ correspond to the highways of the fragments, i.e., there is an edge $\{u,v\}\in T_S$ where $u$ is a parent of $v$ iff $u=r_F$ and $v=d_F$ for some fragment $F$.
\end{enumerate}

\begin{figure}
    \centering
    \includegraphics[scale=0.3]{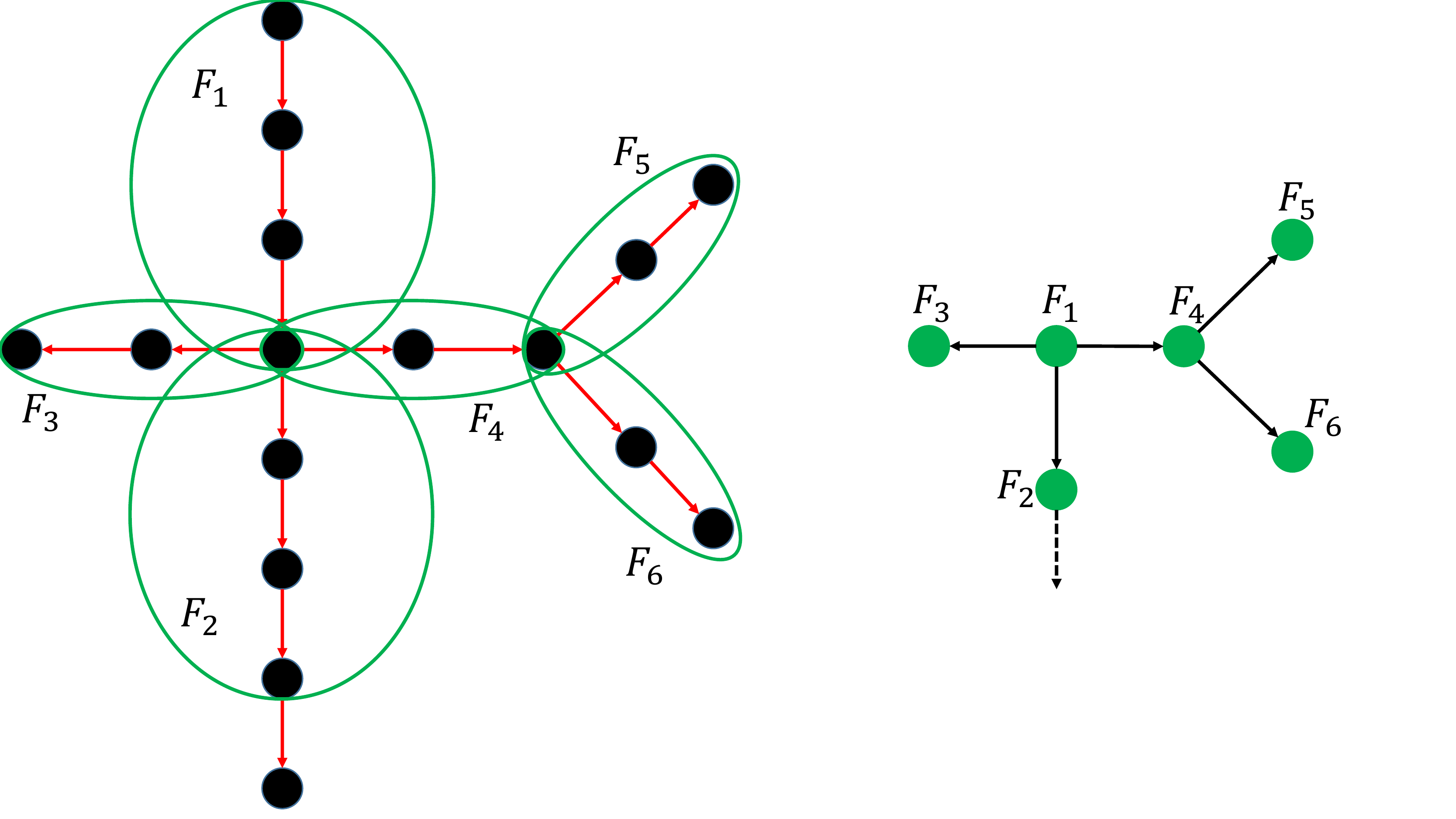}
    \caption{\small An example of a skeleton tree. Fragments are circled in green with non-highway edges and vertices omitted. The skeleton tree has a unique vertex corresponding to each fragment, and two vertices of the skeleton tree are connected by an edge if the corresponding fragments share a (highway) vertex.}
    \label{fig:skeleton_tree_example}
\end{figure}


In Appendix \ref{sec:appen_building_blocks}, we prove the following lemma.
\begin{restatable}{lemma}{FragDecomp}\label{lemma:strong_decomp_construct}
Given a graph $G=(V,E)$ and a rooted spanning tree $T$ of $G$, one can compute in $\tO(\sqrt{n}+D)$ rounds a fragment decomposition of $G$ with $\nfrag=\dfrag=\sfrag=O(\sqrt{n})$.  In particular, each vertex $v$ learns the following information about each fragment $F$ it belongs to:

\begin{enumerate}
    \item The identity $(r_F,d_F)$ of the fragment $F$.
    \item The complete structure of the skeleton tree $T_S$.
    \item All the edges of the unique path connecting $v$ and $r_F$, and also the edges of the unique path connecting $v$ and $d_F$.
    \item All the edges of the highway of the fragment $F$.
\end{enumerate}
\end{restatable}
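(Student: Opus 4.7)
The plan is to build the decomposition in two stages and then disseminate its structure. The first stage identifies $O(\sqrt{n})$ \emph{landmark} vertices in $T$ that will serve as the $r_F$'s and $d_F$'s; the second stage makes the structure known to every vertex.

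For the first stage, I would mark a vertex $v$ as a landmark whenever one of two conditions holds: (i) the subtree $T_v$ of $T$ rooted at $v$ has size at least $\sqrt{n}$ while every child of $v$ has a subtree of size less than $\sqrt{n}$ (these are the minimal ``heavy'' vertices), or (ii) $v$ is separated from its nearest landmark descendant by exactly $\sqrt{n}$ edges along the path toward that descendant (these are the ``highway break-points''). The number of type-(i) landmarks is at most $\sqrt{n}$ by a standard counting argument, and the number of type-(ii) landmarks is also at most $\sqrt{n}$ because each such landmark accounts for $\sqrt{n}$ fresh tree edges along a spine. Each resulting fragment, defined between a landmark and its nearest landmark descendants, is an edge-disjoint subtree of $T$; collectively they cover $T$ and satisfy $\nfrag,\dfrag,\sfrag=O(\sqrt{n})$, because the heavy condition controls the size and the break-point condition controls the highway length.

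The key technical obstacle is computing subtree sizes of $T$ in $\tilde{O}(\sqrt{n}+D)$ rounds, since $T$ itself may have depth $\Theta(n)$ and a direct convergecast would cost $\Theta(n)$ rounds. I would address this by a bootstrap: first compute a coarser preliminary decomposition of $T$ into edge-disjoint subtrees of diameter $O(\sqrt{n})$ via the standard MST-style fragment-growing procedure of \cite{ghaffari2016near,Dory18}, which controls only the diameter and runs in $\tilde{O}(\sqrt{n}+D)$ rounds. Within each preliminary fragment, aggregate partial subtree contributions in $O(\sqrt{n})$ rounds; then, across the $O(\sqrt{n})$-vertex skeleton of this preliminary decomposition, propagate partial sums using a BFS tree of $G$ (computed in $O(D)$ rounds) to obtain the true subtree sizes in $T$. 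Once sizes are known, marking the landmarks is a purely local operation.

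The second stage is routine dissemination. Each vertex learns (a) the identity $(r_F, d_F)$ of each fragment it belongs to directly from the marking; (b) the full skeleton tree $T_S$ by broadcasting its $O(\sqrt{n})$ edges along the BFS tree of $G$, which takes $O(D+\sqrt{n})$ rounds by pipelining; (c) the highway edges of its fragment and (d) the paths from itself to $r_F$ and $d_F$, both of which can be obtained by pipelined broadcast along each highway (length $O(\sqrt{n})$) and through each fragment (diameter $O(\sqrt{n})$). Highways of different fragments are edge-disjoint except possibly at shared endpoints, so these intra-fragment broadcasts can run in parallel without causing congestion beyond their fragment. The hardest step, as noted, is the bootstrapped subtree-size computation on $T$; with that in hand, the rest of the construction and dissemination follow from standard tools, giving the claimed $\tilde{O}(\sqrt{n}+D)$ complexity.
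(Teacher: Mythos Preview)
Your marking scheme does not deliver the required structure, and the star graph is a clean counterexample to the size bound. Take $T$ to be a root with $n-1$ leaves. The root has subtree size $n\ge\sqrt{n}$ and every child has subtree size $1<\sqrt{n}$, so the root is your unique type-(i) landmark; there are no type-(ii) landmarks since all root-to-leaf paths have length $1$. You end up with a single landmark and hence one fragment of size $n$, contradicting $\sfrag=O(\sqrt{n})$. More generally, your ``heavy condition controls the size'' claim silently assumes bounded degree: a minimal-heavy vertex can have $\Theta(\sqrt{n})$ children each of subtree size $\Theta(\sqrt{n})$, giving a fragment of size $\Theta(n)$. The paper confronts exactly this issue and resolves it by an explicit size-splitting pass inside each preliminary component: a bottom-up counter at every vertex that, whenever the accumulated descendant count enters $[\sqrt{n},5\sqrt{n}]$, peels off a new component, and at a high-degree vertex greedily groups children into batches of total count in that range. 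This is the step your plan is missing.

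There is a second gap: your landmark set is not LCA-closed, so a non-landmark internal vertex can have nearest landmark descendants in two different child subtrees, and then the ``fragment between a landmark and its nearest landmark descendants'' has no single highway and no unique $d_F$. The paper fixes this by, after the size-splitting pass, marking the component-boundary vertices and then additionally marking $\lca{u,v}$ for every pair of marked vertices; this closure guarantees that no unmarked internal highway vertex has a marked descendant off the highway, which is precisely the single-highway property. Your bootstrap for computing subtree sizes is fine, and the dissemination stage is routine, but without the size-splitting at high-degree vertices and the LCA closure the decomposition you produce does not satisfy the fragment axioms of Section~\ref{ssec:-decompFragment}.
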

 
 The full proof of this lemma can be found in Appendix \ref{sec:appen_building_blocks}. On a high level, the construction is divided into two main steps. First, we  decompose the tree into $O(\sqrt{n})$ edge disjoint components each of size $O(\sqrt{n})$. Then, we further breakdown each component to enhance them with the properties depicted in bullets 2-4 at the beginning of the section, without forfeiting the size and number of components guarantees. This second part follows \cite{ghaffari2016near, Dory18}.

\paragraph{Routing Information.} The following lemma shows an efficient way to route information from non-tree edges to tree edges they cover. This allows for example to compute efficiently the value $\cov(e)$ for all edges $e$. For a proof see Claim 2.6 in \cite{dory2020}.
In addition, in \cite{ghaffari2016near,Dory18} this lemma is proven for the special case where each tree edge wants to learn about the \emph{best} edge that covers it according to some criterion (see Section 4.2 in \cite{ghaffari2016near}, and Section 3.1 (II) in \cite{Dory18}). The exact same argument works also for the more general case when each tree edge wants to compute some commutative aggregate function of the edges that cover it (for example, the sum of their weights).
Denote by $C_t\subseteq E$ the set of non-tree edges that cover a given tree edge $t$. 

\begin{lemma}(\cite{dory2020})\label{lemma:routing_lemma_aggregate}
Assume that each non-tree edge $e$ has some information $m_e$ of $O(\log n)$ bits, and let $f$ be
a commutative function with output of $O(\log n)$ bits. In $O(D+\max\{\dfrag, \nfrag\})$ rounds, each tree edge $t$, learns the output of $f$ on the inputs $\set{m_e}_{e\in C_t}$. 
\end{lemma}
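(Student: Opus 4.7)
The plan is to carry out the aggregation in two stages, matching the two-level structure of the fragment decomposition: locally within each fragment, and globally through the skeleton tree $T_S$. For any non-tree edge $e=\{u,v\}$, its cover set consists precisely of the tree edges on the unique $u$-$v$ path in $T$, which splits into at most three contiguous pieces: an intra-fragment piece inside the fragment $F_u$ containing $u$, a symmetric piece inside $F_v$, and a middle piece that traverses highway edges of $T_S$. I will compute partial aggregates for each of these pieces and combine them at the end so that every tree edge $t$ obtains $f(\{m_e\}_{e\in C_t})$.

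First I would handle the intra-fragment contributions in parallel across fragments. Fix a fragment $F$ of size and diameter $O(\dfrag)$. Using the LCA labels of Claim~\ref{claim_LCA_labels}, each endpoint $u$ of a non-tree edge $e$ can decide locally which portion of $F$ its message must cover (namely, the path from $u$ up to either $\lca{u,v}$ or $r_F$, whichever is lower). A standard convergecast inside $F$, aggregating via $f$ at each step, then lets each tree edge $t\in F$ accumulate the contribution of every non-tree edge whose cover path visits $t$ and whose relevant endpoint is in $F$. Since $F$ has depth $O(\dfrag)$ and every message we propagate is an $O(\log n)$-bit running aggregate, this phase costs $O(\dfrag)$ rounds.

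Second, I would handle the skeleton-tree part. After the local phase, each fragment $F$ can summarize, for the highway edge $h_F\in T_S$ incident to $r_F$ from below, the aggregate of $m_e$ over all non-tree edges $e$ whose cover path enters the skeleton tree at $F$ (i.e., whose lower endpoint lies in $F$ and whose LCA lies strictly above $r_F$). Since $T_S$ has only $O(\nfrag)$ edges, I would collect all these summaries at the root of a BFS tree of $G$, compute on $T_S$ the aggregate over highway edges by a subtree-style sweep, and broadcast the resulting highway values back. Routing $O(\nfrag)$ messages of $O(\log n)$ bits over a BFS tree of diameter $O(D)$ via pipelining takes $O(D+\nfrag)$ rounds. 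Finally, each highway value is pushed down into the interior of its fragment and combined via $f$ with the intra-fragment aggregate stored at each tree edge; this last step again costs $O(\dfrag)$ rounds.

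The main obstacle, and the subtlest point, is to avoid double-counting so that each non-tree edge contributes its message to every tree edge on its cover path exactly once: an edge whose endpoints live in different fragments should contribute to the intra-fragment aggregates in $F_u$ and $F_v$ and to the skeleton-tree aggregate in a coordinated way, and an edge whose endpoints share a fragment but whose LCA still forces part of its path outside $F$ needs special care. The LCA labels, which can be computed inside the promised time budget, resolve all of these choices locally at the endpoints, so the three partial aggregates are disjoint and their combination under $f$ is exactly $f(\{m_e\}_{e\in C_t})$. Adding up, the total round complexity is $O(D+\max\{\dfrag,\nfrag\})$, as required.
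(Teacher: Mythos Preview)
The paper does not prove this lemma itself; it defers to Claim~2.6 of \cite{dory2020} (and to \cite{ghaffari2016near,Dory18} for the special case). Your two-level plan---aggregate locally inside each fragment, then globally over the skeleton tree---is indeed the architecture those references use, but two steps in your sketch do not go through as written.

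Inside a fragment you send each endpoint's message ``up to $\lca{u,v}$ or $r_F$'' and then say a standard convergecast finishes the job. For a commutative $f$ without an inverse you cannot strip contributions out of a running aggregate once they have been merged, so the message cannot simply ``stop'' at the LCA; the usual fix is to bucket contributions by LCA depth and pipeline the $O(\dfrag)$ buckets, letting each tree edge combine only the buckets strictly above it. More importantly, an upward pass alone misses a whole class of contributions to highway tree edges: if $e=\{u,v\}$ has $u$ in a non-highway subtree of $F$ and $v$ below $d_F$, then $\lca{u,v}$ lies on the highway of $F$, and the cover path inside $F$ runs both up from $u$ to that LCA and \emph{down} from the LCA to $d_F$; only $u$ is available inside $F$, and its upward message never reaches the highway edges between the LCA and $d_F$. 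One needs a second pass in the reverse orientation (toward $d_F$), exactly as the paper itself does elsewhere (Claim~\ref{claim_pipeline_fragment}). Your skeleton phase has the analogous problem: a single scalar summary per fragment does not record where each inter-fragment edge's cover path stops in $T_S$, so a subtree sweep over these scalars over-counts. The clean fix is to drop the per-fragment summaries and compute each of the $O(\nfrag)$ per-highway aggregates directly, by pipelining $O(\nfrag)$ global convergecasts on a BFS tree (each non-tree edge uses LCA labels to test whether it covers a given highway in its entirety); this costs $O(D+\nfrag)$, and combined with the corrected intra-fragment phase yields the stated bound.
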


From this lemma, one can deduce the following claims.

\begin{claim}\label{claim:learn_cov_value} \label{claim_learn_cov}
All tree edges $e$ can learn their cover values $\cov(e)$. This is done simultaneously for all edges in $O(D+\sqrt{n})$ rounds.
\end{claim}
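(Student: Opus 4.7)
The plan is to directly invoke the routing lemma (Lemma \ref{lemma:routing_lemma_aggregate}) with a carefully chosen message and aggregation function, relying on the fact that the fragment decomposition guarantees $\dfrag,\nfrag = O(\sqrt n)$.

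Concretely, I would first recall the definition: for a tree edge $e$,
\[
\cov(e) \;=\; w(e) \;+\; \sum_{x \in C_e} w(x),
\]
where $C_e$ is the set of \emph{non-tree} edges that cover $e$. The second summand is exactly an aggregate computation of the type handled by Lemma \ref{lemma:routing_lemma_aggregate}: set the message on every non-tree edge $x$ to be $m_x := w(x)$ (which fits in $O(\log n)$ bits since weights are polynomially bounded), and take $f$ to be integer addition, which is commutative and whose output also fits in $O(\log n)$ bits after truncation/knowledge of the polynomial bound. Invoking the lemma, every tree edge $t$ learns $\sum_{x\in C_t} w(x)$ in $O(D+\max\{\dfrag,\nfrag\})$ rounds. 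By Lemma \ref{lemma:strong_decomp_construct}, the fragment decomposition used here satisfies $\dfrag = \nfrag = O(\sqrt n)$, so the round complexity is $O(D+\sqrt n)$ as required.

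Finally, each tree edge $t$ already knows its own weight $w(t)$ locally (one of its endpoints holds it), so it can add $w(t)$ to the aggregated sum obtained above, with no further communication, to produce $\cov(t)$. All tree edges do this in parallel within the same invocation of Lemma \ref{lemma:routing_lemma_aggregate}, hence the computation happens simultaneously for all $t$. The only step that requires justification beyond a direct quotation is that we are applying the routing lemma in its full aggregate form rather than the ``best covering edge'' form highlighted in prior work; but this is exactly the strengthening already stated in the excerpt above the lemma, so there is no real obstacle.
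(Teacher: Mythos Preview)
Your proposal is correct and essentially identical to the paper's proof: both invoke Lemma~\ref{lemma:routing_lemma_aggregate} with $m_x = w(x)$ and $f$ being summation, then locally add $w(e)$ to obtain $\cov(e)$, relying on $\dfrag,\nfrag = O(\sqrt{n})$ for the round bound.
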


\begin{proof}
This follows from Lemma \ref{lemma:routing_lemma_aggregate}, where the information for non-tree edges is their weight, and the function is sum. After applying the lemma, each tree edge learns the total cost of non-tree edges that cover it. Adding to it to weight of $e$ gives $\cov(e)$.
\end{proof}

\begin{claim}\label{claim:route_info}
Assume that for each tree edge $e$, there is a unique non-tree edge $e'\in \ECov(e)$ that wants to send $t$ bits of information to $e$. Then this routing can be done in parallel for all tree edges in $O((D+\max\set{\dfrag,\nfrag})\cdot \frac{t}{\log n})$ rounds.
\end{claim}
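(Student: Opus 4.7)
The plan is to reduce Claim~\ref{claim:route_info} to an iterated application of Lemma~\ref{lemma:routing_lemma_aggregate}, which we may treat as a black box. Since each invocation of Lemma~\ref{lemma:routing_lemma_aggregate} delivers $O(\log n)$ bits of aggregate information per tree edge in $O(D+\max\{\dfrag,\nfrag\})$ rounds, the strategy is to slice each $t$-bit payload into $\lceil t/\log n\rceil$ chunks of $O(\log n)$ bits and route one chunk per invocation. The claimed round complexity then follows by multiplication.

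For each chunk index $j\in\{1,\dots,\lceil t/\log n\rceil\}$, I would set up the lemma as follows. Let $e'^{*}(e)$ denote the unique designated non-tree edge for tree edge $e$. For every non-tree edge $e'$, choose $m_{e'}$ to be the $j$-th chunk of the message that $e'$ intends to deliver (together with, if needed, a short tag identifying the target tree edge via its LCA label; recall from Section~\ref{sec:lca} that such labels use only $O(\log n)$ bits and are available at both endpoints of $e'$). Every other non-tree edge that is not contributing a real message to any covered tree edge in this chunk uses a fixed "null'' value as its message. The aggregate function $f$ can then be any commutative function that outputs the unique non-null tagged chunk matching the receiving tree edge, e.g., the bitwise XOR with the null set to zero, or (since uniqueness of the designated edge means only one of the covering non-tree edges contributes a non-null value destined for $e$) simply the sum. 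Each tree edge $e$ therefore recovers the $j$-th chunk from $e'^{*}(e)$ at the end of the invocation.

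The only genuine subtlety I foresee is that one non-tree edge $e'$ may be designated for several distinct tree edges simultaneously, and Lemma~\ref{lemma:routing_lemma_aggregate} only allows a single message $m_{e'}$ per non-tree edge per invocation. I would handle this by letting each non-tree edge iterate through its designated targets in a predetermined order, using the LCA-label tag in the message and adjusting $f$ so that each tree edge accepts only the chunk matching its own label. Because distinct targets of the same $e'$ are distinguishable by their tags and at most one matches any given tree edge, this does not inflate the round count beyond the $\lceil t/\log n\rceil$ chunk-iterations already budgeted. Alternatively, if a cleaner accounting is desired, one can observe that the routing scheme underlying Lemma~\ref{lemma:routing_lemma_aggregate} (inherited from \cite{ghaffari2016near,Dory18}) naturally supports such tagged delivery along the same congestion/dilation bound.

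Putting the pieces together, the total number of rounds is $O\!\bigl(\lceil t/\log n\rceil\bigr)$ invocations of Lemma~\ref{lemma:routing_lemma_aggregate}, each costing $O(D+\max\{\dfrag,\nfrag\})$ rounds, for a grand total of $O\!\bigl((D+\max\{\dfrag,\nfrag\})\cdot t/\log n\bigr)$ rounds, which matches the statement of the claim. The hard part is purely bookkeeping around the designation relation; once that is set up, the result is immediate from Lemma~\ref{lemma:routing_lemma_aggregate}.
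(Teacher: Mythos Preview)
Your core reduction—slicing the $t$ bits into $\lceil t/\log n\rceil$ chunks and invoking Lemma~\ref{lemma:routing_lemma_aggregate} once per chunk—is exactly what the paper does, and the round count follows immediately.

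Where you diverge is in setting up the per-invocation aggregate. The paper avoids tags entirely: each designated non-tree edge simply marks itself (together with its current chunk), and $f$ is taken to be ``return an arbitrary (e.g., lexicographically first) marked input.'' Since the hypothesis guarantees that each tree edge $e$ sees exactly one marked edge in $\ECov(e)$, that edge's chunk is what $e$ receives—no LCA tags, no XOR, no target matching.

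The subtlety you flag (one non-tree edge $e'$ designated for several tree edges, possibly with different payloads) is a fair reading of the claim in isolation, but your proposed fixes do not stay within the lemma's interface: having $e'$ cycle through its targets multiplies the round count by the (unbounded) maximum number of targets, and letting $f$ ``accept only the chunk matching its own label'' makes $f$ depend on the receiving tree edge, which Lemma~\ref{lemma:routing_lemma_aggregate} does not allow (it takes a single commutative $f$). The paper sidesteps the issue by implicitly reading the hypothesis as ``exactly one marked edge per cover set''; in the claim's only use (the sampling of Appendix~\ref{sec:appen-samp}) the designated edge is the minimum-ID sampled edge in $\ECov(e)$, so the marking-plus-first rule is exactly right and the multi-target scenario never arises.
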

\begin{proof}
Each non-tree edge holding information will mark itself and the information, and the function $f$ will be the function that chooses arbitrarily (e.g. the first) marked input and outputs it. Since each tree edge $e$ has a unique edge $e'\in \ECov(e)$ that is marked, applying Lemma \ref{lemma:routing_lemma_aggregate}  $ \frac{t}{\log n}$ times suffices for $e$ to learn the $t$ bits of information that $e'$ holds. 
\end{proof}





\remove{
\begin{remark}\label{remark:Way_of_doing_layering}
During the algorithm, we will compute a layer decomposition (See Section \ref{ssec:-layerDecomp}) on the non-highway edges inside of each fragment separately, and each node will compute internally a layering  decomposition of the skeleton tree.
\end{remark}
}

\subsection{Layering decomposition}\label{ssec:-layerDecomp}

 Here, we present a decomposition of the edges of a given tree $T$ into $O(\log n)$ layers. This is also known as the \textit{bough decomposition} in some literature \cite{Karger00, geissmann2018parallel}. Such a decomposition was previously employed in the context of distributed computing by $\cite{DBLP:conf/podc/DoryG19}$. We borrow the following definition from Karger.

\begin{definition}[Bough]
A bough is a maximal path starting at a leaf and traveling upwards until it reaches a vertex with more than one child, i.e., a junction vertex. 
\end{definition}

\begin{figure}
    \centering
    \includegraphics[width=\textwidth]{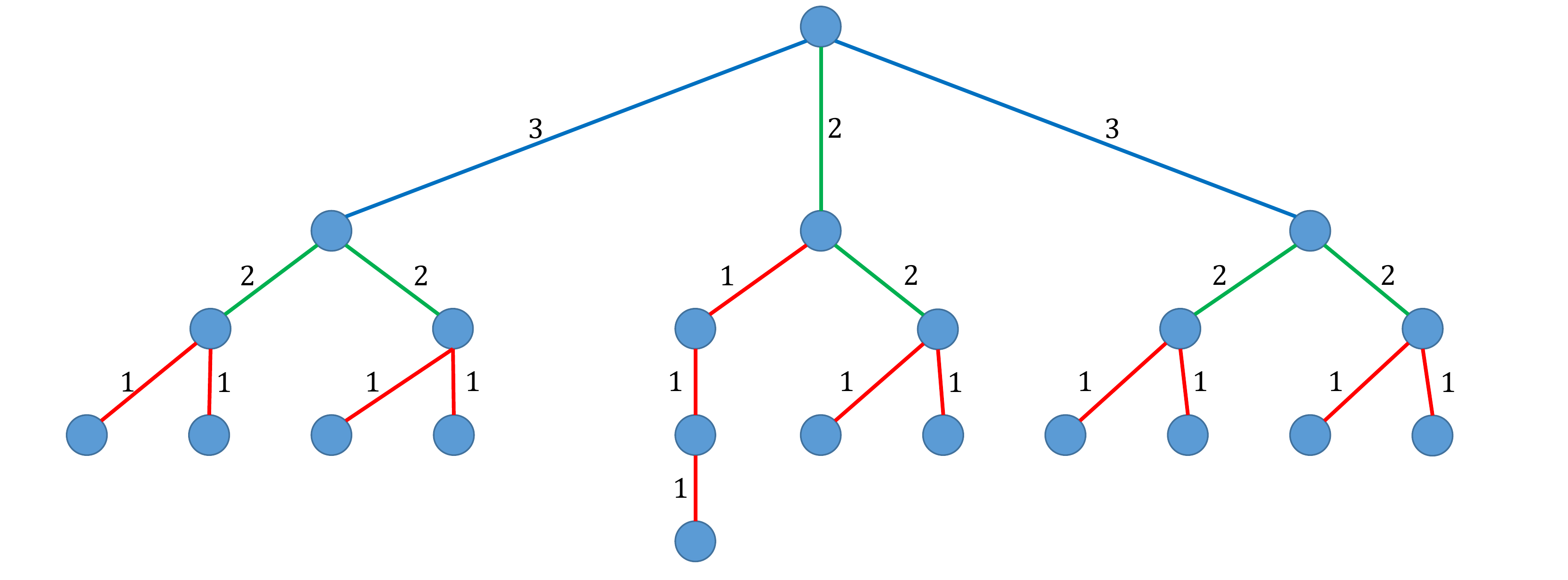}
    \caption{\small Example of a layering decomposition on a given tree. \textcolor{red}{Red} edges are edges of layer 1, \textcolor{green!50!black}{green} edges are of layer 2, and \textcolor{blue}{blue} edges are of layer 3.}
    \label{fig:layering_exmaple}
\end{figure}

We are interested in the following layering algorithm. Given a graph $G$ and a spanning tree $T$ of $G$ rooted at a root vertex $r$, the layering algorithm can be described as follows:

\begin{description}
\item[Initialization:] Start with $T_0 = T$.
\item[Round $i$ description:] In round $i$, do the following:
\begin{itemize}
    \item Consider all boughs of $T_{i-1}$ and include all edges of such boughs in $E^i$. 
    \item Contract these boughs of $T_{i-1}$ to obtain $T_i$.
\end{itemize}
\item[Stop condition:] Continue until $T_i$ consists of only root vertex $r$.
\end{description}
\paragraph{}
For any given $i$, we call the edges in $E^i$ the edges in \emph{layer} $i$ (See Figure \ref{fig:layering_exmaple}). An immediate observation from this procedure is as follows.
\begin{observation}\label{obs:-no_down_layer}
Given a tree edge $e\in T$, denote its layer by $i$. Then, all edges in $\dw{e}$ are in layer at most $i$. Furthermore, at most one edge adjacent to $e$ which is in $\dw{e}$ is in layer exactly $i$. Furthermore, denote by $j$ the highest layer of an edge in $\dw{e}$ which is adjacent to $e$, then $j=i$ iff there is exactly one edge of layer $j$ adjacent to $e$ in $\dw{e}$. 

\end{observation}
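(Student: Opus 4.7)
The plan is to trace what it means for an edge $e=\{u,v\}$ (with $u$ the parent of $v$) to receive layer $i$ in the layering algorithm, and then read off the three claims. Let $T_j$ denote the tree remaining after $j$ rounds; an edge has layer $i$ exactly when it survives in $T_{i-1}$ and belongs to some bough of $T_{i-1}$.

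The starting observation I would use throughout is structural: since a bough is a maximal leaf-to-junction path and $e$ lies on one, the subtree of $T_{i-1}$ rooted at $v$ must be a (possibly trivial) stick ending at a leaf. Equivalently, $v$ has at most one child in $T_{i-1}$, and every edge of $T_{i-1}$ lying in $\dw{e}$ is part of the same bough as $e$ and hence also of layer~$i$.

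With this picture, the first part of the observation is immediate: every edge of $\dw{e}$ is either still present in $T_{i-1}$, in which case by the remark above it has layer exactly $i$, or it was contracted in some earlier round, in which case its layer is strictly less than $i$. The second part is equally direct: edges of $\dw{e}$ adjacent to $e$ correspond bijectively to children of $v$ in $T$, and since $v$ has at most one child in $T_{i-1}$, at most one such edge survives to round $i$ and hence has layer $i$; the rest have smaller layer by Part~1.

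For the third part I would split on whether $v$ has a child in $T_{i-1}$. If yes, exactly one edge $\{v,c\}\in\dw{e}$ has layer $i$ and every other has layer $<i$, so $j=i$ and it is achieved by a unique edge. If $v$ is a leaf in $T_{i-1}$, let $j$ be the last round in which some edge $\{v,c\}$ was still present; this $j$ equals the maximum layer of an edge of $\dw{e}$ adjacent to $e$, and $j<i$. The one subtlety is to rule out a single surviving edge $\{v,c\}$ in $T_{j-1}$: if only one such edge existed, the bough containing it would not terminate at $v$ (since $v$ would have a single child and not be a junction), hence it would continue upward through $e$, forcing $e$ to have layer $j<i$, a contradiction. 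Therefore at least two edges $\{v,c\}$ have layer $j$, giving the required equivalence. The main (very mild) obstacle is this last case-analysis argument about why a leaf-in-$T_{i-1}$ vertex $v$ must have had at least two children removed in round $j$; everything else is a direct unpacking of the definition of bough and of the layering process.
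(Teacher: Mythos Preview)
Your argument is correct; the paper states this observation without proof (it is labeled as ``an immediate observation from this procedure''), so there is no paper proof to compare against, and your unpacking of the bough definition is exactly the natural justification. One tiny point you leave implicit in the last case is that $e$ is still present in $T_{j-1}$ (since its layer is $i>j$), which is what allows the bough through $\{v,c\}$ to continue upward through $e$; you clearly have this in mind, but stating it explicitly makes the contradiction airtight.
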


Next, we bound the number of layers in this decomposition.
\begin{claim}\label{claim:bound_number_layers}
The number of layers produced by the above procedure is $L=O(\log n)$.
\end{claim}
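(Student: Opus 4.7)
\medskip
\noindent\textbf{Proof plan.} The plan is to track how the number of leaves of $T_i$ shrinks from one round to the next and show it at least halves, so that the process must terminate within $O(\log n)$ rounds. For each $i \ge 0$, let $L_i$ denote the number of non-root leaves of $T_i$. Clearly $L_0 \le n - 1$, so it suffices to show that either the procedure terminates at round $i$ or $L_i \le L_{i-1}/2$.

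The first step is to characterize the non-root leaves of $T_i$. Since $T_i$ is obtained from $T_{i-1}$ by contracting every bough, a vertex $v$ is a non-root leaf of $T_i$ iff (a) $v$ survived the contraction (i.e., $v$ is not an interior vertex of any bough of $T_{i-1}$, which by maximality of boughs forces $v$ to be either a junction or the root in $T_{i-1}$), and (b) every subtree hanging off $v$ in $T_{i-1}$ is a single downward path. Indeed, if even one subtree of $v$ contains a lower junction, that subtree persists into $T_i$ and $v$ retains a child, while if every subtree below $v$ is a single path, each is a full bough terminating at $v$ and all of $v$'s children get contracted into $v$.

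The next step is the counting bound. Enumerate the junctions $v_1, \dots, v_k$ of $T_{i-1}$ whose subtrees are all single paths; by the characterization above these are precisely the non-root vertices of $T_{i-1}$ that become non-root leaves of $T_i$ (if the root itself has all subtrees being single paths, then $T_i$ is just the root and the procedure has terminated). Each such $v_j$ has at least $2$ children (it is a junction), and each child initiates a single-path subtree whose bottom is a distinct leaf of $T_{i-1}$. Since the sets of leaves in the subtrees of different $v_j$'s are disjoint, $\sum_j (\text{number of children of } v_j) \le L_{i-1}$, and since each summand is at least $2$, we obtain $k \le L_{i-1}/2$. Hence $L_i \le L_{i-1}/2$.

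Iterating gives $L_i \le L_0 / 2^i \le n/2^i$, so after $i = \lceil \log_2 n \rceil$ rounds we have $L_i \le 1$. If $L_i = 0$, then $T_i$ is the single vertex $r$ and we are done; if $L_i = 1$, then $T_i$ is a single root-to-leaf path, which is itself a bough and gets contracted in one additional round. Thus $L = O(\log n)$. The only delicate part of this argument is the characterization of which vertices become non-root leaves of $T_i$ and verifying that the edge cases (the root being a junction with all single-path subtrees, and $L_{i-1} \le 1$) do not break the halving; I expect these to be the main points requiring care, but once the characterization is established, the counting is straightforward.
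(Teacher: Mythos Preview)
Your proposal is correct and takes essentially the same approach as the paper: both argue that each leaf of $T_i$ was a junction in $T_{i-1}$ and is therefore an ancestor of at least two distinct leaves of $T_{i-1}$, so the leaf count at least halves each round. Your version spells out the characterization of the new leaves and the disjointness of their subtrees more carefully than the paper's two-line argument, but the core idea is identical.
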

\begin{proof}
 Denote by $\ell_i$ the number of leaves in the tree $T_i$. A key observation is that for all $1\leq i\leq L-1$, it holds that $2\ell_{i+1}\leq \ell _i$. This is since each leaf $v$ in $T_{i+1}$ is an ancestor of at least 2 leaves in $T_i$. Since $v$ was a junction in $T_i$. Now since $\ell_1\leq n$, and the fact that any tree has at least $2$ leaves, we deduce that after $L=O(\log n)$ rounds we are left with an empty graph, as required.
\end{proof}



\subsection{Combining layering with fragment decomposition}\label{ssec:way_todo_layering}

In our algorithm, we use two layering decompositions, one for non-highways and one for highways. We next explain this in detail.
\begin{enumerate}
    \item \textbf{Layering for non-highways.} We decompose the non-highways in each fragment to layers, according to the layering algorithm described above. The union of all such layerings computed is the layering for non-highways. I.e., edges in layer $i$ include non-highway edges in layer $i$ in all different fragments.
    \item \textbf{Layering for highways.} We decompose the highways into layers by simulating the layering algorithm in the skeleton tree. Here, we ignore completely all non-highway edges, and just run the layering algorithm in the skeleton tree that its edges correspond to highways. Since all vertices know the complete structure of the skeleton tree this can be simulated locally by each vertex without communication. As each fragment highway corresponds to one edge in the skeleton tree, it follows that by the end of the computation each fragment highway has a layer number. 
\end{enumerate}

The above description results in the following claim.

\begin{lemma}\label{lemma:layer_decomposition_highway}
We can compute a layering for the highways without communication. By the end of the computation, all vertices know the layering.
\end{lemma}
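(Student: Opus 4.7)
The plan is to observe that this lemma is essentially a direct corollary of the fragment decomposition guarantee in Lemma~\ref{lemma:strong_decomp_construct} combined with the deterministic nature of the layering algorithm from Section~\ref{ssec:-layerDecomp}. There is no real communication obstacle here; the work has already been done in setting up the fragment decomposition.

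First, I would recall the key fact from item (2) of Lemma~\ref{lemma:strong_decomp_construct}: every vertex $v \in V$ has already learned the \emph{complete structure} of the skeleton tree $T_S$ during the fragment decomposition stage. In particular, each vertex knows the identities of all $\nfrag + 1$ skeleton vertices and all edges of $T_S$, where each edge of $T_S$ corresponds bijectively to a fragment highway.

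Next, I would recall that the layering algorithm described in Section~\ref{ssec:-layerDecomp} is a fully deterministic procedure whose only input is a rooted tree: it iteratively identifies all boughs, assigns their edges to the current layer, contracts them, and repeats. Since every vertex has an identical local copy of $T_S$, each vertex can simulate this procedure locally (with zero rounds of communication) on its copy of $T_S$ and, by determinism, arrive at the same layer assignment for every edge of $T_S$. Because fragment highways are in one-to-one correspondence with edges of $T_S$, this yields a globally consistent layer number for every fragment highway, known to every vertex.

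Finally, I would note that the stopping criterion is reached within $L = O(\log n)$ iterations (Claim~\ref{claim:bound_number_layers}) applied to $T_S$, so the local simulation terminates in finite time and requires no interaction between vertices. The only potential subtlety to address explicitly is that the vertices must agree on the root of $T_S$; but the root of $T_S$ is uniquely determined as the skeleton vertex corresponding to the $r_F$ of the fragment containing the global root $r$ of $T$, and this is encoded in the structure of $T_S$ that every vertex already holds. This completes the argument, and establishes both claims of the lemma: no communication is used, and every vertex ends up knowing the highway layering.
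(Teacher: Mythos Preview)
Your proposal is correct and follows exactly the same approach as the paper, which simply states that the lemma holds because all vertices know the topology of the skeleton tree by Lemma~\ref{lemma:strong_decomp_construct}. Your version is more detailed (spelling out determinism, the bijection between skeleton edges and fragment highways, and the choice of root), but the underlying argument is identical.
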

The above Lemma holds simply because all vertices know the topology of the skeleton tree by Lemma \ref{lemma:strong_decomp_construct}. 
We next state that we can compute the layering for the non-highways. The full of proof of the following lemma is deferred to appendix \ref{sec:appen_building_blocks}.

\begin{restatable}{lemma}{LayerDecompNonHighway}\label{lemma:layer_decomposition_inside_fragment_nonhighway}
We can compute a layering for the non-highways in $O(\dfrag)$ time. By the end of the computation all vertices know the layer numbers of non-highway edges adjacent to them. 
\end{restatable}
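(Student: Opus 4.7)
The plan is to express the layer of each non-highway edge via a simple bottom-up recursion and then compute it by a pipelined leaves-to-root convergecast inside each fragment. Concretely, for each non-highway vertex $v$ (i.e.\ one whose edge $\{v,p(v)\}$ to its parent is non-highway), let $f(v)$ denote the layer of $\{v,p(v)\}$ and let $u_1,\ldots,u_k$ be the children of $v$ in $T$; these are automatically reached through non-highway edges, since $v$ lies off the highway of its fragment. Define $f(v)=1$ when $k=0$, and otherwise let $M := \max_i f(u_i)$ and $m := |\{i : f(u_i)=M\}|$, setting $f(v) = M$ if $m=1$ and $f(v) = M+1$ if $m \geq 2$. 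The equivalence of this recursion with the iterative contraction definition in Section~\ref{ssec:-layerDecomp} is a direct unrolling: $\{v,p(v)\}$ lies in a bough of $T_{\ell-1}$ precisely when $v$ survives in $T_{\ell-1}$ with a unique non-contracted child, and this occurs exactly when $\ell = f(v)$ under the formula above.

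Given this characterization, the distributed implementation is straightforward. The non-highway edges of a fragment $F$ form a forest of subtrees, each rooted at a highway vertex of $F$ and entirely contained in $F$, hence of depth at most $\dfrag$. In parallel across all fragments, the convergecast proceeds as follows: each non-highway leaf sends $f=1$ to its parent; each internal non-highway vertex waits for $f$-values from all of its children, applies the formula, and then forwards $f(v)$ to $p(v)$ along the non-highway edge. Because distinct non-highway subtrees use edge-disjoint portions of $T$, there is no contention across them, so the round complexity is governed by the maximum subtree depth, giving $O(\dfrag)$ rounds overall.

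At termination each non-highway vertex $v$ knows $f(v)$, namely the layer of its parent edge. One additional round in which every non-highway vertex $v$ transmits $f(v)$ to $p(v)$ guarantees that both endpoints of every non-highway edge learn the layer of that edge, as the lemma demands. The only mild subtlety is the treatment of the highway root of each non-highway subtree: such a vertex does not compute an $f$-value for itself (its parent edge is on the highway and is handled by the separate highway layering) but it does receive $f$-values from its non-highway children, which is exactly the information it needs to record the layers of its incident non-highway edges. No step is a serious obstacle; the main thing to verify carefully is the equivalence between the recursive and iterative definitions of layer, which is a standard Strahler-style unrolling.
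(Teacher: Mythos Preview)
Your proposal is correct and follows essentially the same approach as the paper's proof: both compute the layer number via the identical bottom-up recursion (leaf gets $1$; otherwise take the children's maximum and add $1$ iff at least two children attain it), implemented as a leaves-to-root convergecast inside each fragment that halts at highway vertices, with parallelism across fragments coming from edge-disjointness. The paper phrases the recursion on edges rather than on child vertices, but the content is the same; your added remark about one more round so that parents also record the layer is a harmless extra (the convergecast message to the parent already carries $f(v)$).
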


This is achieved by a simple aggregate computation inside each fragment in parallel, along the non-highway trees of each fragment from the leaves to the highway path.



We also define the notion of a \emph{maximal} path of layer $i$.
\begin{definition}\label{def:maximal_path_layer_i}
Given a (non) highway path $P$ in a tree $T$, we say that $P$ is a maximal path of layer $i$ if the following holds.
\begin{enumerate}
    \item $P\subseteq E^i$ according to either the skeleton decomposition or the non-highway decomposition.
    \item for all (non) highway paths $P'$ such that $P\subsetneq P'$, it holds that $P'\not\subseteq E^i$.
\end{enumerate}
\end{definition}




\subsection{Information of edges}\label{sec:-info_tree_dge}

The goal of this section is to explain in detail, what is the information each edge in the tree holds. Later, when we introduce the sampling procedure, and say that an edge $e$ learns about an edge $e'$, it learns not only the Id of $e'$, but also all the information that $e'$ holds about the fragment and layer decompositions. Furthermore, $e$ might need to spread said information to edges in its vicinity, depending on the properties of $e$. All of these issues are addressed in detail in this section.

From a high level perspective, our algorithm requires each tree edge $e$ to know whether $e$ in on a highway or not, the fragment of $e$, and its layer in it's respective layering decomposition (see Section \ref{ssec:-layerDecomp}). This is captured in the following definition. 

\begin{definition}[Information of edge]\label{def:}
The information of a tree edge $e$, denoted by $\info(e)$, consists of the following:
\begin{enumerate}
    \item The id of the edge $e$.
    \item The value $\cov(e)$ and $|\ECov(e)|$.
    \item Whether $e$ is a highway edge, or a non-highway edge.
    \item $e$'s layer in its respective layering decomposition (see Section \ref{ssec:-layerDecomp}). 
    \item  The id of its own fragment.
\end{enumerate}
For non-tree edge $e=\{u,v\}\in E$, one can associate the information $\info(e_1)\cup \info(e_2)\cup \set{Id(e)}$ where $e_1=\{p(v),v\},e_2=\{p(u),u\}$ with $e$. Here $p(v),p(u)$ are the parent vertices of $v,u$ in the tree. Denote this information by $\info(e)$ for a non-tree edge.
For any edge $e$, we denote by $|\info(e)|$ the amount of bits required in order to store $\info(e)$.
\end{definition}

More formally, our goal in this section is to prove the following theorem. In the following theorem, $E^i$ refers to the set of non-highway edges of layer $i$. (See Section \ref{ssec:way_todo_layering}, Lemma \ref{lemma:layer_decomposition_inside_fragment_nonhighway})
The proof of the following theorem is deferred to Appendix \ref{sec:appen_building_blocks}.

\begin{restatable}{theorem}{LearnInfoEdge}\label{thm:-info_theorem_tree_edge}
Consider a rooted tree $T$ of a graph $G$, a layering decomposition $E^1,...,E^\ell$, $\ell=O(\log n)$, as described in Section \ref{ssec:way_todo_layering}, and a fragment decomposition with parameters $\nfrag,\sfrag$. In $\Tilde{O}(\nfrag+\sfrag+D)$ rounds, each tree edge $e$ can learn $\info(e)$. Furthermore, it holds that $|\info(e)|=O(\log n)$. 
\end{restatable}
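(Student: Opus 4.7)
The plan is to verify the two claims of the theorem separately: first the bound on $|\info(e)|$, which is essentially a counting argument using the model's assumptions, and second the round complexity, which is obtained by combining several results already established in the excerpt.

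\textbf{Size bound.} I would argue each of the five components of $\info(e)$ fits in $O(\log n)$ bits. The edge id uses $O(\log n)$ bits by the CONGEST model convention. Since edge weights are polynomially bounded in $n$ and there are at most $\binom{n}{2}$ edges, $\cov(e)$ is at most $\poly(n)$ and thus fits in $O(\log n)$ bits; $|\ECov(e)|\le n^2$ also fits in $O(\log n)$ bits. The highway/non-highway flag is a single bit. By Claim~\ref{claim:bound_number_layers}, the number of layers is $O(\log n)$ in both layering decompositions, so the layer number fits in $O(\log\log n)$ bits. The fragment id fits in $O(\log n)$ bits since there are only $\nfrag=O(\sqrt{n})$ fragments. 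Summing, $|\info(e)|=O(\log n)$.

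\textbf{Round complexity.} I would assemble each field in turn and take the maximum. The edge id is known locally with no rounds. By Lemma~\ref{lemma:strong_decomp_construct}, in $\tilde{O}(\nfrag + \sfrag + D)$ rounds each endpoint of $e$ learns its fragment identity $(r_F,d_F)$, the full skeleton tree structure, and whether $e$ lies on the highway of $F$; this simultaneously handles fields (3) and (5). For the cover values in field (2), I invoke Claim~\ref{claim:learn_cov_value} (equivalently, a direct application of Lemma~\ref{lemma:routing_lemma_aggregate}) with two separate aggregations over non-tree edges: once with each non-tree edge contributing its weight to obtain $\cov(e)$, and once with each contributing the value $1$ to obtain $|\ECov(e)|$. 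Each such aggregation costs $O(D+\max\{\dfrag,\nfrag\})=\tilde{O}(\sqrt{n}+D)$ rounds. For the layer number in field (4), I split by type: on non-highway edges, Lemma~\ref{lemma:layer_decomposition_inside_fragment_nonhighway} gives the layering in $O(\dfrag)$ rounds, while on highway edges, Lemma~\ref{lemma:layer_decomposition_highway} produces the layering with no communication (since every vertex already knows the skeleton tree from Lemma~\ref{lemma:strong_decomp_construct}).

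\textbf{Concluding.} Running all these subroutines sequentially (or in parallel, since the total bit complexity still fits) gives the overall bound of $\tilde{O}(\nfrag + \sfrag + D)$ rounds, and each endpoint of $e$ ends up knowing every field of $\info(e)$. I do not expect a real obstacle here --- the theorem is essentially a bookkeeping consolidation of Lemmas~\ref{lemma:strong_decomp_construct}, \ref{lemma:routing_lemma_aggregate}, \ref{lemma:layer_decomposition_highway}, and \ref{lemma:layer_decomposition_inside_fragment_nonhighway}. The only mildly nontrivial point is noticing that $|\ECov(e)|$ is obtained by the same routing machinery as $\cov(e)$ by replacing the weight aggregate with a cardinality aggregate, which is legitimate because Lemma~\ref{lemma:routing_lemma_aggregate} is stated for arbitrary commutative aggregates.
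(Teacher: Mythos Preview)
Your proposal is correct and follows essentially the same approach as the paper's proof: both establish the $O(\log n)$ size bound by inspecting each field of $\info(e)$, and obtain the round complexity by invoking the fragment decomposition lemma for the fragment id and highway bit, the routing/aggregate lemma (once with weights for $\cov(e)$, once with unit contributions for $|\ECov(e)|$), and the two layering lemmas for the layer number. If anything, your write-up is slightly more explicit than the paper's, which omits mention of the layer-number step.
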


\begin{observation}\label{obs:knowing_allfragments_from_one}
Each tree edge $e$ knows the id and layers of all fragments in the path from the root to it. In general, given a root to descendant path $P$ that ends in the edge $e'$, every tree edge $e$ can deduce all the fragments that $P$ intersects from knowing the fragment of $e'$.
\end{observation}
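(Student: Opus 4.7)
My plan is to derive this observation as a purely local consequence of the information that has already been distributed to each vertex by the previous building-block lemmas, with no extra communication rounds required.

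First, I would recall the three ingredients that jointly supply all the information needed. By Theorem~\ref{thm:-info_theorem_tree_edge}, every tree edge $e$ already knows $\info(e)$, and in particular the id of the fragment $F(e)$ that contains it. By Lemma~\ref{lemma:strong_decomp_construct}, every vertex (and therefore every tree edge, via either endpoint) knows the complete topology of the skeleton tree $T_S$, together with the identifiers $(r_F,d_F)$ associated to each node of $T_S$. Finally, by Lemma~\ref{lemma:layer_decomposition_highway}, the layering of highways is computed without any communication by simulating the layering algorithm on $T_S$, so every vertex already knows the layer assigned to every edge of $T_S$, i.e.\ to every fragment-highway.

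Next I would argue the structural step: the fragments intersected by a root-to-$e'$ path $P$ in $T$ are exactly the fragment $F(e')$ together with the fragments that lie on the unique path from the root of $T_S$ to the node of $T_S$ corresponding to $F(e')$. This is immediate from the construction of $T_S$, because each fragment is edge-disjoint from the others, has $r_F$ as its highest vertex, and appears as a single node in $T_S$ whose parent in $T_S$ corresponds to the fragment containing the parent edge of $r_F$ in $T$.

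The observation then follows by local computation. For the first statement, $e$ knows $F(e)$ (by Theorem~\ref{thm:-info_theorem_tree_edge}) and knows all of $T_S$ with the layer of every skeleton edge (by Lemmas~\ref{lemma:strong_decomp_construct} and~\ref{lemma:layer_decomposition_highway}); it therefore traces in its local copy of $T_S$ the unique path from the root of $T_S$ to the node representing $F(e)$, reading off the id and layer of each fragment encountered. For the general statement, the same argument applies once $e$ knows the fragment $F(e')$ of the endpoint of $P$: $e$ traces the skeleton path from the root of $T_S$ down to $F(e')$ and reports the ids and layers of all fragments on that path. There is no obstacle of substance here: the entire claim is an assembly step, and the only thing to notice is that edges in different fragments never interleave along a root-to-descendant path of $T$, which is precisely what the skeleton tree encodes.
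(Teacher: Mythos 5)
Your proof is correct and follows the same route as the paper's: both rely on every vertex knowing the full skeleton tree (Lemma~\ref{lemma:strong_decomp_construct}), the locally computed highway layering (Lemma~\ref{lemma:layer_decomposition_highway}), and the edge's own fragment id, from which the fragments on the root-to-$e'$ path are read off by tracing the corresponding path in $T_S$. Your write-up merely spells out the structural step (fragments on the path in $T$ correspond to the skeleton path down to $F(e')$) that the paper leaves implicit.
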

\begin{proof}
 Note that $e$ knows the entire topology of the skeleton tree (See Lemma \ref{lemma:strong_decomp_construct}), and computed the layering of it internally using Lemma \ref{lemma:layer_decomposition_highway}, hence $e$ knows the layers of all fragments that intersect $P$, since it also knows it's own fragment. The same argument holds for other root to descendant paths, assuming $e$ knows the fragment in which the path ends.
\end{proof}




\section{Finding, bounding, and routing interesting paths}\label{sec:lemma_and_samp}
 Given a spanning tree $T$ of $G$, a trivial way of finding a minimum 2-respecting cut is to compute $\cut(e,e')$ for every pair of edges $(e,e')$ in $T$ which requires $O(n^2)$ many comparisons. As observed in \cite{mukhopadhyay2019weighted}, many of these comparisons are unnecessary, as many such pairs cannot possibly yield a minimum 2-respecting cut. To this end, \cite{mukhopadhyay2019weighted} has formulated a necessary condition for such pairs to be a potential candidate for a minimum 2-respecting cut which we will recap briefly now. First, note that $\cut(e,e') = \cov(e) + \cov(e') - \cov(e,e')$ as stated in Claim \ref{claim_cover}. Also assume that it is easy to compute the values of $\cov(e)$ for all $e \in T$. Note that $\cov(e)$ actually represents the value of a 1-respecting cut which cuts the tree edge $e$, as shown in Claim \ref{claim_1_respecting}. For $(e,e')$ to be a minimum exact-2-respecting cut (i.e., cut value smaller than any 1-respecting cut), it needs to happen that
 \[
     \cut(e,e') = \cov(e) + \cov(e') - \cov(e,e') < \min\{\cov(e), \cov(e')\}. \tag{$\ast$} \label{eq:cover}
 \]
 In other words, the 2-respecting cut defined by $(e,e')$ should be smaller than the cuts defined by $e$ and $e'$ individually. Reorganizing the previous inequality we get $\cov(e,e') > \frac 1 2 \cdot \max\{\cov(e), \cov(e')\}$. This means that, for a tree edge $e$, the potential pairings $(e,e')$ which can yield a 2-respecting cut smaller than $\cov(e)$ are the pairing for which $\cov(e,e') > \cov(e)/2$. We denote this event as $e$ being \textit{interested} in the edge $e'$. In this section we expound the notion of interesting edges and extend it to the notion of interesting paths. There are three main subsections in this section: In the first part (in Section \ref{ssec:-samplingProcedure}) an algorithm that finds for each edge a set of paths that includes the path that edge is interested in. Then, in section \ref{ssec:-interesting_lemma}, we prove our \emph{interesting path counting lemma} that shows that in fact, intuitively, the number of paths interested in one another is small, and can be bounded from above by a poly-logarithmic factor in $n$. Then, in section \ref{ssec:path_pruning}, we show how to turn the information obtained in section \ref{ssec:-samplingProcedure}, into knowing the paths each edge needs to know for the algorithm. Additionally in that section, we connect our combinatorial lemma of bounding the number of paths interested in one another to the algorithmic building blocks we use in the algorithm. Furthermore, we show how to route  the paths that each vertex needs to know for the algorithm to said vertex.


\subsection{Interesting edges and paths}\label{sec:interesting-defn}

 We begin with basic definitions and observations. Most of the definitions and observations used in this section are inspired by similar definitions previously stated in \cite{mukhopadhyay2019weighted}. We, however, introduce simpler notation and terminology for this work.

\begin{definition}[Interesting edge]\label{def:interest-edges}
Given two tree edges $e,e'\in T$, we say that $e$ is interested in $e'$ if $\cov(e,e')>\frac{\cov(e)}{2}$.
\end{definition}

This definition, along with the observation made previously in Equation (\ref{eq:cover}), immediately gives the following claim.
\begin{claim} \label{clm:min-cut-interesting}
A pair of tree edges $(e,e')$ is a candidate for exact-2-respecting min-cut (i.e., has cut value smaller than all 1-respecting cuts) if $e$ and $e'$ are interested in each other.
\end{claim}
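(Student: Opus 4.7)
The plan is to deduce mutual interestedness from the candidate property by using the two identities already established in the paper: $\cut(e,e') = \cov(e) + \cov(e') - 2\cov(e,e')$ (Claim \ref{claim_cover}) and $\cut(f) = \cov(f)$ for any single tree edge $f$ (Claim \ref{claim_1_respecting}). These tell us that a 1-respecting cut has value equal to a cover value, so we can transfer the assumption ``smaller than all 1-respecting cuts'' into purely cover-value language.

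First, I will observe that if $(e,e')$ is a candidate, i.e.\ $\cut(e,e')$ is strictly smaller than the value of every 1-respecting cut, then in particular it is smaller than the two 1-respecting cuts defined by $e$ and by $e'$ themselves. By Claim \ref{claim_1_respecting}, these values are $\cov(e)$ and $\cov(e')$, respectively. So we have the two inequalities $\cut(e,e')<\cov(e)$ and $\cut(e,e')<\cov(e')$.

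Second, I will plug the identity $\cut(e,e')=\cov(e)+\cov(e')-2\cov(e,e')$ into each of these two inequalities. The first one rearranges to $\cov(e,e')>\cov(e')/2$, and the second rearranges symmetrically to $\cov(e,e')>\cov(e)/2$. By Definition \ref{def:interest-edges}, the first inequality says that $e'$ is interested in $e$, while the second says that $e$ is interested in $e'$. Put together, $e$ and $e'$ are interested in each other, which is what the claim asserts.

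There is no real obstacle here: the whole argument is a couple of lines of algebra on top of the two cover/cut identities and the definition of interestedness. The only mild subtlety is to notice that the candidate hypothesis is actually stronger than we need — we only use the two inequalities coming from comparing against the 1-respecting cuts defined by $e$ and $e'$ themselves, not against all tree edges. This also matches the informal discussion preceding the claim, which derives the $\cov(e,e')>\cov(e)/2$ threshold by exactly the same rearrangement.
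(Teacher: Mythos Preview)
Your proposal is correct and matches the paper's own argument essentially verbatim: the paper derives the inequality $\cut(e,e')<\min\{\cov(e),\cov(e')\}$ from the candidate assumption, substitutes the identity from Claim~\ref{claim_cover}, and rearranges to obtain $\cov(e,e')>\tfrac{1}{2}\max\{\cov(e),\cov(e')\}$, exactly as you do. Your remark that only the comparisons against the two particular 1-respecting cuts of $e$ and $e'$ are needed is also precisely the observation the paper makes.
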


We now proceed to define what is meant by an edge $e$ being interested in a path. But first, we define the notion of orthogonality between edges of $T$ and, subsequently, between paths of $T$.


\begin{definition}[Orthogonal edges]\label{def:orthogonal}
Given two tree edges $e,e'$, we say that $e,e'$ are orthogonal if they are not on the same root to leaf path, and we denote $e\perp e'$.
\end{definition}

Combining Definition \ref{def:interest-edges} and \ref{def:orthogonal}, we can make the following observation.

\begin{observation}\label{obs:interested-edges}
Given a tree edge $e\in T$, let $e',e''$ be tree edges such that there is no simple path in the tree $T$ in which $e,e',e''$ all take part, then it cannot hold that $e$ is interested in both $e'$ and $e''$. Furthermore, $\ECov(e,e')\cap \ECov(e,e'')=\emptyset$.

\end{observation}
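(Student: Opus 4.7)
The plan is to establish the second (set-disjointness) claim first, and then deduce the first (mutual-interest) claim from it as a quick weight-accounting argument.

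For the disjointness $\ECov(e,e')\cap\ECov(e,e'')=\emptyset$, I would argue by contradiction: suppose some non-tree edge $x=\{u,v\}$ lies in the intersection. Then by definition $x$ covers each of $e$, $e'$, and $e''$, which (recall Section 2.3) means that $e$, $e'$, and $e''$ all lie on the unique $u$--$v$ path in $T$. Since that tree path is a single simple path in $T$, this contradicts the hypothesis that no simple path of $T$ contains all three edges $e,e',e''$. The same reasoning works even if one of $e',e''$ were itself a tree edge equal to $x$, because tree edges cover themselves via their endpoints. Hence the intersection must be empty.

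For the first claim, I would combine the disjointness with the obvious inclusions $\ECov(e,e')\subseteq \ECov(e)$ and $\ECov(e,e'')\subseteq \ECov(e)$. Summing the weights of these two disjoint subsets of $\ECov(e)$ gives
\[
\cov(e,e') + \cov(e,e'') \;=\; w\bigl(\ECov(e,e')\bigr) + w\bigl(\ECov(e,e'')\bigr) \;\le\; w\bigl(\ECov(e)\bigr) \;=\; \cov(e).
\]
If $e$ were interested in both $e'$ and $e''$, then by Definition~\ref{def:interest-edges} we would have $\cov(e,e')>\cov(e)/2$ and $\cov(e,e'')>\cov(e)/2$, so the left-hand side above would strictly exceed $\cov(e)$, a contradiction.

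There is no genuine obstacle: the whole argument rests on the single structural fact that ``$x$ covers $f$'' is the same as ``$f$ lies on the unique tree path between the endpoints of $x$,'' and such a tree path is by definition a simple path. The only mild care needed is to make sure the argument still goes through when $e'$ or $e''$ happens to coincide with the covering edge $x$ itself, which it does since any tree edge trivially lies on its own endpoints' tree path.
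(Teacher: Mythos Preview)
Your proposal is correct and matches the paper's argument essentially line for line: the paper also first shows $\ECov(e,e')\cap\ECov(e,e'')=\emptyset$ by observing that any common covering edge would force $e,e',e''$ onto a single tree path, and then derives the contradiction $\cov(e,e')+\cov(e,e'')>\cov(e)$ from the two strict halves.
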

\begin{proof}
First of all, Observe that $\ECov(e,e')\cap \ECov(e,e'')=\ECov(e)\cap \ECov(e')\cap \ECov(e'')$. Now, if $ECov(e)\cap \ECov(e')\cap \ECov(e'')\neq \emptyset$, this means that for some non-tree edge $e'=(u,v)$, the unique path in $T$ from $u$ to $v$ includes all of $e,e',e''$. This is a contradiction, thus we deduce that  $\ECov(e,e')\cap \ECov(e,e'')=\ECov(e)\cap \ECov(e')\cap \ECov(e'')=\emptyset$.

 This means that $\cov(e) \geq \cov(e,e') + \cov(e,e'')$. Suppose $e',e''$ are such that $e$ is interested in both $e$ and $e''$, i.e., $\cov(e,e')>\frac{\cov(e)}{2}$, and $\cov(e,e'')>\frac{\cov(e)}{2}$. This means $\cov(e,e') + \cov(e,e'') > \cov(e)$ which is an immediate contradiction. 
\end{proof}

We extend the definition of orthogonal edges to orthogonal paths which we define below.
\begin{definition}[Orthogonal paths]\label{def:orthogonal_paths}
Given two ancestor to descendant paths $P,P'$, we say that $P$ and $P'$ are orthogonal and denote $P\perp P'$ if for all pairs of edges $e\in P,e'\in P'$, it holds that $e\perp e'$.
\end{definition}




Note that if $e''$ is on the unique tree path between $e$ and $e'$, then $\cov(e,e'')\geq \cov(e,e')$. This is true since all edges that cover both $e$ and $e'$, also cover $e''$, by definition. Thus, one can make the following observation

\begin{observation}\label{obs:-int_ancestors}
Given an edge $e$, if $e$ is interested in some edge $e'$, Then $e$ is interested in all the edges in the tree path from $e$ to $e'$. 

\end{observation}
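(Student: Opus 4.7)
The plan is to argue essentially as the sentence immediately preceding the observation already suggests: reduce the claim to the set-inclusion $\ECov(e,e') \subseteq \ECov(e,e'')$ for every edge $e''$ on the tree path from $e$ to $e'$, and then invoke the definition of ``interested'' directly.

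First, I would fix an edge $e''$ on the tree path from $e$ to $e'$. Note that this path is well-defined: by Claim \ref{clm:min-cut-interesting} and Observation \ref{obs:interested-edges}, since $e$ is interested in $e'$, the two edges must lie on a common root-to-leaf path of $T$, so the set of edges strictly between them (together with $e$ and $e'$) forms a unique simple subpath of $T$. The key step is then to show $\ECov(e,e') \subseteq \ECov(e,e'')$. Let $x=\{u,v\}$ be an arbitrary non-tree edge in $\ECov(e,e')$. By definition, the unique $u$-$v$ tree path $\pi$ contains both $e$ and $e'$. Because $\pi$ is a simple path in the tree, it must also contain every edge lying between $e$ and $e'$ on the tree; in particular it contains $e''$. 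Hence $x \in \ECov(e'')$, and combined with $x \in \ECov(e)$ this gives $x \in \ECov(e,e'')$.

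Summing weights over this inclusion yields $\cov(e,e'') \geq \cov(e,e')$. Since $e$ is interested in $e'$, we have $\cov(e,e') > \cov(e)/2$ by Definition \ref{def:interest-edges}, so $\cov(e,e'') > \cov(e)/2$, and therefore $e$ is interested in $e''$ as well. Since $e''$ was arbitrary on the path between $e$ and $e'$, the observation follows.

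There is no real obstacle here; the proof is a one-line consequence of the fact that in a tree, any simple path containing two given edges must contain the entire subpath joining them, combined with monotonicity of $\cov(e,\cdot)$ under this inclusion of coverage sets. The only mild subtlety to flag is that the ``tree path from $e$ to $e'$'' is meaningful precisely because interestedness forces $e$ and $e'$ to be comparable in the ancestor order; I would state this explicitly so the reader does not worry about the case of orthogonal edges, which is ruled out by Observation \ref{obs:interested-edges}.
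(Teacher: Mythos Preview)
Your core argument is correct and is exactly the paper's proof: the paper states, just before the observation, that any non-tree edge covering both $e$ and $e'$ also covers every $e''$ on the tree path between them, hence $\cov(e,e'')\geq\cov(e,e')>\cov(e)/2$.

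One remark on your aside, though: your claim that ``interestedness forces $e$ and $e'$ to be comparable in the ancestor order'' (i.e., to lie on a common root-to-leaf path) is false, and neither Claim~\ref{clm:min-cut-interesting} nor Observation~\ref{obs:interested-edges} says this. If $e$ and $e'$ are orthogonal and a heavy non-tree edge goes from below $e$ to below $e'$, then $e$ can certainly be interested in $e'$. Fortunately this does not matter: in a tree there is a unique simple path between \emph{any} two edges, so the phrase ``tree path from $e$ to $e'$'' is already well-defined without any appeal to ancestor comparability. You should simply drop that paragraph; the rest of your proof goes through unchanged and matches the paper.
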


We also make the following observation.
\begin{observation} \label{obv:path_disjoint}
If $P$ and $P'$ are two orthogonal paths in the tree, then $T(P)$ and $T(P')$ are edge disjoint, and $T(P^{\downarrow})$ and $T(P'^{\downarrow})$ are disjoint.
\end{observation}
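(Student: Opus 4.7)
The plan is to prove the vertex-disjointness of $T(P^{\downarrow})$ and $T(P'^{\downarrow})$ first, and then deduce the edge-disjointness of $T(P)$ and $T(P')$ as a quick consequence.

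A convenient re-characterization will simplify the first step. Let $c_P$ denote the child of $r_P$ that lies on $P$ (it exists because $P$, being an ancestor-to-descendant path, contains at least one edge). Every other vertex of $P\setminus r_P$ is a descendant of $c_P$, so every subtree rooted at a vertex of $P\setminus r_P$ is contained in the subtree of $T$ rooted at $c_P$; combined with the fact that $P\setminus r_P$ itself lies in that subtree, this shows that $T(P^{\downarrow})=T(P)\setminus\{r_P\}$ coincides with the subtree of $T$ rooted at $c_P$. Define $c_{P'}$ analogously.

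Now suppose toward a contradiction that some vertex $v$ belongs to both $T(P^{\downarrow})$ and $T(P'^{\downarrow})$. Then both $c_P$ and $c_{P'}$ are ancestors of $v$ in $T$, so they both lie on the unique root-to-$v$ path, and hence on any root-to-leaf path through $v$. In particular, the top edges $\{r_P,c_P\}\in P$ and $\{r_{P'},c_{P'}\}\in P'$ lie on a common root-to-leaf path of $T$, contradicting the orthogonality $P\perp P'$ via Definitions \ref{def:orthogonal} and \ref{def:orthogonal_paths}.

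For the edge-disjointness of $T(P)$ and $T(P')$, take any $f\in T(P)\cap T(P')$ and let $b$ be its deeper endpoint. Since $r_P$ is the root of the subtree $T(P)$ it cannot be the deeper endpoint of any edge of $T(P)$, so $b\in T(P)\setminus\{r_P\}=T(P^{\downarrow})$; symmetrically $b\in T(P'^{\downarrow})$, contradicting the vertex-disjointness just established. I do not expect any real obstacle here: the only subtle point is the initial re-characterization $T(P^{\downarrow})=$ subtree rooted at $c_P$, after which everything is a direct unpacking of definitions.
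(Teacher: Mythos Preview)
Your proof is correct. The paper states this as an observation without proof, so there is no approach to compare against; your argument---identifying $T(P^{\downarrow})$ with the subtree rooted at the child $c_P$ of $r_P$ on $P$, then showing a common vertex would force the top edges of $P$ and $P'$ onto a single root-to-leaf path---is a clean way to fill in the details the authors left implicit.
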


At this point we introduce another notation which is defined as follows. 

\begin{definition}\label{def:cov-samp-edge-path}
Given a tree edge $e$ and an ancestor-to-descendant path $P$ which is either orthogonal to $e$ or completely above $e$ in a root-to-leaf path which contains $e$ or completely inside $e^\downarrow$, we define $\ECov(e,P)$ to be the set of edges $f = \{u,v\} \in E(G)$ such that the unique $u$ to $v$ path in $T$ contains both $P$ and $e$ and $\cov(e,P)$ to be the cumulative weight of the edges of the set $\ECov(e,P)$.
\end{definition}

 Note that when $P$ is orthogonal to $e$, then $f$ covers both $e$ and the \textit{lowest} edge of $P$, whereas when $P$ occurs as an ancestor of $e$, $f$ covers both $e$ and the \textit{highest} edge of $P$.  Because of Observation \ref{obs:-int_ancestors}, we extend Definition \ref{def:interest-edges}  to the following:

\begin{definition}[\Interest path]\label{def:-interest_path_cover}
Given a tree edge $e$, and some ancestor to descendant path $P$ in the tree as in Definition \ref{def:cov-samp-edge-path}, we say that $e$ is interested in $P$ if $\cov(e,P) > \cov(e)/2$, and denote the set of all such paths as $\interesting(e)$. Given two ancestor to descendant paths $P_1,P_2$ in the tree, we say that $P_1$ is interested in $P_2$ if there is an edge $e\in P_1$ such that $P_2\in \interesting(e)$. For an ancestor to descendant path $P$ in the tree, we denote by $\interesting(P)$ the set of ancestor to descendant paths in the tree that $P$ is interested in.

\end{definition}

Note that apart from the path above $e$, there is a unique maximal ancestor-to-descendant path $P$ in which $e$ is interested in. The uniqueness comes from the fact that, for such path $P$, $\cov(e,P) > \cov(e)/2$. Having more than one such path will result in a contradiction because $\cov(e,P)$ of each such path $P$ contributed more than half of the value of $\cov(e)$. This means that all edges $e'$ that $e$ is interested in belong to the unique maximal path $P$ that $e$ is interested in. Hence, similar to Claim \ref{clm:min-cut-interesting}, we can make the following claim.
\begin{claim}\label{clm:clm:min-cut-interesting-path}
A pair of tree edges $(e, e')$ is a candidate for exact-2-respecting min-cut (i.e., has cut value smaller than all 1-respecting cuts) if $e'$ is in a path $P' \in \interesting(e)$ and $e$ is in a path $P \in \interesting(e')$. 
\end{claim}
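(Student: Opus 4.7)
The plan is to unpack the definitions and show that the hypothesis forces $e$ and $e'$ to be interested in each other \emph{as individual edges}, from which the desired inequality follows from Claim~\ref{claim_cover}. The key point is that being interested in a whole path is strictly stronger than being interested in an individual edge of that path, so the path-level hypothesis transfers immediately to an edge-level statement about $e$ and $e'$.

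First I would establish the main set-containment: if $f = \{u,v\}$ is a non-tree edge whose unique $u$-to-$v$ path in $T$ contains all of $P'$ and also contains $e$, then since $e' \in P'$, that path in particular contains $e'$. Hence $\ECov(e, P') \subseteq \ECov(e, e')$, and summing the weights gives $\cov(e, e') \geq \cov(e, P')$. By hypothesis $P' \in \interesting(e)$, so by Definition~\ref{def:-interest_path_cover} we have $\cov(e, P') > \cov(e)/2$, and consequently $\cov(e, e') > \cov(e)/2$. An identical symmetric argument applied to the assumption $e \in P \in \interesting(e')$ yields $\cov(e, e') \geq \cov(e', P) > \cov(e')/2$.

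Combining these two lower bounds with the cut formula from Claim~\ref{claim_cover},
\begin{equation*}
\cut(e,e') \;=\; \cov(e) + \cov(e') - 2\,\cov(e,e') \;<\; \cov(e) + \cov(e') - \max\{\cov(e),\cov(e')\} \;=\; \min\{\cov(e),\cov(e')\}.
\end{equation*}
By Claim~\ref{claim_1_respecting} the quantities $\cov(e)$ and $\cov(e')$ are precisely the values of the two $1$-respecting cuts defined by $e$ and by $e'$, so this is exactly the candidate condition used after Equation~(\ref{eq:cover}). Hence $(e,e')$ is a candidate for the exact $2$-respecting min-cut, as claimed.

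The main obstacle is essentially only a definitional check: one has to be sure that $\cov(e, P')$ is well-defined, i.e.\ that $P'$ has one of the three allowed positions relative to $e$ required by Definition~\ref{def:cov-samp-edge-path} (orthogonal to $e$, entirely above $e$, or entirely inside $e^{\downarrow}$). But this is built into the hypothesis $P' \in \interesting(e)$, and the symmetric statement for $P$ and $e'$, so no extra work is needed. Beyond that, the proof is a one-line consequence of the containment $\ECov(e,P') \subseteq \ECov(e,e')$ together with Claim~\ref{claim_cover}.
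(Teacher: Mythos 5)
You have proved the converse of what the paper proves and of what the claim is used for. Despite the word ``if'', this claim (like Claim~\ref{clm:min-cut-interesting} before it) is intended as a \emph{necessary} condition: the derivation around Equation~(\ref{eq:cover}) shows that for $(e,e')$ to beat every 1-respecting cut it \emph{needs} to satisfy $\cov(e,e') > \frac{1}{2}\max\{\cov(e),\cov(e')\}$, and the whole point of the claim is to let the algorithm discard every pair that fails the interestingness test. Accordingly the paper's proof runs: candidate $\Rightarrow$ $e$ and $e'$ are interested in each other (Claim~\ref{clm:min-cut-interesting}) $\Rightarrow$ every edge that $e$ is interested in lies on the unique maximal path in $\interesting(e)$, and symmetrically for $e'$. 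Your argument runs in the opposite direction: from $P'\in\interesting(e)$ with $e'\in P'$ you deduce $\cov(e,e')\ge\cov(e,P')>\cov(e)/2$ and conclude $\cut(e,e')<\min\{\cov(e),\cov(e')\}$.

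Your computation is itself sound --- the containment $\ECov(e,P')\subseteq\ECov(e,e')$ is correct, and mutual interest does imply $\cut(e,e')<\min\{\cov(e),\cov(e')\}$ --- but two things go wrong. First, even on its own terms your conclusion does not match the parenthetical definition of ``candidate'': you only show the pair beats the two 1-respecting cuts $\cov(e)$ and $\cov(e')$, not \emph{all} 1-respecting cuts, so the literal sufficiency statement is not established (and cannot be from these hypotheses alone). Second, and more importantly, the direction you prove is not the one the framework relies on; what is needed is that any pair achieving the exact-2-respecting min-cut \emph{must} satisfy the path conditions, so that all other pairs can be safely ignored. The fix is to run the argument the other way: assume $\cut(e,e')<\min\{\cov(e),\cov(e')\}$, apply Claim~\ref{claim_cover} to obtain $\cov(e,e')>\max\{\cov(e),\cov(e')\}/2$, and then invoke the uniqueness of the maximal path each edge is interested in (discussed just before the claim) to place $e'$ in a path of $\interesting(e)$ and $e$ in a path of $\interesting(e')$.
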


\begin{proof}
If $(e,e')$ is a candidate for exact-2-respecting min-cut, then by Claim \ref{clm:min-cut-interesting}, it implies that $e$ and $e'$ are interested in each other. As discussed before, all edges which $e$ is interested in belong to a path $P' \in \interesting(e)$ and similarly all edges $e'$ is interested in belong to a path $P \in \interesting(e')$. Hence the claim follows.
\end{proof}

\subsection{Finding interesting paths}\label{ssec:-samplingProcedure}

 The main lemma of this section is the following sampling lemma in which each tree edge gets to know a small set of paths, one of which is an \textit{interesting} path w.r.t. $e$. \footnote{Readers familiar with standard sampling techniques can draw some similarities with that of Karger \cite{karger1999random}. Our sampling differs in two ways: (i) the sampling probability depends on the value of $\cov(e)$ for different tree-edge $e$ where as Karger samples each edge with similar probability, and (ii) because this is a distributed implementation, one needs to be careful about how the information of the sampled edges is routed to the vertices responsible for computation.} We will abuse definition and, instead of a vertex knowing another vertex, we will denote a tree-edge knows another tree-edge (or path in the tree) with the assumption that it is a vertex of the tree-edge which does the computation. More formally, let the vertices agree on a total ordering of the set $V$ apriori and let $e=\{u,v\}$ 
be an edge with $u \prec v$ in that ordering of $V$. Unless specified otherwise, any computation purportedly done by $e$ is actually done by $u$. 

\begin{restatable}{lemma}{FindIntSample} \label{lemma:find-int-samp}
There is a distributed sampling procedure which takes $\tO(\sqrt n + D)$ rounds in which every tree-edge $e$ learns about a set of paths $\intpot{e}$ such that with high probability:
\begin{enumerate}
    \item Any path $P$ with $\cov(e,P) > \cov(e)/2$ is in $\intpot{e}$, and \label{item:samp-1}
    \item Any path $P' \in \intpot{e}$, has $\cov(e,P') \geq \cov(e)/6$. \label{item:samp-2}
\end{enumerate}

\end{restatable}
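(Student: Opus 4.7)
The plan is to use scale-based weighted sampling of non-tree edges combined with the distributed routing primitive of Lemma~\ref{lemma:routing_lemma_aggregate}. For each scale $i \in \{0, 1, \ldots, O(\log n)\}$---there are $O(\log n)$ such scales since weights are polynomially bounded---sample each non-tree edge $f$ independently with probability $p_i(f) = \min\{1,\ c\, w(f)\, \log n / 2^i\}$ for a sufficiently large constant $c$, producing a sample set $S_i$. For a tree edge $e$ with $\cov(e) \in [2^i, 2^{i+1})$ and any ancestor-to-descendant path $P$ as in Definition~\ref{def:cov-samp-edge-path}, the expected size of $S_i \cap \ECov(e,P)$ equals $c \log n \cdot \cov(e,P) / 2^i$. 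In particular, if $\cov(e,P) > \cov(e)/2$ then this expectation exceeds $c \log n / 2$, while if $\cov(e, P') < \cov(e)/6$ then it is at most $c \log n / 3$. Choosing a threshold $\tau$ strictly between these two values and applying a Chernoff bound together with a union bound over all tree edges and candidate paths yields both conditions of the lemma.

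For the distributed implementation, I would first compute $\cov(e)$ for every tree edge via Claim~\ref{claim_learn_cov}, so that each $e$ knows its scale $i$. Then, for each scale, I would pipeline $O(\log^2 n)$ invocations of Lemma~\ref{lemma:routing_lemma_aggregate} to deliver all of $S_i \cap \ECov(e)$ to $e$: in each invocation, each sampled edge draws a fresh $O(\log n)$-bit random priority, and the aggregation function routes the highest-priority sampled element of $\ECov(e) \cap S_i$ to $e$. Since $|S_i \cap \ECov(e)| = O(\log n)$ w.h.p., any individual sample is the maximum with probability $\Omega(1/\log n)$ per round, so after $O(\log^2 n)$ rounds every sample has been delivered to every tree edge it covers, w.h.p. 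Across $O(\log n)$ scales this is $O(\log^3 n)$ invocations of Lemma~\ref{lemma:routing_lemma_aggregate}, giving $\tilde{O}(\sqrt n + D)$ rounds in total.

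Finally, each tree edge $e$ forms $\intpot{e}$ locally from its collected samples: using the LCA labels of Claim~\ref{claim_LCA_labels}, for each received $f = \{u,v\}$ it determines the direction of coverage (into $e^{\downarrow}$, orthogonal to $e$, or above $e$) and identifies the ancestor-to-descendant path containing $f$'s far endpoint; it then includes in $\intpot{e}$ exactly those paths receiving at least $\tau$ samples at $e$'s matching scale.

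The main obstacle is the path-identification step: for the sample-counting argument to cleanly separate interesting from uninteresting paths, the path $P'$ attributed to each sample must be \emph{canonical}, so that the event ``$P'$ receives $k$ samples'' is well-defined and the Chernoff bound applies simultaneously to all candidate paths. I would resolve this by fixing in advance a decomposition of the tree into ancestor-to-descendant paths (e.g., a refinement of the fragment/layering decomposition of Section~\ref{sec:building_blocks}), so that each non-tree-edge endpoint maps to a unique maximal such path, and then phrase completeness and soundness in terms of these canonical paths. The remaining overhead---union bounds over $O(n)$ tree edges and $O(n)$ candidate paths, and deduplication of the routed samples---only contributes polylogarithmic factors, keeping the final round complexity at $\tilde{O}(\sqrt n + D)$.
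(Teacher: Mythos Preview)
Your overall shape---scale-bucketed sampling, Chernoff-based separation, and delivery of $O(\log n)$ samples per tree edge via repeated invocations of Lemma~\ref{lemma:routing_lemma_aggregate} with fresh random priorities---matches the paper's proof almost exactly. The paper likewise buckets tree edges by $\cov(e)\in[2^{j-1}+1,2^j]$, samples at rate $2^{-j}$, routes the minimum-id sampled edge $\beta=\log^2 n$ times, and declares $P\in\intpot{e}$ when at least a $1/3$ fraction of the collected samples lie in $\ECov(e,P)$.

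There is, however, a concrete gap in your weighted sampling. You sample each non-tree edge $f$ once with Bernoulli probability $p_i(f)=\min\{1,\,c\,w(f)\log n/2^i\}$ and then assert that $\mathbf{E}\bigl[|S_i\cap\ECov(e,P)|\bigr]=c\log n\cdot\cov(e,P)/2^i$. That identity fails whenever some $f\in\ECov(e,P)$ hits the cap $p_i(f)=1$, i.e.\ whenever a single non-tree edge carries weight $\geq 2^i/(c\log n)$, which is only a $1/(c\log n)$ fraction of $\cov(e)$. In the extreme, all of $\cov(e,P)$ may sit on one heavy edge; then $|S_i\cap\ECov(e,P)|=1$ deterministically, far below your threshold $\tau=\Theta(\log n)$, and property~\ref{item:samp-1} is violated. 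The paper sidesteps this by interpreting an edge of weight $w$ as $w$ parallel unit-weight multi-edges, each with its own random identifier sampled at rate $2^{-j}$; the count of sampled identifiers then has the claimed mean and Chernoff applies uniformly. Crucially, only the \emph{minimum} surviving identifier per weighted edge is communicated, so congestion stays $O(\log n)$ per edge per round.

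On your ``canonical path'' worry: this is largely unnecessary for the concentration argument itself, since there are only $O(n^2)$ ancestor-to-descendant paths and a union bound over all of them (and all $n$ tree edges) is absorbed by choosing $c$ large enough. The paper defines $\intpot{e}$ abstractly as $\{P:\covs(e,P)\geq\covs(e)/3\}\cup P_e$ and defers the question of how $e$ actually \emph{enumerates} this set to a separate lemma (Lemma~\ref{lemma:parse_paths}), where the key observation is that the lowest vertex of any path in $\intpot{e}$ arises as the LCA of two sampled endpoints. Your alternative of fixing a path decomposition up front would also work for the downstream algorithm, but as stated it yields a strictly smaller $\intpot{e}$ than the lemma promises.
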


We give an overview of the sampling procedure here. For a more detailed proof of the lemma, see Appendix \ref{sec:appen-samp}.  The idea is simple: we treat every weighted edge $e$ with weight $w(e)$ to be $w(e)$ many parallel unweighted edges. This way, any weighted graph $G$ is viewed as a \textit{unweighted graph with multi-edges}. Each edge $e$ samples $O(\log n)$ distinct non-tree edges from the set $\ECov(e)$, denoted by $\ECovs(e)$, and whose total weight is denoted by $\covs(e)$. Then, each edge $e$ makes decisions about other interesting paths based on the sampled edges, i.e., $e$ declares a path $P$ to be in $\intpot{e}$ if at least 1/3 of the sampled edges by $e$ also covers $P$. Moreover, $e$ also puts the path $P_e$ that contains $e$ into the set $\intpot{e}$ as well. If $e$ can sample these $O(\log n)$ edges uniformly at random from $\ECov(e)$, then by standard concentration argument both conditions of Lemma \ref{lemma:find-int-samp} hold. More formally, we define the set $\intpot{e}$ in the following way, where $P_e$ denotes that path starting from the root to $e$.

\[
\intpot{e}=\left\{P \mid \covs(e,P)\geq \frac{\covs(e)}{3}\right\}\cup P_e.
\]

The question now boils down to how each tree-edge $e$ can sample $O(\log n)$ distinct non-tree edges from $\ECov(e)$. To this end, we assume that every tree edge $e$ knows the value of $\cov(e)$ by Claim \ref{claim_learn_cov}. The sampling procedure runs in $O(\log n)$ iterations. In iteration $j$, only tree-edges $e$ such that $\cov(e) \in [2^{j-1}+1, 2^j]$ sample their corresponding set of non-tree edges---we call these tree edges as \textit{active tree-edges} in that iteration. The point to note here is that, if all active tree-edges want to sample one non-tree edge each from the set of edges that they cover, they can uniformly sample from those sets with probability $2^{-j}$ simultaneously because of their similar cover-values. More concretely, if every non-tree edges \textit{samples itself} with probability $2^{-j}$ in iteration $j$, then every active tree-edge $e$ has at least one sampled non-tree edge in the set $\ECov(e)$ with high probability. Moreover, because all active tree-edges have similar cover-value, for any such edge $e$, the sampled non-tree edge in $\ECov(e)$ is distributed uniformly in the set $\ECov(e)$. The active tree-edges do such sampling enough number of times to get $O(\log n)$ many distinct samples each, and then declare the set of interesting paths based on what they have sampled (as outlined previously).

Of course, implementation of this procedure in the distributed setting has a few more additional details. Note that the information about the sampled non-tree edges needs to reach the vertex of the corresponding active tree-edge $e$ which is responsible for the computation. So we implement each iteration in logarithmic many rounds where, in each round, the active tree-edges get information of one sampled edge each from the sets $\ECov(e)$. In each round, we use Lemma \ref{lemma:routing_lemma_aggregate} to route the information from the \textit{smallest} sampled non-tree edge (w.r.t. some pre-agreed ordering) to the active tree-edge. This information is small enough for us to apply Lemma \ref{lemma:routing_lemma_aggregate}.


Lastly, we prove the following claim, which is extensively used in our algorithm.
\begin{claim} \label{claim_edge}
If $P$ and $P'$ are two paths between an ancestor to a descendant, not in the same root to leaf path, and $P'$ is potentially interested in $P$, then there is an edge between $T(P^{\downarrow})$ and $T(P'^{\downarrow})$.
\end{claim}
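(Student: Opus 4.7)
The plan is to unpack the definition of \emph{potentially interested} via the sampling guarantee of Lemma~\ref{lemma:find-int-samp}, and then extract an explicit non-tree edge of $G$ witnessing the required connection between $T(P^{\downarrow})$ and $T(P'^{\downarrow})$.

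First, I would observe that by definition, $P'$ being potentially interested in $P$ means there exists a tree edge $e' \in P'$ with $P \in \intpot{e'}$. Invoking item~\ref{item:samp-2} of Lemma~\ref{lemma:find-int-samp} then yields $\cov(e',P)\geq \cov(e')/6$. Since every edge has integer weight at least $1$ and $\cov(e') \geq w(e') \geq 1$, we get $\cov(e',P) > 0$, so the set $\ECov(e',P)$ (from Definition~\ref{def:cov-samp-edge-path}) is nonempty; pick any $f=\{u,v\}\in \ECov(e',P)$.

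Next, I would argue that $f$ has exactly the two endpoints we need. By the defining property of $\ECov(e',P)$, the unique $u$--$v$ tree path contains \emph{all} of $P$ and also contains $e'$. Because $P$ and $P'$ are on different root-to-leaf paths (so $P\perp e'$), this tree path must climb from one side up to the LCA of $P$ and $P'$ and then descend on the other side. In particular, to contain all of $P$, one endpoint---say $v$---must lie in the subtree rooted at $d_P$ (the descendant endpoint of $P$); by the definition of $T(P^{\downarrow})$ recalled in Section~\ref{sec:prelim}, this subtree is contained in $T(P^{\downarrow})$. Similarly, applying Claim~\ref{claim_subtree} to $f$ covering $e'$ places the other endpoint $u$ inside the subtree rooted at the lower vertex $v'$ of $e'=\{p(v'),v'\}$; since $e'\in P'$ both endpoints of $e'$ lie in $P'$, so $v' \in P'\setminus\{r_{P'}\}$, and hence this subtree lies inside $T(P'^{\downarrow})$.

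Combining the two, $f$ is an edge of $G$ with one endpoint in $T(P^{\downarrow})$ and the other in $T(P'^{\downarrow})$, which is precisely the desired edge. I do not expect any serious obstacle here: the only subtle point is making sure one does not confuse the two possible directions the tree path traverses $P$ (ancestor-to-descendant vs.\ descendant-to-ancestor), but the orthogonality of $P$ and $P'$ forces a unique up-then-down shape through the LCA, so the argument is essentially a clean unpacking of definitions together with Claim~\ref{claim_subtree}.
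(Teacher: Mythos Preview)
Your proof is correct and follows essentially the same approach as the paper: both extract an edge from $\ECov(e',P)$ for some $e'\in P'$ and argue its endpoints lie in $T(P^{\downarrow})$ and $T(P'^{\downarrow})$. The only minor difference is that the paper observes directly that $\covs(e',P)\geq \covs(e')/3>0$ yields a \emph{sampled} edge in $\ECov(e',P)$, whereas you take the small detour through item~\ref{item:samp-2} of Lemma~\ref{lemma:find-int-samp} to get $\cov(e',P)>0$; both routes land on the same witness.
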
 
\begin{proof}
Let $e'\in P'$ be some edge that is potentially interested in $P$. Thus some edge between $T(P^{\downarrow})$ and $T(e'^{\downarrow})$ was sampled, thus in particular there is an edge between $T(P^{\downarrow})$ and $T(P'^{\downarrow})$, as required.

\end{proof}

\subsection{Structural lemma  for bounding number of interesting paths}\label{ssec:-interesting_lemma}

So far, we have defined the notion of a potentially interesting edge and a potentially interesting path, for a given tree edge and a given ancestor to descendant path. Our next goal will be to  bound the pairs of paths $P_1,P_2$ such that $P_1$ is potentially interested in $P_2$. Note that although Property \ref{item:samp-2} of Lemma \ref{lemma:find-int-samp} gives us an upper bound on $\intpot{e}$, i.e. the number of paths that the tree edge $e$ is potentially interested in, the best bound one can hope for on $\intpot{P}$ for a given ancestor to descendant path $P$ is $O(n)$. 
Figure \ref{fig:technical_lemma_bad_case_naive} illustrates an example for when a given ancestor to descendant path $P$ can satisfy $\intpot{P}=\Omega(n)$. 

\begin{figure}
    \centering
    \includegraphics[width=\textwidth]{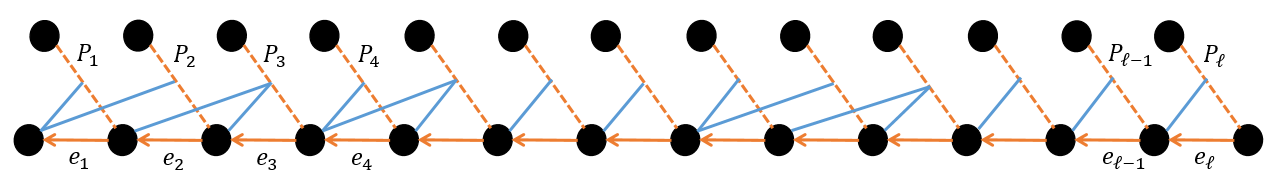}
    \caption{\small The path $P$ in the figure consists of the of the edges $e_1,...,e_\ell$, where the vertex on the right from $e_\ell$ is the root. The dotted orange path labeled by $P_i$ represents an ancestor to descendant path in $\dw{e_{i+1}}$. The blue edges correspond to heavy (weight-wise) clusters of non-tree edges leaving each node to the respective path, that cause the corresponding tree edge to be interested in said path. More formally, it can be the case that $e_i$ is interested in $P_i$ for each $i\in [\ell ]$. Thus the path $P$ is interested in $\ell$ distinct ancestor to descendant paths. The example is complete by noting that it is possible that $\ell=\Omega(n)$.}
    \label{fig:technical_lemma_bad_case_naive}
\end{figure}

However, if one restricts the discussed set of paths $\cP$ to a set that satisfies a specific property, one can bound $\intpot{P}\cap \cP$ very nicely. As Figure \ref{fig:non-splitting} suggests, the kind of paths $\cP$ that we are going to restrict ourselves to is a natural generalization of the set of paths considered in Definition \ref{def:cov-samp-edge-path}. In the discussion that follows, we will denote a path $P_1$ is an ancestor of another path $P_2$ (or, equivalently, $P_2$ is a descendant of $P_1$) if $P_1$ is contained completely inside the path connecting the root to $P_2$. 

\begin{figure}[h!]
    \centering
    \includegraphics[width=0.7\textwidth]{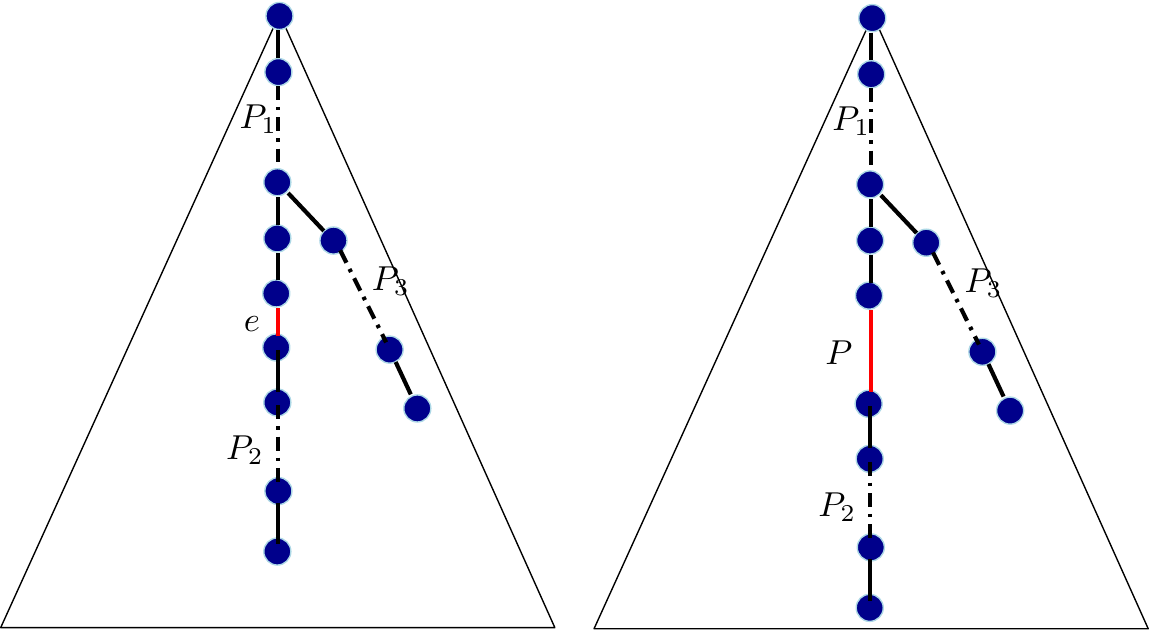}
    \caption{\small Left: $P_1, P_2, P_3$ are the types of paths considered in Definition \ref{def:cov-samp-edge-path} for which we define the notion of $e$ being interested in $P_i$. Right: A generalization. Given $P$ we are interested in non-splitting paths $P_1, P_2, P_3$. $P$ occurs completely in the path from root to $P_3$, and is absent completely from the path from root to either $P_1$ or $P_2$.} 
    \label{fig:non-splitting}
\end{figure}

Consider any path $P$ in $T$, and let us count the number of paths $P'$ such that either $P'$ occurs as a descendant or an ancestor of $P$, or $P'$ is orthogonal to $P$ (i.e., in the path from the root to $P'$, either $P$ is present fully, or $P$ is absent completely---See Figure \ref{fig:non-splitting}; $P'$ is any of $P_1, P_2$ and $P_3$). We denote \textit{$P$ is non-splittable w.r.t. $P'$}. Note that non-splitability is a symmetric property, i.e., if $P$ is non-splittable w.r.t. $P'$, then $P'$ is non-splittable w.r.t. $P$. For a set of paths $\cP$, we say $P$ is non-splitable w.r.t. $\cP$ if $P$ is non-splitable w.r.t. every path in the set $\cP$. Among the paths w.r.t. which $P$ is non-splitable, we want to count the number of paths $P'$ such that $P$ is interested in $P'$. In the structural lemma that follows, we bound the number of such $P'$s given a $P$. We actually show a stronger property: We show that even if we consider such $P'$s  such that $P$ is only \textit{potentially interested} in $P'$, the number of such $P'$s is still bounded. 



\begin{lemma}[Interesting path counting lemma] \label{lemma:-cover_all_or_nothing_paths}
Let $P$ be some ancestor to descendant path in the tree $T$ and let $\cP$ be set of paths such that (i) $P$ is non-splittable w.r.t. $\cP$, (ii) all paths of $\cP$ orthogonal to and ancestor of $P$ are pairwise orthogonal, and all paths of $\cP$ descendant of $P$ are pairwise orthogonal as well. Then, w.h.p., it holds that $B_{path}  \overset{\textrm{def}}{=} |\set{P'\in \cP| P'\in \intpot{P}}|=\potintnum$. 
\end{lemma}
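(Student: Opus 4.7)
The plan is to adapt the doubling argument sketched in the introduction's spider example to the three kinds of paths $P' \in \cP$ that $P$ may be potentially interested in, and to show each kind contributes at most $O(\log n)$. By the non-splittability hypothesis~(i), every such $P'$ falls into exactly one of three classes: (a) $P'$ is a descendant of $P$, (b) $P'$ is orthogonal to $P$, or (c) $P'$ is an ancestor of $P$. I first dispose of class~(c): by condition~(ii), any two ancestors of $P$ lying in $\cP$ must be pairwise orthogonal, but any two ancestor-to-descendant subpaths of the root-to-$P$ chain share a root-to-leaf path extending downward through $P$, so they are not orthogonal under Definition~\ref{def:orthogonal_paths}. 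Hence at most one ancestor of $P$ lies in $\cP$.

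Classes (a) and (b) are each handled by essentially the same doubling argument; I describe (b) first. Suppose $Q_1,\ldots,Q_m \in \cP \cap \intpot{P}$ are pairwise orthogonal and each orthogonal to $P$. For each $Q_i$ pick any $e_i \in P$ with $Q_i \in \intpot{e_i}$; by property~\ref{item:samp-2} of Lemma~\ref{lemma:find-int-samp} we then have $\cov(e_i,Q_i) \ge \cov(e_i)/6$ with high probability. Two structural facts drive the argument. First, a \emph{monotonicity property along $P$}: in the orthogonal case, $\ECov(e,Q_i)$ is exactly the set of non-tree edges with one endpoint in $T(e^\downarrow)$ and the other in $T(Q_i^\downarrow)$, so as $e$ moves upward along $P$ and $T(e^\downarrow)$ grows, $\ECov(e,Q_i)$ is monotonically non-decreasing. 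Second, a \emph{disjointness property from pairwise orthogonality}: for $i \neq j$, Observation~\ref{obv:path_disjoint} gives $T(Q_i^\downarrow) \cap T(Q_j^\downarrow) = \emptyset$, which immediately implies $\ECov(e, Q_i) \cap \ECov(e,Q_j) = \emptyset$ for every $e \in P$.

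Reindex so that $e_1 \le e_2 \le \cdots \le e_m$ from bottom to top along $P$ and set $e^* = e_m$. Monotonicity yields $\cov(e^*, Q_i) \ge \cov(e_i, Q_i) \ge \cov(e_i)/6$, and disjointness yields $\cov(e^*) \ge \sum_i \cov(e^*, Q_i)$. Writing $a_i = \cov(e_i)$ and $S_m = \sum_{i\le m} a_i$, these combine into $6\,a_m \ge S_m$, which rearranges to $5 a_m \ge S_{m-1}$ and therefore $S_m \ge (6/5)\, S_{m-1}$. Iterating, $S_m \ge (6/5)^{m-1} a_1$; since all edge weights are polynomially bounded (so $S_m \le \poly(n)$), this forces $m = O(\log n)$.

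Case~(a) is symmetric, with the direction of traversal of $P$ reversed: for a descendant $Q$ of $P$, $\ECov(e, Q)$ consists of non-tree edges with one endpoint in $T(Q^\downarrow)$ and the other \emph{outside} $T(e^\downarrow)$, which grows as $e$ moves \emph{downward} along $P$ (the complement of $T(e^\downarrow)$ enlarges). Ordering the descendants so that $e_1 \ge e_2 \ge \cdots \ge e_m$ from top to bottom and repeating the identical doubling calculation again yields $m = O(\log n)$. Summing across the three classes gives the claimed $B_{path} = O(\log n)$ bound. The main obstacle I anticipate is pinning down the monotonicity direction in case~(a)---that $\ECov(e,Q)$ really grows as $e$ moves \emph{down} rather than up---and confirming that the $1/6$ bound from the sampling guarantee (in place of the $1/2$ bound available for true interestedness) still drives a geometric recursion; both amount to careful bookkeeping once the direction of the inclusion $T(e^\downarrow) \subseteq T(e'^\downarrow)$ is tracked.
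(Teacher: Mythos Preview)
Your argument is correct and follows the same doubling strategy as the paper: split into the orthogonal/ancestor case and the descendant case, use that $\ECov(e,Q)$ is monotone along $P$ and that the sets $\ECov(e,Q_i)$ are pairwise disjoint by orthogonality, and derive geometric growth bounded by the polynomial total weight. Your bookkeeping is actually cleaner than the paper's: you track the partial sums $S_k=\sum_{i\le k}\cov(e_i)$ and obtain $6a_k\ge S_k$ directly, giving a $6/5$ growth factor per new path, whereas the paper tracks $\cov(e,\cP)$ and must batch seven new paths to secure a $7/6$ factor (Claim~\ref{clm: cov-potint}); both reach the same $O(\log n)$ conclusion.
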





Before going into the proof, let us provide some intuition for it. We focus on the case that the paths in $\cP$ are orthogonal to $P$. For simplicity, let us also assume that we want to count the number of paths that $P$ is \textit{interested} in (instead of $P$ being \textit{potentially interested} in). This simplifies the intuition, and going from \textit{interested} to \textit{potentially interested} is not hard.

At a high-level, if an edge $e \in P$ is interested in some path $P_i \in \cP$, we have that $\cov(e,P_i) \geq \cov(e)/2$. We would next go over the path $P$ from the lowest vertex towards the highest vertex and start counting the number of non-tree edges that start from the subtree rooted at the vertex and end somewhere outside. We will actually count only the number of such edges that covers some orthogonal path in $\cP$, and, depending on this number, will start populating the set $\intpot{P}$. The crucial observation here is the following: each time we reach some edge $e = \{u,v\}$ that is  interested in some new path $P_i$ in $\cP$ (i.e., $P_i$ is different from the paths in $\cP$ that were already seen to be in $\intpot{P}$ and $\cov(e,P_i) \geq \cov(e)/2$), we know that the total weight of such `new' non-tree edges that starts at the sub-tree rooted at $v$ and covers $P_i$ has to be at least the total weight of non-tree edges we have counted so far---otherwise, $e$ would not be interested in $P_i$. So, every time we encounter such an edge $e$ by traversing $P$ from the lowest vertex to the highest vertex, the total weight of non-edges that we count doubles. As the total weight of all non-tree edges are bounded by some polynomial in $n$, this can happen only logarithmic many times. This means that the number of paths in $\cP$ that $P$ is interested in can be at most logarithmic in $n$. More or less the same argument holds for the case when $\cP$ includes paths that are ancestor or descendant of $P$.

As mentioned before, this is a simplification and we want to bound the number of paths in $\cP$ that $P$ is \textit{potentially interested} in. We show next that the simple idea described above can be modified to deal with this case as well.


\begin{proof}
Let us first count the number of $P'$ such that (i) $P$ is potentially interested in $P'$ and (ii) $P$ does not occur in the path from root to $P'$ (i.e., $P'$ is either an ancestor of or orthogonal to $P$).\footnote{To be honest, we are not really interested in the case when $P'$ is an ancestor of $P$ because there cannot be more than one such $P'$ who are orthogonal to each other.} For the rest of the proof, readers are advised to refer to Figure \ref{fig:bounding-interesting}. Let us order the edges of $P$ from the lowest to the highest (i.e., from farthest from the root to the closest to the root) as $e_1,...,e_z$. Consider a path $P_i$ from $\cP$ that some edge in $P$ is potentially intersted in. Note that all edges that are ancestors of this particular edge are also potentially interested in $P_i$. Let $e_i$ be the lowest such edge that is potentially interested in some path $P_i$. Denote by $\cov(e_i,\cP)$ the value $\sum_{P^*\in \cP} \cov(e_i,P^*)$. $\ECov(e_i,\cP)$ is defined similarly.
Next we describe an argument that we repeat several times during the claim.
 
%
\begin{figure}
\centering
\begin{minipage}{.5\textwidth}
  \centering
  \includegraphics[scale =0.8]{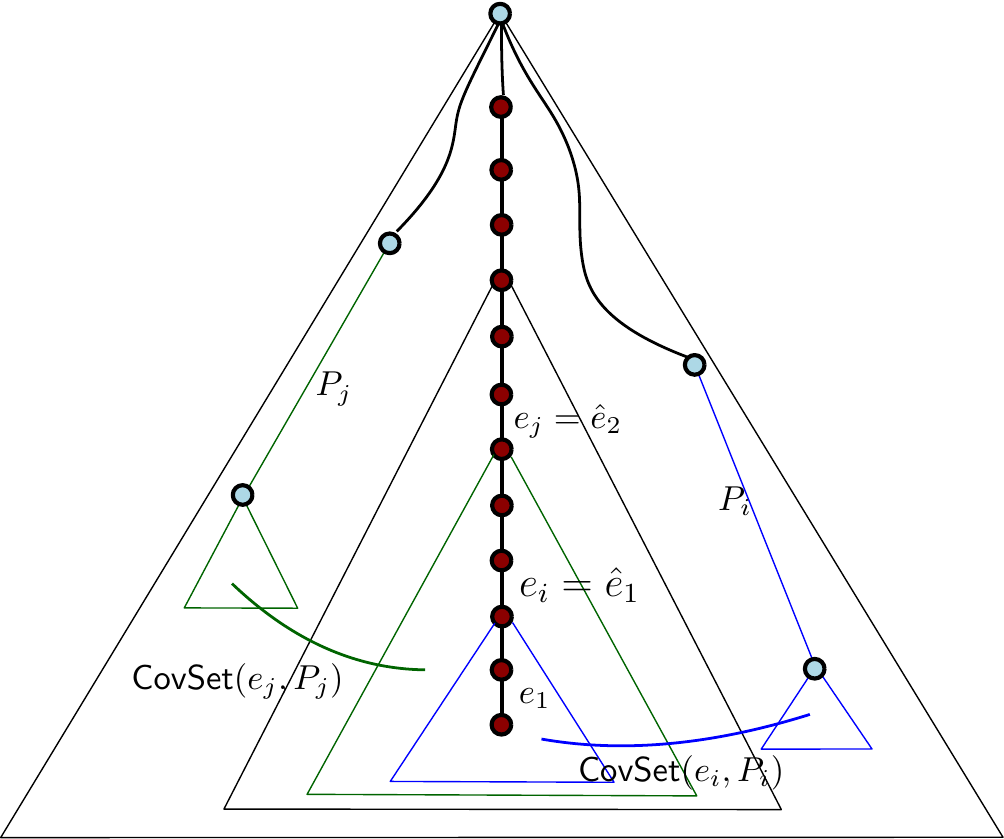}
\end{minipage}%
\begin{minipage}{.5\textwidth}
  \centering
  \includegraphics[scale =0.8]{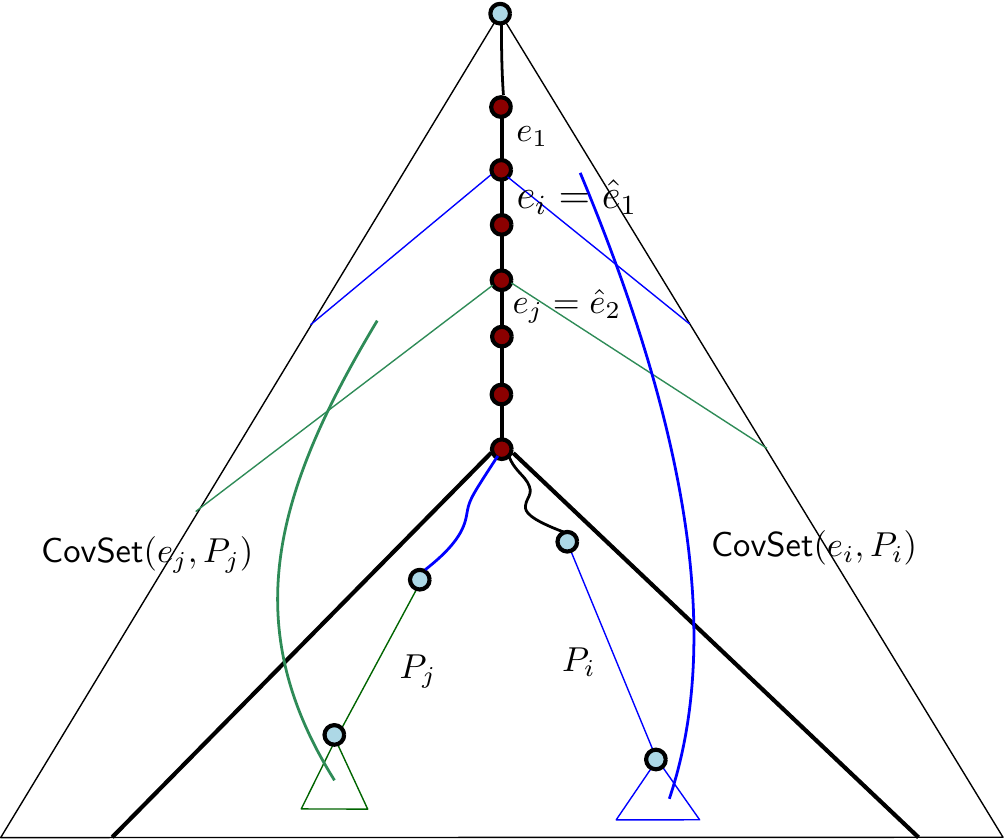}
\end{minipage}
\caption{\small Bounding the number of potentially interesting paths when $\cP$ is orthogonal to $P$: $P$ is the central path with red vertices. Edge $e_i$ is interested in path $P_i$ and edge $e_j$ is interested in path $P_j$. Left: $\cP$ is orthogonal to $P$. Right: $\cP$ is descendant of $P$.}
\label{fig:bounding-interesting}
\end{figure}

Having fixed $e_i$, we now go up in the path $P$, and focus on edges that are potentially interested in different paths in $\cP$. 
Let $e_{j_1}$ be the closest edge to $e_i$ in $P$ such that $e_{j_1}$ is potentially interested in a path $P_{j_1}\in \cP$ that no prior edge in the ordering is potentially interested in. This gives us, by Claim \ref{claim:samp_correct}, that w.h.p, $\cov(e_{j_1},P_{j_1})>\frac{\cov(e_{j_1})}{6}$. We make the following two observations at this point.

\begin{observation}
$\ECov(e_i,P_i)\subseteq\ECov(e_{j_1},\cP)$.
\end{observation}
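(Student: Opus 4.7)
The plan is to fix an arbitrary $f=\{a,b\}\in\ECov(e_i,P_i)$ and establish that $f\in\ECov(e_{j_1},P_i)$; since $P_i\in\cP$, this immediately yields $f\in\ECov(e_{j_1},\cP)$. By hypothesis the unique tree path $\pi_f$ between $a$ and $b$ already contains $e_i$ and all of $P_i$, and $e_{j_1}$ lies strictly above $e_i$ on $P$, so everything reduces to showing $\pi_f$ also contains $e_{j_1}$, i.e., that $\lca{a,b}$ sits at or above the top endpoint of $e_{j_1}$.

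I would split into the two subcases permitted by the non-splittability hypothesis on $P$ with respect to $\cP$, using also the running restriction in this part of the count that each relevant path is either an ancestor of or orthogonal to $P$. The ancestor case is routine: if $P_i$ lies entirely above $r_P$, then any tree path meeting both the subtree below $e_i$ and $P_i$ must climb through every edge of $P$ from $v$ up to $r_P$ before reaching $P_i$, and in particular crosses $e_{j_1}$.

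The orthogonal case is the substantive step. Write $e_i=\{p(v),v\}$. Since $P\perp P_i$, neither $v$ nor $r_{P_i}$ can be an ancestor of the other, so the subtrees rooted at $v$ and at $r_{P_i}$ are disjoint; combined with $\pi_f$ covering all of $P_i$, this forces (after swapping $a,b$ if necessary) $a$ to lie in the subtree rooted at $v$ and $b$ in the subtree rooted at $d_{P_i}$, whence $\lca{a,b}=\lca{v,r_{P_i}}$. The key subclaim is that $w := \lca{v,r_{P_i}}$ is at or above $r_P$. Otherwise $w$ would sit strictly inside $P$ and $r_{P_i}$ would hang off $w$ in a side subtree; but then every edge of $P$ strictly above $w$ would lie on the root-to-$r_{P_i}$ path, hence share a root-to-leaf path with each edge of $P_i$, contradicting $P\perp P_i$. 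So $\lca{a,b}\geq r_P$, and $\pi_f$ traverses every edge of $P$ between $v$ and $r_P$, including $e_{j_1}$.

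I expect the main obstacle to be pinning this branching-point argument tightly enough to dispatch boundary configurations (for instance $w=r_P$, or the top edge of $P$ coinciding with $e_{j_1}$); in each such corner case the same orthogonality check still forbids $w$ from being strictly below the top endpoint of $e_{j_1}$. Once $e_{j_1}\in\pi_f$ is in hand, and since $\pi_f$ still contains all of $P_i$, the definition of $\ECov(e_{j_1},P_i)$ gives $f\in\ECov(e_{j_1},P_i)\subseteq\ECov(e_{j_1},\cP)$, as claimed.
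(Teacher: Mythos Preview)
Your proposal is correct and follows essentially the same reasoning as the paper. The paper's proof is a two-line appeal to the fact that $e_{j_1}$ is an ancestor of $e_i$ together with non-splittability to conclude $\ECov(e_i,\cP)\subseteq\ECov(e_{j_1},\cP)$, from which the observation follows since $\ECov(e_i,P_i)\subseteq\ECov(e_i,\cP)$; you simply unpack what ``by non-splittability'' means, carrying out the ancestor and orthogonal cases explicitly and verifying in each that the tree path of $f$ must climb through $e_{j_1}$.
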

This follows from the following reasoning: Since $e_{j_1}$ is an ancestor of $e_i$, and by non-splittability, we deduce that $\ECov(e_i,\cP)\subseteq\ECov(e_{j_1},\cP)$, which in turn gives us that $\cov(e_i,\cP)\leq \cov(e_{j_1},\cP)$, and in particular gives us that $\ECov(e_i,P_i)\subseteq\ECov(e_{j_1},\cP)$. 

\begin{observation}
$\ECov(e_i,P_i)$ and $\ECov(e_{j_1},P_{j_1})$ are disjoint.
\end{observation}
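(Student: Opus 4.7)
The plan is a proof by contradiction, tracking where the endpoints of a putative common non-tree edge $x$ must lie in $T$. I will focus on the case in which both $P_i$ and $P_{j_1}$ are orthogonal to $P$; the subcase where one of them is an ancestor of $P$ is degenerate, since the pairwise orthogonality hypothesis on $\cP$ forces at most one ancestor-of-$P$ path to appear in $\cP$ (two nested ancestor paths cannot be orthogonal), and that lone path contributes only an additive constant to the final count that the overall lemma is after.

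Suppose, for contradiction, that $x = \{u,v\} \in \ECov(e_i, P_i) \cap \ECov(e_{j_1}, P_{j_1})$. Since $x$ covers the tree edge $e_i$, Claim \ref{claim_subtree} guarantees that exactly one endpoint of $x$ lies in $T(e_i^\downarrow)$; without loss of generality call this endpoint $u$. Because $e_{j_1}$ is an ancestor of $e_i$ on the path $P$, we have $T(e_i^\downarrow) \subseteq T(e_{j_1}^\downarrow)$, so $u \in T(e_{j_1}^\downarrow)$ as well, and $v$ lies outside both subtrees.

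Next, I locate $v$ using the path-coverage conditions. Applying Claim \ref{claim_subtree} to the top and the bottom edges of $P_i$ (both of which $x$ must cover), one endpoint of $x$ must lie in the subtree rooted at the bottom vertex of $P_i$ — a subset of $T(P_i^\downarrow)$ — and the other must lie outside $T(P_i^\downarrow)$. Because $e_i$, viewed as a single-edge ancestor-to-descendant path, is orthogonal to $P_i$ (this is inherited from $P \perp P_i$), Observation \ref{obv:path_disjoint} yields $T(e_i^\downarrow) \cap T(P_i^\downarrow) = \emptyset$, so $u \notin T(P_i^\downarrow)$; hence the endpoint below $P_i$ must be $v$, and in particular $v \in T(P_i^\downarrow)$. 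Repeating the argument with $P_{j_1}$ in place of $P_i$ forces $v \in T(P_{j_1}^\downarrow)$ as well.

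Finally, the pairwise-orthogonality hypothesis of the lemma applied to the orthogonal-to-$P$ family of $\cP$ gives $P_i \perp P_{j_1}$, and Observation \ref{obv:path_disjoint} then yields $T(P_i^\downarrow) \cap T(P_{j_1}^\downarrow) = \emptyset$, contradicting $v$ being in both. The step I expect to require the most care is the short verification, via Claim \ref{claim_subtree} on the extremal edges, that covering an entire ancestor-descendant path $Q$ is equivalent to having one endpoint below the bottom vertex of $Q$ and the other outside $T(Q^\downarrow)$; this equivalence is the linchpin that transports Claim \ref{claim_subtree} from single edges to whole paths and makes Observation \ref{obv:path_disjoint} directly applicable.
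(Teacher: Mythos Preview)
Your argument is correct and captures the same geometric content as the paper's proof. The paper states it more tersely by observing that, since $P_i \perp P_{j_1}$ and $P$ is orthogonal to (or above) both, no simple tree path can contain an edge of $P$ together with edges from both $P_i$ and $P_{j_1}$, and then invokes Observation~\ref{obs:interested-edges}; your explicit endpoint-tracking via Claim~\ref{claim_subtree} and Observation~\ref{obv:path_disjoint} is an unpacked version of the same reasoning.
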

The reason is as follows. We notice that since $P_i,P_{j_1}$ are orthogonal, and since $P$ is either  completely above these paths or orthogonal to them, the edges of $P$ can not appear on the unique tree paths between pairs of edges from $P_i,P_{j_1}$ (See Observation \ref{obs:interested-edges}). This allows us to deduce that $\ECov(e_i,P_i,),\ECov(e_{j_1},P_{j_1})$ are disjoint. 

Now, from these two observations, we obtain that the following holds w.h.p.
\begin{gather*}
    \cov(e_{j_1},\cP)\geq \cov(e_{j_1},P_{j_1})+\cov(e_i,P_i)\geq \frac{\cov(e_{j_1})+\cov(e_i)}{6} \\ \geq \frac{\cov(e_{j_1},\cP)+\cov(e_i\cP)}{6} \geq \frac{2\cov(e_i,\cP)}{6}.
\end{gather*}


Had we been interested in paths in $\cP$ that $P$ is simply interested in, we would have got $\cov(e_{j_1},\cP)\geq \cov(e_i,\cP)$. This would be enough to complete the argument as mentioned in the overview of the proof. Unfortunately, we cannot have such strong claim when we are dealing with \textit{potentially interested} paths. Hence we have to consider 7 such edges $e_{j_1},...,e_{j_7}$. Formally, let us consider $e_{j_1},...,e_{j_7}$ with $j_7>...>j_1 > i$ such that $e_{j_1},...,e_{j_7}$ are the seven closest edges (by order) to $e_i$ in $P$ such that $e_{j_k},k\in \set{1, \cdots, 7}$ are potentially interested in a different path $P_{j_k}\in \cP$ from all previous paths considered (including all of $\cP_i$). If such edges don't exist, the lemma follows immediately. Otherwise, we make the following claim.

\begin{claim} \label{clm: cov-potint}
$\cov(e_{j_7},\cP) \geq \frac{7}{6} \cov(e_i,\cP)$ holds w.h.p.
\end{claim}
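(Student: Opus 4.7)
The plan is to follow the same template the excerpt has already used for $e_{j_1}$, but applied to all seven of the new edges at once. The three pieces we need are (i) a lower bound on each individual sampled contribution $\cov(e_{j_k}, P_{j_k})$ (and $\cov(e_i, P_i)$), (ii) a way to sum these contributions, i.e.\ disjointness of the corresponding edge-sets, and (iii) a way to relate the raw $\cov(e_{j_k})$ values back to $\cov(e_i, \cP)$ so that the final bound is phrased in terms of the latter.

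First I would apply the sampling guarantee from \Cref{lemma:find-int-samp} (Property~\ref{item:samp-2}) to each of the eight events ``$P_{j_k}\in\intpot{e_{j_k}}$'' and ``$P_i\in\intpot{e_i}$''; each holds w.h.p., so a union bound over the eight gives us simultaneously $\cov(e_{j_k},P_{j_k})\ge \cov(e_{j_k})/6$ and $\cov(e_i,P_i)\ge \cov(e_i)/6$. Next I would establish disjointness of the sets $\ECov(e_i,P_i),\ECov(e_{j_1},P_{j_1}),\dots,\ECov(e_{j_7},P_{j_7})$. By assumption (ii) of \Cref{lemma:-cover_all_or_nothing_paths}, the paths $P_i,P_{j_1},\dots,P_{j_7}$ are pairwise orthogonal (this is exactly where the pairwise-orthogonality hypothesis is used, and it is the portion of the proof I expect to be most delicate, since one must rule out the configuration where a single non-tree edge covers two orthogonal target paths together with an edge of $P$). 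Given any non-tree edge $x$ in the intersection of two such sets, its tree path would have to pass through $e_a\in P$ and through an edge of each of two orthogonal paths $P_a,P_b$; since $P$ is non-splittable w.r.t.\ $\cP$, these three tree-edges cannot lie on a common simple path, which contradicts \Cref{obs:interested-edges}.

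With disjointness and the sampling bounds in hand, and using that every $\ECov(\cdot,P_*)$ with $P_*\in\cP$ is a subset of $\ECov(e_{j_7},\cP)$ (here I invoke non-splittability plus the fact that the current branch of the lemma treats only the case where $\cP$ consists of paths orthogonal to or ancestors of $P$, so any edge covering some $e_\ell\in P$ and some $P_*\in\cP$ necessarily also covers $e_{j_7}$), we obtain
\[
\cov(e_{j_7},\cP)\ \ge\ \cov(e_i,P_i)+\sum_{k=1}^{7}\cov(e_{j_k},P_{j_k})\ \ge\ \tfrac{1}{6}\Bigl(\cov(e_i)+\sum_{k=1}^{7}\cov(e_{j_k})\Bigr).
\]
Finally, the monotonicity coming from the same subset relation gives $\cov(e_{j_k})\ge \cov(e_{j_k},\cP)\ge \cov(e_i,\cP)$ for every $k$, and trivially $\cov(e_i)\ge\cov(e_i,\cP)$. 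Plugging these in yields $\cov(e_{j_7},\cP)\ge \tfrac{8}{6}\cov(e_i,\cP)\ge \tfrac{7}{6}\cov(e_i,\cP)$, as required.

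The only real obstacle in the plan is the disjointness step: it is tempting to try to deduce disjointness purely from orthogonality of the $P_{j_k}$'s, but an edge can cover two orthogonal paths in general. The argument must explicitly use the presence of the ``anchor'' edge of $P$ inside each $\ECov(e_{j_k},P_{j_k})$ together with the non-splittability hypothesis to invoke \Cref{obs:interested-edges}; once this is handled, the rest is a short arithmetic manipulation. The reason the statement picks $7$ rather than any smaller number is precisely that the sampling loss of a factor of $6$ (instead of $2$) has to be compensated for before we obtain a bound strictly larger than $1$ on the ratio $\cov(e_{j_7},\cP)/\cov(e_i,\cP)$, which is what will drive the eventual logarithmic bound on $|\intpot{P}\cap\cP|$ in the outer argument.
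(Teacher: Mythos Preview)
Your proposal is correct and follows essentially the same argument as the paper's proof: the sampling guarantee $\cov(e_{j_k},P_{j_k})\ge \cov(e_{j_k})/6$, pairwise disjointness of the $\ECov(\cdot,\cdot)$ sets via \Cref{obs:interested-edges}, the containment $\ECov(e_{j_k},P_{j_k})\subseteq \ECov(e_{j_7},\cP)$ from non-splittability, and the monotonicity $\cov(e_{j_k})\ge\cov(e_{j_k},\cP)\ge\cov(e_i,\cP)$. The only difference is cosmetic: the paper sums only over $k=1,\dots,7$ to obtain exactly $\tfrac{7}{6}\cov(e_i,\cP)$, whereas you also throw in the $\ECov(e_i,P_i)$ term (which the paper does record as disjoint from the others but then does not use) and obtain the slightly stronger $\tfrac{8}{6}\cov(e_i,\cP)$ before relaxing to $\tfrac{7}{6}$.
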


Let us first assume Claim \ref{clm: cov-potint}, and prove the lemma. The total weight of edges that cover $e_{j_7}$ and $\cP$ is larger by a constant factor (more than 1) from the total weight of edges that cover $e_i$ and $\cP$. We next repeat this argument as we traverse $P$ from $e_{j_7}$ upwards. Note that the total weight of edges cannot grow by such a multiplicative factor more than $O(\log n)$ times, as weights are polynomially bounded. Hence we get a set of $O(\log n)$ many edges  in $P$ which are potentially interested in new paths in $\cP$ than what their descendants in $P$ are interested in. Hence the lemma follows.

The case where we consider $\cP$ to be a set of orthogonal paths appearing completely below $P$ follows the same proof with a change in the ordering of the edge of $P$ (See Figure \ref{fig:bounding-interesting}), and replacing `ancestor' with `descendant'. We want to order the edges in the opposite order such that the lowest edge of $P$ gets the smallest index, and we now take $e_i$ to be the \emph{highest} edge that is potentially interested in some path $P_i\in \cP$, and $e_{j_1},...,e_{j_7}$ satisfy  $j_7<j_6<...<j_1<i$. This is because of the way the set $\ECov(e_i,P^\ast), P \in \cP$ is defined for this case: Any edge $f \in \ECov(e_i,P^\ast)$ has one end point as a descendant of $P^\ast$ as before, but has the other end point \textit{outside the set of descendants of $e_i$}, i.e., the unique tree path between the two end-point of $f$ includes both $e_i$ and $P^\ast$. Once we are set with this change in ordering, rest of the proof is similar to the previous case and is omitted.
\end{proof}

Now we prove Claim \ref{clm: cov-potint}.
\begin{proof}[Proof of Claim \ref{clm: cov-potint}]

By definition of potentially interested and Property \ref{item:samp-2} of Lemma \ref{lemma:find-int-samp}, that w.h.p., $$\cov(e_{j_k}, P_{j_k}) > \cov(e_{j_k}) /6.$$ 
As $e_{j_1},...,e_{j_7}$  are ancestors of $e_i$ and by non-splittability, any edge $e \in\ECov(e_i, P')$ such that $P'\in \cP$ also covers $e_{j_k}$ for all $k\in \set{1, \cdots, 7}$ i.e., $\ECov(e_i, \cP) \subset \ECov(e_{j_k}, \cP)$. This immediately gives the following observation:

\begin{observation}
$\cov(e_i, \cP) \leq \cov(e_{j_k}, \cP)$ for all $k \in \set{1, \cdots, 7}$.
\end{observation}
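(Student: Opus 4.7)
The plan is to derive this observation directly from the set inclusion $\ECov(e_i,\cP)\subseteq \ECov(e_{j_k},\cP)$ that the preceding sentence already asserts, since weights are nonnegative and summing a nonnegative edge-weight function over a larger set can only increase the total. So the entire content of the observation is to verify that inclusion carefully, case by case, based on how each $P'\in\cP$ sits relative to $P$ under the non-splittability hypothesis.

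First, I would fix any $P'\in\cP$ and any $f=\{u,v\}\in\ECov(e_i,P')$, meaning that the unique $u$--$v$ tree path $\pi$ contains both $e_i$ and every edge of $P'$. I then want to argue that $\pi$ must also pass through $e_{j_k}$. Since $e_{j_k}$ lies on $P$ strictly above $e_i$ (as $e_{j_k}$ is an ancestor of $e_i$ on the common ancestor-to-descendant path $P$), it suffices to show that $\pi$ leaves the subtree rooted at the lower endpoint of $e_{j_k}$. I split by the three cases permitted by $P$ being non-splittable with respect to $P'$: (i) $P'$ is orthogonal to $P$, (ii) $P'$ is an ancestor of $P$, and (iii) $P'$ is a descendant of $P$.

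In case (i), the path from the root to $P'$ avoids $P$, so reaching $P'$ from any endpoint of $\pi$ below $e_i$ forces $\pi$ to ascend past the top of $P$, and hence past $e_{j_k}$. In case (ii), every edge of $P'$ is strictly above $e_{j_k}$, and since $\pi$ must contain both $e_i$ (below $e_{j_k}$) and all of $P'$ (above $e_{j_k}$), connectivity of $\pi$ forces it to traverse $e_{j_k}$. In case (iii), $P'$ lies entirely inside the subtree below the bottom vertex of $P$, so one endpoint of $f$ is in that subtree while the other must lie outside the subtree rooted at the lower endpoint of $e_i$ (otherwise $\pi$ could not contain $e_i$); the resulting $\pi$ therefore climbs above $e_i$ and, in particular, through $e_{j_k}$. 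In all three cases, $f\in\ECov(e_{j_k},P')$, establishing $\ECov(e_i,P')\subseteq \ECov(e_{j_k},P')$, and taking a union over $P'\in\cP$ gives $\ECov(e_i,\cP)\subseteq \ECov(e_{j_k},\cP)$.

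Finally, summing the edge weight $w(\cdot)$ over this inclusion, together with nonnegativity of weights, immediately yields $\cov(e_i,\cP)\leq \cov(e_{j_k},\cP)$ for every $k\in\{1,\dots,7\}$. I do not anticipate any real obstacle here; the only thing that requires attention is making the case split explicit so that the appeal to non-splittability is transparent, especially in case (iii) where one must be careful that an edge covering $e_i$ and a descendant $P'$ of $P$ necessarily has one endpoint above $e_i$ and hence above $e_{j_k}$.
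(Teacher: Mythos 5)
Your overall strategy is exactly the paper's: establish the set inclusion $\ECov(e_i,\cP)\subseteq\ECov(e_{j_k},\cP)$ and then sum nonnegative weights. Your cases (i) and (ii) are correct and are precisely what the paper compresses into the phrase ``as $e_{j_1},\dots,e_{j_7}$ are ancestors of $e_i$ and by non-splittability'': for an orthogonal $P'$, non-splittability forces the root-to-$P'$ path to miss $P$ entirely, so any covering path must climb past the top of $P$ and hence through $e_{j_k}$; for an ancestor $P'$ the covering path must traverse the whole tree path from $e_i$ up to $P'$, which contains $e_{j_k}$.

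Your case (iii), however, is wrong as argued, and the claimed inclusion genuinely fails there. If $P'$ is a descendant of $P$ and $e_{j_k}$ is an ancestor of $e_i$ on $P$, take $P=v_0\text{--}v_1\text{--}v_2\text{--}v_3$ (with $v_0$ closest to the root), $e_{j_k}=\set{v_0,v_1}$, $e_i=\set{v_2,v_3}$, and $P'$ below $v_3$; the edge $f=\set{u,v_2}$ with $u$ a descendant of $P'$ has a tree path containing $e_i$ and all of $P'$ but not $e_{j_k}$. The path does exit the subtree below $e_i$, but nothing forces it to continue upward through $e_{j_k}$ --- it may stop at $v_2$ or branch off $P$ below $e_{j_k}$. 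The reason this does not sink the observation is that case (iii) never arises where the claim is invoked: the claim sits inside the half of the lemma's proof in which $\cP$ consists only of paths orthogonal to or above $P$ (the lemma's hypothesis separates the descendant family), and the descendant family is handled afterwards with the \emph{reversed} ordering of $P$'s edges, under which the $e_{j_k}$ become descendants of $e_i$ and hence lie on the tree path between $e_i$ and $P'$, restoring the inclusion. So you should either drop case (iii) and note it is out of scope here, or, if you want to cover it, flip the ancestor/descendant roles as the paper does rather than trying to push the same inclusion through.
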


Note that $\cov(e_{j_k}, \cP)\leq \cov(e_{j_k})$ for all $k\in \set{1, \cdots, 7}$. Now one can deduce the following observation from the fact that $P_i$ and $P_{j_k}, k\in \set{1, \cdots, 7}$ are pairwise orthogonal, and the fact that all edges of $P$ are not on the unique paths in the tree between all pairs among $P_i$ and $P_{j_k}, k\in \set{1, \cdots, 7}$ (See Observation \ref{obs:interested-edges}).

\begin{observation}
$\ECov(e_i,P_i),\ECov(e_{j_k},P_{j_k}),k\in \set{1, \cdots, 7}$ are pairwise disjoint.
\end{observation}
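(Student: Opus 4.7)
My plan is to prove the observation by contradiction. Suppose some non-tree edge $f=\{u,w\}$ belongs simultaneously to $\ECov(e_a,P_a)$ and $\ECov(e_b,P_b)$ for two distinct pairs $(e_a,P_a),(e_b,P_b)$ in the collection $\{(e_i,P_i),(e_{j_1},P_{j_1}),\ldots,(e_{j_7},P_{j_7})\}$. By Definition \ref{def:cov-samp-edge-path}, the unique tree path $\pi_f$ from $u$ to $w$ must then contain both $e_a,e_b$ and every edge of $P_a$ and of $P_b$. The goal is to use pairwise orthogonality of the $P$'s together with the hypothesis (already invoked in the preceding observation) that no edge of $P$ lies on the unique tree path between any two paths in the collection, to show that no such $\pi_f$ exists.

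First I would exploit the geometry of a simple tree path: $\pi_f$ decomposes into an ascending arc $u\leadsto\text{lca}(u,w)$ and a descending arc $\text{lca}(u,w)\leadsto w$, and each arc sits entirely on a single root-to-leaf path. Since $e_a,e_b\in P$ share $P$'s root-to-leaf path, they must lie on a common arc of $\pi_f$; without loss of generality both sit on the ascending arc, which therefore coincides with a prefix of $P$'s root-to-leaf path through $u$. In particular $u$ is a descendant of every edge of $P$ appearing on $\pi_f$.

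Next I would handle the main subcase where $P_a$ and $P_b$ are both orthogonal to $P$; the subcase where one of them is an ancestor of $P$ is absorbed into the $O(\log n)$ bound because pairwise orthogonality among the $P_{j_k}$'s forces at most one such ancestor in the collection, contributing only an additive constant. In the main subcase, orthogonality to $P$ prevents any edge of $P_a$ or $P_b$ from sharing a root-to-leaf path with an edge of $P$, so no edge of $P_a$ or $P_b$ can lie on the ascending arc of $\pi_f$. Hence every edge of $P_a$ and every edge of $P_b$ sits on the descending arc, and therefore on the single root-to-leaf path through $w$. This makes $P_a$ and $P_b$ share a common root-to-leaf path, directly contradicting the pairwise orthogonality $P_a\perp P_b$ assumed in Lemma \ref{lemma:-cover_all_or_nothing_paths}. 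I would package the whole argument as an instance of Observation \ref{obs:interested-edges} generalised from an edge triple to the edge set $\{e_a\}\cup P_a\cup P_b$: the analysis above shows this set cannot fit in any simple tree path, so $\ECov(e_a,P_a)\cap\ECov(e_b,P_b)=\emptyset$.

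The main obstacle will be the mixed-configuration case in which exactly one of $P_a,P_b$ is the unique ancestor-of-$P$ path while the other is orthogonal to $P$. Pairwise orthogonality alone does not rule out a joint covering edge here, and this is precisely where the "no edge of $P$ on the unique tree path between any two paths in the collection" hypothesis must be invoked essentially: I would trace the sub-arc of $\pi_f$ connecting $P_a$ to $P_b$, identify it with the unique tree path between them, and derive the contradiction by locating an edge of $P$ forced onto this sub-arc. Since the collection contains at most one ancestor path, even a weaker version of this mixed-case argument suffices to preserve the final $O(\log n)$ bound of the lemma.
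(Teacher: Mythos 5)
Your core argument is correct and is essentially the paper's own (the paper justifies the observation in one line by citing the pairwise orthogonality of the $P_{j_k}$'s together with Observation~\ref{obs:interested-edges}; you have simply expanded that citation into the explicit two-arc decomposition of the covering path $\pi_f$, forcing $e_a,e_b$ onto one arm and all of $P_a,P_b$ onto the other, contradicting $P_a\perp P_b$). Two remarks on the periphery. First, your worry about the mixed configuration is legitimate: if $P_a$ is the (unique) ancestor-of-$P$ path and $P_b$ is orthogonal to $P$, a single simple tree path can indeed climb from below $e_a$ through $P$, through $P_a$, up to $\lca{P,P_b}$ and down through $P_b$, so the literal pairwise-disjointness can fail there; the paper sidesteps this exactly as you do in your fallback, via the footnote that $\cP$ contains at most one ancestor path, which costs only an additive constant in $B_{path}$. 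Second, your primary plan for that mixed case --- deriving a contradiction by ``locating an edge of $P$ forced onto the sub-arc connecting $P_a$ to $P_b$'' --- would not go through: the unique tree path between an ancestor of $P$ and a path orthogonal to $P$ lies entirely above $P$ and contains no edge of $P$, so only the weaker exclusion argument is available, and that is the one you (and the paper) should rely on.
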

Due to non-splittability, we also know that $\ECov(e_{j_k},P_{j_k})\subseteq \ECov(e_{j_7},\cP).$
Combining these observations 
we can deduce that w.h.p.
\[
\cov(e_{j_7},\cP)\geq \sum\limits_{k=1}^7 \cov(e_{j_k},P_{j_k})\geq \sum\limits_{k=1}^7 \frac{\cov(e_{j_k})}{6} \geq \sum\limits_{k=1}^7 \frac{\cov(e_{j_k},\cP)}{6} \geq \frac{7}{6} \cdot \cov(e_i,\cP). 
\]

\end{proof}

\subsection{Learning the interesting paths}\label{ssec:path_pruning}
In this section we explain how given the sampled edges by each edge $e$ in the sampling procedure, $e$ can internally construct the set $\intpot{e}$. Furthermore, we define the type of paths we work with in the algorithm, and we prove that for each such path $P$, all of its edges can efficiently learn the set $\intpot{P}$. We start with the definition of knowing a path. 

\begin{definition}[Knowing a path]\label{def:edge_knows_path}
Given a tree edge $e$ and some ancestor to descendant path $P'$ which is either a highway path or a non-highway path, we say that $e$ knows $P$ if the following holds.
\begin{enumerate}
    \item $e$ knows whether $P'$ contains non-highway edges.
    \item $e$ knows the lowest fragment $F$ that intersects $P$ (and immediately from  using LCA computation, the highest fragment as well).
\end{enumerate}
\end{definition}

Note, that since the id of each  fragment is $O(\log n)$ bits, the information about each path is $O(\log n)$ bits as well, since the first condition requires a single bit.
Furthermore, from here on in we identify each ancestor to descendant path with the lowest fragment $F$ it intersects. We interchangeably refer to paths using either the standard notion, or using the lowest fragment of the path.

We now define the types of paths that we work with in the algorithm.
 
\begin{definition}\label{def:fragment_and_super_highways}
Given a highway path $P$, we call $P$ a fragment highway if $P=P_F$ for some fragment $F$. Here, $P_F$ is the highway path of the fragment $F$. We call a highway path $P$ a super highway if $P$ is the union of two or more fragments highways. We call $P$ a highway bough if $P$ is a super highway, and $P$ is maximal with respect to the layering of the skeleton tree (See Lemma \ref{lemma:layer_decomposition_highway}).
\end{definition}

Now we address the issue of translating the sampled edges into knowing the paths that each path is potentially interested. Specifically, the missing piece is how, given a tree edge $e$, one can make $e$ know each path in the set $\intpot{e}$.

So far, we explained how $e$ uses Lemma \ref{lemma:routing_lemma_aggregate} in order to learn $\info(e^*)$ for all edges $e^*\in \ECovs(e)$. We now explain why this information suffices for $e$ to construct internally without further communication the set $\intpot{e}$. 
We aim to prove the following lemma. Recall that for a given tree edge $e$ we have that
\[
\intpot{e}=\left\{P \mid \covs(e,P)\geq \frac{\covs(e)}{3}\right\}\cup P_e
\]
Here, $P_e$ is the path from the root to $e$. 
We next show that all non-highways and highways can learn about the paths they are potentially interested in, the proof is deferred to Appendix \ref{sec:appen_five}.

\begin{restatable}{lemma}{PathParsing}\label{lemma:parse_paths}

All non-highway boughs, and all fragment highways can learn the paths (See definition \ref{def:edge_knows_path}) each of them is potentially interested in $\Tilde{O}(\dfrag)$ rounds. Furthermore, in $\Tilde{O}(\dfrag)$ rounds, all non-highway boughs $P$ can simultaneously send $\intpot{P}$ to all vertices in $T(P)$, and all fragment highways $P$( of a fragment $F_P$) can send $\intpot{P}$ to all vertices in $F_P$.
\end{restatable}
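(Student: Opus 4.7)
The plan is to leverage the sampling output from Lemma \ref{lemma:find-int-samp} together with Theorem \ref{thm:-info_theorem_tree_edge}, and then to aggregate and broadcast inside each fragment using pipelining. First, after running the sampling each tree edge $e$ knows the identity and $\info$ of every non-tree edge in $\ECovs(e)$. From the $\info$ of each sampled edge $e^*$, the edge $e$ knows the fragment identifier of both endpoints of $e^*$ and, via Observation \ref{obs:knowing_allfragments_from_one}, can reconstruct the full sequence of fragments crossed by the tree path of $e^*$. This lets $e$ compute $\covs(e,P)$ for every candidate ancestor-to-descendant path $P$ (in the encoding of Definition \ref{def:edge_knows_path}) that some sampled edge witnesses, and hence produce $\intpot{e}$ internally without any further communication. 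By Lemma \ref{lemma:find-int-samp} the resulting set has $|\intpot{e}|=O(\log n)$ entries, each described in $O(\log n)$ bits by its lowest intersecting fragment.

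Next, for each non-highway bough $P$ and each fragment highway $P$ we aggregate $\intpot{P}=\bigcup_{e\in P}\intpot{e}$ by a pipelined prefix-union sweep along $P$ from its deepest endpoint up to $r_P$. Because $|P|\leq \dfrag$ and each edge contributes $\tilde{O}(1)$ path descriptors, the message size carried by any vertex is at most $\tilde{O}(\dfrag)$ bits and the sweep finishes in $\tilde{O}(\dfrag)$ rounds. Non-highway boughs are edge-disjoint by construction, and distinct fragment highways live in edge-disjoint fragments (meeting only at the boundary vertices $r_F,d_F$), so all these aggregations proceed in parallel without edge-congestion interference.

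Finally, we broadcast $\intpot{P}$ to the required targets: for a non-highway bough $P$, downward from $P$ to all of $T(P)$; for a fragment highway $P$, throughout the whole fragment $F_P$. In either case the destination sits inside a single fragment, so it has diameter $O(\dfrag)$, and the payload has $\tilde{O}(\dfrag)$ bits, so a pipelined BFS-like propagation yields $\tilde{O}(\dfrag)$ rounds for a single $P$. The delicate point, and the main obstacle in the argument, is parallelism across non-highway boughs in the same fragment, whose subtrees $T(P)$ can be nested and share tree edges. This is controlled using Claim \ref{claim:bound_number_layers}: any fixed tree edge lies in $T(P)$ for at most $O(\log n)$ non-highway boughs $P$, namely at most one per layer along its root path. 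Hence the total information that must traverse any individual edge across all simultaneous broadcasts is still $\tilde{O}(\dfrag)$ bits, and pipelining gives the claimed $\tilde{O}(\dfrag)$ bound for all boughs and highways at once.
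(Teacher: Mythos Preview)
Your aggregation and broadcast arguments (the second and third paragraphs) are fine and essentially match the paper, including the use of the $O(\log n)$ layering bound to control congestion when the subtrees $T(P)$ nest.

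The gap is in the first step, where you claim each tree edge $e$ can read off $\intpot{e}$ internally from $\{\info(e^*):e^*\in\ECovs(e)\}$. Knowing the fragment of each endpoint of each sampled edge, and hence (via Observation~\ref{obs:knowing_allfragments_from_one}) the sequence of fragments its tree path crosses, does \emph{not} let $e$ identify the maximal paths in $\intpot{e}$ in the encoding of Definition~\ref{def:edge_knows_path}. A maximal $P\in\intpot{e}$ is determined by its lowest vertex $v$, and that vertex can be an interior vertex of some fragment~$F$; in particular, whether the bottom edge of $P$ is a highway or a non-highway edge (item~1 of Definition~\ref{def:edge_knows_path}) depends on where $v$ sits inside $F$, which fragment-level information alone cannot reveal. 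Moreover, as the paper's discussion before the lemma emphasizes (see Figure~\ref{fig:LCA_figure}), every sampled edge in $\ECovs(e,P)$ may land in descendants of $v$ lying in fragments strictly below $F$, so no individual sampled edge ``witnesses'' $F$ or $v$ at all.

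The paper supplies the missing ingredient: the LCA claim (Claim~\ref{claim_lca_computation_of_pairs_appen}) shows that the lowest vertex $v$ of any maximal $P\in\intpot{e}$ equals $\lca{e^1,e^2}$ for some pair $e^1,e^2$ of tree edges adjacent to members of $\ECovs(e)$. Thus $e$ enumerates all $O(\log^2 n)$ such LCA's, for each candidate $v$ forms the path $P_v^z$ with $z=\lca{e,v}$, computes $\covs(e,P_v^z)$ by LCA tests against the sampled edges, and keeps those with $\covs(e,P_v^z)\ge\covs(e)/3$. From $v$'s LCA label together with the known skeleton tree and the fragments of the sampled endpoints, $e$ then recovers both the lowest fragment of $P$ and whether the bottom edge is a non-highway edge. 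Without this pairing/LCA step, your internal computation has no way to produce the correct path descriptors.
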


Note that the above lemma is far from trivial, in particular, there is no clear way to translate the set $\ECovs(e)$ for some given tree edge $e$ into the set of potentially interesting paths. A naive attempt might be to just take as our set of potentially interesting paths all fragments (which represent paths)  that contain some vertex incident to an edge in $\ECovs(e)$. Note however that this is problematic since this might put paths $P$ in $\intpot{e}$ whose weight fraction $\frac{\cov(e,P)}{\cov(e)}$ is very small (See Figure \ref{fig:bounding-interesting}), since as far as we know, the only edge in $\ECov(e,P)$ is the sampled edge that included $P$ in $\intpot{e}$, and this edge can have very low weight. This is a problem since this means that we are not potentially interested in $P$ and thus we would not be able to apply the interesting path counting lemma (See Section \ref{ssec:-interesting_lemma}) later in the algorithm, which is crucial for the algorithm's fast running time. 

Another naive attempt might be to just take as our set of potentially interesting paths all fragments (which represent paths)  that contain some vertex incident to an edge in $\ECovs(e)$ \emph{and} $e$ is potentially interested in the paths that the fragments represent. Note however that this set might be empty. If one denotes by $P$ a path that $e$ is potentially interested in, it might be that all edges in $\ECovs(e)$ that have an incident vertex in $T(P)$ sot not intersect $P$, but are actually connected to some low vertices in $T(P)$, which are in a different fragment than the one that represents $P$. An example of this is depicted in Figure \ref{fig:LCA_figure} In this case, how can we find the fragment that represents $P$ given only the edges $\ECovs(e)$?

\paragraph{LCA claim.} To overcome this obstacle we prove a nice structural claim regarding the connection between the edges in $\ECovs(e)$ and the lowest vertices of the paths $P$ that $e$ is potentially interested in. Intuitively, the claim says that for each path in $\intpot{P}$, one can identify the lowest vertex $v$ by considering the LCAs of \emph{pairs} of vertices which are incident to edges in $\ECovs(e)$. This claim allows us to deduce from $\ECovs(e)$ the specific fragments that represent all paths that $e$ is potentially interested in.

\begin{figure}[h!]
    \centering
    \includegraphics[width=\textwidth, height=9cm]{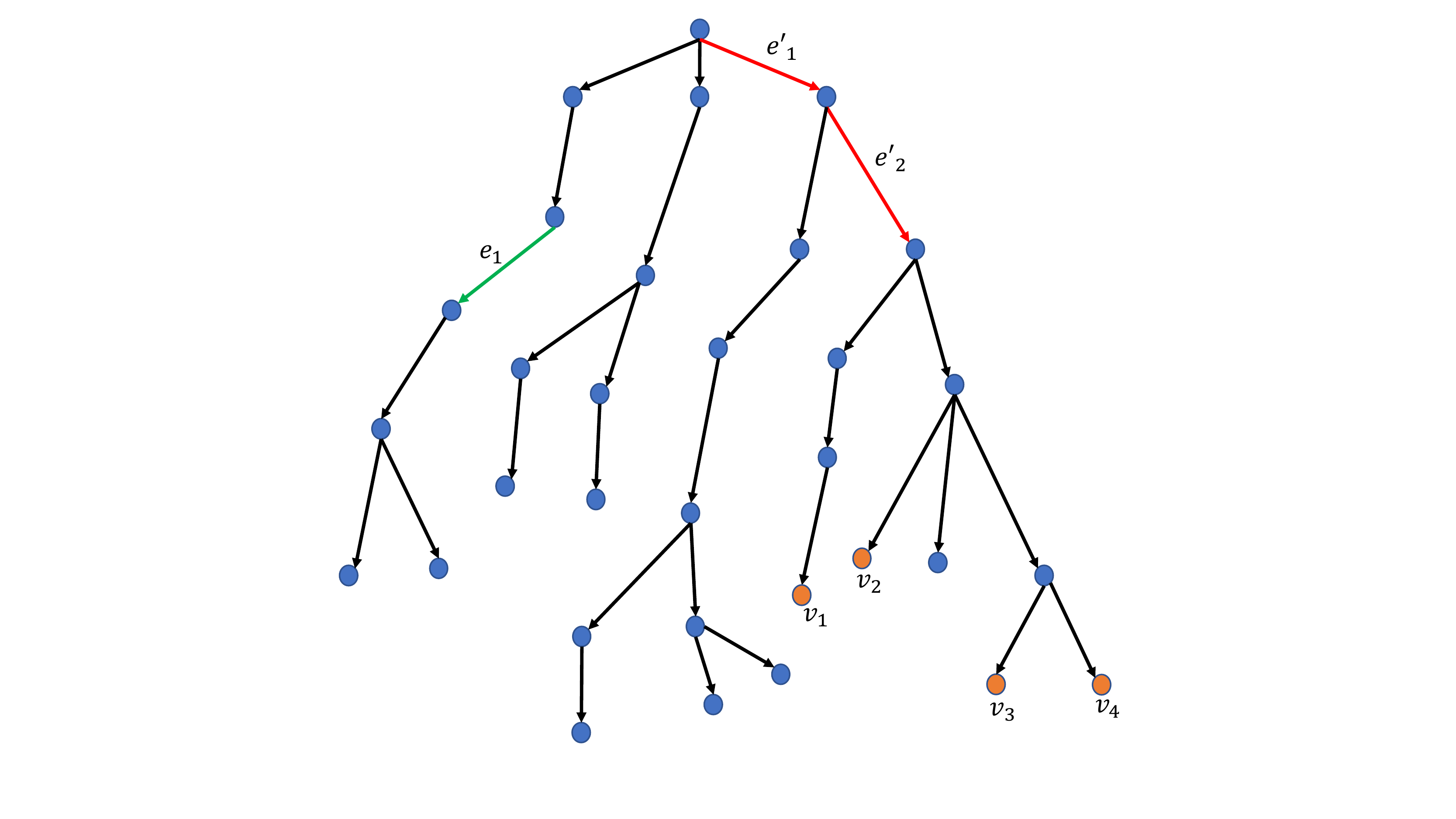}
    \caption{\small  In this figure we have the green edge $e_1$, and the path $P$ which consists of $e'_1,e'_2$, and satisfies $P\in \intpot{e}$. $v_1,v_2,v_3,v_4$ denote the vertices in $T(P)$ that have incident edges in $\ECovs(e_1)$. Note that there exists a pair of vertices (e.g. $v_1,v_3$) such that $\lca{v_1,v_3}$ is exactly the lowest vertex of $P$.}
    \label{fig:LCA_figure}
\end{figure}

\paragraph{Corollaries of interesting path counting lemma (Lemma \ref{lemma:-cover_all_or_nothing_paths}) and Lemma \ref{lemma:parse_paths}).}
Lastly, we present some  corollaries of  Lemma \ref{lemma:-cover_all_or_nothing_paths} and Lemma \ref{lemma:parse_paths} , which are used later in the algorithm.

The following corollary stems from the fact that any 2 non-highway paths in \emph{different fragments} are orthogonal.

\begin{corollary}\label{corol:-layering_interested}
For each $1\leq i\leq L$, where $L$ is the number of layers in the layering decomposition of the non-highways (See Section \ref{ssec:way_todo_layering}), the following holds: Consider a non-highway $P$ which is a bough path of layer $i$. Then $P$ is potentially interested in at most $B_{path}$ non-highway paths (represented by the ID of their respective fragments) which are not in the fragment of $P$. 

Furthermore, for each non-highway bough $P$ of layer $i$, all the vertices in $T(P)$ know (See definition \ref{def:edge_knows_path}) all of the $B_{path}$ non-highway paths (represented by the ID of their respective fragments) which are not in the fragment of $P$, that $P$ is potentially interested in. 
\end{corollary}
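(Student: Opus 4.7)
The plan is to prove both parts of Corollary \ref{corol:-layering_interested} by a direct invocation of Lemma \ref{lemma:-cover_all_or_nothing_paths} (for the combinatorial bound) and Lemma \ref{lemma:parse_paths} (for the distributed routing), once the right family of paths is identified. Specifically, I would take $\cP$ to consist of one representative non-highway path from each fragment $F' \neq F_P$ in which $P$ is potentially interested in at least one non-highway. By construction $\cP \subseteq \intpot{P}$, and $|\cP|$ is exactly the number of distinct external fragments containing non-highways of interest to $P$---precisely the count that the first part of the corollary seeks to bound.

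The bulk of the work is to verify the two hypotheses of Lemma \ref{lemma:-cover_all_or_nothing_paths} for this $\cP$. The key structural fact I would invoke is that, by the fragment decomposition (Section \ref{ssec:-decompFragment}), the only vertices of a fragment $F$ that can occur in other fragments are $r_F$ and $d_F$, and these both lie on the highway of $F$. Consequently, for any non-highway $Q$ in $F$, the entire subtree $T(Q)$ is contained in $F$. From this it follows that, for any $P' \in \cP$ sitting in some fragment $F' \neq F_P$, the unique root-to-$P'$ path travels along highways only and never enters the non-highway branch of $F_P$ that carries $P$, so $P$ is excluded from that path---giving non-splittability. The same observation rules out $P'$ being a descendant of $P$ (a descendant would have to lie in $T(P) \subseteq F_P$) or an ancestor of $P$ (the non-highway subtree containing $P'$ does not reach $F_P$); hence every $P' \in \cP$ is orthogonal to $P$, and by a symmetric argument two non-highways in distinct fragments branch off their respective highways into disjoint subtrees, so $\cP$ is pairwise orthogonal. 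Lemma \ref{lemma:-cover_all_or_nothing_paths} then yields $|\cP| \le B_{path}$.

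For the second statement, the routing is already supplied by Lemma \ref{lemma:parse_paths}, which guarantees that in $\tilde O(\dfrag)$ rounds every non-highway bough $P$ broadcasts $\intpot{P}$ to every vertex of $T(P)$. Each path identifier, per Definition \ref{def:edge_knows_path}, is the $O(\log n)$-bit ID of the lowest intersecting fragment, and the first part bounds the number of such identifiers per bough by $B_{path} = \mathrm{polylog}(n)$; filtering out the entry corresponding to $F_P$ itself is a purely local operation and adds no cost.

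The principal technical obstacle is the structural claim that non-highways in different fragments are always orthogonal---both to $P$ and pairwise to each other---so that the hypotheses of Lemma \ref{lemma:-cover_all_or_nothing_paths} are actually satisfied by the chosen $\cP$. Once this is in place, both parts of the corollary follow immediately from the two quoted lemmas.
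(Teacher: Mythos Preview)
Your proposal is correct and follows essentially the same approach as the paper, which justifies the corollary in a single line: ``stems from the fact that any 2 non-highway paths in different fragments are orthogonal.'' You have simply supplied the details the paper omits---choosing one representative non-highway per external fragment so that $|\cP|$ counts exactly the fragments of interest, and verifying via the fragment structure (only $r_F,d_F$ are shared, non-highway subtrees are contained in their fragment) that every member of $\cP$ is orthogonal to $P$ and that the members of $\cP$ are pairwise orthogonal, so both hypotheses of Lemma~\ref{lemma:-cover_all_or_nothing_paths} hold. The second part is handled, as you say, directly by the ``Furthermore'' clause of Lemma~\ref{lemma:parse_paths}.
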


The following corollary stems from the interesting path counting lemma since the considered super highway paths set denoted by $\cP$ in the lemma are orthogonal. 

\begin{corollary} \label{corol:-_layer_highway}

Let $P$ be some non-highway path $P$, and let $\cP$ be some set of pairwise orthogonal super highway paths. Then $|\intpot{P}\cap \cP|=B_{path}$.

Furthermore, for each non-highway bough $P$ of layer $i$, all the vertices in $T(P)$ know (See definition \ref{def:edge_knows_path}) all of the $B_{path}$ super highway paths (represented by the ID of their lowest fragment) which are not in the fragment of $P$, that $P$ is potentially interested in. 
\end{corollary}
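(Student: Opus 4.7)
The plan is to derive both parts of the corollary by directly instantiating two previously established results: Lemma \ref{lemma:-cover_all_or_nothing_paths} (the interesting path counting lemma) for the cardinality bound, and Lemma \ref{lemma:parse_paths} for the distribution of this information to the vertices of $T(P)$.

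For the cardinality bound, I would verify the hypotheses of Lemma \ref{lemma:-cover_all_or_nothing_paths} when instantiated with the given non-highway $P$ and the set $\cP$ of pairwise orthogonal super highways. First, non-splittability of $P$ with respect to each $P'\in \cP$: since $P$ is a non-highway, it lies in a subtree attached to the highway of its own fragment $F_P$, and therefore is not on any root-to-leaf path that passes only through highway edges of other fragments; in particular, the unique root-to-$P'$ path traverses only fragment highways and never enters $P$, so $P$ is absent from it, which trivially satisfies non-splittability. Second, every descendant of a non-highway edge is itself a non-highway edge of the same fragment $F_P$ (by the fragment decomposition), so no element of $\cP$ can be a descendant of $P$; thus the lemma's condition on descendant paths is vacuous, and only the ``ancestor-or-orthogonal'' case needs handling. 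Third, the remaining requirement that the relevant paths in $\cP$ be pairwise orthogonal follows from the corollary's hypothesis that the entire set $\cP$ is pairwise orthogonal. Applying Lemma \ref{lemma:-cover_all_or_nothing_paths} then yields $|\intpot{P}\cap \cP| \leq \potintnum = B_{path}$, as required.

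For the ``furthermore'' part, I would invoke Lemma \ref{lemma:parse_paths}: for each non-highway bough $P$, the lemma guarantees that in $\tilde{O}(\dfrag)$ rounds all vertices of $T(P)$ receive the full set $\intpot{P}$, with each path identified by the ID of its lowest fragment (Definition \ref{def:edge_knows_path}). Restricting the received information to the subset that lies in $\cP$ yields precisely the super highway paths that $P$ is potentially interested in, together with the fragment IDs identifying them. Note that ``knowing'' a super highway only requires its lowest fragment, because every vertex already knows the full topology of the skeleton tree and its layering (Lemmas \ref{lemma:strong_decomp_construct} and \ref{lemma:layer_decomposition_highway}), so each vertex of $T(P)$ can internally reconstruct the entire super highway from the lowest-fragment identifier without any additional communication.

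The main obstacle I anticipate is not algorithmic but just bookkeeping: one must carefully check that the two structural conditions (non-splittability and the absence of descendants in $\cP$) really hold for a non-highway $P$ against an arbitrary super highway in $\cP$, including the boundary case in which a super highway in $\cP$ happens to contain the fragment highway of $F_P$; this is precisely the case the hypothesis ``not in the fragment of $P$'' is there to exclude, and once it is separated out the two conditions above hold unconditionally, so both parts of the corollary reduce to straightforward applications of the two lemmas.
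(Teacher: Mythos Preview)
Your proposal is correct and follows exactly the paper's approach: the paper justifies this corollary in one line by saying it ``stems from the interesting path counting lemma since the considered super highway paths set denoted by $\cP$ in the lemma are orthogonal,'' and the ``furthermore'' part is handled by Lemma~\ref{lemma:parse_paths}; you simply spell out the verification of the hypotheses of Lemma~\ref{lemma:-cover_all_or_nothing_paths} in more detail than the paper does.

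One small remark: your concern about the boundary case where a super highway in $\cP$ contains the fragment highway of $F_P$ is slightly misplaced. Your own argument already handles it: the root-to-$P'$ path (to the top vertex of $P'$) consists entirely of highway edges regardless of whether $P'$ passes through $F_P$, so the non-highway $P$ is absent from it and non-splittability holds unconditionally. The phrase ``not in the fragment of $P$'' in the second part is not there to rescue non-splittability; it just describes which paths the distributed information covers.
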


The following corollary stems from the fact that any 2 non-highway paths in \emph{different fragments} are orthogonal.
 
\begin{corollary}\label{corol:-highway_non_highway}
Let $P$ be be some highway path contained in a single fragment $F$. Denote by \ $\cF$ the set of fragments $F'$ that contain some non-highway path that is in  $\intpot{P}$. Then $|\cF_i|=B_{path}$.

Furthermore, for such highway path $P$, all the vertices in $F$ (the fragment of $P$) know (See definition \ref{def:edge_knows_path}) all of the $B_{path}$ fragments  that contain a non-highway that $P$ is potentially interested in.
\end{corollary}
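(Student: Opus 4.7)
The plan is to reduce the statement to the interesting path counting lemma (Lemma~\ref{lemma:-cover_all_or_nothing_paths}) by selecting, for each fragment $F'\in\cF\setminus\{F\}$, one witness non-highway $P_{F'}\subseteq F'$ satisfying $P_{F'}\in\intpot{P}$, and forming the family $\cP=\{P_{F'}\}_{F'\in\cF\setminus\{F\}}$. Since the lemma upper bounds $|\{P'\in\cP:P'\in\intpot{P}\}|$ by $B_{\mathrm{path}}=\potintnum$, and since $F$ itself adds at most one extra fragment to $\cF$, this will give $|\cF|=\potintnum=B_{\mathrm{path}}$, matching the claim.

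To apply the lemma I need to verify its two structural preconditions. For non-splittability of $P$ with respect to $\cP$: any non-highway $P_{F'}$ is entirely inside $F'$ and hangs off the highway of $F'$. If $F'$ is a descendant of $F$ in the skeleton tree, the unique root-to-$P_{F'}$ path must cross $F$ from $r_F$ to $d_F$ along the entire highway of $F$, hence contains the sub-path $P$ in full. If $F'$ is an ancestor of $F$ or lies in a different branch of the skeleton tree, the root-to-$P_{F'}$ path never enters $F$, so excludes $P$ completely. In either case, $P$ is non-splittable with respect to $P_{F'}$, as required by condition (i).

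For pairwise orthogonality of $\cP$: consider two distinct representatives $P_{F_1},P_{F_2}$. If $F_1,F_2$ lie in different branches of the skeleton tree the fragments share no tree edges outside their common skeleton ancestors, and the two non-highways are clearly orthogonal. If $F_2$ is a skeleton-tree descendant of $F_1$, the root-to-$P_{F_2}$ path traverses the highway of $F_1$ but does not pass through $P_{F_1}$ (which, being a non-highway, branches off the highway of $F_1$); symmetrically for the other direction. Thus the two paths lie on different root-to-leaf paths and are orthogonal. Total pairwise orthogonality of $\cP$ trivially implies both that the ancestor/orthogonal-to-$P$ paths are pairwise orthogonal and that the descendant-of-$P$ paths are pairwise orthogonal, fulfilling condition (ii) of Lemma~\ref{lemma:-cover_all_or_nothing_paths}.

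For the ``furthermore'' part, I invoke Lemma~\ref{lemma:parse_paths}: in $\tO(\dfrag)$ rounds the fragment highway $P$ learns $\intpot{P}$ and disseminates it to every vertex of its fragment $F$. By Definition~\ref{def:edge_knows_path}, the encoding of each path in $\intpot{P}$ includes the lowest fragment that the path intersects, so each vertex of $F$ can extract the set $\cF$ locally with no further communication. The one subtle step that deserves care is the orthogonality argument above, where we rely on the structural guarantee from the fragment decomposition that non-highways are confined to a single fragment and that inter-fragment contact is only through the vertices $r_{F'}$ and $d_{F'}$ on the highway; this is exactly what ensures that no two non-highways from distinct fragments can share a root-to-leaf path.
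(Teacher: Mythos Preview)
Your proposal is correct and follows the paper's approach, which justifies the corollary in a single line by noting that any two non-highway paths in different fragments are orthogonal and then implicitly applies Lemma~\ref{lemma:-cover_all_or_nothing_paths} (together with Lemma~\ref{lemma:parse_paths} for the ``furthermore'' part). You have simply spelled out the witness selection, the non-splittability check, and the pairwise-orthogonality verification that the paper leaves to the reader.
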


\subsection{The highway pairing theorem}\label{ssec:bridge_the_gap}

We again begin with a corollary of Lemma \ref{lemma:-cover_all_or_nothing_paths}.

\begin{corollary}\label{corol:-_fragment_highway}
Let $P$ be some highway path contained a single fragment $F$. Denote by $\cP$ some set of pairwise orthogonal highway paths that $P$ is potentially interested in. Then $|\cP|=B_{path}$.
\end{corollary}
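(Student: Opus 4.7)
The plan is to invoke the Interesting Path Counting Lemma (Lemma~\ref{lemma:-cover_all_or_nothing_paths}) directly on the given $P$ and $\cP$. Once its two hypotheses are checked, the conclusion of the lemma, combined with the fact that every element of $\cP$ already lies in $\intpot{P}$ by assumption, immediately gives $|\cP| = |\{P' \in \cP : P' \in \intpot{P}\}| \le B_{path}$, which is the statement of the corollary.

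Hypothesis (ii) of the lemma---pairwise orthogonality of the paths of $\cP$ that are ancestors of or orthogonal to $P$, and pairwise orthogonality of the paths of $\cP$ that are descendants of $P$---is free, since the whole set $\cP$ is assumed pairwise orthogonal in the corollary, which is strictly stronger.

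For hypothesis (i)---non-splittability of $P$ with respect to each $P' \in \cP$---the plan is a short case analysis built on two structural facts. First, by the fragment decomposition (Lemma~\ref{lemma:strong_decomp_construct} together with the enumerated properties at the start of Section~\ref{ssec:-decompFragment}), the only two vertices of $F$ that may also occur in other fragments are $r_F$ and $d_F$; hence any highway path $P'$ that uses edges outside $F$ can touch the highway of $F$ only at these two vertices. Second, because $P' \in \intpot{P}$, there is some $e \in P$ with $P' \in \intpot{e}$, and Definition~\ref{def:cov-samp-edge-path} forces $P'$ to be orthogonal to $e$, strictly above $e$, or strictly inside $e^{\downarrow}$; in particular $P'$ shares no edge with $P$. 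Combining these two facts, every $P' \in \cP$ falls into one of four configurations relative to $P$: (a) $P'$ ends at or above the top of $P$, (b) $P'$ starts at or below $d_F$ and extends downward, (c) $P'$ branches off at the top of $P$ into a sibling subtree, so that $P \perp P'$, or (d) $P'$ is a super-highway that strictly contains $P$, extending from some ancestor of the top of $P$ to some descendant of $d_F$. In cases (a) and (c) no edge of $P$ appears on the path from the root to the bottom of $P'$; in cases (b) and (d) every edge of $P$ does. In either situation $P$ is non-splittable with respect to $P'$, as required.

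With both hypotheses verified, Lemma~\ref{lemma:-cover_all_or_nothing_paths} yields $|\cP| \le B_{path} = O(\log n)$, which is the conclusion of the corollary. The only substantive step is the case analysis for hypothesis (i); I expect this to be the main---though quite mild---obstacle, since it rests entirely on the ``only $r_F, d_F$ are shared'' property of the fragment decomposition and on the definitional restriction built into $\intpot{e}$.
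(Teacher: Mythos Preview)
Your proposal is correct and follows the paper's own route: the corollary is presented there as a direct consequence of Lemma~\ref{lemma:-cover_all_or_nothing_paths}, with no further proof given beyond that attribution. Your explicit case analysis for non-splittability is more detail than the paper supplies; note only that case~(d) sits in tension with your earlier assertion that $P'$ shares no edge with $P$, but this is harmless since non-splittability still holds in that configuration (the path from the root to the bottom of $P'$ contains all of $P$).
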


Using this corollary, we also aim to prove the following important theorem, which is essential to achieving an optimal running time for the algorithm. Before stating the theorem, we introduce some relevant notation and definitions.  Recall the definition of a bough highway path (Definition \ref{def:fragment_and_super_highways}) to be a highway path that constitutes a maximal path in the skeleton tree with respect to some layer $i$ in the skeleton tree.



\begin{observation}\label{obs:orthogonal_maximal_fragment_highways}
Given two bough highway paths $P_1,P_2$ of layer $i$, then $P_1$ and $P_2$ are orthogonal.
\end{observation}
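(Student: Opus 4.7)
The plan is to unpack what ``bough highway paths of layer $i$'' mean in terms of the skeleton-tree layering, and then translate orthogonality in the skeleton tree into orthogonality in $T$.

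First, by Definition \ref{def:fragment_and_super_highways}, a bough highway path of layer $i$ corresponds to a maximal path of layer $i$ in the skeleton tree $T_S$ under the layering from Lemma \ref{lemma:layer_decomposition_highway}, which is obtained by running the bough-decomposition algorithm of Section \ref{ssec:-layerDecomp} on $T_S$. By the algorithm's construction, at step $i$ the layer-$i$ edges are exactly the edges of the boughs of $T_{i-1}$, i.e., of the maximal leaf-to-junction paths of $T_{i-1}$. Thus $P_1$ and $P_2$ correspond to two distinct such boughs $B_1,B_2$ in $T_{i-1}$.

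Next I would argue that in $T_{i-1}$ any two such boughs $B_1,B_2$ lie in different root-to-leaf paths of $T_{i-1}$. Each bough $B_j$ terminates at its unique ``top'' junction vertex $u_j$ reached from its leaf $\ell_j$; since each root-to-leaf path of $T_{i-1}$ contains exactly one of its leaves (and hence exactly one bough, namely the one ending at that leaf), no single root-to-leaf path of $T_{i-1}$ can contain edges of both $B_1$ and $B_2$. The only way $B_1$ and $B_2$ can share anything is a common top vertex $u_1=u_2$, in which case they descend into disjoint subtrees via distinct children of that junction, so the edge sets still avoid any common root-to-leaf path.

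Finally, the skeleton tree $T_S$ encodes the ancestor-descendant relation of the fragment boundary vertices $r_F,d_F$ in $T$: if two skeleton edges $sk_1,sk_2$ are not on a common root-to-leaf path of $T_S$, then the corresponding fragment highways in $T$ are not in an ancestor-descendant relation in $T$, so no root-to-leaf path of $T$ contains edges from both. Applying this edgewise to $P_1$ and $P_2$ shows that for every $e\in P_1,e'\in P_2$ the edges are not in a common root-to-leaf path of $T$, i.e.\ $e\perp e'$ by Definition \ref{def:orthogonal}, and hence $P_1\perp P_2$ by Definition \ref{def:orthogonal_paths}. The only subtle point to be careful about is the possibility of a shared top junction vertex and ruling out that this shared vertex causes an edge of one path to sit on a root-to-leaf path of the other; the argument above handles this directly, so I expect no real obstacle.
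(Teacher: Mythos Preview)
Your proposal is correct and follows essentially the same approach as the paper: both reduce the claim to orthogonality of layer-$i$ boughs in the skeleton tree and then transfer this to $T$. The paper's proof is more terse---it argues only that the \emph{highest} skeleton edges $e_1,e_2$ of $P_1,P_2$ are orthogonal (from which orthogonality of the full paths follows since each $P_j$ lies at or below its top fragment), whereas you argue directly that the entire boughs sit on different root-to-leaf paths of $T_{i-1}$; the one small step you leave implicit is that orthogonality in $T_{i-1}$ persists in $T_S$ (contraction preserves ancestor--descendant among the surviving edges), but this is routine.
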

\begin{proof}
Consider the highest fragments in $P_1,P_2$, denoted by $F_1,F_2$ respectively. Denote by $e_1,e_2$ the corresponding edges of $F_1,F_2$ in the skeleton tree, since both are of layer $i$, then $e_1,e_2$ are orthogonal. Thus $F_1,F_2$ are orthogonal, and this concludes the proof.
\end{proof}

Denote by $\cP$ the set of bough highway paths.
Each bough highway path can be spread across multiple fragments. For such a path $P$, we define by $\cF_P$ the set of corresponding fragments of $P$, i.e. $\cF_P=\set{F\in \cF\mid E(F)\cap P\neq \emptyset}$. For a fragment $F \in \cF_P$, we denote the subpath of $P$ contained in $F$ to be $P_F$. Given a pair $(P_0,P_1)$ of bough highway paths, for a fragment $F\in \cF_{P_i}$, $P_F$ is called \emph{active}, if $P_F$ is potentially interested in $P_{1-i}$. We also abuse notation and say that $P_F\subseteq (P_1,P_2)$ if $P_F\subseteq P_1$ or $P_F\subseteq P_2$.

\begin{figure}
    \centering
    \includegraphics[width=0.7\textwidth]{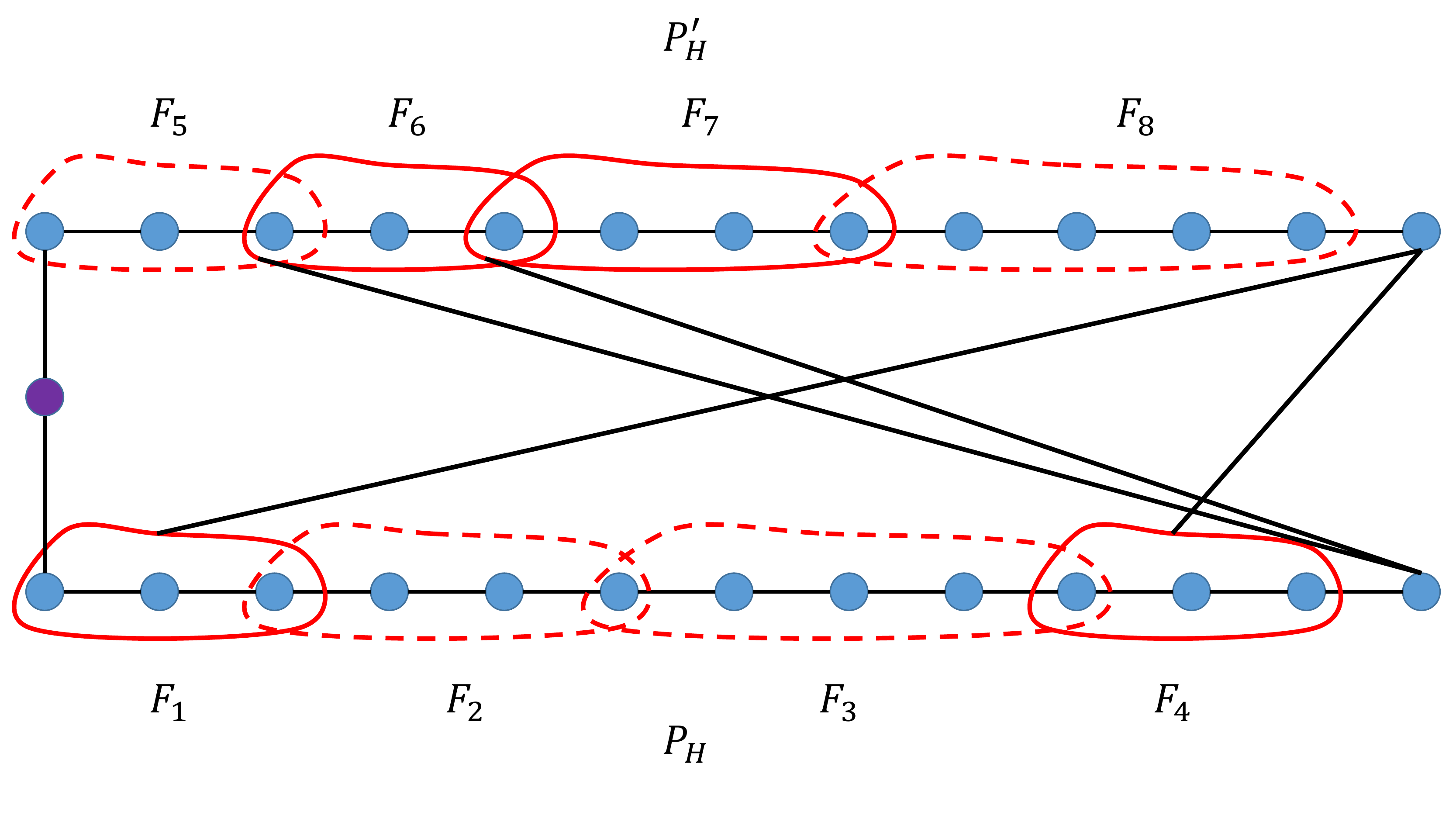}
    \caption{\small Example of two highways, with the purple vertex being the root of the tree. Fragments are circled in red. Dotted fragments are inactive with respect to this pair. Black lines represent a high amount of weight in terms of edges going from the respective fragment to the respective subtree of the vertex on the right.}
    \label{fig:highway_active_fragments_examples}
\end{figure}




At the end of the previous section, we presented corollaries that show that for non-highway paths, their entire subtree can learn the paths that the paths are potentially interested in. This was a valid option due to the small size of each fragment, and since for each non-highway path $P$, $T(P)$ is completely contained in the fragment of $P$.


Highways however, need a different treatment, more specifically our goal is to make sure that we don't compare too many pairs of fragment highways to one another, and also, we want to make the pairs of fragment highways we need to compare during the algorithm global information.
A naive attempt for achieving that might be making the pairs of fragment highways that are potentially interested in one another global information. This, however, can cause a lot of congestion since the number of such pairs can be linear in $n$ since each fragment highways can be potentially interested in $\Omega(\sqrt{n})$ other fragment highways. To circumvent this, we employ the interesting path counting lemma (Lemma \ref{lemma:-cover_all_or_nothing_paths}) as depicted in the corollary above, that each fragment highway $P$, can only be potentially interested in at most $O(\log n)$ fragment highways that are pairwise orthogonal. Using this corollary, instead of pairing up fragments, we employ the layer decomposition of the skeleton tree (See Section \ref{ssec:-layerDecomp}) in order to pair up \emph{highway boughs} that are potentially interested in one another. The number of such pairs $(P_1,P_2)$ is sufficiently small to make these pairs global information, including the information on which fragments in $P_1,P_2$ are active with respect to this pair. 
Formally, we prove the following theorem.
\begin{restatable}{theorem}{HighwayPairingTheorem}\label{thrm:pairing_up_highways}
Given the set of bough highways paths $\cP$ as defined in Definition \ref{def:fragment_and_super_highways}, one can construct a set $R\subseteq \cP \times \cP$ 
such that the following holds.
\begin{enumerate}
    \item If $(P_1,P_2)\in R$, then $P_1$ is potentially interested in some highway sub-path of $P_2$, and vice versa.
    \item For any fragment $F$, it holds that $R_F=|\set{(P_1,P_2)\in R \mid P_F\subseteq (P_1, P_2); P_F \text{ is active} }|=B_{path}\cdot \log n$. 
    \item  If $e_1\in P,e_2\in P'$ are highway edges on different bough highway paths $P,P'$ that define the min 2-respecting cut of $T$, then w.h.p. $(P,P')\in R$. Furthermore, $e_1$ is in the active fragments of $P$, and $e_2$ is in the active fragments of $P'$.
    \item All the vertices in the graph $G$ can learn the set $R$ in time $\Tilde{O}(D+\sqrt{n})$. Furthermore, for each pair $(P_1,P_2)\in R$, all the vertices of $G$ know who are the active fragments in the pair $(P_1,P_2)$.
    \item For all $(P_1,P_2)\in R$, it holds that $P_1$ and $P_2$ don't split one another (See Section \ref{ssec:-interesting_lemma}).
\end{enumerate}
\end{restatable}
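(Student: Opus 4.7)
I would construct $R$ as follows. For each fragment $F$ let $P^{(F)}$ denote the unique bough highway path containing the fragment highway $P_F$. Place $(P_1,P_2)\in R$, with $F$ declared \emph{active} in this pair, whenever $P_F\subseteq P_i$ for some $i\in\{1,2\}$ and $P_F$ is potentially interested in some highway sub-path of the other bough, $P_{3-i}$. Each fragment $F$ can enumerate its own active participations locally: by Lemma~\ref{lemma:parse_paths} the vertices of $F$ know the paths in $\intpot{P_F}$ (encoded by their lowest fragments), and the skeleton-tree topology together with its layering (known to every vertex via Lemma~\ref{lemma:layer_decomposition_highway}) lets them translate each such lowest fragment into the unique bough highway containing it. Item~1 is immediate from the definition, and item~5 follows because two bough highways of the same layer are orthogonal by Observation~\ref{obs:orthogonal_maximal_fragment_highways}, while two of different layers are, by the standard structure of the bough layering applied to the skeleton tree, either orthogonal or related as ancestor/descendant---in either case non-splitting.

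The main obstacle is item~2. I would fix a fragment $F$ and bound, separately for each layer~$i$ of the skeleton tree, how many layer-$i$ bough highway paths $P^\ast$ witness $P_F$ being active. All layer-$i$ bough highways are pairwise orthogonal by Observation~\ref{obs:orthogonal_maximal_fragment_highways}, so the witnessing paths form a pairwise orthogonal family of highway paths in which $P_F$ is potentially interested; Corollary~\ref{corol:-_fragment_highway} then caps this family at $B_{path}$. Summing over the $L=O(\log n)$ layers of the skeleton tree (Claim~\ref{claim:bound_number_layers}) yields the required $B_{path}\cdot\log n$ bound. This layering step is exactly where item~2 is bought: without it, a single $P_F$ could be potentially interested in $\Omega(\sqrt{n})$ highway sub-paths scattered across non-orthogonal boughs, and no per-fragment bound of this strength would hold.

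Item~3 then reduces to Claim~\ref{clm:clm:min-cut-interesting-path} together with Lemma~\ref{lemma:find-int-samp}. If $e_1\in P$ and $e_2\in P'$ define the minimum 2-respecting cut with $P,P'$ distinct bough highways, let $F,F'$ be the fragments containing $e_1,e_2$. Claim~\ref{clm:clm:min-cut-interesting-path} implies $P_F$ is interested in the sub-path of $P'$ through $e_2$; property~(1) of Lemma~\ref{lemma:find-int-samp} then places this sub-path in $\intpot{P_F}$ w.h.p., so $F$ is active in $(P,P')$; a symmetric argument handles $F'$. Finally, for item~4, by item~2 each fragment stores at most $\tilde{O}(1)$ active-participation records, so the total volume of $R$ plus its active-fragment annotations is $\tilde{O}(\sqrt{n})$ bits across the $O(\sqrt{n})$ fragments, which a pipelined BFS-tree broadcast delivers to every vertex in $\tilde{O}(D+\sqrt{n})$ rounds.
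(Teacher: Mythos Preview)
Your treatment of items 2--4 matches the paper's argument, but item~5 contains a genuine gap. It is not true that two bough highways of different layers are always ``either orthogonal or related as ancestor/descendant'': a lower-layer bough can hang off an \emph{internal} vertex of a higher-layer bough, and then the two split one another. Concretely, take a skeleton tree with root $r$, child $a$, $a$ with children $b,c$, and $c$ with children $d,e$, where $b,d,e$ are leaves. The layer-$1$ boughs are the single edges $\{a,b\},\{c,d\},\{c,e\}$; contracting them leaves the path on $r,a,c$, so the unique layer-$2$ bough $P$ consists of the two edges $\{r,a\}$ and $\{a,c\}$. With $P'=\{a,b\}$, the path from the root to the top vertex of $P'$ is just the edge $\{r,a\}$, which contains exactly one of the two edges of $P$; hence $P'$ splits $P$. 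The pair is neither orthogonal (the edges $\{r,a\}\in P$ and $\{a,b\}\in P'$ lie on the common root-to-leaf path through $b$) nor ancestor/descendant. This configuration---a lower-layer bough attached at an interior vertex of a higher-layer bough---is generic in the bough decomposition.

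This is precisely why the paper does not stop at the mutually-interested set $R^\ast$ but adds an explicit trimming pass to enforce item~5: every vertex, knowing the skeleton tree and $R^\ast$ together with its active fragments, locally detects each splitting pair $(P_1,P_2)$, identifies the fragment $F$ of $P_2$ at which $P_1$ attaches, and replaces the pair by $(P_1,P'_2)$ and $(P_1,P''_2)$, where $P'_2,P''_2$ are the sub-highways of $P_2$ above and below $F$ (discarding a sub-pair that carries no active fragment). Each fragment's active count at most doubles, so item~2 survives; the remaining items persist because the relevant active fragments stay in one of the two pieces and the whole step is communication-free. A smaller issue: as you state it, your construction puts $(P_1,P_2)$ into $R$ as soon as \emph{some} fragment on one side is potentially interested in the other bough, which yields only one direction of item~1; you need the mutual condition, as in the paper's $R^\ast$, though that is an easy fix.
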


The proof of the theorem is deferred to Appendix \ref{sec:appen_five}. Here we give a short intuition to its correctness.

In short, the proof goes as follows. We begin with the set of pairs of bough highwways $(P_1,P_2)$ such that $P_1$ is potentially interested in $P_2$ and vice versa. Already this pairing satisfies properties 1-4. Lemma \ref{lemma:-cover_all_or_nothing_paths} and its corollary stated at the beginning of the section are crucial for proving the properties. Then, to make sure property 5 is satisfied, and careful trimming procedure is done which does not hurt all of the other properties.

\section{Algorithms for short paths and routing trick}\label{sec:alg_short_path}
 During our algorithm, many times we consider two ancestor to a descendant sub-paths $P',P$ of length $O(\dfrag)$, and find the min 2-respecting cut that has one edge in $P'$, and one edge in $P$. In this section, we describe how to compare two such sub-paths. The basic idea is to use an edge between the paths for routing information between them, and then use internal aggregate computations. In the case there is no edge, we use a \emph{routing trick} to bound the amount of global communication.
We divide to cases according to whether $P'$ and $P$ are in a highway or not. In all our claims we assume that each edge $e$ knows the value $\cov(e)$, which can be obtained from Claim \ref{claim_learn_cov}.
Before explaining the algorithm, we start with some simple claims that are useful later.

 \subsection{Preliminaries: Basic subroutines on a tree} \label{sec:preliminaries_paths}

During our algorithm many times we run basic computations in trees, mostly in the trees $T(P)$ of non-highways, on the trees $F_P$ defined by fragments, and on a BFS tree of the graph. We next discuss such computations and explain how to run many such computations in a pipelined manner.

\paragraph{Broadcast.} In a broadcast computation, the root of the fragment has a message of size $O(\log n)$ to pass to the whole tree. 
This requires time proportional to the diameter of the tree, as the root starts by sending the message to its children, that in the next round pass it to their children, and so on. Note that the computation requires sending only one message on each one of the edges of the tree.

\paragraph{Aggregate computation in a tree.} 
In an aggregate computation in a tree $T'$ (sometimes called convergecast), we have an associative and commutative function $f$ with inputs and output of size $O(\log n)$ (for example, $f$ can be sum or minimum).
Each vertex $v \in T'$ initially has some value $x_v$ of $O(\log n)$ bits, and the goal of each vertex is to learn the output of $f$ on the inputs $\set{x_v}_{v\in T'_{v}}$, where $T'_{v}$ is the subtree rooted at $v$. 

Computing an aggregate function is done easily, as follows. We start in the leaves, each one of them $v$ sends to its parent its value $x_v$, each internal vertex applies $f$ on its input and the inputs it receives from its children, and passes the result to its parent. The time complexity is proportional to the diameter of the tree. Moreover, the algorithm requires sending only one message on each one of the tree edges, which allows to pipeline such computations easily. For a proof, see Appendix \ref{sec:app_short_paths_prelim}.

\begin{restatable}{claim}{pipeline}\label{claim_pipeline}
Let $T'$ be a tree of diameter $D_{T'}$, and assume we want to compute $c_1$ broadcast computations and $c_2$ aggregate computations in $T'$. Then, we can do all computations in $O(D_{T'} + c_1 + c_2)$ time. Moreover, we can work in parallel in trees that are edge-disjoint.
\end{restatable}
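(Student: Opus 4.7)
The plan is to handle the two kinds of computations separately and then combine them, exploiting the fact that a broadcast schedule only pushes messages ``downward'' in a rooted tree while an aggregate schedule only pushes messages ``upward,'' and that in the CONGEST model each edge has $O(\log n)$ bandwidth in each direction per round.

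For the $c_1$ broadcasts, I would fix any vertex of $T'$ as the root $r$; initially $r$ holds all $c_1$ messages. Each internal vertex maintains a FIFO queue of messages received from its parent and, in every round, forwards the next queued message (if any) to each of its children. A straightforward induction on depth shows that message $m_i$ reaches every vertex of depth $d$ by round $i+d-1$, so all $c_1$ broadcasts finish within $c_1 + D_{T'}$ rounds and each edge is used at most once per broadcast.

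For the $c_2$ aggregates I would again root $T'$ at $r$ and order the computations $1,2,\ldots,c_2$. Each leaf knows its own partial result for every computation immediately. Each internal vertex $v$, in every round, sends toward its parent the smallest-indexed computation $j$ that is not yet forwarded and for which the partial results from all children of $v$ have already arrived. To bound the running time one proves by induction on the height $h(v)$ of the subtree rooted at $v$ the invariant that $v$ transmits its partial result for computation $j$ by round $h(v)+j$; the inductive hypothesis provides arrival time $(h(v)-1)+j$ from each child, and FIFO scheduling at $v$ may delay computation $j$ by at most one round per earlier-indexed pending task, which is absorbed by the $+j$ term. Hence all $c_2$ aggregate outputs are known at every vertex by round $D_{T'}+c_2$. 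This is the step I expect to be the main obstacle: the FIFO-delay analysis must be done carefully to ensure that earlier-indexed computations never ``starve'' later ones at any vertex.

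Finally, since the broadcast schedule uses each tree edge only in the parent-to-child direction and the aggregate schedule uses each tree edge only in the child-to-parent direction, the two schedules can be superimposed in the CONGEST model without any congestion conflict, giving total time $O(D_{T'}+c_1+c_2)$. For the last sentence of the claim, if the trees $T'_1,T'_2,\ldots$ are pairwise edge-disjoint, then the per-tree schedules use disjoint sets of edges and can therefore run simultaneously without congestion, so the overall round complexity is dominated by the maximum individual time, as claimed.
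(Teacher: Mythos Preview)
Your proposal is correct and follows essentially the same approach as the paper: pipeline the broadcasts in the root-to-leaf direction, pipeline the aggregates in the leaf-to-root direction, and observe that the two schedules use opposite directions on each edge and hence can be superimposed. The paper's proof is much terser (it simply asserts that each type can be pipelined and that opposite directions do not interfere), whereas you spell out the FIFO schedule and the depth/height induction explicitly, but the underlying idea is identical.
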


\paragraph{Tree computations in a fragment $F_P$.}
When we work on fragments $F_P$ during the algorithm, we use aggregate computations in two different directions. Recall that each fragment $F_P$, has a highway $P$, which is a path between the root $r_P$ and descendant $d_P$ of the fragment, and additional sub-trees attached to $P$ that are contained entirely in the fragment. We will need to do \emph{standard} aggregate computations as described above, where $r_P$ is the root, but also computation \emph{in the reverse direction}, where we think about $d_P$ as the root and orient all edges in the fragment accordingly. See Figure \ref{orientation_pic} for an illustration. 

\setlength{\intextsep}{2pt}
\begin{figure}[h]
\centering
\setlength{\abovecaptionskip}{-2pt}
\setlength{\belowcaptionskip}{6pt}
\includegraphics[scale=0.5]{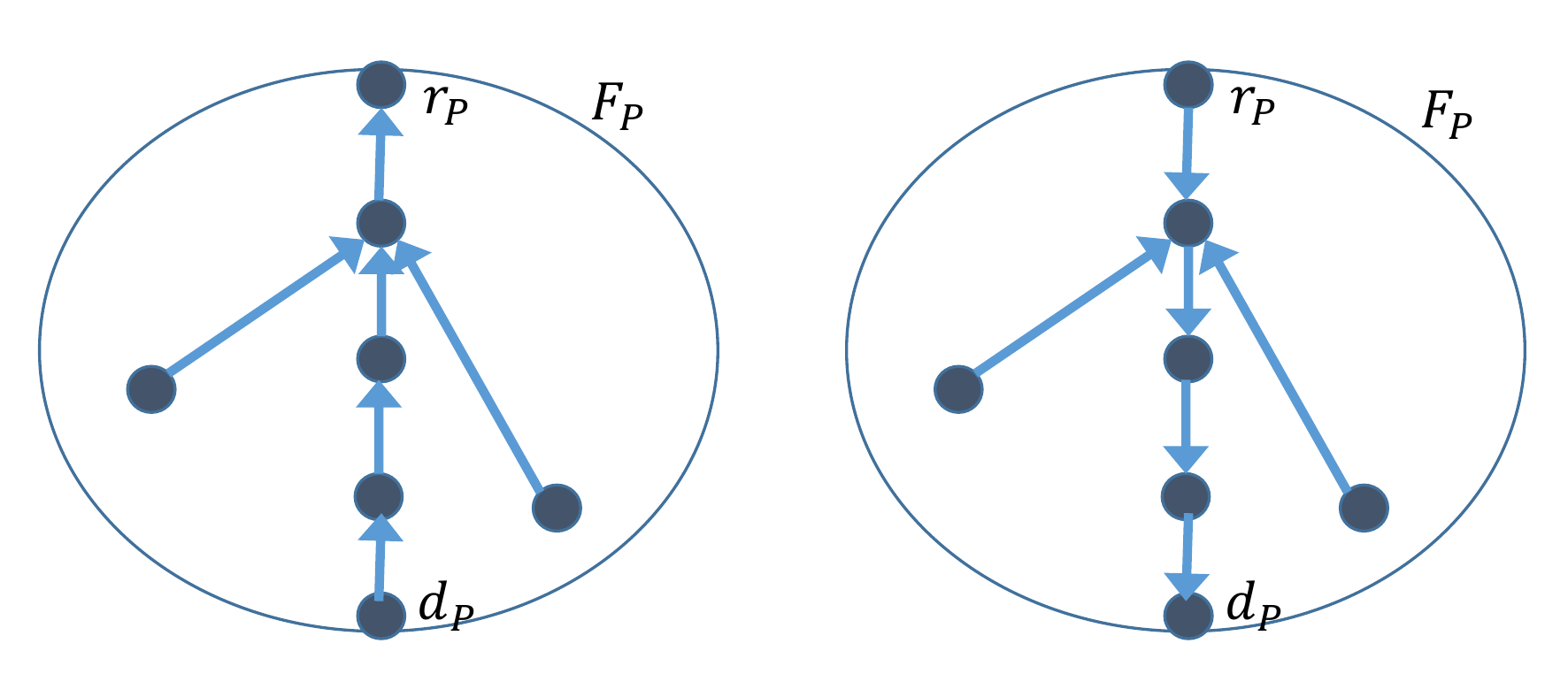}
 \caption{\small Possible orientations of the fragment $F_P$. On the left appears a standard orientation where $r_P$ is the root, and on the right appears an orientation where $d_P$ is the root. Note that the only edges that changed their orientation are highway edges.}
\label{orientation_pic}
\end{figure}

We next show that we can pipeline aggregate computations in both directions and broadcast computations efficiently. For a proof, see Appendix \ref{sec:app_short_paths_prelim}.

\begin{restatable}{claim}{pipelineFragment} \label{claim_pipeline_fragment}
Let $F_P$ be a fragment with its highway denoted by $P$, and assume we want to compute $c_1$ broadcast computations, $c_2$ aggregate computations, and $c_3$ aggregate computations in the reverse direction in $F_P$. Then, we can do all computations in $O(\dfrag + c_1 + c_2 + c_3)$ time. Moreover, we can work in parallel in different fragments.
\end{restatable}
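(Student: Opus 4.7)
The plan is to reduce Claim \ref{claim_pipeline_fragment} to Claim \ref{claim_pipeline} by a careful accounting of which edges of $F_P$ carry which messages, and in which direction. The key structural observation (already noted in the excerpt in Figure \ref{orientation_pic}) is that re-rooting $F_P$ at $d_P$ only flips the orientation of the highway edges on $P$; the non-highway subtrees hanging off $P$ keep their orientation regardless of whether $r_P$ or $d_P$ is the root.

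First I would classify, for every edge of $F_P$ and every direction, how many of the $c_1+c_2+c_3$ computations use it. On any non-highway edge, broadcasts travel from $P$ toward the leaves, while both the standard aggregates (rooted at $r_P$) and the reverse aggregates (rooted at $d_P$) travel from the leaves toward $P$; so in each of the two directions the load is at most $c_1+c_2+c_3$. On a highway edge of $P$, the standard aggregates move toward $r_P$, while the broadcasts and the reverse aggregates both move toward $d_P$; again the load in each direction is at most $c_1+c_2+c_3$. Thus, viewing $F_P$ as a tree with dilation $O(\dfrag)$, every one of our computations is a single pass over $F_P$ whose per-edge, per-direction congestion is $O(c_1+c_2+c_3)$.

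Next I would invoke the standard pipelined-broadcast/convergecast argument (which is precisely what Claim \ref{claim_pipeline} encapsulates, and which tolerates bidirectional CONGEST traffic since each undirected edge supports one $O(\log n)$-bit message per direction per round): schedule the messages so that each vertex, in each round, forwards at most one pending message along each incident link; for aggregates, a vertex waits for the relevant child-messages of computation $i$ before releasing its own message of computation $i$, but because each computation uses each edge only once, the total delay accumulated is $O(\dfrag + c_1 + c_2 + c_3)$. Applied once in the standard-rooted orientation (handling the $c_1$ broadcasts and $c_2$ aggregates) and once in the $d_P$-rooted orientation (handling the $c_3$ reverse aggregates) and then overlaying the two schedules---which is legal because different-direction messages on the same undirected edge never collide---gives the claimed bound.

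Finally, the parallelism across fragments follows from the fragment decomposition: distinct fragments are edge-disjoint (they share only the boundary vertices $r_F$ and $d_F$), and a single vertex can use distinct incident edges in the same round, so the schedules for different fragments never interfere. The main subtlety I anticipate is simply double-checking the highway direction accounting---that broadcasts and reverse aggregates truly share one direction on $P$ while standard aggregates use the opposite direction---so that the per-direction congestion bound of $c_1+c_2+c_3$ really holds; once this is verified, the pipelining argument is entirely routine and essentially identical to the proof of Claim \ref{claim_pipeline}.
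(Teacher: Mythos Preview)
Your proposal is correct and essentially the same as the paper's proof: both argue that re-rooting at $d_P$ flips only the highway edges, then do a per-edge, per-direction case analysis to conclude that all $c_1+c_2+c_3$ computations can be pipelined with at most $O(c_1+c_2+c_3)$ congestion in any one direction on any edge. The paper organizes the case analysis pairwise (standard vs.\ reverse aggregates, then broadcast vs.\ reverse aggregate) while you tabulate all three types at once, but the content is the same.
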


We next discuss additional notation and claims required.
We denote by $p(v)$ the parent of $v$.
For the algorithm, we first make sure that the highest and lowest edges in each highway are known to all vertices, which takes $O(D+\nfrag)$ time.

\begin{claim} \label{claim_high_low}
In $O(D+\nfrag)$ time all vertices can learn the highest and lowest edges in each highway.
\end{claim}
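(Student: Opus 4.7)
The plan is to reduce the claim to a standard pipelined convergecast followed by a pipelined broadcast over a BFS tree of $G$. By Lemma \ref{lemma:strong_decomp_construct}, every vertex $v$ in a fragment $F$ already knows all edges of the highway of $F$, and in particular knows the identities (and LCA labels) of the two endpoints of that highway, namely $r_F$ (the highest) and $d_F$ (the lowest). Consequently, each fragment can locally designate a unique representative (say $r_F$) that knows both the highest highway edge $e^{\mathrm{top}}_F$ and the lowest highway edge $e^{\mathrm{bot}}_F$, each of which can be encoded in $O(\log n)$ bits.

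First, I would build a BFS tree of $G$ rooted at some fixed vertex $r$; this takes $O(D)$ rounds. Then I would have each representative $r_F$ send the pair $(e^{\mathrm{top}}_F, e^{\mathrm{bot}}_F)$ up the BFS tree toward $r$, with all $\nfrag$ representatives transmitting simultaneously. Using the standard pipelined convergecast (each internal BFS-tree vertex forwards one pending message per round, buffering the rest), all $\nfrag$ pairs arrive at $r$ in $O(D+\nfrag)$ rounds, since the total number of messages that must cross any single BFS-tree edge is at most $\nfrag$ and the depth is $O(D)$.

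Next, I would pipeline-broadcast the entire list of $\nfrag$ pairs from $r$ to every vertex of $G$ along the same BFS tree. Again this takes $O(D+\nfrag)$ rounds, because the root can push one pair per round down the tree and every vertex at depth at most $D$ receives the $i$-th pair by round $D+i$. After this step, every vertex knows $(e^{\mathrm{top}}_F, e^{\mathrm{bot}}_F)$ for every fragment $F$, i.e.\ the highest and lowest edges of every highway.

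The only subtlety is ensuring each fragment contributes exactly one entry (so that the total message count is $\nfrag$ rather than, say, $\sfrag\cdot\nfrag$). This is handled by the explicit choice of $r_F$ as representative, which is uniquely identifiable inside $F$ using the tuple $(r_F, d_F)$ guaranteed by Lemma \ref{lemma:strong_decomp_construct}; all other vertices remain silent during the convergecast. No step here is technically hard---the whole argument is a direct application of pipelined communication on a BFS tree---so I do not anticipate a real obstacle beyond being careful about the representative selection and the encoding sizes. The total round complexity is $O(D+\nfrag)$, as claimed.
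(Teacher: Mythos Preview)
Your proposal is correct and follows essentially the same approach as the paper: collect $O(1)$ information per fragment and disseminate it via pipelined upcast and broadcast on a BFS tree in $O(D+\nfrag)$ rounds. The paper's proof is just the one-sentence version of what you wrote; the only minor caveat is that $r_F$ may be shared among several fragments, but since each fragment has a distinct tuple $(r_F,d_F)$ this still yields exactly $\nfrag$ messages in total.
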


\begin{proof}
This requires collecting and sending $O(1)$ information per fragment, which can be done in $O(D+\nfrag)$ time using upcast and broadcast in a BFS tree.
\end{proof}

\subsection{Simple cases where $P'$ is a non-highway} \label{sec:non_highway_simple}

We start with the following basic claim that is useful for cases involving a non-highway.
We later use it to compare a non-highway edge to other edges in its fragment, as well as to edges in another fragment, assuming there is an edge between the fragments.

\remove{
In this section we discuss the following cases:
\begin{enumerate}
\item $P'$ and $P$ are non-highways.
\item $P'$ is a non-highway and $P$ is the fragment highway in the fragment of $P'$.
\item The two edges defining the cut are in the same non-highway path $P'$.
\end{enumerate}

We start with the following claim that is useful for cases involving a non-highway.

}

\begin{claim} \label{claim_non_highway}
Let $P'$ be a non-highway path, and let $e$ be an edge outside $T(P')$, such that $\{e,\cov(e)\}$ is known to all vertices in $T(P')$. 
Then, using one aggregate computation in $T(P')$, each edge $e' \in P'$ can compute the value $\cut(e',e).$ In addition, for different paths $P'$ that are not in the same root to leaf path, these computations can be done in parallel.
\end{claim}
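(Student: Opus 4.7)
The plan is to invoke Claim \ref{claim_cover} and write $\cut(e',e) = \cov(e') + \cov(e) - 2\cov(e',e)$. Since $\cov(e')$ is already known at $e'$ by Claim \ref{claim_learn_cov} and $\cov(e)$ together with the labels of $e$ are known to every vertex of $T(P')$ by hypothesis, the entire problem reduces to computing $\cov(e',e)$ for every $e' \in P'$ using a single bottom-up aggregate over $T(P')$. The structural observation that makes such an aggregate possible is that for any non-tree edge $x = \{a,b\}$ that covers $e$, the set of edges of $P'$ that $x$ also covers forms a contiguous subpath of $P'$, determined by the ``attachment points'' of $a$ and $b$ on $P'$.

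Concretely, every vertex $v \in T(P')$ uses LCA labels (Claim \ref{claim_LCA_labels}) to compute, without any communication, its attachment point $p_v \in P'$ (the lowest vertex of $P'$ that is an ancestor of $v$, which is readable as $\lca{v,d_{P'}}$ intersected with $P'$); it also uses the labels to decide, for each incident non-tree edge $x=\{v,w\}$, whether $x$ covers $e$, whether $w \in T(P')$, and if so what $p_w$ is. Then $v$ assembles a signed local contribution $c(v)$ as follows: for each incident $x = \{v,w\}$ that covers $e$, it adds $+w(x)$ to $c(v)$ if $w \notin T(P')$ or if $p_v$ is strictly below $p_w$ on $P'$; it adds $-w(x)$ if $p_v$ is strictly above $p_w$; and it adds nothing when $p_v = p_w$. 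A single bottom-up aggregate in $T(P')$ rooted at $r_{P'}$ then delivers the subtree sums $S(u) = \sum_{v \in T_u \cap T(P')} c(v)$ to every $u \in P'$. A short case analysis on one edge $x = \{a,b\}$ (with attachment points $p_a \le p_b$) shows that $x$ contributes exactly $w(x)$ to $S(u)$ for $u \in [p_a,p_b)$ along $P'$ and $0$ otherwise, which is precisely the set of edges of $P'$ that $x$ covers; hence $S(u) = \cov(e',e)$ for $e' = \{u,p(u)\}$, and $e'$ finishes by computing $\cut(e',e) = \cov(e') + \cov(e) - 2 S(u)$ locally.

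The parallelism claim for paths $P'$ not sharing a root-to-leaf path is immediate from Observation \ref{obv:path_disjoint}: the subtrees $T(P')$ are then edge-disjoint, so the aggregates run on disjoint communication links and can be performed simultaneously. The step I expect will need the most care is the sign/tie-breaking convention in defining $c(v)$: a naive scheme where both endpoints of $x$ add $+w(x)$ would overcount by $w(x)$ whenever both endpoints lie in $T_u$, and one would have to issue an explicit $-2w(x)$ correction at $\lca(a,b)$, requiring communication from the endpoints to the LCA. Using the $p_v$-versus-$p_w$ comparison for the sign is precisely what avoids any such non-local correction, keeping the whole per-vertex computation local and making a \emph{single} aggregate in $T(P')$ sufficient.
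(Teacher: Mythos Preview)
Your proof is correct, but it takes a more elaborate route than necessary and misses the observation that makes the paper's argument a one-liner. You build a signed-contribution scheme with attachment points $p_v$ precisely to handle non-tree edges $x=\{a,b\}$ that cover $e$ and have \emph{both} endpoints inside $T(P')$. But that case is vacuous: since $e\notin T(P')$, any edge whose tree path lies entirely in $T(P')$ cannot cover $e$. Hence every non-tree edge covering $e$ that touches $T(P')$ has exactly one endpoint there, and the convergecast is the naive one---each vertex simply adds the weights of its incident edges that cover $e$ (checkable via Claim~\ref{claim_LCA_labels}), with no sign rule and no attachment points needed. Your scheme degenerates to exactly this once you notice that the ``$w\in T(P')$'' branches never fire, so your case analysis on $[p_a,p_b)$ is doing work for a situation that cannot occur here.

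The paper's proof is thus: (i) $\cut(e',e)=\cov(e')+\cov(e)-2\cov(e',e)$; (ii) edges covering both $e'=\{v',p(v')\}$ and $e$ are exactly those covering $e$ with one endpoint in the subtree of $v'$; (iii) such edges cannot have both endpoints in $T(P')$; hence (iv) one plain sum-convergecast in $T(P')$ gives every $e'\in P'$ the value $\cov(e',e)$. Your parallelism argument via Observation~\ref{obv:path_disjoint} matches the paper's. What your approach buys is generality---your signed scheme would survive even without the hypothesis $e\notin T(P')$---but for this claim that generality is not needed, and the extra machinery (and its correctness verification) obscures the simple reason the aggregate works.
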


\begin{proof}
Let $e' \in P'$. By Claim \ref{claim_cover}, $\cut(e',e)=\cov(e')+\cov(e)-2\cov(e',e)$. Also, $e'$ knows $\cov(e)$ and $\cov(e')$, hence to compute $\cut(e',e)$, the edge $e'$ should compute $\cov(e',e).$ As any edge that covers a tree edge $e'=\{v',p(v')\}$ where $p(v')$ is the parent of $v'$, has one endpoint in the subtree of $v'$ by Claim \ref{claim_subtree}, it follows that the edges that cover $e'$ and $e$ are exactly all the edges that cover $e$ and have exactly one endpoint in the subtree of $v'$. This subtree is contained in $T(P')$. Note that any edge $x$ that covers $e'$ and $e$ cannot have both endpoints in $T(P')$, as otherwise the tree path covered by $x$ is contained entirely in $T(P')$, but as $e$ is outside $T(P')$, $x$ cannot cover $e$ in this case, which leads to a contradiction.
Hence, to learn about the total cost of the edges that cover $e$ and $e'$ we need to do one aggregate computation in $T(P')$: letting each vertex $v' \in T(P'^{\downarrow})$ learn about the total cost of edges incident on the subtree of $v'$ that cover $e$. This can be done using a convergecast in $T(P')$, where each vertex sends to its parent the total cost of edges in its subtree that cover $e$, this includes the sum of costs of edges it receives from its children and the cost of such edges adjacent to it. To implement this efficiently we need an efficient way to determine for each non-tree edge whether it covers a specific edge $e$, this can be done using Claim \ref{claim_LCA_labels}. Hence, all the edges $e' \in P'$ can compute $\cut(e',e)$ using one aggregate computation in the subtree of $P'$.
Since the whole computation was done in $T(P')$, the same computation can be done in parallel for other paths not in the same root to leaf path with $P'$.
\end{proof}

We can use Claim \ref{claim_non_highway} to compare a non-highway path $P'$ in the fragment $F_{P'}$ to the edges of a fragment $F \neq F_{P'}$, assuming there is an edge between $T(P'^{\downarrow})$ and $F$. The time is proportional to the size of the fragments $\sfrag=O(\sqrt{n})$. For a proof, see Appendix \ref{sec:app_short_paths_nh},

\begin{restatable}{claim}{CompareFragments}\label{claim_orthogonal_nh}
Let $P'$ be a non-highway in the fragment $F_{P'}$ and let $F \neq F_{P'}$ be another fragment. Assume that there is an edge $f$ between $T(P'^{\downarrow})$ and $F$, that is known to all vertices in $T(P')$.
Also, assume that at the beginning of the computation all vertices in $F$ know all values $\{e,\cov(e)\}_{e \in F}$. Then, in $O(\sfrag)$ time, all edges $e' \in P'$ can compute the values $\{e,\cut(e',e)\}_{e \in F}.$ The computation can be done in parallel for different paths $P'$ not in the same root to leaf path.
\end{restatable}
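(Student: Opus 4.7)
The plan is to use the edge $f$ as a bridge that transports all the $\sfrag = O(\sqrt n)$ values $\{e, \cov(e)\}_{e \in F}$ from the fragment $F$ into $T(P')$, and then to reduce the problem to $\sfrag$ invocations of Claim \ref{claim_non_highway}, pipelined inside $T(P')$. Since $P'$ is a non-highway, $T(P')$ is entirely contained in $F_{P'}$ and thus has diameter $O(\dfrag) = O(\sqrt n)$, which is what lets the pipelining fit inside the $O(\sfrag)$ budget.

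First, I would route the cover values. Let $u$ be the endpoint of $f$ in $T(P'^{\downarrow})$ and $v$ the endpoint of $f$ in $F$. By hypothesis, $v$ already knows all pairs $\{e, \cov(e)\}_{e \in F}$, so no convergecast inside $F$ is needed: $v$ simply pipes the $\sfrag$ messages of $O(\log n)$ bits each across $f$ to $u$ in $O(\sfrag)$ rounds. Then $u$ broadcasts this stream of messages into $T(P')$ using pipelined broadcast; by Claim \ref{claim_pipeline}, this costs $O(\dfrag + \sfrag) = O(\sfrag)$ rounds (noting that $u$ has been given the identity of $f$ to start the broadcast). At the end of this phase, every vertex in $T(P')$ knows all pairs $\{e, \cov(e)\}_{e \in F}$.

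Second, I would compute the cut values. For each fixed $e \in F$, the hypotheses of Claim \ref{claim_non_highway} are now met: $e$ lies outside $T(P')$ (because $e \in F \ne F_{P'}$ and non-highway subtrees are contained in their own fragment) and $\{e, \cov(e)\}$ is known to every vertex of $T(P')$. That claim provides $\cut(e', e)$ at every $e' \in P'$ via a single aggregate computation inside $T(P')$. Doing this for all $\sfrag$ edges of $F$ yields $\sfrag$ independent aggregate computations on the same tree $T(P')$, which by Claim \ref{claim_pipeline} can be pipelined in $O(\dfrag + \sfrag) = O(\sfrag)$ rounds.

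For the parallelism claim, observe that if $P'_1$ and $P'_2$ are non-highways not lying on a common root-to-leaf path, then they are orthogonal, and by Observation \ref{obv:path_disjoint} the trees $T(P'_1)$ and $T(P'_2)$ are edge-disjoint; the routing edges $f_1, f_2$ that each non-highway uses are also distinct (they enter different subtrees). Hence both phases above can be executed simultaneously for all such $P'$ without any edge congestion, and the total round complexity remains $O(\sfrag)$. The only conceptual subtlety is to make sure the broadcast inside $T(P')$ can be launched immediately when the first message crosses $f$ so that the routing and internal pipelining overlap cleanly; this is handled by the standard pipelining of Claim \ref{claim_pipeline}, and presents no real obstacle.
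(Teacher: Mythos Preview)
Your proof is correct and follows essentially the same approach as the paper: route the $O(\sfrag)$ cover values across $f$ into $T(P')$, then invoke Claim~\ref{claim_non_highway} once per edge of $F$ and pipeline these aggregate computations inside $T(P')$. Your justification of parallelism via the disjointness of the subtrees $T(P'^{\downarrow})$ is also the same as the paper's, and if anything you are slightly more explicit about why the routing edges cannot conflict.
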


\remove{
\begin{claim} \label{claim_orthogonal_nh}
Let $P'$ be a non-highway in the fragment $F_{P'}$ and let $F \neq F_{P'}$ be another fragment. Assume that there is an edge $f$ between $T(P'^{\downarrow})$ and $F$, that is known to all vertices in $T(P')$.
Also, assume that at the beginning of the computation all vertices in $F$ know all values $\{e,\cov(e)\}_{e \in F}$. Then, in $O(\sfrag)$ time, all edges $e' \in P'$ can compute the values $\{e,\cut(e',e)\}_{e \in F}.$ The computation can be done in parallel for different paths $P'$ not in the same root to leaf path.
\end{claim}

\begin{proof}
As all vertices in $F$ know the values $\{e,\cov(e)\}_{e \in F}$, and the edge $f$ has an endpoint in $F$ and an endpoint in $T(P')$, it knows this information, and can pass it to all vertices in $T(P')$ using $O(\sfrag)$ aggregate and broadcast computations in $T(P').$ Note that since the subtrees $T(P'^{\downarrow})$ are disjoint for non-highways not in the same root to leaf path, the edge $f$ is different for different such paths $P'$, which allows working in parallel as needed. As $F$ and $P'$ are in different fragments, and $P'$ is a non-highway, we have that all edges $e \in F$ are not in $T(P')$, hence we can use Claim \ref{claim_non_highway} to let all edges $e' \in P'$ learn the values $\{e,\cut(e',e)\}$, this takes $O(\sfrag)$ time using pipelining, and can be done in parallel for different non-highways $P'$ not in the same root to leaf path. 
\end{proof}
}

Similar ideas allow us to compare the edges of a non-highway $P'$ to all edges that are above or orthogonal to them in the same fragment. This in particular allows to compare a non-highway to all edges of the highway of the same fragment. For a proof, see Appendix \ref{sec:app_short_paths_nh}.

\begin{restatable}{claim}{NhFragment}\label{claim_non_highway_fragment}
Let $P'$ be a non-highway in the fragment $F_{P'}$ and assume that all vertices in $T(P')$ know the complete structure of $P'$. In $O(\sfrag)$ time, all edges $e' \in P'$, can compute the values $\cut(e',e)$ for all edges $e \in F_{P'}$ that are either above or orthogonal to them in the fragment $F_{P'}$. The computation can be done in parallel for different paths $P'$ not in the same root to leaf path.
\end{restatable}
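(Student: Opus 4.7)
The plan is to reduce the computation to repeated applications of Claim \ref{claim_non_highway}. The key structural observation is that every edge $e \in F_{P'}$ which is above or orthogonal to $P'$ lies outside $T(P')$: if such an $e$ were in $T(P') \setminus P'$ it would belong to a subtree hanging off some interior vertex $v \in P' \setminus r_{P'}$, and then the unique root-to-leaf path through $e$ would pass through the edges of $P'$ above $v$, contradicting orthogonality; and edges of $P'$ itself are neither above $P'$ nor orthogonal to $P'$. Hence every relevant $e$ satisfies the hypothesis of Claim \ref{claim_non_highway}.

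First I would disseminate, inside the fragment $F_{P'}$, the pair $(e, \cov(e))$ for every edge $e \in F_{P'}$. Since $|F_{P'}| = O(\sfrag)$, Claim \ref{claim_pipeline_fragment} lets us pipeline these $O(\sfrag)$ broadcasts inside the fragment in $O(\sfrag)$ rounds, after which every vertex of $F_{P'}$, and in particular every vertex of $T(P') \subseteq F_{P'}$, knows the identity, endpoints, and cover value of every edge of the fragment. Combining this with the fragment topology provided by Lemma \ref{lemma:strong_decomp_construct} and the assumed knowledge of the structure of $P'$, each vertex of $T(P')$ can locally identify the set $S \subseteq F_{P'}$ of edges that are above or orthogonal to $P'$.

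Next I would invoke Claim \ref{claim_non_highway} once for each $e \in S$. Each invocation costs one aggregate computation in $T(P')$ and delivers $\cut(e',e)$ to every $e' \in P'$. Since $|S| = O(\sfrag)$, pipelining these aggregate computations via Claim \ref{claim_pipeline} produces all the desired cut values in $O(|T(P')| + |S|) = O(\sfrag)$ rounds. For parallelism across non-highways $P'$ that are not on the same root-to-leaf path, I would appeal to Observation \ref{obv:path_disjoint}, which guarantees that the subtrees $T(P')$ of such paths are edge-disjoint, so the aggregate computations of step two do not interfere; simultaneously, the fragment-wide broadcast of step one is a single shared step that serves all non-highways contained in the same fragment at once.

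I do not expect a serious obstacle, since the two building blocks (Claim \ref{claim_non_highway} for a single external edge and Claim \ref{claim_pipeline_fragment} for pipelining inside a fragment) already carry most of the weight. The mildly tricky point is the bookkeeping around the fragment-wide broadcast: one must observe that this broadcast is performed per-fragment rather than per-non-highway, so that the presence of many non-highways inside the same fragment does not blow up the congestion, and that the subsequent per-non-highway aggregations confine themselves to the edges of their respective $T(P')$ and therefore compose in parallel across orthogonal paths.
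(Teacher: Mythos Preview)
Your argument rests on a misreading of the statement. The claim asks each edge $e' \in P'$ to compute $\cut(e',e)$ for every $e \in F_{P'}$ that is above or orthogonal to \emph{that particular edge} $e'$; the pronoun ``them'' refers to the edges $e'$, not to the path $P'$. Under the correct reading, plenty of relevant $e$ live inside $T(P')$: for a non-top edge $e' \in P'$, the edges of $P'$ above $e'$ are ``above $e'$'', and the subtrees hanging off $P'$ at vertices above $e'$ consist of edges orthogonal to $e'$. All of these sit inside $T(P')$, so your ``key structural observation'' is false, and Claim~\ref{claim_non_highway} cannot be applied to them (its hypothesis is precisely that $e \notin T(P')$).

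The paper's proof splits exactly along this line. For $e \in F_{P'} \setminus T(P')$ it does what you do: broadcast all $\{e,\cov(e)\}_{e\in F_{P'}}$ inside the fragment and pipeline Claim~\ref{claim_non_highway} once per such $e$. The missing half is for $e \in T(P')$: here the paper runs, for each such fixed $e=\{v,p(v)\}$, the same aggregate computation inside $T(P')$ but \emph{truncated} so that it halts as soon as it reaches $v$ or any ancestor of $v$. That way only edges below or orthogonal to $e$ participate, and for each of them $e$ is indeed outside the subtree used in the summation, restoring correctness. Pipelining these truncated aggregations over all $e \in T(P')$ still fits in $O(\sfrag)$ rounds. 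Your plan is fine for the first half; you need this second ingredient (or something equivalent) to finish.
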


\remove{
We are now left with the case that $P'$ and $P$ are orthogonal non-highways. We use Claim \ref{claim_non_highway} again to deal with this case. Note that the non-orthogonal case was already covered in Claim \ref{claim_non_highway_fragment}, as a non-highway path $P$ can be in the same root to leaf path with $P'$ only if they are in the same fragment, as the path from a non-highway path to the root is composed of a non-highway inside the fragment, and highways above it in other fragments. We will later show in our algorithm that we only compare orthogonal non-highway paths $P'$ and $P$ such that there is an edge between $T(P')$ and $T(P)$. Hence, we focus in this case.

\begin{claim} \label{claim_orthogonal_nh}
Let $P'$ be a non-highway and let $P$ be a non-highway path not in the same root to leaf path with $P'$. Assume that there is an edge $f$ between $T(P'^{\downarrow})$ and $T(P^{\downarrow})$, that is known to all vertices in $T(P')$.
Also, assume that at the beginning of the computation all vertices in $T(P)$ know all values $\{e,\cov(e)\}_{e \in P}$. Then, in $O(\dfrag)$ time, all edges $e' \in P'$ can compute the values $\{e,\cut(e',e)\}_{e \in P}.$ The computation can be done in parallel for different paths $P'$ not in the same root to leaf path.
\end{claim}

\begin{proof}
As all vertices in $T(P)$ know the values $\{e,\cov(e)\}_{e \in P}$, and the edge $f$ has an endpoint in $T(P)$ and an endpoint in $T(P')$, it knows this information, and can pass it to all vertices in $T(P')$ using $O(\dfrag)$ aggregate and broadcast computations in $T(P').$ Note that since the subtrees $T(P'^{\downarrow})$ are disjoint for non-highways not in the same root to leaf path, the edge $f$ is different for different such paths $P'$, which allows working in parallel as needed. As $P$ and $P'$ are not in the same root to leaf path, we have that all edges $e \in P$ are not in $T(P')$, hence we can use Claim \ref{claim_non_highway} to let all edges $e' \in P'$ learn the values $\{e,\cut(e',e)\}$, this takes $O(\dfrag)$ time using pipelining, and can be done in parallel for different non-highways $P'$ not in the same root to leaf path. 
\end{proof}
}
\remove{
\begin{claim}
Let $P'$ be a non-highway and let $P$ be a non-highway path not in the same root to leaf path with $P'$. Assume that there is an edge $f$ between $T(P'^{\downarrow})$ and $T(P^{\downarrow})$ \sagnik{What does $\downarrow$ signify?}, that is known to all vertices in $T(P) \cup T(P').$
Also, assume that at the beginning of the computation all vertices in $T(P)$ know all values $\{e,\cov(e)\}_{e \in P}$. Then, in $O(\dfrag)$ time, all edges $e' \in P'$ can compute the values $\{e,\cut(e',e)\}_{e \in P}.$ The computation can be done in parallel for different paths $P'$ not in the same root to leaf path.
\end{claim}

\begin{proof}
As all vertices in $T(P)$ know the values $\{e,\cov(e)\}_{e \in P}$, and the edge $f$ has an endpoint in $T(P)$ and an endpoint in $T(P')$ \sagnik{Here you do not use $\downarrow$}, it knows this information, and can pass it to all vertices in $T(P')$ using $O(\dfrag)$ aggregate and broadcast computations in $T(P').$ As $P$ and $P'$ are not in the same root to leaf path, we have that all edges $e \in P$ are not in $T(P')$, hence we can use Claim \ref{claim_non_highway} to let all edges $e' \in P'$ learn the values $\{e,\cut(e',e)\}$, this takes $O(\dfrag)$ time using pipelining.
\end{proof}
}

\subsection{$P'$ is a non-highway and $P$ is a highway}

Here we focus on the case that we compare a non-highway $P'$ to a fragment highway $P$ in a different fragment. The case that $P$ is in the same fragment was already discussed in the previous section.
Let $e' \in P', e \in P$, we next look at the value $\cov(e',e).$ See Figure \ref{non_highway_highway} for an illustration.
We need the following definitions. 
\begin{itemize}
\item $\cov_{F}(e',e)$ is the cost of all edges that cover $e'$ and $e$ and have one endpoint in $T(P'^\downarrow)$ and one endpoint in the fragment $F_P$ of $P$.
\item $\extcov(e',P)$ is the cost of all edges that cover $e'$ and the whole highway $P$, and have one endpoint in $T(P')$ and both endpoints outside $F_P$.
\end{itemize}

We show that all all the edges that cover $e'$ and $e$ are in one of the above categories. For a proof, see Appendix \ref{sec:app_short_paths_nhh}.

\setlength{\intextsep}{0pt}
\begin{figure}[h]
\centering
\setlength{\abovecaptionskip}{-2pt}
\setlength{\belowcaptionskip}{6pt}
\includegraphics[scale=0.4]{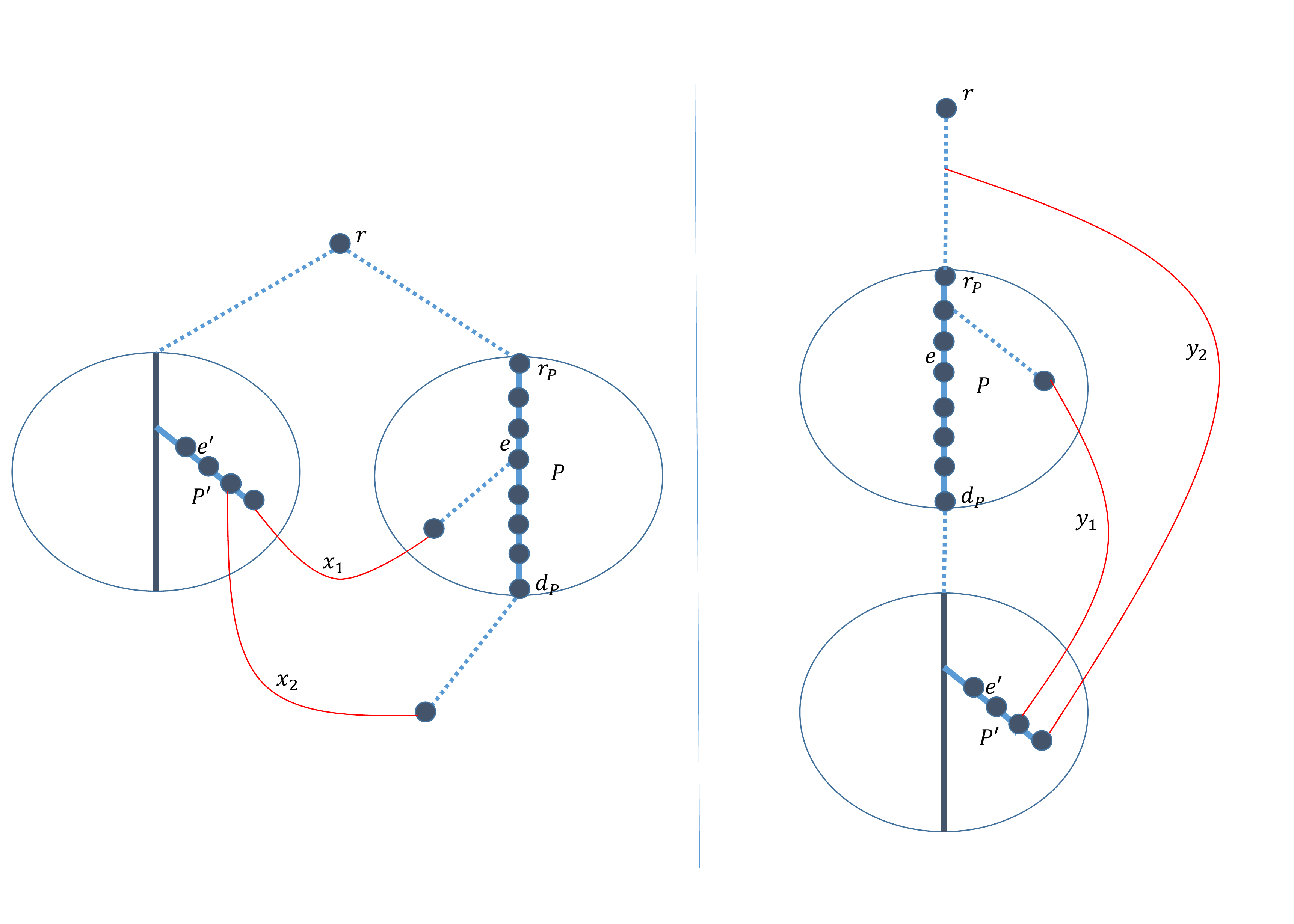}
 \caption{\small Examples of a non-highway $P'$ and a highway $P$. On the left, $P'$ and $P$ are orthogonal, and on the right, $P$ is above $P'$. The edges $x_1$ and $y_1$ are examples of edges counted in $\cov_F(e',e)$, and the edges $x_2$ and $y_2$ are examples of edges counted in $\extcov(e',P).$}
\label{non_highway_highway}
\end{figure}

\begin{restatable}{claim}{coverNhH}\label{cover_non_highway_highway_new}
Let $e' \in P', e \in P$ where $P'$ is a non-highway, and $P$ is a highway in a different fragment $F_P$. Then $\cov(e',e) = \cov_{F}(e',e) + \extcov(e',P)$.
\end{restatable}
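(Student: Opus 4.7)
The plan is to take an arbitrary non-tree edge $x$ covering both $e'$ and $e$ and show it falls into exactly one of the two categories defining $\cov_F(e',e)$ and $\extcov(e',P)$, and conversely that every edge in either category covers both $e'$ and $e$. By Claim \ref{claim_subtree}, since $x$ covers $e'$ it has exactly one endpoint $u$ in the subtree rooted below $e'$; because $P'$ is a non-highway and non-highway subtrees are confined to their own fragment, this endpoint lies in the strict interior of $F_{P'}$, and in particular $u \notin F_P$ (fragments overlap only in the shared vertices $r_F,d_F$, and $u$ is not one of these). Write $v$ for the other endpoint of $x$.

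First I would split on the location of $v$ relative to $F_P$. If $v \in F_P$, then $x$ has one endpoint in $T(P'^{\downarrow})$ and one endpoint in $F_P$, and by hypothesis $x$ covers both $e'$ and $e$, so $x$ is counted in $\cov_F(e',e)$. If $v \notin F_P$, I will argue that the unique $u$-to-$v$ tree path crosses the entire highway $P$, so $x$ covers every edge of $P$ and contributes to $\extcov(e',P)$ (both endpoints lie outside $F_P$, and $u \in T(P'^{\downarrow}) \subseteq T(P')$ gives the ``one endpoint in $T(P')$'' requirement).

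The main technical step is justifying the ``crosses all of $P$'' claim when $v \notin F_P$, and this is where the two configurations of Figure \ref{non_highway_highway} are handled separately. If $P \perp P'$, then since $x$ covers $e$ and the unique endpoint of $x$ below $e$ cannot be $u$ (which lies in the orthogonal branch $F_{P'}$), $v$ must lie in the subtree below $e$; combined with $v \notin F_P$, the vertex $v$ sits in a strict descendant fragment of $F_P$, so the tree path from $v$ upward enters $F_P$ through $d_P$ and leaves through $r_P$, traversing the whole highway $P$. If instead $P$ is an ancestor of $P'$, then $T(P'^{\downarrow}) \subseteq T(P^{\downarrow})$ forces $u$ to be the endpoint of $x$ that lies below $e$, so $v$ is above $e$ and, being outside $F_P$, lies in a strict ancestor fragment; the tree path from $u$ then enters $F_P$ at $d_P$ and exits at $r_P$, again traversing all of $P$.

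Finally, the two categories are manifestly disjoint, since $\cov_F(e',e)$ requires one endpoint inside $F_P$ while $\extcov(e',P)$ requires both endpoints outside $F_P$. The converse direction is immediate: edges counted in $\cov_F(e',e)$ cover $e'$ and $e$ by definition, and edges counted in $\extcov(e',P)$ cover $e'$ and all of $P$ and hence cover $e$. Summing edge weights over these two disjoint classes then yields $\cov(e',e) = \cov_F(e',e) + \extcov(e',P)$. I expect the only delicate bookkeeping to be the handling of the shared fragment-boundary vertices $r_P,d_P$, but since $u$ is strictly interior to $F_{P'}$ and $F_{P'} \neq F_P$, $u$ avoids them automatically and the argument above goes through.
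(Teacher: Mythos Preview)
Your proof is correct and follows essentially the same approach as the paper's: both hinge on the fact that the only vertices of $F_P$ adjacent to anything outside are $r_P$ and $d_P$, so a tree path with both endpoints outside $F_P$ that passes through $e\in P$ must traverse the entire highway. The paper's proof is slightly more concise in that it uses this boundary property directly without splitting into the orthogonal versus ancestor cases; your case analysis is a correct but unnecessary refinement of the same argument.
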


In Appendix \ref{sec:app_short_paths_nhh} we explain how $\cov_{F}(e',e)$ and $\extcov(e',P)$ are computed. Intuitively, $\cov_{F}(e',e)$ can be computed by an aggregate computation similar to the one described in the proof of Claim \ref{claim_non_highway}, either inside $T(P')$ or in $F_P$. The value $\extcov(e',P)$ can be computed by an aggregate computation inside $T(P').$


\begin{restatable}{claim}{CovFnh} \label{claim_covF_nh}
Let $P'$ be a non-highway and let $P$ be a highway of the fragment $F_P$. 
Given an edge $e \in P$, using an aggregate computation in $T(P')$, all edges $e' \in P'$ can compute the value $\cov_{F}(e',e)$.
\end{restatable}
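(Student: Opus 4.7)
The plan is to realize $\cov_F(e',e)$, for every $e' \in P'$ simultaneously, via a single sum-aggregate inside $T(P')$. As a setup step I first assume that the identifier of the fragment $F_P$ and the LCA-label of the edge $e$ have been made known to all vertices in $T(P')$; this either comes for free with whatever routing of $\{e,\cov(e)\}$ delivers $e$ to $T(P')$ in the enclosing algorithm, or it can be accomplished with a single $O(\dfrag)$ broadcast inside $T(P')$ (Claim \ref{claim_pipeline}). Once this is done, each vertex $u \in T(P'^\downarrow)$ defines locally
\[
  x_u \;=\; \sum_{\substack{\{u,w\}\in E\setminus E_T\\ w\in F_P,\;\{u,w\}\in \ECov(e)}} w(\{u,w\}),
\]
where membership $w\in F_P$ is decided from the fragment identifier attached to $w$, and the predicate $\{u,w\}\in \ECov(e)$ is decided purely locally from the LCA-labels of $u,w$ and $e$ via Claim \ref{claim_LCA_labels}. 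No communication is needed in this step.

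Next I would run the standard summing convergecast of Claim \ref{claim_pipeline} inside $T(P')$: each vertex adds its $x_u$ to the values arriving from its children in $T(P')$ and forwards the partial sum to its parent. For each $v' \in P'\setminus\{r_{P'}\}$, corresponding to the tree edge $e' = \{v',p(v')\}$ of $P'$, the vertex $v'$ learns $S_{v'} := \sum_{u\in T_{v'}} x_u$ upon completion of the aggregate. Since this is one aggregate computation, its cost is $O(\dfrag)$, and subtrees $T(P')$ for non-highways $P'$ that are not in the same root-to-leaf path are edge-disjoint, so all of them can perform the convergecast in parallel, again by Claim \ref{claim_pipeline}.

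Finally I would verify that $S_{v'} = \cov_F(e',e)$. A non-tree edge $\{a,b\}$ contributes $w(\{a,b\})$ to $S_{v'}$ exactly when it covers $e$ and has an endpoint, say $a$, lying in $T_{v'}$ with $b\in F_P$. Because $P'$ is a non-highway of $F_{P'}$, we have $T_{v'}\subseteq F_{P'}$, and the only vertices a fragment may share with another fragment are its root $r_{F_{P'}}$ and descendant $d_{F_{P'}}$, neither of which belongs to the interior subtree $T_{v'}$; hence $T_{v'}\cap F_P=\emptyset$, so the edge $\{a,b\}$ has \emph{exactly} one endpoint in $T_{v'}$, and by Claim \ref{claim_subtree} this forces it to cover $e'$. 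Conversely, any edge counted by $\cov_F(e',e)$ covers $e'$, so by Claim \ref{claim_subtree} it has exactly one endpoint $a\in T_{v'}$; by hypothesis its other endpoint $b$ lies in $F_P$, so it is counted once in $S_{v'}$ through $a$. Thus $S_{v'} = \cov_F(e',e)$. The main obstacle in the proof is this last geometric bookkeeping: one has to argue cleanly both in the orthogonal case and in the ancestor case that the $T(P'^{\downarrow})$/$F_P$ split of $\cov_F$ coincides with the ``one endpoint in $T_{v'}$'' characterization of covering $e'$ and that every qualifying edge is counted exactly once. The algorithmic half is otherwise a textbook single convergecast.
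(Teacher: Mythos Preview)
Your proposal is correct and follows essentially the same approach as the paper: fix $e$, have each vertex $u\in T(P'^{\downarrow})$ locally total the weights of incident non-tree edges whose other endpoint lies in $F_P$ and that cover $e$ (checked via LCA labels), and then run a single sum-convergecast in $T(P')$ so that each $v'\in P'$ reads off $\cov_F(e',e)$. Your write-up is in fact slightly more careful than the paper's, since you make explicit why the ``covers $e'$'' predicate need not be tested at the leaves (it is forced by $T_{v'}\cap F_P=\emptyset$ together with Claim~\ref{claim_subtree}), and why no edge is double-counted.
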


\begin{restatable}{claim}{CovFh} \label{claim_covF_h}
Let $P'$ be a non-highway and let $P$ be the fragment highway of the fragment $F_P$. 
Given an edge $e' \in P'$, using an aggregate computation in $F_P$, all edges $e \in P$ can compute the value $\cov_{F}(e',e)$.
\end{restatable}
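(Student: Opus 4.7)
The plan is to reduce the computation to a single aggregate sum inside the fragment $F_P$ in which each vertex contributes the total weight of its incident non-tree edges that cover $e'$. First I would show that, once we condition on $v\in F_P$ and on $x$ covering $e'$, the hypothesis ``$u\in T(P'^{\downarrow})$'' comes for free. Writing $e'=\{w,p(w)\}$, the vertex $w\in P'\setminus r_{P'}$ is a non-highway vertex of $F_{P'}$; since non-highway paths are wholly contained in a single fragment, the subtree $T_w$ is contained in $F_{P'}$ and contains no shared vertex with another fragment. Hence, for a non-tree edge $x=\{u,v\}$ with $v\in F_P\neq F_{P'}$ that covers $e'$, Claim \ref{claim_subtree} says exactly one endpoint lies in $T_w$, and since $v\notin T_w$, it must be $u$; consequently $u\in T_w\subseteq T(P'^{\downarrow})$. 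So each $v\in F_P$ can locally identify the relevant non-tree edges via LCA checks against the labels of $e'$ using Claim \ref{claim_LCA_labels}, and it suffices for $v$ to form $s(v):=\sum w(x)$ taken over its incident non-tree edges $x$ that cover $e'$.

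Next I would split into the two geometric configurations shown in Figure \ref{non_highway_highway} and identify, for each highway edge $e=\{a,b\}$ with $a=p(b)$, the set of vertices whose $s$-values must be summed. For $v\in F_P$, let $v_h$ denote the highway vertex of $F_P$ from which $v$ hangs (with $v_h=v$ when $v\in P$). Tracing the unique tree path from $u\in T(P'^{\downarrow})$ to $v\in F_P$: when $P\perp P'$, the path enters $F_P$ through $r_F$ and then descends, so $x$ covers $e$ iff $v_h$ lies at $b$ or strictly below on the highway, equivalently $v\in T_b\cap F_P$; when $P$ lies strictly above $P'$, the path enters $F_P$ through $d_F$ and then ascends, so $x$ covers $e$ iff $v_h$ lies strictly above $b$, equivalently $v$ is in the subtree of $a$ when $F_P$ is re-rooted at $d_F$.

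These two characterizations correspond exactly to the two orientations of a tree aggregation in $F_P$. In the orthogonal case, the standard aggregate with $r_F$ as root delivers $\sum_{v\in T_b\cap F_P}s(v)=\cov_F(e',e)$ at the lower endpoint $b$ of each highway edge. In the other case, the reverse-direction aggregate with $d_F$ as root, which is explicitly permitted by Claim \ref{claim_pipeline_fragment}, delivers $\cov_F(e',e)$ at the upper endpoint $a$ of each highway edge. The vertices of $F_P$ can determine in advance which orientation to use, because they already know the skeleton tree and the fragment of $e'$ by Observation \ref{obs:knowing_allfragments_from_one}. Either way this is a single aggregate inside $F_P$, so by Claim \ref{claim_pipeline_fragment} it runs in $O(\dfrag)$ rounds and can be done in parallel across different fragments.

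The main subtlety is the case analysis pinning down precisely which orientation of the aggregate extracts $\cov_F(e',e)$ at each highway edge; once that is settled, the protocol reduces to a standard pipelined tree aggregation with purely local LCA-based preprocessing, and no communication outside $F_P$ is required.
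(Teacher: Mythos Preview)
Your proposal is correct and follows essentially the same approach as the paper's proof: a case split into the orthogonal case and the case where $P$ lies above $P'$, handled respectively by a standard aggregate in $F_P$ (root $r_{F_P}$) and a reverse-direction aggregate (root $d_{F_P}$). Your explicit observation that the condition $u\in T(P'^{\downarrow})$ is automatic once $v\in F_P$ and $x$ covers $e'$, and your careful identification of which highway endpoint receives the answer in each orientation, make the argument slightly more self-contained than the paper's version, but the underlying idea is the same.
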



\begin{restatable}{claim}{ClaimExtnh}\label{claim_cov_eP_nh}
In $O(\dfrag + \nfrag)$ time, all edges $e'$ in the non-highway $P'$ can compute the values $\extcov(e',P)$ for all highways $P$. Moreover, this computation can be done in all non-highways in the same layer simultaneously.
\end{restatable}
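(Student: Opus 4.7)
The plan is to compute $\extcov(e',P)$ for every $e' \in P'$ and every highway $P$ via a single pipelined convergecast inside $T(P')$ that aggregates a vector of non-tree-edge weights indexed by fragments. First, in one communication round along every non-tree edge, each vertex $v'$ learns the fragment ID of each of its non-tree neighbors. Then, at every $v' \in T(P'^\downarrow)$, I compute locally the vector $d(v')$ with $d(v')[F']$ equal to the total weight of non-tree edges from $v'$ into fragment $F'$, taken over every fragment $F' \neq F_{P'}$. This requires no further communication.

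Next I run one bottom-up aggregate inside $T(P')$ to compute $D(v)[F'] = \sum_{v' \in T_v} d(v')[F']$ at every vertex $v$. For a non-highway (bough) $P'$, the subtree $T(P')$ is contained in the single fragment $F_{P'}$ and therefore has depth $O(\dfrag)$; since each vector has $O(\nfrag)$ entries of $O(\log n)$ bits, Claim \ref{claim_pipeline} gives an $O(\dfrag + \nfrag)$-round pipelined convergecast. After it, each edge $e' = \{p(v), v\}$ of $P'$ holds the vector $D(v)$.

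The last step is purely local at each $e'$. A non-tree edge $x = \{u, u'\}$ with $u \in T_v$ and $u' \in F'$ contributes $w(x)$ to $\extcov(e', P)$ precisely when $x$ covers all of $P$ with both endpoints outside $F_P$; under the conditions $u \in F_{P'}$ and $u' \in F'$ this is equivalent to $F_P$ lying \emph{strictly} on the skeleton-tree path from $F_{P'}$ to $F'$. Since every vertex already knows the full skeleton tree globally (Lemma \ref{lemma:strong_decomp_construct}), the edge $e'$ identifies the qualifying set of fragments $F'$ for each $P$ and outputs $\extcov(e', P) = \sum_{F'} D(v)[F']$ restricted to that set.

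Finally, parallelization across non-highway boughs of the same layer is immediate, because by the layering construction same-layer boughs correspond to distinct leaves of the contracted tree and hence their subtrees in $T$ are vertex-disjoint, so the convergecasts do not share any edge. The main subtlety I expect to handle carefully is the treatment of boundary vertices ($r_F$ and $d_F$) that are shared between adjacent fragments, which I plan to resolve by fixing a canonical fragment assignment for such vertices consistently across the whole algorithm, so that the skeleton-tree predicate used in the final step is well defined.
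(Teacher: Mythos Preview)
Your approach is different from the paper's. The paper runs one aggregate computation in $T(P')$ per highway $P$ (so $O(\nfrag)$ pipelined aggregates, yielding $O(\dfrag+\nfrag)$ rounds): for each such $P$, every vertex $v'\in T(P'^\downarrow)$ contributes the weights of its incident non-tree edges $x$ that (i) cover both extremal edges of $P$, which are globally known by Claim~\ref{claim_high_low} and can be tested via LCA labels (Claim~\ref{claim_LCA_labels}), and (ii) have the other endpoint outside $F_P$, which is read off directly from the neighbor's fragment ID. No skeleton-path reasoning is used, so boundary vertices raise no issue.

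Your vector-by-destination-fragment idea also works in principle, but the fix you propose for boundary endpoints is not enough. A single canonical assignment cannot be correct for all $P'$ simultaneously. Take $u'=r_{F_A}=d_{F_B}$ (so $F_B$ is the parent of $F_A$). If $F_{P'}$ lies \emph{below} $F_A$, then the tree path from any $u\in T(P'^\downarrow)$ to $u'$ traverses the entire highway of $F_A$; since $u'\in F_A$, this edge must be excluded from $\extcov(e',P_{F_A})$, which forces you to assign $u'$ to $F_A$ so that $F_A$ is not ``strictly internal'' on the skeleton path. If instead $F_{P'}$ lies \emph{above} $F_B$ (or is orthogonal), the path reaches $u'$ through the full highway of $F_B$, and the analogous argument forces the assignment $F_B$. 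Thus the correct fragment for a shared vertex $u'$ depends on $F_{P'}$, and no global choice works. The repair is easy and local to each $T(P')$: assign $u'$ to the fragment containing the skeleton edge from $u'$ on the path toward $F_{P'}$ (computable since every vertex knows the skeleton tree), or simply adopt the paper's direct LCA-based test and bypass the skeleton-path predicate altogether.
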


\subsubsection*{Comparing two paths using a routing trick}
We next explain how given a non-highway $P_0$ and a highway $P_1$ we compute the values $\{\cut(e',e)\}_{e' \in P_0, e \in P_1}.$ 
There are 2 cases, either there is an edge $f$ between $T(P_0)$ and $F_{P_1}$. In this case, we use $f$ to route information between $P_0$ and $P_1$, and compute the cut values using aggregate computations inside one of them. In the case there is no edge, we show a \emph{routing trick} to limit the amount of global communication. Basically we show that if we broadcast $O(\sqrt{n})$ cover values to the whole graph, it is enough to deal with all pairs $P_0,P_1$ from this case.\\[-7pt]

\textbf{Case 1: There are no edges between $T(P_0)$ and $F_{P_1}.$}
Let $P$ be a highway, we denote by $e^P_{min}$ the edge $e \in P$ such that $\cov(e)$ is minimal. First, we make sure that the values $\cov(e^P_{min})$ are known to all vertices.

\begin{claim} \label{claim_cov_min}
In $O(D+\nfrag + \dfrag)$ time all vertices learn the values $\{e^P_{min}, \cov(e^P_{min})\}$ for all highways $P$.
\end{claim}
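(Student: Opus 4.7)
The plan is straightforward and combines two ingredients already established in the preceding sections: local aggregation inside each fragment to identify the minimum-cover highway edge, followed by a standard upcast/broadcast of $O(\nfrag)$ pieces of information over a BFS tree.

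First I would compute, for each highway $P$ of a fragment $F_P$, the pair $\{e^P_{\min}, \cov(e^P_{\min})\}$ locally. Each tree edge $e$ already knows its own value $\cov(e)$ by Claim~\ref{claim_learn_cov}. Since the highway $P$ is a path of length $O(\dfrag)$ inside the fragment $F_P$, we can run a single aggregate computation along $P$ (in the reverse-orientation sense of Claim~\ref{claim_pipeline_fragment}) whose associative operation is ``take the edge of minimum cover value.'' This computation takes $O(\dfrag)$ rounds, and because distinct fragments are edge-disjoint, we perform it in parallel across all fragments, so after $O(\dfrag)$ rounds every fragment root $r_P$ knows $\{e^P_{\min}, \cov(e^P_{\min})\}$.

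Next I would broadcast these $\nfrag = O(\sqrt{n})$ pairs globally. Using a BFS tree of $G$ (of depth $O(D)$), each $r_P$ forwards its $O(\log n)$-bit message upward. Standard pipelined upcast of $\nfrag$ messages to the BFS root takes $O(D + \nfrag)$ rounds, and a pipelined broadcast from the root back down to all vertices takes another $O(D + \nfrag)$ rounds. Hence every vertex of $G$ learns all pairs $\{e^P_{\min}, \cov(e^P_{\min})\}$.

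Summing, the total round complexity is $O(\dfrag) + O(D + \nfrag) = O(D + \nfrag + \dfrag)$, as claimed. There is no real obstacle here; the only point to be careful about is that the in-fragment aggregate must be along the highway specifically (not the whole fragment), which is handled by orienting the aggregation toward $r_P$ as in Claim~\ref{claim_pipeline_fragment}, and that the final dissemination step uses a global BFS tree rather than staying inside fragments.
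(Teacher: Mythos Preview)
Your proposal is correct and matches the paper's proof essentially step for step: an in-fragment aggregate to find $e^P_{\min}$ in $O(\dfrag)$ rounds, followed by pipelined upcast and broadcast of the $\nfrag$ pairs over a BFS tree in $O(D+\nfrag)$ rounds. The only quibble is terminological: aggregating toward $r_P$ is the \emph{standard} orientation in Claim~\ref{claim_pipeline_fragment}, not the reverse one, but this has no bearing on correctness.
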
  

\begin{proof}
First, in each highway, finding the edge $e^P_{min}$ requires an aggregate computation inside the fragment, which takes $O(\dfrag)$ time.
Next we send this information to all vertices.
As there is only one edge $e^P_{min}$ per highway, and there are $\nfrag$ highways, this only requires collecting and broadcasting $\nfrag$ information in a BFS tree, which takes $O(D+\nfrag)$ time.
\end{proof}

\begin{lemma} \label{lemma_nh_no_edge}
Let $P_0$ be a non-highway, and $P_1$ be a highway in a different fragment, such that there are no edges between $T(P_0^\downarrow)$ and $F_{P_1}$. Then if each edge $e' \in P_0$ knows the values $\cov(e'),\extcov(e',P_1)$, as well as the values $\{e^{P_1}_{min}, \cov(e^{P_1}_{min})\}$, using one aggregate and broadcast computations in $P_0$, all vertices in $P_0$ learn about the values $\{e',e,\cut(e',e)\}$ for edges $e' \in P_0, e \in P_1$ such that $\cut(e',e)$ is minimal.
\end{lemma}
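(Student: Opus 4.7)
The plan is to exploit the no-edge hypothesis to decouple the minimization over $P_0$ and over $P_1$, so that finding the optimal pair reduces to a single aggregate/broadcast inside $P_0$.

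First, I would observe that under the hypothesis, the local cross-term $\cov_F(e',e)$ vanishes for every $e'\in P_0$ and every $e\in P_1$. Indeed, by the definition given just before Claim~\ref{claim_covF_nh}, $\cov_F(e',e)$ counts the weights of non-tree edges having one endpoint in $T(P_0^{\downarrow})$ and the other in $F_{P_1}$; since no such edge exists, $\cov_F(e',e)=0$. Combining this with Claim~\ref{cover_non_highway_highway_new} gives $\cov(e',e)=\extcov(e',P_1)$, a quantity that depends only on $e'$ and on the path $P_1$, not on the particular edge $e\in P_1$. Plugging into Claim~\ref{claim_cover},
\[
\cut(e',e)\;=\;\cov(e')+\cov(e)-2\,\extcov(e',P_1)
\;=\;\Bigl(\cov(e')-2\,\extcov(e',P_1)\Bigr)+\cov(e).
\]

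Since the two summands in parentheses depend on disjoint variables, the joint minimum decomposes as
\[
\min_{e'\in P_0,\,e\in P_1}\cut(e',e)\;=\;\min_{e'\in P_0}\bigl(\cov(e')-2\,\extcov(e',P_1)\bigr)\;+\;\min_{e\in P_1}\cov(e),
\]
where the second minimum is achieved exactly at the globally known edge $e^{P_1}_{\min}$ with value $\cov(e^{P_1}_{\min})$. Thus the minimizer in $P_1$ is fixed and known, and it remains only to find the minimizer in $P_0$ of the locally computable quantity $f(e'):=\cov(e')-2\,\extcov(e',P_1)$, which each edge $e'\in P_0$ can evaluate without communication from the three inputs it is assumed to hold.

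Finally, I would implement this using the basic path subroutines of Section~\ref{sec:preliminaries_paths}: one aggregate (convergecast) computation along $P_0$ takes the minimum of $f(e')$ over $e'\in P_0$ together with the identity of the minimizing edge $e^{P_0}_\ast$, and then one broadcast along $P_0$ distributes the triple $\{e^{P_0}_\ast,\,e^{P_1}_{\min},\,f(e^{P_0}_\ast)+\cov(e^{P_1}_{\min})\}$ back to every vertex of $P_0$. By Claim~\ref{claim_pipeline}, these two computations together cost time proportional to the diameter of $P_0$ plus $O(1)$, which is within the claimed bound. No step of this plan presents a genuine obstacle beyond bookkeeping; the only non-obvious point is the vanishing of $\cov_F$, and that follows directly from the definition once the no-edge hypothesis is invoked.
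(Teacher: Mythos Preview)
Your proposal is correct and follows essentially the same approach as the paper: use the no-edge hypothesis to kill $\cov_F(e',e)$, apply Claims~\ref{claim_cover} and~\ref{cover_non_highway_highway_new} to reduce $\cut(e',e)$ to $\bigl(\cov(e')-2\,\extcov(e',P_1)\bigr)+\cov(e)$, observe the two terms decouple so the $P_1$-minimizer is the known $e^{P_1}_{\min}$, and then run one aggregate plus one broadcast along $P_0$ to find and disseminate the $P_0$-minimizer.
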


\begin{proof}
Let $e' \in P_0, e \in P_1$.
From Claims \ref{claim_cover} and \ref{cover_non_highway_highway_new}, we have $$\cut(e,e') = \cov(e) + \cov(e') - 2\cov(e,e'),$$ $$\cov(e',e) = \cov_{F}(e',e) + \extcov(e',P_1).$$
Now if there are no edges between $T(P_0^\downarrow)$ and $F_{P_1}$, from the definition $\cov_{F}(e',e) = 0$ for any pair of edges $e' \in P_0, e \in P_1$, as this is the sum of costs of edges with one endpoint in $T(P_0^\downarrow)$ and the second in $F_{P_1}$.
Hence, in this case, we have $$\cut(e,e') = \cov(e) + \cov(e') - 2\extcov(e',P_1).$$ Note that the expression $\cov(e') - 2\extcov(e',P_1)$ does not depend on the specific choice of $e \in P_1$, hence to minimize the expression $\cut(e',e)$ for a specific $e' \in P_0$, we just need to find an edge $e \in P_1$ where $\cov(e)$ is minimal, this is the edge $e^{P_1}_{min}$. As each edge $e' \in P_0$ knows the values $e^{P_1}_{min},\cov(e^{P_1}_{min}), \cov(e'),\extcov(e',P_1)$ it can compute $\cut(e',e)$ for an edge $e \in P_1$ such that $\cut(e',e)$ is minimal. To find the minimum over all choices of $e'$ we just need one aggregate computation in $P_0$ to find the minimum value computed, we can then broadcast the information to let all vertices in $P_0$ learn the values $\{e',e,\cut(e',e)\}$ for edges $e' \in P_0, e \in P_1$ such that $\cut(e',e)$ is minimal.
\end{proof}

\textbf{Case 2: There is an edge between $T(P_0)$ and $F_{P_1}.$}

\begin{lemma} \label{lemma_paths_nh_h}
Let $P_0$ be a non-highway in the fragment $F_{P_0}$ and let $P_1$ be the fragment highway of a different fragment $F_{P_1}$, and assume that there is an edge $f$ between $T(P_0^\downarrow)$ and $F_{P_1}$. 
Let $E_0 \subseteq P_0$ be a set of edges in $P_0$ we compare to $P_1$, and $E_1 \subseteq P_1$ be a set of edges in $P_1$ that we compare to $P_0$.
Additionally, assume that at the beginning of the computation the following information is known:
\begin{itemize}
\item All vertices in $T(P_0)$ know the identity of the edge $f$, The identity of all the edges in $E_0$, and for each edge $e' \in E_0$, the values $\cov(e'),\extcov(e',P_1)$.
\item All vertices in $F_{P_1}$ know the identity of the edge $f$, the identity of all the edges in $E_1$, and for each edge $e \in P_1$, the value $\cov(e)$.
\end{itemize}

We can compute the values $\{\cut(e',e)\}_{e' \in P_0, e \in P_1}$ in the following two ways.
\begin{enumerate}
\item In $O(|E_1|)$ aggregate and broadcast computations in $T(P_0)$, where at the end of the computation each edge $e' \in E_0$ would know the values $\cut(e',e)$ for all edges $e \in E_1$. \label{P0_compute}
\item In $O(|E_0|)$ aggregate and broadcast computations in $F_{P_1}$, where at the end of the computation each edge $e \in E_1$ would know the values $\cut(e',e)$ for all edges $e' \in E_0$. \label{P1_compute}
\end{enumerate}
\end{lemma}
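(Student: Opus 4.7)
\medskip
\noindent
\textbf{Proof plan for Lemma \ref{lemma_paths_nh_h}.}
The plan is to reduce the computation of each $\cut(e',e)$ to its ingredient cover values via Claims \ref{claim_cover} and \ref{cover_non_highway_highway_new}, and then to realize those ingredients through pipelined aggregate and broadcast computations along the tree structure that is ``local'' to whichever side is doing the collection. Concretely, for every $e'\in P_0$ and $e\in P_1$ we have
\[
\cut(e',e) \;=\; \cov(e')+\cov(e)-2\cov(e',e) \;=\; \cov(e')+\cov(e)-2\bigl(\cov_F(e',e)+\extcov(e',P_1)\bigr).
\]
So each endpoint computing $\cut(e',e)$ only needs four quantities: $\cov(e')$, $\cov(e)$, $\extcov(e',P_1)$, and $\cov_F(e',e)$. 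The first three are already in place (for $e'\in P_0$ side) or can be shipped across $f$ cheaply; the fourth is obtained through the aggregate computations of Claims \ref{claim_covF_nh} and \ref{claim_covF_h}.

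For part \ref{P0_compute}, I would iterate (in pipelined fashion) over the $|E_1|$ edges of $E_1$. For each $e\in E_1$, the endpoint of $f$ lying in $F_{P_1}$ already knows $\cov(e)$; it ships the pair $(e,\cov(e))$ across $f$ into $T(P_0)$, where one broadcast in $T(P_0)$ disseminates it to every $e'\in E_0$. In parallel (i.e., pipelined in the same schedule) we invoke Claim \ref{claim_covF_nh} with this specific $e$: one aggregate in $T(P_0)$ lets every $e'\in P_0$ learn $\cov_F(e',e)$. Since each $e'\in E_0$ already holds $\cov(e')$ and $\extcov(e',P_1)$ by hypothesis, it can then assemble $\cut(e',e)$ locally. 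Each $e\in E_1$ thus costs $O(1)$ broadcasts and $O(1)$ aggregates on the tree $T(P_0)$, so the total pipelined cost is $O(|E_1|)$ broadcast-plus-aggregate computations in $T(P_0)$, as required by Claim \ref{claim_pipeline}. Part \ref{P1_compute} is entirely symmetric: for each $e'\in E_0$, ship $(e',\cov(e'),\extcov(e',P_1))$ across $f$ into $F_{P_1}$, broadcast it, invoke Claim \ref{claim_covF_h} via a single aggregate in $F_{P_1}$ to let every $e\in P_1$ learn $\cov_F(e',e)$, and combine; the aggregate-on-a-fragment machinery of Claim \ref{claim_pipeline_fragment} pipelines these $|E_0|$ rounds.

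The main subtlety, and what I expect to be the only real obstacle to get right, is verifying that all computations can actually be \emph{pipelined} without blowing up the round count. The aggregate for $\cov_F(e',e)$ supplied by Claim \ref{claim_covF_nh} (resp.\ Claim \ref{claim_covF_h}) is parameterized by a fixed choice of $e$ (resp.\ $e'$), so one must be careful that the information attached to each iteration (the edge identity $e$ together with $\cov(e)$, all $O(\log n)$ bits by hypothesis on weights) travels alongside the aggregate without colliding with the other iterations' messages. This is exactly the scenario handled by the pipelining statements (Claims \ref{claim_pipeline} and \ref{claim_pipeline_fragment}): $c_1$ broadcasts plus $c_2$ aggregates in a single tree take $O(D_{T'}+c_1+c_2)$ rounds. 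Plugging in $c_1,c_2=O(|E_1|)$ for part \ref{P0_compute} and $c_1,c_2,c_3=O(|E_0|)$ for part \ref{P1_compute}, and absorbing the $O(\dfrag)$ diameter term into the hypothesis that both $T(P_0)$ and $F_{P_1}$ have diameter $O(\dfrag)$, gives the claimed bounds.
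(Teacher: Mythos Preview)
Your proposal is correct and follows essentially the same approach as the paper: decompose $\cut(e',e)$ via Claims \ref{claim_cover} and \ref{cover_non_highway_highway_new}, ship the missing scalar data ($\cov(e)$ or $\cov(e'),\extcov(e',P_1)$) across the bridging edge $f$, and then obtain $\cov_F(e',e)$ by a single aggregate per fixed edge using Claims \ref{claim_covF_nh} and \ref{claim_covF_h}, all pipelined. The paper's proof is exactly this, phrased slightly more tersely; your explicit invocation of Claims \ref{claim_pipeline} and \ref{claim_pipeline_fragment} to justify the pipelining is a bit more careful than the paper, but the argument is the same.
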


\begin{proof}
We start by proving Case \ref{P0_compute} where the computations are done in $T(P_0)$.
First, we use the edge $f$ that has an endpoint in $T(P_0)$ and an endpoint in $F_{P_1}$ to pass information from $F_{P_1}$ to $T(P_0)$.
As $f$ has an endpoint in $F_{P_1}$ it knows the values $\{e,\cov(e)\}_{e \in E_1}$ and can pass them to all vertices in $T(P_0)$ using $O(|E_1|)$ aggregate and broadcast computations in $T(P_0)$. Let $e' \in E_0, e \in E_1.$
From Claims \ref{claim_cover} and \ref{cover_non_highway_highway_new}, we have $$\cut(e,e') = \cov(e) + \cov(e') - 2\cov(e,e'),$$ $$\cov(e',e) = \cov_{F}(e',e) + \extcov(e',P_1).$$ An edge $e' \in E_0$ already knows $\cov(e')$ and $\extcov(e',P_1),$ and from the broadcast it also knows the value $\cov(e)$ for all edges $e \in E_1.$ Hence, to compute the value $\cut(e',e)$ for a specific edge $e \in E_1$, it only needs to compute $\cov_{F}(e',e)$. By Claim \ref{claim_covF_nh}, all edges $e' \in P_0$ can compute the value $\cut(e',e)$ for a fixed $e$ using one aggregate computation in $T(P_0).$ To compute this value for all edges $e \in E_1$, we pipeline $|E_1|$ such aggregate computations, which concludes the proof of Case \ref{P0_compute}.   

We next discuss Case \ref{P1_compute}, where the computations are done in $F_{P_1}.$
Here we use $f$ to pass the information $\{e',\cov(e'),\extcov(e',P_1)\}_{e' \in E_0}$ from $T(P_0)$ to $F_{P_1}$, which requires $O(|E_0|)$ broadcast and aggregate computation in $F_{P_1}$. Now, after this each edge $e \in E_1$ knows the value $\cov(e)$ as well as the values $\{e',\cov(e'),\extcov(e',P_1)\}_{e' \in E_0}$. As discussed in the proof of Case \ref{P0_compute}, the only information missing to compute $\cut(e',e)$ is $\cov_{F}(e',e)$. Given an edge $e' \in E_0$, all edges $e \in P_1$ can compute $\cut(e',e)$ using one aggregate computation in $F_{P_1}$ by Claim \ref{claim_covF_h}. To do so for all edges $e' \in E_0$, we pipeline $|E_0|$ such aggregate computations, which completes the proof.
\end{proof}

\subsection{$P'$ and $P$ are highways}

Let $e' \in P', e \in P$ be two tree edges in the highways $P',P$ such that $F_{P'},F_P$ are the fragments of $P'$ and $P$, respectively. The value $\cov(e',e)$ is broken up to the following different parts. See Figure \ref{highway_highway} for an illustration.

\begin{enumerate}
\item The cost of edges that cover entirely the highways $P$ and $P'$, with endpoints outside $F_{P'} \cup F_P$: $\extcov(P',P).$ 
\item The cost of edges with one endpoint in $F_{P'}$ and one endpoint outside $F_{P'} \cup F_P$ that cover $e'$ and the whole highway $P$: $\extcov(e',P).$
\item The cost of edges with one endpoint in $F_{P}$ and one endpoint outside $F_{P'} \cup F_P$ that cover $e$ and the whole highway $P'$: $\extcov(e,P').$
\item The cost of edges that cover $e',e$ and have endpoints in both $F_P$ and $F_{P'}$: $\cov_{F}(e',e)$.
\end{enumerate}

To see that these are all the options we use the structure of the decomposition. 

\setlength{\intextsep}{0pt}
\begin{figure}[h]
\centering
\setlength{\abovecaptionskip}{-2pt}
\setlength{\belowcaptionskip}{6pt}
\includegraphics[scale=0.4]{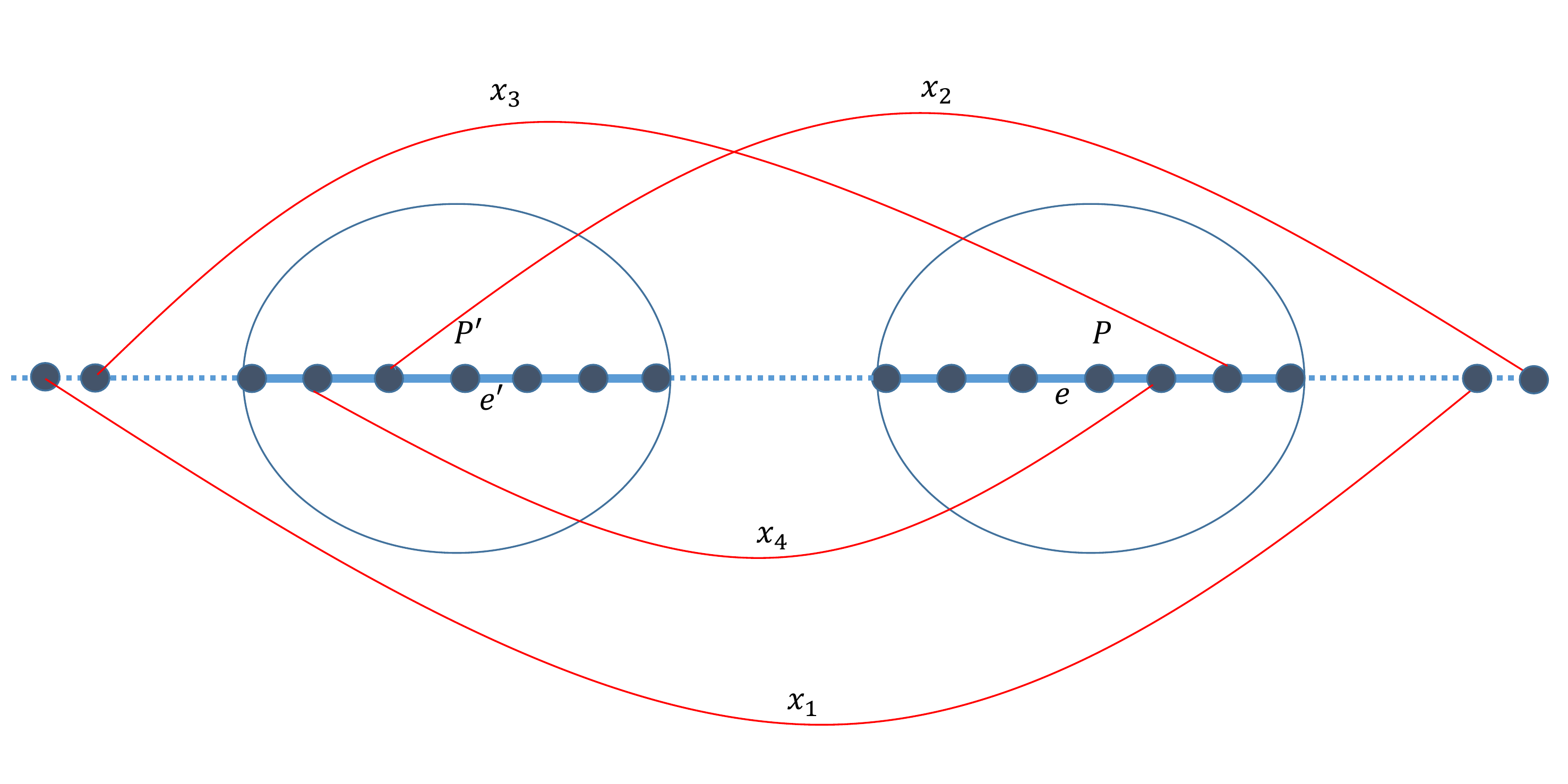}
 \caption{\small Example of two highways $P'$ and $P$, for the special case that the tree is a path, and $P$ and $P'$ are subpaths of it. The edges $x_1,x_2,x_3,x_4$ are counted in $\extcov(P',P),\extcov(e',P),\extcov(e,P'),\cov_F(e',e)$, respectively.}
\label{highway_highway}
\end{figure}

In Appendix \ref{sec:app_short_paths_hh}, we show that these are indeed all options.

\begin{restatable}{claim}{covHighway}\label{claim_cov_highways}
Let $e' \in P', e \in P$ be two tree edges in the highways $P',P$. Then, $$\cov(e',e) = \extcov(P',P) + \extcov(e',P) + \extcov(e,P') + \cov_{F}(e',e).$$
\end{restatable}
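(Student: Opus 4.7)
The plan is to decompose $\ECov(e', e)$ (the set of non-tree edges covering both $e'$ and $e$) into four pairwise disjoint classes matching the four terms on the right-hand side, and then sum weights. The key structural fact I would exploit is that every fragment $F$ is a connected subtree of $T$ whose only vertices shared with other fragments are $r_F$ and $d_F$ (Section \ref{ssec:-decompFragment}). Two immediate consequences are: for any two vertices lying strictly inside the same fragment, the tree path between them is contained in that fragment; and any tree path that both enters and leaves a fragment $F$ must do so through the boundary pair $\{r_F, d_F\}$, hence must traverse the full highway of $F$ (the unique $r_F$-to-$d_F$ path inside $F$).

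Given this, I would case-split on the location of the two endpoints $u, v$ of a non-tree edge $x = \{u, v\} \in \ECov(e', e)$ relative to the fragments $F_{P'}$ and $F_P$. Two cases are ruled out immediately: if both endpoints lie strictly inside $F_{P'}$, or both strictly inside $F_P$, then the unique tree path between $u$ and $v$ is confined to that single fragment, so it cannot simultaneously contain $e' \in P' \subseteq F_{P'}$ and $e \in P \subseteq F_P$, contradicting $x \in \ECov(e', e)$. The remaining four cases correspond exactly to the four terms on the right-hand side: (i) both endpoints outside $F_{P'} \cup F_P$ forces the path to enter and exit both fragments at their boundaries, hence to traverse both highways in full, which is $\extcov(P', P)$; (ii) exactly one endpoint inside $F_{P'}$ and the other outside $F_{P'} \cup F_P$ still forces the path to enter and exit $F_P$ through its boundary (to reach $e$), hence to traverse all of $P$, matching $\extcov(e', P)$; (iii) is symmetric and gives $\extcov(e, P')$; (iv) one endpoint inside $F_{P'}$ and the other inside $F_P$ matches the definition of $\cov_F(e', e)$.

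The one subtlety I anticipate is the treatment of the shared boundary vertices $r_F$ and $d_F$, which can belong to several fragments simultaneously, so the ``strictly inside'' classification of an endpoint is not quite automatic when an endpoint happens to sit on a boundary. This is resolved by fixing any tie-breaking rule (e.g., assigning a boundary vertex to its parent fragment) and then observing that what ultimately governs each of the four ``covers'' properties is the location of the tree path from $u$ to $v$, not the fragment label attached to the endpoints themselves; the case analysis above remains valid under any consistent choice. Once the classification is pinned down, the four classes partition $\ECov(e', e)$, and summing the edge weights yields the stated identity.
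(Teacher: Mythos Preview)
Your proposal is correct and essentially mirrors the paper's own proof: both arguments case-split on the location of the endpoints of a covering edge relative to $F_{P'}$ and $F_P$, use the fragment boundary property (only $r_F,d_F$ connect to other fragments) to argue that an edge with both endpoints outside a fragment must cover its entire highway, and rule out the both-endpoints-in-one-fragment case because the tree path would then be confined to that fragment. Your remark about the boundary vertices $r_F,d_F$ is a small extra care that the paper leaves implicit.
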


In Appendix \ref{sec:app_short_paths_hh}, we explain how we compute the different ingredients in the expression $\cov(e,e').$ Intuitively, $\cov_{F}(e',e)$ can be computed by an aggregate computation in $F_{P}$ or $F_{P'}$, $\extcov(e',P)$ can be computed by an aggregate computation inside $F_{P'}$, and $\extcov(e,P')$ can be computed using an aggregate computation inside $F_{P}$.
The value $\extcov(P,P')$ on the other hand is computed using an aggregate computation over a BFS tree. Note that this requires \emph{global} communication, as the edges counted here are not known to vertices in $F_P \cup F_{P'}$. Hence, to get an efficient algorithm, in our final algorithm we need to make sure not to compare too many different pairs of fragment highways.


\begin{restatable}{claim}{claimCovH}\label{clm:two-short-highway}
Let $P,P'$ be two highways. Using one aggregate and one broadcast computation in a BFS tree, all vertices can learn $\extcov(P,P')$. Computing this value for $k$ different pairs takes $O(D+k)$ time.
\end{restatable}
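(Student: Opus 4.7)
The plan is to observe that every non-tree edge can locally decide whether it contributes to $\extcov(P,P')$, so that a single summation over a (precomputed) BFS tree of $G$ produces the value at the root and a single broadcast distributes it. Recall that a non-tree edge $x=\{u,v\}$ with weight $w(x)$ contributes to $\extcov(P,P')$ iff (a) the unique $u$-to-$v$ tree path in $T$ contains every edge of both $P$ and $P'$, and (b) both $u$ and $v$ lie outside $F_P\cup F_{P'}$.

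For condition (a), the $u$-to-$v$ tree path contains all of the highway $P$ iff it contains both the top edge and the bottom edge of $P$ (and analogously for $P'$); by Claim \ref{claim_LCA_labels}, each endpoint of $x$ can test these four covering relations purely from LCA labels, given only the $O(\log n)$-bit labels of the top and bottom vertices of $P$ and of $P'$. For condition (b), every vertex already knows the identities of its own fragments (Lemma \ref{lemma:strong_decomp_construct}), so once it knows the IDs of $F_P$ and $F_{P'}$ it can locally decide whether it lies in $F_P\cup F_{P'}$. In a single preliminary round of local communication the two endpoints of $x$ exchange one bit each about (b), and we attribute $x$ to a canonical endpoint (say, the smaller-ID one), which then holds either $w(x)$ or $0$ as its contribution.

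Thus, once the $O(\log n)$-bit description of the pair $(P,P')$ is known to every vertex, one sum-aggregation on the BFS tree delivers $\sum_x w(x)\cdot\mathbf{1}[x\text{ contributes}]=\extcov(P,P')$ to the root, and one broadcast returns it to every vertex, establishing the first part of the claim in $O(D)$ rounds.

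For $k$ different pairs we pipeline. First we broadcast the $k$ pair descriptions, each of $O(\log n)$ bits, down the BFS tree in $O(D+k)$ rounds; each canonical endpoint of a non-tree edge then computes its $k$-entry contribution vector from purely local information. Second we perform $k$ simultaneous sum-aggregations in $O(D+k)$ rounds by standard pipelining on the BFS tree (cf.\ Claim \ref{claim_pipeline}), since each message in the pipeline carries one $O(\log n)$-bit partial sum tagged with a pair index. Finally we broadcast the $k$ computed values in $O(D+k)$ rounds. The only things to verify are that every message remains $O(\log n)$ bits (immediate, since weights, fragment IDs, and vertex labels all fit in $O(\log n)$ bits) and that contributions across pairs are never conflated, which is handled by treating each pair's aggregation as an independent pipelined channel; the argument has no real obstacle beyond these bookkeeping checks.
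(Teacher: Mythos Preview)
Your proof is correct and follows essentially the same approach as the paper: each non-tree edge locally tests, via LCA labels on the highest and lowest edges of $P$ and $P'$ together with fragment membership, whether it contributes to $\extcov(P,P')$; a canonical endpoint injects its weight into a single sum-aggregation on the BFS tree, followed by a broadcast, and the $k$-pair case is handled by pipelining. Your treatment is slightly more explicit about distributing the pair descriptions before aggregating, but the argument is otherwise the same.
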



\begin{restatable}{claim}{ClaimCovHF}\label{claim_covl}
Let $P,P'$ be highways of the fragments $F_P$ and $F_{P'}$, respectively, and let $e \in P$.
Using one aggregate computation in $F_{P'}$, all edges $e' \in P'$ can compute the value $\cov_{F}(e',e)$.
\end{restatable}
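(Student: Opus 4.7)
My plan is to reduce $\cov_F(e',e)$, for all $e'\in P'$ simultaneously, to a sum over the non-tree edges crossing between $F_{P'}$ and $F_P$, and then deliver these sums to each edge of $P'$ via a single aggregation sweep inside $F_{P'}$. The first step is local: each vertex $u\in F_{P'}$ computes $s_u$, defined as the total weight of its incident non-tree edges $\{u,v\}$ such that $v\in F_P$ and $\{u,v\}$ covers $e$. The predicate ``$v\in F_P$'' is checked after one round of preprocessing in which each vertex learns the fragment identity of every neighbour, and the ``covers $e$'' predicate is decidable from the $O(\log n)$-bit label of $e$ via Claim \ref{claim_LCA_labels}, which we assume the vertices of $F_{P'}$ have already received.

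The key structural observation is that, for a crossing edge $\{u,v\}$ with $u\in F_{P'}$ and $v\in F_P$, Claim \ref{claim_subtree} says that $\{u,v\}$ covers $e'\in P'$ iff exactly one of $u,v$ lies in $\dw{e'}$. Because $P$ and $P'$ are highway paths of two distinct fragments, the skeleton tree from Lemma \ref{lemma:strong_decomp_construct} forces one of two uniform configurations: if $P$ is orthogonal to $P'$ or is an ancestor of $P'$ on a common root-to-leaf path, then no vertex of $F_P$ lies in $\dw{e'}$ for \emph{any} $e'\in P'$, whereas if $P$ is a descendant of $P'$, then \emph{every} vertex of $F_P$ lies in $\dw{e'}$ for every $e'\in P'$. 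Consequently, a crossing edge $\{u,v\}$ covers $e'$ iff $u\in\dw{e'}$ in the former case, and iff $u\notin\dw{e'}$ in the latter; the direction of the dichotomy depends only on the pair $(P,P')$ and is known to every vertex of $F_{P'}$ from the skeleton tree.

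In the first configuration $\cov_F(e',e)=\sum_{u\in F_{P'}\cap\dw{e'}} s_u$, which is exactly the quantity delivered to $e'$ by a standard convergecast in $F_{P'}$ (summing $s_u$-values from the descendant side up to the root of $F_{P'}$); in the second configuration the required value is the complementary sum $\sum_{u\in F_{P'}\setminus\dw{e'}} s_u$, which is what the reverse-direction aggregation sweep along $P'$ (from the root of $F_{P'}$ down toward its descendant vertex, gathering the attached non-highway side-trees along the way, in the sense of Section \ref{sec:preliminaries_paths}) delivers to $e'$. Either possibility is one aggregate computation over $F_{P'}$ in the sense of Claim \ref{claim_pipeline_fragment}, and because different fragments are edge-disjoint these sweeps can later be pipelined in parallel over many choices of $(P,P')$. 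The main obstacle---really the only place that needs care---is the case analysis above: one must verify that for a fixed pair $(P,P')$ the membership of $F_P$ in the subtrees $\dw{e'}$ is uniformly true or uniformly false over \emph{all} $e'\in P'$, so that a single aggregation direction is adequate. Once that is in place, the remainder is a direct combination of the LCA-label technology (Claim \ref{claim_LCA_labels}) and the tree-aggregation primitives developed in Section \ref{sec:preliminaries_paths}.
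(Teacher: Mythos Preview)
Your proposal is correct and follows essentially the same approach as the paper's proof: a case analysis on the relative position of $P$ and $P'$ in the skeleton tree, with a standard convergecast in $F_{P'}$ when $P$ is orthogonal to or above $P'$, and a reverse-direction aggregation (treating $d_{P'}$ as root) when $P$ is below $P'$. Your presentation is slightly more streamlined in that you unify the orthogonal and ancestor cases by the single observation that $F_P\cap\dw{e'}=\emptyset$ uniformly over $e'\in P'$, whereas the paper treats them as two separate cases before reaching the same conclusion; but the content and the mechanics of the aggregation are identical.
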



\begin{restatable}{claim}{ClaimCoveP}\label{claim_cov_eP}
In $O(\dfrag + \nfrag)$ time, all edges $e$ that are in a highway can compute the values $\extcov(e,P)$ for all highways $P$.
\end{restatable}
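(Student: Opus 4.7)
My plan is to mimic the strategy that was used for the non-highway analogue (Claim \ref{claim_cov_eP_nh}). Fix a highway edge $e$ on the highway of fragment $F_e = F$ and a target highway $P$ in a different fragment $F_P$. By definition, $\extcov(e,P)$ aggregates the weights $w(x)$ of non-tree edges $x=\{u,v\}$ satisfying: (i) exactly one endpoint, call it $u$, lies in $F$; (ii) $v \notin F \cup F_P$; (iii) $x$ covers $e$; and (iv) $x$ covers every edge of $P$, which is equivalent to covering the top and bottom highway edges of $P$. Every vertex $u\in F$ can check each of (i)--(iv) locally for each of its incident non-tree edges $x=\{u,v\}$: conditions (i) and (ii) reduce to comparing $v$'s label with $r_F, d_F, r_{F_P}, d_{F_P}$, and conditions (iii)--(iv) are LCA checks from Claim \ref{claim_LCA_labels}. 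This uses the fact, given by Lemma \ref{lemma:strong_decomp_construct}, that $u$ knows its own fragment and the whole skeleton tree (and thus the ids/labels of $r_{F_P}$ and $d_{F_P}$ for every fragment $F_P$).

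The next step is to turn these local contributions into $\extcov(e,P)$ for every highway edge $e \in F$ simultaneously. Since $u\in F$ and $v\notin F$, the unique tree path from $u$ to $v$ exits $F$ either upward through $r_F$ or downward through $d_F$. In the first case the highway edges of $F$ covered by $x$ form a contiguous suffix of the highway (from the highest edge down to the edge just above the LCA of $u,v$ inside $F$); in the second case they form a contiguous prefix (from the lowest edge up through some edge determined by $u$'s position). Thus, once $P$ is fixed, each vertex $u$ can encode its contribution as an interval on the highway of $F$ with weight $w(x)$. Two convergecasts along $F$, one rooted at $r_F$ and one rooted at $d_F$ in the sense of Claim \ref{claim_pipeline_fragment}, accumulate prefix/suffix sums of these weighted intervals, and after combining them each highway edge $e\in F$ knows its $\extcov(e,P)$ contribution coming from non-tree edges with one endpoint in $F$. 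This takes $O(\dfrag)$ rounds for one choice of $P$.

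Finally, I iterate the above over all $O(\nfrag)$ target highways $P$. Each individual instance communicates only $O(\log n)$ bits per message (an interval endpoint and a partial sum), so the $\nfrag$ instances inside a single fragment pipeline via Claim \ref{claim_pipeline_fragment} in $O(\dfrag + \nfrag)$ total rounds, and the edge-disjointness of distinct fragments lets all fragments run their aggregations in parallel. Combining the in-fragment results yields $\extcov(e,P)$ at every highway edge $e$ for every highway $P\neq P_{F}$ within $O(\dfrag+\nfrag)$ rounds, as claimed.

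The main obstacle I expect is the careful bookkeeping in the interval step: correctly identifying, in each of the "exit through $r_F$" and "exit through $d_F$" cases, exactly which highway edges of $F$ are covered by $x$, and making sure that the two convergecasts combine these without double-counting the edges at the interval boundaries. Given the structural observation that for fixed $P$ the covered highway edges are a contiguous sub-path of $F$'s highway, this reduces to standard interval-sum aggregation on a path, which fits cleanly into the pipelined aggregate computations provided by Claim \ref{claim_pipeline_fragment}.
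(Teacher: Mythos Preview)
Your proposal is correct and follows essentially the same approach as the paper: one aggregate computation inside each fragment per target highway $P$, pipelined over all $\nfrag$ targets, with all fragments running in parallel. The paper streamlines your two-convergecast setup by observing that once $P$ is fixed, its position relative to $F$ in the skeleton tree (orthogonal/above versus below) forces \emph{every} qualifying non-tree edge to exit $F$ through the same endpoint ($r_F$ or $d_F$), so a single forward or reverse aggregate (as in Claim~\ref{claim_covl}) suffices and your double-counting worry disappears.
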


\subsubsection*{Comparing two paths using a routing trick}
We next explain how given two highways $P_0,P_1$, we compute $\cut(e',e)$ for edges $e' \in P_0, e \in P_1$. 
Again, we break into two cases according to the existence of an edge between $F_{P_0}$ and $F_{P_1}$. In the case there is no edge, we prove that the cut value can be broken up to two parts, one only depends on information known to $P_0$ and one only depends on information known to $P_1$, we exploit it to limit the amount of global communication required.\\[-7pt]

\textbf{Case 1: there is no edge between $F_{P_0}$ and $F_{P_1}$.}

\begin{lemma} \label{lem:paths_h_h-noedge}
Let $P_0,P_1$ be two disjoint highways, and assume that there is no edge between $F_{P_0}$ and $F_{P_1}$. 
Then, given the values $\extcov(P_0,P_1)$, using one aggregate computation in $F_{P_0}$ and in $F_{P_1}$, 
and by communicating $O(\log n)$ information over a BFS tree, all vertices learn the values $e',e, \cut(e',e)$ for edges $e' \in P_0, e \in P_1$ such that $\cut(e',e)$ is minimal.
\end{lemma}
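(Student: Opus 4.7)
The plan is to exploit the absence of edges between $F_{P_0}$ and $F_{P_1}$ to make the cut value \emph{separable} into a term that depends only on $e'$ and a term that depends only on $e$ (plus a global constant), so the minimization can be carried out independently inside the two fragments.

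First, I would combine Claim~\ref{claim_cover} with Claim~\ref{claim_cov_highways}. For $e' \in P_0$ and $e \in P_1$ we have
\[
\cut(e',e) \;=\; \cov(e') + \cov(e) - 2\bigl(\extcov(P_0,P_1) + \extcov(e',P_1) + \extcov(e,P_0) + \cov_F(e',e)\bigr).
\]
Since $\cov_F(e',e)$ counts edges with one endpoint in $F_{P_0}$ and one endpoint in $F_{P_1}$, and no such edge exists by assumption, $\cov_F(e',e)=0$ for every pair. Rearranging,
\[
\cut(e',e) \;=\; \underbrace{\bigl(\cov(e') - 2\extcov(e',P_1)\bigr)}_{A(e')} \;+\; \underbrace{\bigl(\cov(e) - 2\extcov(e,P_0)\bigr)}_{B(e)} \;-\; 2\extcov(P_0,P_1).
\]
Thus the minimizer is any pair $(e',e)$ with $e' = \arg\min_{e'\in P_0} A(e')$ and $e = \arg\min_{e\in P_1} B(e)$.

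Next, I would argue that both sides already have all the information they need. The values $\cov(\cdot)$ are known to every tree edge by Claim~\ref{claim_learn_cov}, and every highway edge knows the values $\extcov(\cdot,P)$ for every highway $P$ by Claim~\ref{claim_cov_eP}. So each edge $e' \in P_0$ can locally compute $A(e')$, and each edge $e \in P_1$ can locally compute $B(e)$. A single aggregate computation inside $F_{P_0}$ (taking the minimum of $A(e')$ and remembering the achieving edge $e'^*$) gives the best candidate on the $P_0$-side; symmetrically an aggregate computation inside $F_{P_1}$ produces $e^*$. By Claim~\ref{claim_pipeline_fragment} these run in $O(\dfrag)$ rounds and in parallel, since $F_{P_0}$ and $F_{P_1}$ are edge-disjoint.

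Finally, I would ship the two minima $(e'^*, A(e'^*))$ and $(e^*, B(e^*))$ up a BFS tree. Given the assumed value of $\extcov(P_0,P_1)$, which is also made global via standard BFS-tree communication (e.g.\ via Claim~\ref{clm:two-short-highway}), any vertex holding both minima can form $\cut(e'^*, e^*) = A(e'^*) + B(e^*) - 2\extcov(P_0,P_1)$. Broadcasting the triple $(e'^*, e^*, \cut(e'^*,e^*))$ over the BFS tree informs every vertex, using only $O(\log n)$ bits of global communication as claimed. I do not anticipate a genuinely hard step: the work is the algebraic separation above, which relies crucially on the vanishing of $\cov_F(e',e)$ in the no-edge case, and this is exactly what lets us avoid any per-pair communication between $P_0$ and $P_1$.
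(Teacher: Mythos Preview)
Your proposal is correct and follows essentially the same approach as the paper: you exploit the vanishing of $\cov_F(e',e)$ to separate $\cut(e',e)$ into $A(e')+B(e)-2\extcov(P_0,P_1)$, minimize each term locally via one aggregate computation per fragment using the values from Claims~\ref{claim_learn_cov} and~\ref{claim_cov_eP}, and then exchange the two minima over the BFS tree. This matches the paper's proof essentially step for step.
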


\begin{proof}
Let $e' \in P_0, e \in P_1$. By Claim \ref{claim_cover}, we have $\cut(e,e') = \cov(e) + \cov(e') - 2\cov(e,e').$ Also, by Claim \ref{claim_cov_highways}, $\cov(e',e) = \extcov(P_0,P_1) + \extcov(e,P_0) + \extcov(e',P_1) + \cov_{F}(e',e)$. Now since there are no edges between $F_{P_0}$ and $F_{P_1}$, by definition $\cov_{F}(e',e) = 0$ as this is the sum of costs of edges between $F_{P_0}$ and $F_{P_1}$ that cover $e'$ and $e$. Also, $\extcov(P_0,P_1)$ does not depend on the specific choice of $e'$ and $e$. Hence, if we want to minimize the expression $\cut(e,e')$, it is equivalent to minimizing the expression $(\cov(e) - 2\extcov(e, P_0)) + (\cov(e') - 2\extcov(e',P_1))$. Note that the expression $\cov(e) - 2\extcov(e, P_0)$ does not depend on $e'$, and the expression $\cov(e') - 2\extcov(e',P_1)$ does not depend on $e$. Hence, minimizing the whole expression is equivalent to minimizing each one of the expressions separately.
We next show that we can find $e \in P_1$ such that $\cov(e) - 2\extcov(e, P_0)$ is minimal, as well as compute this value using one aggregate computation in $P_1$. Similarly, we can find $e' \in P_0$ such that  $\cov(e') - 2\extcov(e',P_1)$ is minimal using one aggregate computation in $P_0.$
Then, if $P_0$ sends the message $P_0, e', \cov(e') - 2\extcov(e',P_1)$, and $P_1$ sends the message $P_1, e, \cov(e) - 2\extcov(e, P_0)$, for the minimal edges found, using a BFS tree, all vertices can compute the values $e',e,\cut(e',e)$ for the edges $e' \in P_0, e \in P_1$ such that $\cut(e',e)$ is minimal. As discussed above, this value equals to $(\cov(e) - 2\extcov(e, P_0)) + (\cov(e') - 2\extcov(e',P_1)) - 2\extcov(P_0,P_1)$, and we assume that $\extcov(P_0,P_1)$ is known.

Hence, to complete the proof, we explain how to compute $\cov(e) - 2\extcov(e, P_0)$ in $P_1$ (the equivalent computation in $P_0$ is done in the same way). From Claim \ref{claim_cov_eP}, all edges $e \in P_1$ know the value $\extcov(e,P_0)$, and they also know $\cov(e)$. Hence, each edge $e \in P_1$ knows the value $\cov(e) - 2\extcov(e, P_0)$. To find the edge $e$ that minimizes this expression, we only need to run one aggregate computation in $P_1$ for finding the minimum.
\end{proof}

\textbf{Case 2: there is an edge between $F_{P_0}$ and $F_{P_1}.$}

\begin{lemma} \label{lemma_paths}
Let $P_0,P_1$ be two disjoint highways, and assume that there is an edge $f$ between $F_{P_0}$ and $F_{P_1}$. 
Let $E_0 \subseteq P_0$ be a set of edges in $P_0$ we compare to $P_1$, and $E_1 \subseteq P_1$ be a set of edges in $P_1$ that we compare to $P_0$.
For $i \in \{0,1\}$, at the beginning of the computation, the following information is known by all vertices in $F_{P_i}$: 
\begin{enumerate}
\item The identity of all the edges $E_i$.
\item For each edge $e \in E_i$, the values $\cov(e),\extcov(e,P_{1-i})$.
\item The value $\extcov(P_0,P_1).$
\item The identity of the edge $f$.
\end{enumerate}
We can compute the values $\{\cut(e',e)\}_{e' \in P_0, e \in P_1}$ in the following two ways.
\begin{enumerate}
\item In $O(|E_1|)$ aggregate and broadcast computations in $F_{P_0}$, where at the end of the computation each edge $e' \in E_0$ would know the values $\cut(e',e)$ for all edges $e \in E_1$. 
\item In $O(|E_0|)$ aggregate and broadcast computations in $F_{P_1}$, where at the end of the computation each edge $e' \in E_1$ would know the values $\cut(e',e)$ for all edges $e \in E_0$. 
\end{enumerate}
\end{lemma}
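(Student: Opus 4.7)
The plan is to mirror the proof of Lemma \ref{lemma_paths_nh_h}, but now using Claim \ref{claim_cov_highways} (the four-term decomposition of $\cov(e',e)$ for two highways) in place of Claim \ref{cover_non_highway_highway_new}, and using Claim \ref{claim_covl} to compute the only term that depends on both endpoints. I will treat Case 1 in detail; Case 2 is completely symmetric.

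\textbf{Step 1: Algebraic decomposition.} Fix $e' \in E_0$ and $e \in E_1$. By Claim \ref{claim_cover}, $\cut(e',e)=\cov(e')+\cov(e)-2\cov(e',e)$, and by Claim \ref{claim_cov_highways},
\[
\cov(e',e) \;=\; \extcov(P_0,P_1) + \extcov(e',P_1) + \extcov(e,P_0) + \cov_{F}(e',e).
\]
Thus to compute $\cut(e',e)$ inside $F_{P_0}$ it suffices that each vertex of $F_{P_0}$ eventually knows, for every $e \in E_1$, the triple $(e,\cov(e),\extcov(e,P_0))$, knows its own edges' values $(\cov(e'),\extcov(e',P_1))$ for $e' \in E_0$, knows $\extcov(P_0,P_1)$, and can evaluate $\cov_{F}(e',e)$ for each such pair. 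The last four pieces of information are already available at the start by the hypotheses on what $F_{P_0}$ knows.

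\textbf{Step 2: Routing across $f$.} The only information missing in $F_{P_0}$ is $\{(e,\cov(e),\extcov(e,P_0))\}_{e \in E_1}$, which is held by $F_{P_1}$. Because $f$ has one endpoint in each of $F_{P_0}, F_{P_1}$ and its identity is known to both fragments, we first gather this list of $|E_1|$ items to the endpoint of $f$ lying in $F_{P_1}$ using $O(|E_1|)$ pipelined aggregate/broadcast rounds in $F_{P_1}$ (via Claim \ref{claim_pipeline_fragment}), then $f$ transmits them one by one to its $F_{P_0}$ endpoint, which broadcasts them within $F_{P_0}$ in another $O(|E_1|)$ broadcast rounds. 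This step costs $O(|E_1|)$ broadcast/aggregate rounds in $F_{P_0}$ (and the same in $F_{P_1}$), well within the budget.

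\textbf{Step 3: Computing $\cov_F(e',e)$ via pipelining.} After Step 2, every vertex of $F_{P_0}$ knows all four summands of $\cov(e',e)$ except $\cov_{F}(e',e)$. By Claim \ref{claim_covl}, once a single edge $e \in P_1$ is fixed and known to $F_{P_0}$, a single aggregate computation inside $F_{P_0}$ lets every edge $e' \in P_0$ learn $\cov_{F}(e',e)$. Pipelining one such aggregate computation for each $e \in E_1$ uses $O(|E_1|)$ aggregate rounds in $F_{P_0}$ by Claim \ref{claim_pipeline_fragment}. After these computations, each $e' \in E_0$ can locally combine the values it has stored to output $\cut(e',e)$ for every $e \in E_1$, completing Case~1.

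\textbf{Step 4: Case 2.} The symmetric statement is obtained by swapping the roles of $P_0$ and $P_1$: route $\{(e',\cov(e'),\extcov(e',P_1))\}_{e' \in E_0}$ from $F_{P_0}$ across $f$ into $F_{P_1}$ in $O(|E_0|)$ rounds, then pipeline $|E_0|$ invocations of Claim \ref{claim_covl} (with the roles of $P_0,P_1$ swapped) in $F_{P_1}$ to let each $e \in E_1$ learn $\cov_{F}(e',e)$ for every $e' \in E_0$; the final additive/subtractive combination is again local. The only subtle point to double-check is that pipelining aggregate computations in opposite orientations inside $F_{P_1}$ is legitimate, which is exactly guaranteed by Claim \ref{claim_pipeline_fragment}; this is not an obstacle, merely a bookkeeping remark. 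No further global communication is required, since $\extcov(P_0,P_1)$ is already known everywhere in $F_{P_0}\cup F_{P_1}$ by hypothesis.
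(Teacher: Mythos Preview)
Your proof is correct and follows essentially the same approach as the paper: route the triples $(e,\cov(e),\extcov(e,P_{1-i}))$ across $f$, then pipeline one aggregate computation per received edge using Claim~\ref{claim_covl} to obtain the only cross-term $\cov_F(e',e)$, with the rest of the $\cut$ formula assembled locally from Claims~\ref{claim_cover} and~\ref{claim_cov_highways}. The only cosmetic difference is that you detail Case~1 while the paper details Case~2, and your Step~2 gathering inside $F_{P_1}$ is actually unnecessary since by hypothesis every vertex of $F_{P_1}$---including $f$'s endpoint---already knows the list.
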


\begin{proof}
We focus on the case where the computations are done in $F_{P_1}$, the second case is symmetric.
We work as follows. First, we use the edge $f$ to pass information from $F_{P_0}$ to $F_{P_1}$. Note that all vertices in $F_{P_0}$, know all the values $\{\cov(e),\extcov(e,P_1)\}_{e \in E_0}$. Since $f$ has one endpoint in  $F_{P_0}$ it knows this information and can pass it to all vertices in $F_{P_1}$ using $O(|E_0|)$ aggregate and broadcast computations. Now, for each edge $e \in E_0$, we run one aggregate computation in $F_{P_1}$, that allows each edge $e' \in P_1$ compute $\cut(e,e').$ This is done as follows. First, by Claim \ref{claim_cover}, we know that $\cut(e,e') = \cov(e) + \cov(e') - 2\cov(e,e').$ Also, by Claim \ref{claim_cov_highways}, $\cov(e',e) = \extcov(P_0,P_1) + \extcov(e',P_0) + \extcov(e,P_1) + \cov_{F}(e',e)$. Now, for $e' \in E_1$, it already knows $\cov(e'), \extcov(e',P_0), \extcov(P_0,P_1)$ at the beginning of the computation. Also, it learns the values $\cov(e),\extcov(e,P_1)$ as $f$ passed this information to $F_{P_1}$. Hence, the only thing missing to complete the computation is computing $\cov_{F}(e',e)$ which requires one aggregate computation in $F_{P_1}$ per edge $e \in E_0$ by Claim \ref{claim_covl}. To compute the values $\cut(e',e)$ for all $e \in E_0$, we pipeline $O(|E_0|)$ such computations, which completes the proof.
\end{proof}

\paragraph{Both edges in the same highway.}
We can deal with the case that both edges are in the same fragment highway using similar ideas, the details are deferred to Appendix \ref{sec:both-edge-highway}.
\section{Monotonicity and Partitioning} \label{sec:mon_part}

\subsection{Monotonicity} \label{sec:monotone}

We next discuss another crucial building block for our algorithm, \textit{monotonicity}. This property shows that the minimum 2-respecting cuts in the graph behave in a certain monotone structure, which can be exploited to obtain a fast algorithm. We start with describing the property in Claim \ref{claim_monotonicity}, and later explain how to exploit it to obtain a certain partitioning.
This property is also discussed in \cite{mukhopadhyay2019weighted}, where it is phrased in a slightly different manner related to relevant matrices (see Claim 3.6 in \cite{mukhopadhyay2019weighted}). For completeness, we next provide a self-contained proof that fits our description of the property.

\begin{claim} \label{claim_monotonicity}
Let $P_0, P_1$ be two ancestor to descendant paths in the tree, such that $P_0$ and $P_1$ are either orthogonal or one of them is strictly above the other in the tree. Let $v_0 \in P_0$ be the endpoint in $P_0$ that is closest to $P_1$, and $v_1 \in P_1$ be the endpoint of $P_1$ that is closest to $P_0$, and let $t$ be some vertex in the tree path between $v_0$ and $v_1$. Let $E_0 \subseteq P_0, E_1 \subseteq P_1$ be subsets of edges. The following holds. Let $e^0_1,e^0_2$ be edges in $E_0$, where $e^0_2$ is closer to $t$. Denote by $e^1_1, e^1_2$ the edges in $E_1$ such that $\cut(e^0_1,e^1_1)$ is minimal and $\cut(e^0_2,e^1_2)$ is minimal, taking the edges closest to $t$ if there is more than one option. 
Then either $e^1_2=e^1_1$ or $e^1_2$ is closer to $t$ compared to $e^1_1.$
\end{claim}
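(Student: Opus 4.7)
The plan is to prove this by a standard exchange argument, establishing a Monge-type inequality on cut values and combining it with the tiebreaking convention to derive a contradiction. Suppose for contradiction that $e_1^1 \neq e_2^1$ and $e_2^1$ is strictly farther from $t$ than $e_1^1$. The main goal will be to show that
\[
\cut(e_1^0,e_1^1) + \cut(e_2^0,e_2^1) \;\geq\; \cut(e_1^0,e_2^1) + \cut(e_2^0,e_1^1). \tag{$\dagger$}
\]
Once $(\dagger)$ is in hand, combining it with the two optimality inequalities $\cut(e_1^0,e_1^1)\leq \cut(e_1^0,e_2^1)$ and $\cut(e_2^0,e_2^1)\leq \cut(e_2^0,e_1^1)$ forces equality throughout, so $e_1^1$ is itself a minimizer for $e_2^0$ in $E_1$; since it is strictly closer to $t$ than $e_2^1$, this contradicts the choice of $e_2^1$ as the $t$-closest minimizer.

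To establish $(\dagger)$, I would use Claim~\ref{claim_cover} to rewrite each $\cut$ as $\cov(e)+\cov(e')-2\cov(e,e')$; the $\cov(e_i^0)$ and $\cov(e_j^1)$ terms cancel, and $(\dagger)$ reduces to the cover inequality
\[
\cov(e_1^0,e_1^1) + \cov(e_2^0,e_2^1) \;\leq\; \cov(e_1^0,e_2^1) + \cov(e_2^0,e_1^1). \tag{$\ddagger$}
\]
I would then prove $(\ddagger)$ pointwise: for every edge $x=\{u,w\}$ of $G$, show that its contribution $\phi_{ij}(x)\in\{0,1\}$ to $\cov(e_i^0,e_j^1)$ satisfies $\phi_{11}(x)+\phi_{22}(x)\leq \phi_{12}(x)+\phi_{21}(x)$. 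For each endpoint of $x$, introduce indicators $\alpha_i=\mathbf{1}[u\in S(e_i^0)]$ and, depending on the case, either $\beta_j=\mathbf{1}[u\in S(e_j^1)]$ (orthogonal case) or $\gamma_j=\mathbf{1}[u\notin S(e_j^1)]$ (nested case, when $P_1$ is strictly above $P_0$), and similarly for $w$; here $S(e)$ denotes the subtree hanging below $e$. Using Claim~\ref{claim_subtree}, one checks that $\phi_{ij}(x)$ decomposes as $\alpha_i\beta'_j+\alpha'_i\beta_j$ in the orthogonal case (the two $S$-sets being disjoint) and as $\alpha_i\gamma'_j+\alpha'_i\gamma_j$ in the nested case. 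The bracketed sum then factors as $(\alpha_1-\alpha_2)(\beta'_1-\beta'_2)+(\alpha'_1-\alpha'_2)(\beta_1-\beta_2)$, and analogously with $\gamma$'s.

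The monotonicity of the sets closes the argument: since $e_2^0$ is closer to $t$ than $e_1^0$, we have $S(e_1^0)\subseteq S(e_2^0)$, so $\alpha_1\leq\alpha_2$; and since (by assumption) $e_1^1$ is closer to $t$ than $e_2^1$, the sets $S(e_j^1)$ are nested in the opposite direction, yielding $\beta'_1\geq\beta'_2$ in the orthogonal case and $\gamma'_1\geq\gamma'_2$ in the nested case. Each factor product is therefore $\leq 0$, so the sum is $\leq 0$, giving $(\ddagger)$ and hence $(\dagger)$.

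The main technical obstacle I anticipate is getting the \emph{directions} of the nesting correct across the two geometric cases (orthogonal vs. one strictly above the other), because the meaning of ``closer to $t$'' flips between the top and bottom of a path depending on where $t$ sits relative to $P_0$ and $P_1$. Keeping the convention that $t$ lies between $v_0$ and $v_1$ and always comparing edges via the inclusion order of $S(e)$ (with the two cases handled uniformly by the $\alpha\beta$- or $\alpha\gamma$-factorization) should keep the bookkeeping clean. A secondary subtlety is the tiebreaking: one must be careful that the contradiction uses the tiebreaking choice for $e_2^1$ (forced by the equality $\cut(e_2^0,e_1^1)=\cut(e_2^0,e_2^1)$), not for $e_1^1$, since $e_1^1$ being a tied minimizer for $e_1^0$ is consistent with $e_2^1$ being farther from $t$.
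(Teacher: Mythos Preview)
Your proof is correct and takes a genuinely different route from the paper's. The paper argues directly: writing $\cut(e_2^0,e)-\cov(e_2^0)=C_1(e)+C_2(e)$ with $C_1(e)=\cov(e)-2\cov(e_1^0,e)$ and $C_2(e)=-2(\cov(e_2^0,e)-\cov(e_1^0,e))$, it observes that $C_1$ is minimized at $e_1^1$ by definition, and that $C_2$ is monotone decreasing as $e$ moves toward $t$ (since any edge covering $e_2^0$ and $e$ but not $e_1^0$ also covers every edge between $e$ and $t$). Hence no $e$ farther from $t$ than $e_1^1$ can beat $e_1^1$, so $e_2^1$ is $e_1^1$ or closer. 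This avoids the pointwise indicator bookkeeping entirely and never states the Monge inequality explicitly. Your exchange argument instead establishes the inverse Monge condition $(\ddagger)$ on cover values and derives monotonicity formally from it; this is a bit more work (the pointwise factorization plus the case analysis you flag) but is more modular, and in fact yields the stronger total-monotonicity property the paper alludes to in a removed claim. One small remark: your write-up names only the orthogonal case and the case ``$P_1$ strictly above $P_0$'', but the third configuration ($P_0$ strictly above $P_1$) flips the inclusion $S(e_1^0)\subseteq S(e_2^0)$ you assert; it is handled symmetrically by swapping to $V\setminus S(e_i^0)$, and you correctly anticipate this as the main bookkeeping hazard.
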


\setlength{\intextsep}{2pt}
\begin{figure}[h]
\centering
\setlength{\abovecaptionskip}{-2pt}
\setlength{\belowcaptionskip}{6pt}
\includegraphics[scale=0.5]{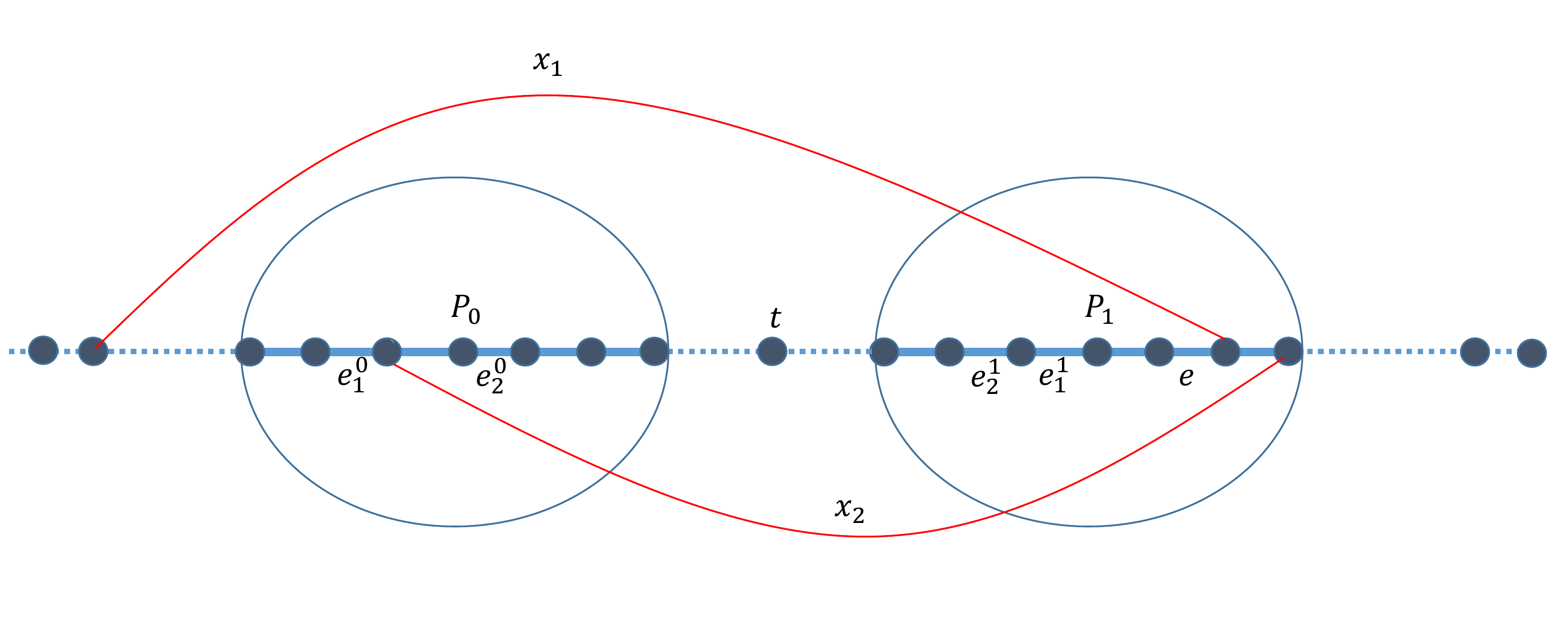}
 \caption{\small An illustration for monotonicity. If $e^0_2$ is closer to $t$ compared to $e^0_1$, then $e^1_2$ is closer to $t$ compared to $e^1_1$. The edge $x_1$ is an example of an edge that covers $e$ and $e^0_2$ and also covers $e^0_1$, and the edge $x_2$ is an example of an edge that covers $e$ and $e^0_2$ but not $e^0_1$. Note that any edge that covers $e$ and $e^0_2$ also covers any edge in $P_1$ closer to $t$ compared to $e$.}
\label{monotonicity_pic}
\end{figure}

\begin{proof}
Let $e \in E_1$. By Claim \ref{claim_cover}, $\cut(e^0_1,e) = \cov(e^0_1)+\cov(e) - 2\cov(e^0_1,e).$ Since $\cut(e^0_1,e^1_1)$ is minimal, it follows that $\cov(e^1_1) - 2\cov(e^0_1,e^1_1) \leq \cov(e) - 2\cov(e^0_1,e)$ for all $e \in E_1$.
Similarly, for $e \in E_1$, we have $\cut(e^0_2,e) = \cov(e^0_2)+\cov(e) - 2\cov(e^0_2,e).$ To find an edge $e \in E_1$ that minimizes this expression, we need to minimize $\cov(e) - 2\cov(e^0_2,e)$. We write it as $C_1(e)+C_2(e)$, where $C_1(e) = \cov(e) - 2\cov(e^0_1,e)$, $C_2(e) = - 2\cov(e^0_2,e) + 2 \cov(e^0_1,e)$. From the discussion above, we have that $C_1(e^1_1) \leq C_1(e)$ for all $e \in E_1$. We next take a closer look at $C_2(e)$. Note that  $\cov(e^0_2,e) - \cov(e^0_1,e)$ is the cost of all edges that cover $e^0_2$ and $e$ but do not cover $e^0_1$ (since $e^0_2$ is closer to $t$ than $e^0_1$, we have that any edge that covers $e^0_1$ and $e$ also covers $e^0_2$ as it is on the path between $e^0_1$ and $e$, but there may be additional edges that cover $e^0_2$ and $e$). This expression is monotonic in the following sense. If $e', e \in E_1$ where $e'$ is closer to $t$, any edge that covers $e^0_2$ and $e$ also covers $e'$ as its on the path between $e^0_2$ and $e$, hence this expression increases when we go towards $t$. As $C_2(e) = -2 (\cov(e^0_2,e) - \cov(e^0_1,e))$, we have that $C_2(e)$ is monotonically decreasing when we go towards $t$. To sum up, if we look at the sum $C_1(e) + C_2(e)$ that we want to minimize, for all edges $e$ that are farther from $t$ compared to $e_1^1$, we have that $C_1(e^1_1) + C_2(e^1_1) \leq C_1(e) + C_2(e)$. As we want to find the edge $e^1_2$ that minimizes the expression and is closest to $t$, it follows that $e^1_2$ is either equal to $e^1_1$ or closer to $t$ compared to it. This completes the proof.
\end{proof}

\subsection{Partitioning} \label{sec:partitioning}

Here we use monotonicity to obtain a certain partitioning. This partitioning is useful for the case that we have one short path $P'$ of length $O(\dfrag)$, and we should compare it to a long path composed of highways, that may have length $\Omega(n)$. This would be later crucial for the algorithm where we look for the min 2-respecting cut that has at least one edge in a highway. 

We start by discussing the case that the short path $P'$ is a non-highway, and later discuss the case it is a highway.
Let $P'$ be a non-highway that we want to compare to a long path $P_H$ composed of highways. For our algorithm, we need to look at two cases, that $P_H$ is either completely orthogonal to $P'$ or completely above $P'$. 
While we can compare $P'$ to a specific highway $P \in P_H$ in $O(\dfrag)$ time using an algorithm comparing two short paths, the challenge here is that we want to do many such computations efficiently. If we want to compare each edge of $P'$ to each edge in $P_H$ (that may have linear size), the total amount of information if too high. The goal of this section is to use monotonicity to break $P'$ to smaller parts, such that in average we compare each edge of $P'$ only to a constant number of highways in $P_H$, which is crucial for obtaining a small complexity. We next discuss the partitioning. 
We prove the following. See Figure \ref{partitioning_pic} for an illustration. 

\begin{lemma} \label{lemma_partitioning}
Let $P'$ be a non-highway, and let $P_H$ be a path of highways, such that $P_H$ is completely orthogonal to $P'$ or completely above $P'$. Denote by $P_1,...,P_k$ the different highways in $P_H$ going from the lowest to highest in the tree. Then, we can break the edges of $P'$ to (not necessarily disjoint) subsets $E'_1,...,E'_k$, such that the following holds.
\begin{enumerate}
\item $\sum_{i=1}^k |E'_i| = O(\dfrag + k).$
\item It is enough to solve the min 2-respecting cut problem on the pairs $\{P_i, E'_i\}_{i=1}^k$. More formally, if we denote by $\{e_i,e'_i\}$ the two edges $e_i \in P_i, e'_i \in E'_i$ such that $\cut(e_i,e'_i)$ is minimal, and denote by $j$ an index such that $\cut(e_j,e'_j) \leq \cut(e_i,e'_i)$ for all $1 \leq i \leq k$, then $\cut(e_j,e'_j) = \cut(e,e')$, where $e \in P_H, e' \in P'$ are edges such that $\cut(e,e')$ is minimal. 
\item Assume that all vertices know the values $\{e,\cov(e)\}$, for all edges $e$ that are highest or lowest in some highway. Then, we can compute the sets $E'_i$ in $O(\dfrag + k)$ time. At the end of the computation all the vertices in $T(P')$ know the identity of all edges in the set $E'_i$, for all $1 \leq i \leq k$. This can be done in different orthogonal non-highways simultaneously.
\end{enumerate}
\end{lemma}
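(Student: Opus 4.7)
The plan is to build the partition directly from the monotone structure of optimal matches given by \Cref{claim_monotonicity}, and then do the computational work via aggregate sweeps in $T(P')$.

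First I would fix a natural orientation on $P'$: order its edges by distance from $P_H$, so the edge closest to $P_H$ is first. For each $e'\in P'$, let $m(e')\in\{1,\dots,k\}$ be the index such that the edge $e^*\in P_H$ minimizing $\cut(e',e)$ (with ties broken by taking $e^*$ closest to $P'$) lies in $P_{m(e')}$. Since $P_H$ is either orthogonal to $P'$ or strictly above it, \Cref{claim_monotonicity} applies with $t$ being any vertex on the tree path between the closest endpoints of the two paths, and yields that as $e'$ moves away from $P_H$ along $P'$, the corresponding optimal $e^*$ also moves away. Consequently, $m$ is monotone along $P'$, and the preimage $m^{-1}(i)$ is a contiguous subpath of $P'$.

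Next I would simply define $E'_i$ to be $m^{-1}(i)$, augmented by the single boundary edge of $E'_{i-1}$ closest to $P_H$ and the single boundary edge of $E'_{i+1}$ farthest from $P_H$ in case of ties. Property 1 is then immediate: the sets $m^{-1}(i)$ partition $P'$ up to at most one shared edge per boundary, so $\sum_i |E'_i| \le |P'|+O(k)=O(\dfrag+k)$. Property 2 also follows directly from the definition of $m$: if $(e,e')\in P_H\times P'$ achieves the overall minimum, then with $i=m(e')$ one has $e\in P_i$ and $e'\in E'_i$, so the minimum of $\cut$ over $E'_i\times P_i$ is at most $\cut(e,e')$, which in turn is at most any cut value, so the minimum over $i$ equals $\cut(e,e')$.

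The real work is property 3: locating the $O(k)$ transition edges in $P'$ in $O(\dfrag+k)$ rounds. My plan is to reduce the task to computing, for every $e'\in P'$ and every boundary vertex $v_i$ between consecutive highways $P_i$ and $P_{i+1}$, a single aggregated quantity $A(e',v_i)$ that determines whether $m(e')\le i$ or $m(e')>i$. Concretely, for each boundary vertex $v_i$, the difference $\cut(e',e^{\text{bot}}_{i+1}) - \cut(e',e^{\text{top}}_i)$ can be rewritten using \Cref{claim_cover} and the known $\cov$ values at the boundary edges as a constant (depending only on $v_i$) plus a linear combination of cover values of the form $\cov(e',v_i)$, i.e.~the weight of non-tree edges from $T(e'^{\downarrow})$ to the part of the tree strictly above $v_i$. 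These values can be produced for all $e'\in P'$ and all $O(k)$ boundary vertices by one bottom-up aggregate sweep along $T(P')$, in which each non-tree edge incident to $T(P'^{\downarrow})$ contributes $O(1)$ information (the highest boundary vertex it crosses, plus its weight), pipelined along the $O(\dfrag)$-depth tree $T(P')$ for an overall cost of $O(\dfrag+k)$.

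The hard part will be showing that this information really can be collected in a single $O(\dfrag+k)$-round pass rather than a separate $O(\dfrag)$-round pass per boundary: one must argue that along $P'$ the relevant quantities change only by incremental updates, so that each non-tree edge crossing $T(P'^{\downarrow})$ affects a contiguous interval of $e'$'s and a contiguous interval of boundary indices, allowing events to be aggregated by standard prefix-sum tricks on the monotone sequence of transitions. Once the per-edge values are known, each $e'\in P'$ locally decides its index $m(e')$, the boundaries are identified by adjacent disagreements, and one final aggregate/broadcast in $T(P')$ (doable in parallel for orthogonal non-highways, as in \Cref{claim_non_highway}) makes every vertex of $T(P')$ aware of the entire partition $\{E'_i\}_{i=1}^k$.
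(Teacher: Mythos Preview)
Your definition of $E'_i$ as (essentially) $m^{-1}(i)$ is different from the paper's, but it is a valid choice: Properties~1 and~2 go through for the reasons you give.

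The genuine gap is in Property~3. Your plan hinges on the claim that the sign of
\[
A(e',v_i)\;=\;\cut(e',e^{\text{bot}}_{i+1}) - \cut(e',e^{\text{top}}_i)
\]
determines whether $m(e')\le i$ or $m(e')>i$. It does not. Monotonicity (\Cref{claim_monotonicity}) is a statement about how the \emph{position} of the row minimum moves as $e'$ varies; it says nothing about the shape of a single row $e\mapsto\cut(e',e)$. That row is not unimodal in general, so comparing the two boundary edges of the gap at $v_i$ tells you only which of those two particular edges is better, not which side of $v_i$ the global minimum over $P_H$ lies on. Your subsequent prefix-sum machinery, however carefully implemented, cannot recover $m(e')$ from these quantities.

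The paper avoids this difficulty by defining $E'_i$ differently. For each highway $P_i$ with extreme edges $e_{i,1},e_{i,2}$, it computes the edge $e'_{i,b}\in P'$ minimizing $\cut(e'_{i,b},e_{i,b})$ and sets $E'_i$ to be the interval of $P'$ between $e'_{i,1}$ and $e'_{i,2}$. By monotonicity the optimal partner of any $e\in P_i$ lies in that interval, giving Property~2, and the intervals overlap only at their endpoints, giving Property~1. The computational advantage is decisive: finding each $e'_{i,b}$ is a single aggregate computation in $T(P')$ via \Cref{claim_non_highway}, so $2k$ pipelined aggregates plus broadcasts suffice, for $O(\dfrag+k)$ rounds, with no need to locate per-edge global minima.
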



The proof of Lemma \ref{lemma_partitioning} breaks down to three claims. We start by defining the sets $E'_i$, and then show they satisfy the required properties.
Recall that $P_1,...,P_k$ are the different highways in $P_H$ going from the lowest to highest in the tree. We denote by $e_{i,1},e_{i,2}$ the lowest and highest edges in the highway $P_i$, respectively. For $b \in \{1,2\}$, we denote by $e'_{i,b}$ the edge in $P'$ such that $\cut(e'_{i,b},e_{i,b})$ is minimal, taking the highest such edge if there is more than one option. We denote by $E'_i$ all the edges in $P'$ between $e'_{i,1}$ to $e'_{i,2}$. Using monotonicity, we have the following.

\begin{claim} \label{claim_xP'}
$\sum_{i=1}^k |E'_i| = O(\dfrag + k).$
\end{claim}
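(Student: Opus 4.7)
The plan is to apply the monotonicity property (Claim \ref{claim_monotonicity}) to the sequence of distinguished edges $e_{1,1}, e_{1,2}, e_{2,1}, e_{2,2}, \ldots, e_{k,1}, e_{k,2}$ on $P_H$, viewed as a list of edges in $E_0 \subseteq P_0 = P_H$, with $P_1 = P'$ and $E_1 = P'$. Because $P_H$ is either orthogonal to $P'$ or strictly above it, the hypotheses of Claim \ref{claim_monotonicity} are satisfied, and we can take $t$ to be the vertex on the tree path between the closest endpoints of $P'$ and $P_H$; then $t$ sits above every edge of $P'$ and below every edge of $P_H$. Ordered by distance from $t$ along $P_H$, the sequence $e_{1,1}, e_{1,2}, e_{2,1}, \ldots, e_{k,2}$ is monotonically non-decreasing in distance from $t$: each $e_{i,1}$ is the lowest edge of $P_i$ and hence closer to $t$ than $e_{i,2}$, while $e_{i,2}$ is in turn closer than $e_{i+1,1}$ because the $P_j$'s are ordered from lowest to highest along $P_H$.

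Next, I would reconcile the two tie-breaking conventions: the ``highest such edge in $P'$'' rule used to define $e'_{i,b}$ agrees with the ``closest to $t$'' rule required by Claim \ref{claim_monotonicity}, since $t$ sits above all edges of $P'$. Invoking Claim \ref{claim_monotonicity} on each consecutive pair of edges in $E_0$ then yields the chain
\[
e'_{1,1}\preceq e'_{1,2}\preceq e'_{2,1}\preceq e'_{2,2}\preceq \cdots \preceq e'_{k,1}\preceq e'_{k,2},
\]
where $a\preceq b$ means ``$a$ is equal to or closer to $t$ than $b$ along $P'$''.

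Given this monotone chain, I would argue that the intervals $E'_1,\ldots,E'_k$ sit in order along $P'$, going from closest to $t$ to farthest, and that any two consecutive intervals $E'_i$ and $E'_{i+1}$ can share at most one edge, namely the common endpoint $e'_{i,2}=e'_{i+1,1}$ when these two edges happen to coincide. Summing, this gives
\[
\sum_{i=1}^{k}|E'_i|\;\le\;\Bigl|\bigcup_{i=1}^{k}E'_i\Bigr|+(k-1)\;\le\;|P'|+k-1.
\]

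Finally, since $P'$ is a non-highway, it is contained entirely in a single fragment whose diameter is $\dfrag = O(\sqrt n)$, and hence $|P'|=O(\dfrag)$, which yields the claimed bound $\sum_i|E'_i| = O(\dfrag + k)$. The only point that requires care is matching the tie-breaking conventions of Claim \ref{claim_monotonicity} against the convention ``take the highest such edge'' used to define $e'_{i,b}$ in this section's setup, and verifying that on $P'$ the direction ``closest to $t$'' coincides with ``highest in the tree''. Once this is checked, the result is a direct consequence of monotonicity together with the elementary observation that monotonically ordered intervals on a path can each overcount any single position at most once.
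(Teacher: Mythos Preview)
Your proof is essentially the same as the paper's: both apply Claim \ref{claim_monotonicity} to show that the edges $e'_{i,b}$ are monotonically ordered along $P'$, and then deduce that the intervals $E'_i$ are almost disjoint. Your explicit chain and the resulting bound $|P'|+k-1$ are in fact slightly sharper than the paper's accounting of ``internal edges plus $2k$ endpoints''.

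One small imprecision: the assertions that ``$t$ sits below every edge of $P_H$'' and that ``$e_{i,1}$ is closer to $t$ than $e_{i,2}$'' are only valid when $P_H$ lies strictly above $P'$. In the orthogonal case the closest endpoint of $P_H$ to $P'$ is the \emph{top} of $P_H$, so the chain runs in the opposite direction, $e'_{k,2}\preceq e'_{k,1}\preceq\cdots\preceq e'_{1,1}$. (The paper's proof is, symmetrically, written with the orthogonal orientation in mind.) This does not affect the argument, since the near-disjointness conclusion is indifferent to which end of $P'$ the intervals march toward.
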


\begin{proof}
We next use Claim \ref{claim_monotonicity} to show that the different sets $E'_i$ are almost disjoint. The set $E'_i$ includes all edges between $e'_{i,1}$ and $e'_{i,2}$. As $P'$ is either orthogonal or below $P_H$, the highest vertex in $P'$ is the closest to $P_H$, we denote it by $t$.
From Claim \ref{claim_monotonicity}, we have that $e'_{i,2}$ is closer to $t$ compared to $e'_{i,1}$. Moreover, as all paths $P_j$ for $j>i$ are closer to $t$ compared to $P_i$, it follows from Claim \ref{claim_monotonicity} that all the edges $e'_{j,b}$ for $j > i, b \in \{1,2\}$ are either closer to $t$ than $e'_{i,2}$ or equal to $e'_{i,2}$. Similarly, as $P_i$ is closer to $t$ than the paths $P_j$ for $j < i$, we have that all the edges $e'_{j,b}$ for $j<i, b \in \{1,2\}$ are either equal to $e'_{i,1}$ or below it. To summarize, all edges $e'_{j,b}$ for $j \neq i$ are either equal to one of $e'_{i,1},e'_{i,2}$ or strictly above or below the whole set $E'_i$. It follows that all the edges in $E'_i$ except maybe two, are not contained in any of the sets $E'_j$ for $j \neq i$ (as both edges $e'_{j,1},e'_{j,2}$ are either strictly above or strictly below internal edges of $E'_i$). 

This gives $\sum_{i=1}^k |E'_i| = O(\dfrag + k)$. The $O(\dfrag)$ term counts internal edges in the sets $E'_i$ that are only contained in one of the sets $E'_i$, their number is bounded by the length of $P'$, which is $O(\dfrag)$. The second term $O(k)$ counts the edges $\{e'_{i,1}, e'_{i,2}\}_{1 \leq i \leq k}$. Note that such an edge may be included in more than one set (for example, we may have $e'_{i,1}=e'_{j,1}$), however per set $E'_i$ we only have two such edges, that sums to $2k$ in total.
\end{proof}

We next show that it is enough to focus on the sub-problems defined by the pairs $\{P_i, E'_i\}_{i=1}^k$.

\begin{claim}
Let $$(e',e) = \arg\min_{\{e' \in P', e \in P_H\}} \cut(e',e),$$ $$(e'_i,e_i)= \arg\min_{\{e' \in E'_i, e \in P_i\}} \cut(e',e).$$ Let $j$ be an index such that $\cut(e'_j,e_j) \leq \cut(e'_i,e_i)$ for all $1 \leq i \leq k$, then $\cut(e'_j,e_j) = \cut(e',e)$.
\end{claim}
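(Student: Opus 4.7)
The plan is to show that, taking $i$ to be the unique index with $e\in P_i$, the minimum of $\cut$ over $E'_i\times P_i$ already matches the global minimum $\cut(e',e)$. Since $j$ is chosen to minimize $\cut(e'_i,e_i)$ over $i$, this forces $\cut(e'_j,e_j)\leq \cut(e',e)$; and since $(e'_j,e_j)\in P'\times P_H$, the global optimality of $(e',e)$ gives the reverse inequality, producing the desired equality.

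First I would isolate the index $i$ with $e\in P_i$ (unique, since the $P_i$'s edge-partition $P_H$) and introduce $e^\star\in P'$ as a minimizer of $\cut(\cdot,e)$, choosing, as in Claim~\ref{claim_monotonicity}, the minimizer that is closest to $t$. By minimality, $\cut(e^\star,e)\leq \cut(e',e)$, so it suffices to show $e^\star\in E'_i$; indeed, once this is established, $(e^\star,e)\in E'_i\times P_i$, and the definition of $(e'_i,e_i)$ as a minimizer over this product yields $\cut(e'_i,e_i)\leq\cut(e^\star,e)\leq \cut(e',e)$.

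The heart of the argument, and the one nontrivial step, is to deduce $e^\star\in E'_i$ from monotonicity (Claim~\ref{claim_monotonicity}), whose hypothesis is met since $P_H$, and hence $P_i$, is orthogonal to or strictly above $P'$. I would apply monotonicity twice with respect to the pair $(P_i,P')$: once to the edges $\{e_{i,1},e\}$ of $P_i$ and once to $\{e,e_{i,2}\}$. Because $e$ sits between $e_{i,1}$ and $e_{i,2}$ on $P_i$ in the closeness-to-$t$ order, one application shows that $e^\star$ is no closer to $t$ than whichever of $e'_{i,1},e'_{i,2}$ is the endpoint farther from $t$, and the other shows $e^\star$ is no farther from $t$ than the other endpoint. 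Consequently $e^\star$ lies on the subpath of $P'$ between $e'_{i,1}$ and $e'_{i,2}$, which is exactly $E'_i$.

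The only delicate point is tie-breaking: Claim~\ref{claim_monotonicity} uses the minimizer closest to $t$, while the edges $e'_{i,b}$ in Claim~\ref{claim_xP'} are defined as the highest minimizer. Since the top of $P'$ lies on the tree path from $v_0$ to $v_1$ containing $t$, within $P'$ the notions of ``closest to $t$'' and ``highest'' coincide, so the above argument does place a genuine minimizer $e^\star$ of $\cut(\cdot,e)$ inside $E'_i$. Putting everything together, $\cut(e'_j,e_j)\leq \cut(e'_i,e_i)\leq \cut(e^\star,e)\leq \cut(e',e)$, while the reverse inequality $\cut(e',e)\leq \cut(e'_j,e_j)$ follows from $(e'_j,e_j)\in P'\times P_H$ and the global optimality of $(e',e)$; hence $\cut(e'_j,e_j)=\cut(e',e)$, as claimed.
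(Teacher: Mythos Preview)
Your proposal is correct and follows essentially the same approach as the paper: pick the index $i$ with $e\in P_i$, use monotonicity (Claim~\ref{claim_monotonicity}) twice to sandwich a minimizer of $\cut(\cdot,e)$ between $e'_{i,1}$ and $e'_{i,2}$, and conclude via the definition of $(e'_i,e_i)$ together with the trivial reverse inequality. Your treatment is in fact slightly more careful than the paper's, since you make explicit both the reverse inequality and the reason why the ``highest'' tie-breaking convention for $e'_{i,b}$ agrees with the ``closest to $t$'' convention of Claim~\ref{claim_monotonicity}.
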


\begin{proof}
Let $P_i \in P_H$ be the highway such that $e \in P_i$. If $e=e_{i,b}$ for $b \in \{1,2\}$, the edge $e'_{i,b} \in E'_i$ is an edge in $P'$ such that $\cut(e_{i,b},e'_{i,b})$ is minimal, hence $\cut(e'_i,e_i)=\cut(e',e)$, and we are done. Otherwise, since $e$ is in the tree path between $e_{i,1}$ to $e_{i,2}$, from Claim \ref{claim_monotonicity}, it follows that there is an edge $e' \in P'$ in the path between $e'_{i,1}$ to $e'_{i,2}$ such that $\cut(e,e')$ is minimal. In more detail, we again denote by $t$ the highest vertex in $P'$, which is the vertex in $P'$ closest to $P_H$. As $e_{i,2}$ is the highest edge in $P_i$, it is closer to $t$ compared to $e$, which means that the edge $e'_{i,2}$ is equal or closer to $t$ than the edge $e' \in P'$ such that $\cut(e',e)$ is minimal. Similarly, $e$ is closer to $t$ compared to $e_{i,1}$, which means that $e'$ is equal or closer to $t$ compared to $e'_{i,1}$. To sum up, $e'$ is between the edges $e'_{i,1}$ to $e'_{i,2}$, hence by definition $e' \in E'_i$, which gives $\cut(e'_i,e_i) = \cut(e',e)$, as needed.    
\end{proof}

We next explain how to compute the sets $E'_i$.

\begin{claim}
Assume that all vertices know the values $\{e, \cov(e)\}$, for all edges $e$ that are highest or lowest in some highway. Then, we can compute the sets $E'_i$ in $O(\dfrag + k)$ time. At the end of the computation all the vertices in $T(P')$ know the identity of all edges in the set $E'_i$, for all $1 \leq i \leq k$. This can be done in different non-highways that are not in the same root to leaf path simultaneously.
\end{claim}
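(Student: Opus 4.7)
The plan is to reduce the computation of each $E'_i$ to finding its two ``endpoint'' edges $e'_{i,1}$ and $e'_{i,2}$, each of which is defined as an edge of $P'$ minimizing $\cut(\cdot,e_{i,b})$ against a known edge $e_{i,b}$. Since the problem has $2k$ such query edges (the two extreme edges of each highway $P_i$), and for each query we only need an aggregate-style computation inside $T(P')$, we can pipeline everything in $O(\dfrag+k)$ rounds using the machinery already established.

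More concretely, I would proceed in three phases. In the first phase, for every index $i\in\{1,\dots,k\}$ and every $b\in\{1,2\}$, I invoke Claim~\ref{claim_non_highway} with the edge $e_{i,b}$: by hypothesis all vertices in $T(P')$ already know the pair $(e_{i,b},\cov(e_{i,b}))$, and since $P_H$ is either orthogonal to or above $P'$ we have $e_{i,b}\notin T(P')$, so the preconditions of Claim~\ref{claim_non_highway} are met. One aggregate computation in $T(P')$ therefore lets every edge $e'\in P'$ learn $\cut(e',e_{i,b})$. Using Claim~\ref{claim_pipeline} to pipeline the $2k$ aggregates inside $T(P')$ (whose diameter is $O(\dfrag)$) gives a total of $O(\dfrag+k)$ rounds. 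In the second phase, for each pair $(i,b)$ I run a convergecast along the path $P'$ carrying the pair $(\cut(e',e_{i,b}),\text{depth}(e'))$ under the operator ``take the smaller $\cut$ value, break ties in favor of the edge with smaller depth (i.e., higher in the tree)''; this identifies $e'_{i,b}$ at the top vertex of $P'$. Pipelining the $2k$ convergecasts again costs $O(\dfrag+k)$ rounds by Claim~\ref{claim_pipeline}. In the third phase, I pipeline-broadcast the $2k$ identifiers $\{e'_{i,1},e'_{i,2}\}_{i=1}^k$ from the top of $P'$ throughout $T(P')$, which again takes $O(\dfrag+k)$ rounds. Every vertex of $T(P')$ then knows both endpoints of each $E'_i$ and, by Definition of $E'_i$ as the set of edges of $P'$ lying between $e'_{i,1}$ and $e'_{i,2}$, can list its members locally without further communication.

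Finally, all of the above operations take place exclusively inside $T(P')$ (the convergecasts and broadcasts), plus the implicit use of already-computed global information $(e_{i,b},\cov(e_{i,b}))$ stored locally at every vertex. By the ``parallel'' clause of Claim~\ref{claim_non_highway} and of Claim~\ref{claim_pipeline}, disjoint subtrees can run these procedures simultaneously without interference; in particular, non-highways $P'$ that are not in the same root-to-leaf path have edge-disjoint subtrees $T(P')$, so all such non-highways may execute the algorithm in parallel, which gives the claimed $O(\dfrag+k)$ bound.

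The only subtlety I anticipate is the tie-breaking rule ``take the highest edge that attains the minimum'', since this is precisely what makes $e'_{i,1}$ and $e'_{i,2}$ well-defined and is what the monotonicity argument of Claim~\ref{claim_monotonicity} relies on; this, however, is handled cleanly by using the lexicographic order $(\cut\text{-value},\text{depth})$ inside the aggregate, which is still commutative and associative, still fits in $O(\log n)$ bits per message, and therefore still satisfies the hypotheses of the pipelining primitives we use.
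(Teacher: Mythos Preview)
Your proof is correct and follows essentially the same approach as the paper: use Claim~\ref{claim_non_highway} to compute $\cut(e',e_{i,b})$ for each of the $2k$ extreme edges, then convergecast (with the appropriate tie-breaking) and broadcast the $2k$ resulting identifiers $e'_{i,b}$ inside $T(P')$, all pipelined for $O(\dfrag+k)$ rounds. The one small step you glossed over is that in order to ``list its members locally'' each vertex of $T(P')$ must already know the complete edge list of $P'$; the paper handles this with an explicit preliminary upcast/broadcast of the structure of $P'$ (costing $O(\dfrag)$ rounds), after which knowing the two endpoints $e'_{i,1},e'_{i,2}$ indeed suffices to enumerate $E'_i$.
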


\begin{proof}
First, we let all vertices in $T(P')$ learn the complete structure of $P'$, since $P'$ has length $O(\dfrag)$, this can be done in $O(\dfrag)$ time using upcast and broadcast in $T(P')$. Next, we compute the edges $e'_{i,b}$ for $1 \leq i \leq k, b \in \{1,2\}$. Recall that $e'_{i,b}$ is the edge in $P'$ such that $\cut(e'_{i,b},e_{i,b})$ is minimal. Since all vertices know the values $\{e_{i,b}, \cov(e_{i,b}) \}$ (as the edges $e_{i,b}$ are the highest or lowest in the highway $P_i$), we can use Claim \ref{claim_non_highway}, to compute these edges. In more detail, if we fix an edge $e=e_{i,b}$, using one aggregate computation all edges $e' \in P'$, learn the value $\cut(e',e)$. To let all vertices in $T(P')$ learn the highest edge $e' = e'_{i,b}$ such that $\cut(e',e)$ is minimal, we use convergecast and broadcast in $T(P')$. Using pipelining, all vertices in $T(P')$ can learn all the edges $e'_{i,b}$, which requires $O(k)$ aggregate and broadcast computations, this takes $O(\dfrag + k)$ time. Since all vertices in $T(P')$ know the complete structure of $P'$, they can deduce the sets $E'_i$, as $E'_i$ is the set of all edges in $P'$ between $e'_{i,1}$ to $e'_{i,2}$. As the whole computation was done inside $T(P')$, we can work simultaneously in different non-highways not in the same root to leaf path, as their trees $T(P')$ are edge-disjoint.
\end{proof}

\begin{figure}
\centering
\begin{minipage}{.45\textwidth}
  \centering
  \includegraphics[width=.9\linewidth]{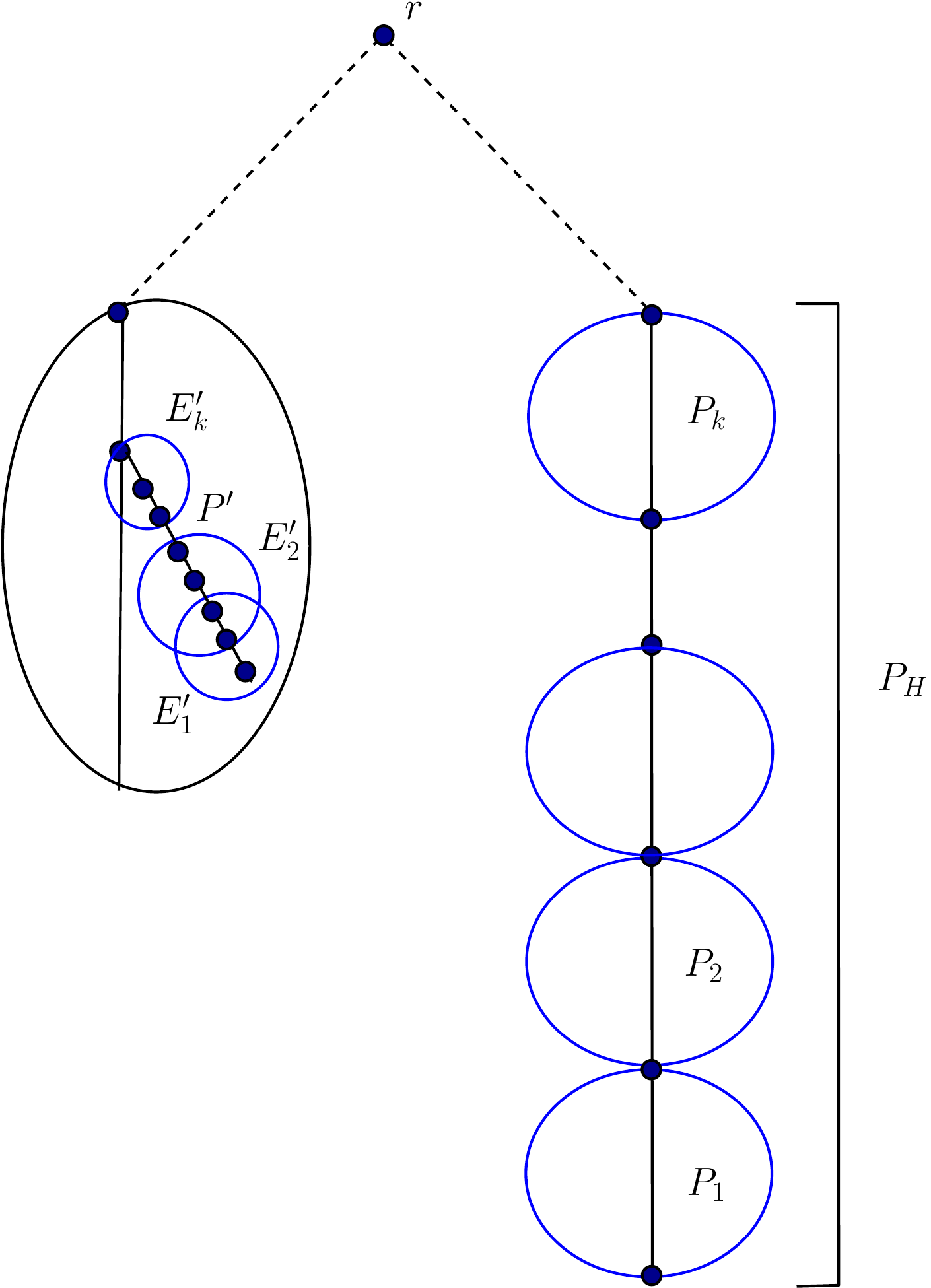}
  \captionof{figure}{\small An illustration of the partitioning. $P'$ is partitioned in almost disjoint $E'_1, \cdots, E'_k$ w.r.t. $P_1, \cdots, P_k$ which are highways of $P_H$.}
\label{partitioning_pic}
\end{minipage}%
\hspace{1 cm}
\begin{minipage}{.45\textwidth}
  \centering
  \includegraphics[width=.9\linewidth]{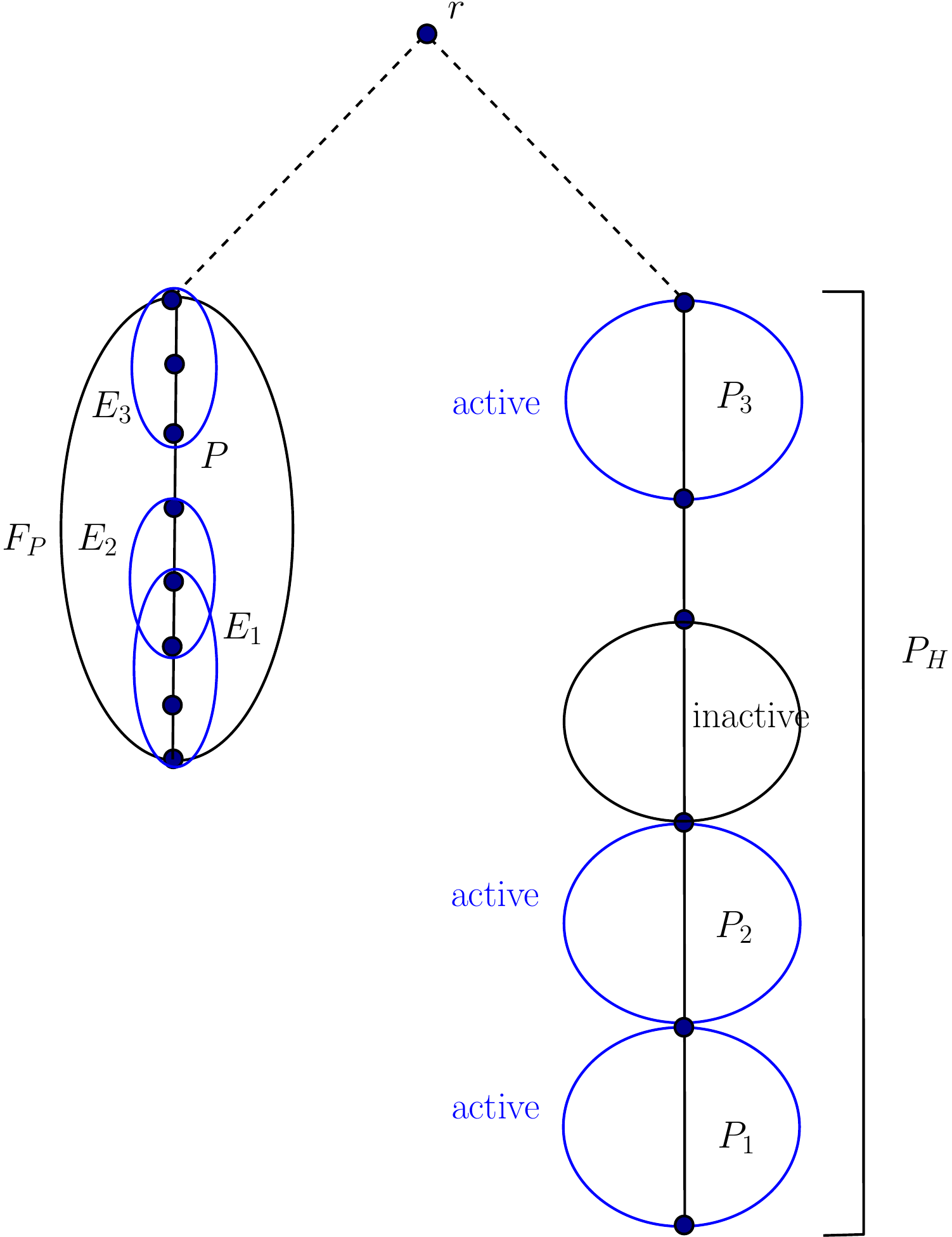}
  \captionof{figure}{\small Partitioning highway $P$ w.r.t. \textcolor{blue}{active highways} in $P_H$. $P_1, P_2, P_3$ are \textcolor{blue}{active highways} in $P_H$, and $P$ is partitioned in almost disjoint $E_1, E_2, E_3$ w.r.t. them.}
  \label{fig:hw-partition}
\end{minipage}
\end{figure}

We will also show a similar claim when $P$ is a highway inside a fragment. The only change here is the time required to compute the partition.

\begin{lemma} \label{lemma_partitioning-highway}
Let $P$ be a highway in a fragment, and let $P_H$ be a highway path such that $P$ is non-splittable w.r.t. $P$ (i.e., $P_H$ is either orthogonal to $P$, or an ancestor of a descendant of $P$). Among the highways in $P_H$, denote by $P_1,...,P_k$ the different highways in $P_H$ which are labeled active (i.e., each such $P_i$ is potentially interested in $P$).  Then, we can break the edges of $P$ to (not necessarily disjoint) subsets $E_1,...,E_k$, such that the following holds.
\begin{enumerate}
\item $\sum_{i=1}^k |E_i| = O(\dfrag + k).$ \label{clmitm:sum}

\item It is enough to solve the min 2-respecting cut problem on the pairs $\{P_i, E_i\}_{i=1}^k$. \label{clmitm:sufficient}

\item Assume that all vertices know the values $\{e',\cov(e')\}$, for all edges $e'$ that are highest or lowest in some highway. Then, we can compute the sets $E_i$ in $O(\dfrag + k + D)$ time. At the end of the computation all the vertices in $F_P$ know the identity of all edges in the set $E_i$, for all $1 \leq i \leq k$. \label{clmitm:compute}
\end{enumerate}
\end{lemma}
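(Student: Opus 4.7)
\medskip
\noindent\textbf{Proof Plan.} The plan is to mimic the construction and analysis of Lemma~\ref{lemma_partitioning}, replacing the subtree $T(P')$ of a non-highway by the fragment $F_P$ of the highway $P$, and restricting attention to the active highways only. For each active highway $P_i\subseteq P_H$, let $e_{i,1},e_{i,2}$ denote the lowest and highest edges of $P_i$ respectively, and let $e'_{i,b}\in P$ be the edge minimizing $\cut(e'_{i,b},e_{i,b})$ (breaking ties by taking the edge closest to the side of $P$ facing $P_H$, as dictated by the orientation coming from $P$ being non-splittable w.r.t.\ $P_H$). Define $E_i$ to be the set of edges of $P$ lying between $e'_{i,1}$ and $e'_{i,2}$ (inclusive).

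For item~\ref{clmitm:sum}, I would apply the monotonicity claim (Claim~\ref{claim_monotonicity}) with $P_0:=P$, $P_1:=P_H$, and $E_0:=P,E_1:=P_H$. Since all $P_i$ lie in a single super-highway $P_H$ which is non-splittable with respect to $P$, they admit a linear ordering by proximity to the vertex $t$ of Claim~\ref{claim_monotonicity}. Monotonicity then implies that each edge $e'_{j,b}$ with $j\neq i$ is either strictly above $e'_{i,1}$, strictly below $e'_{i,2}$, or coincides with one of them. Thus every edge of $P$ which is \emph{strictly internal} to some $E_i$ belongs to at most one $E_i$, while the at most $2k$ boundary edges $\{e'_{i,1},e'_{i,2}\}_{i=1}^k$ can be shared. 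Since $|P|=O(\dfrag)$, this gives $\sum_i|E_i|=O(\dfrag+k)$ exactly as in Claim~\ref{claim_xP'}.

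For item~\ref{clmitm:sufficient}, the argument of the corresponding part of Lemma~\ref{lemma_partitioning} carries over verbatim for any fixed active $P_i$: if the minimizer pair $(e,e')$ with $e\in P_i$ and $e'\in P$ satisfies $e=e_{i,b}$ then $(e'_{i,b},e_{i,b})$ already witnesses the minimum, and otherwise monotonicity places the optimal $e'\in P$ between $e'_{i,1}$ and $e'_{i,2}$, hence in $E_i$. The only conceptual point is that restricting to active highways is legitimate within the algorithm: a pair $(P_i,P)$ with $P_i$ not potentially interested in $P$ cannot, with high probability, witness the minimum $2$-respecting cut with both edges in $(P_i,P)$, by Claim~\ref{clm:clm:min-cut-interesting-path} combined with the sampling guarantee (Lemma~\ref{lemma:find-int-samp}); such pairs are handled separately by the ``no edge'' routing trick (e.g.\ Lemma~\ref{lem:paths_h_h-noedge}) in the larger algorithm.

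For item~\ref{clmitm:compute}, I would proceed as follows. Using the pairing theorem (Theorem~\ref{thrm:pairing_up_highways}), all vertices of $F_P$ already know the identities of the active highways $P_1,\dots,P_k$ and the identities and cover values of their top/bottom edges $e_{i,b}$; the $D$ term accounts for this global dissemination. Given that, for each $e_{i,b}$ we need to compute the minimizing edge $e'_{i,b}\in P$ of $\cut(\cdot,e_{i,b})$ and broadcast the answer inside $F_P$. Since $P$ is a highway lying inside $F_P$, the cut computation $\cut(e',e_{i,b})$ for all $e'\in P$ can be done by a single aggregate computation in $F_P$ (using the same decomposition idea as Claim~\ref{claim_non_highway_fragment} and the cover expression of Claim~\ref{claim_cov_highways}, since $P_i$ lies outside $F_P$). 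Pipelining these $O(k)$ aggregate-and-broadcast computations inside the fragment via Claim~\ref{claim_pipeline_fragment} gives total cost $O(\dfrag+k)$, plus the $O(D)$ term for the initial global dissemination, yielding the claimed $O(\dfrag+k+D)$ bound. After these computations every vertex in $F_P$ knows all pairs $(e'_{i,1},e'_{i,2})$, hence can locally reconstruct each $E_i$. The main obstacle, as in Lemma~\ref{lemma_partitioning}, is making the monotonicity-based disjointness argument work cleanly when the pairs $\{e_{i,1},e_{i,2}\}$ from different active highways can interleave in intricate ways; but because $P_H$ is totally ordered along a single root-to-leaf direction and non-splittable with respect to $P$, Claim~\ref{claim_monotonicity} applies uniformly and the analysis goes through.
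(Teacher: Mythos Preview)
Your treatment of items~\ref{clmitm:sum} and~\ref{clmitm:sufficient} is essentially the paper's: define $E_i$ via the extremal edges of the active highways and invoke monotonicity (Claim~\ref{claim_monotonicity}) exactly as in Lemma~\ref{lemma_partitioning}. That part is fine.

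The gap is in item~\ref{clmitm:compute}. You assert that, once the identities and cover values of the $e_{i,b}$ are globally known, ``the cut computation $\cut(e',e_{i,b})$ for all $e'\in P$ can be done by a single aggregate computation in $F_P$.'' This is false. By Claim~\ref{claim_cov_highways}, $\cov(e',e_{i,b}) = \extcov(P,P_i) + \extcov(e',P_i) + \extcov(e_{i,b},P) + \cov_F(e',e_{i,b})$. The terms $\extcov(e',P_i)$ and $\cov_F(e',e_{i,b})$ can indeed be computed by an aggregate computation inside $F_P$, since the relevant edges have one endpoint in $F_P$. But the term $\extcov(P,P_i)$ counts edges with \emph{both} endpoints outside $F_P\cup F_{P_i}$, and $\extcov(e_{i,b},P)$ counts edges with one endpoint in $F_{P_i}$ and the other outside $F_P\cup F_{P_i}$; neither is visible to any vertex of $F_P$. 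The analogy you draw with Claim~\ref{claim_non_highway_fragment} breaks down precisely because, unlike a non-highway edge, a highway edge $e'\in P$ can be covered by non-tree edges with both endpoints outside $F_P$.

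The paper's proof handles this by computing $\extcov(e_{i,b},P)$ inside $F_{P_i}$ and $\extcov(P,P_i)$ via a BFS-tree aggregate (Claim~\ref{clm:two-short-highway}), then broadcasting each over the whole graph; this costs $O(1)$ global aggregate-and-broadcast per edge $e_{i,b}$, hence $O(D+k)$ pipelined. That is the real source of the $D$ term in the bound and of Remark~\ref{rem:partition}, which explicitly flags the extra $k$ BFS-tree computations as the distinguishing feature of this lemma over Lemma~\ref{lemma_partitioning}. Your ``$D$ term accounts for initial dissemination'' does not cover these per-pair global computations.
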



\begin{proof}
Property \ref{clmitm:sum} and \ref{clmitm:sufficient} can be proven exactly as in Lemma \ref{lemma_partitioning}. Note that here we are considering only the active highways in $P_H$, but that does not change much. As before, let $e_{i,1}$ and $e_{i,2}$ are the two extremal edges of $P_i$: If $P_H$ is orthogonal or ancestor of $P$, then they are the lowest and the highest edges of $P_i$, otherwise of $P_H$ is a descendant of $P$, then they are the highest and the lowest edges respectively. We find  $e'_{i,1}$ and $e'_{i,2}$ on $P$ as before, but this time w.r.t. only the active highways of $P_H$, and a similar argument shows Property \ref{clmitm:sum} and \ref{clmitm:sufficient} (See Figure \ref{fig:hw-partition} for a clearer idea). We will show how to achieve Property \ref{clmitm:compute} next. 

As in Lemma \ref{lemma_partitioning}, the computation boils down to finding the edges $e'_{i,b}$ for $1 \leq i \leq k$ and $b \in \set{1,2}$. Fix an edge $e = e_{i,b}$. We will show how every edge $e' \in P$ learns the value $\cut(e', e)$. We have assumed that every edge $e'$ knows the value $\set{e,\cov(e)}$ because $e$ is either the highest or the lowest in the highway $P_i$. Also, $e'$ knows the value $\cov(e')$. So it remains for $e'$ to know the value $\cov(e',e)$. Note that $\cov(e',e) = \cov_F(e',e) + \extcov(e, P) + \extcov(e', P_i) + \extcov(P, P_i)$ by Claim \ref{claim_cov_highways}. The value $\extcov(e, P)$ can be calculated inside $P_i$ by an aggregate computation and can be broadcast over a BFS tree of $G$. Similarly, the value $\extcov(P, P_i)$ can be calculated by one aggregate computation over a BFS tree of $G$ and can be broadcast. Each of these two operation requires $O(1)$ bits of aggregate and broadcast and takes time $O(D)$. 
The term $\cov_F(e',e) + \extcov(e', P_i)$ can be computed similar to that in Lemma \ref{lemma_partitioning}---We use one single aggregate computation inside $F_P$ for every edge $e' \in P$ learn the value of $\cov_F(e', e)$. At this point, each edge $e' \in P$ can compute $\cut(e',e)$. To let all the vertices of $F_P$ know the identity of $e'_{i,b}$ for which $\cut(e'_{i,b}, e)$ is the smallest, we just need to do convergecast and broadcast inside $F_P$. In total, this requires $O(1)$ bits of aggregate and broadcast inside $F_P$. Hence, this can be computed for every $e_{i,b}$ in a pipelined fashion which takes time $O(D + k)$ for aggregate and broadcast on a BFS tree of $G$ and $O(\dfrag + k)$ for aggregate and broadcast over $F_P$. Hence the total round complexity is $O(\dfrag + k + D)$.
\end{proof}

\begin{remark} \label{rem:partition}
A major difference between Lemma \ref{lemma_partitioning} and \ref{lemma_partitioning-highway} is the following: Computing the $E'_i$'s for different non-highway paths can be done simultaneously in Lemma \ref{lemma_partitioning} because the computation happens entirely within $E'_i$. In Lemma \ref{lemma_partitioning-highway}, however, $k$ many additional aggregate computation over the BFS tree is needed for each highway.  Later when we have to do it over many highways parallelly, we will see that we have to perform $\tO(\nfrag)$ many aggregate computations. When we pipeline them, the complexity of aggregate computation will be $\tO(D + \nfrag)$ and the total rounds complexity will be $\tO(\dfrag + \nfrag + D)$. 
\end{remark}
\section{Finding the min 2-respecting cut} \label{sec:2_respecting}

We will next explain how to find the 2 tree edges $e,e'$ that define the minimum cut. We divide to cases depending if the edges $e,e'$ are part of a highway or a non-highway. We start by explaining how we deal with the simple case that the cut is defined by one tree edge, and then focus on the case that the cut is defined by two edges.

\subsection{1-respecting cut} \label{sec:1-resp-cut}
 From Claim \ref{claim_1_respecting}, for each tree edge $e$, the value of the 1-respecting cut defined by $e$ is $\cov(e)$, and is known to $e$, by Claim \ref{claim_learn_cov}. Thus, in $O(D)$ rounds, all the network can know the value $\min\limits_{e\in E} \cov(e)$ of the min 1-respecting cut, as well as the edge $e$ minimizing this expression. From here on we assume that the the min cut is attained as the 2-respecting cut of some pair of tree edges. 

\subsection{Simple cases with non-highways} \label{ssec:finding_cut_non_highway}

We show how to compare all non-highway edges to all the edges in their fragment, as well as to all non-highways they are potentially interested in in other fragments.
The general idea is simple. From Claim \ref{claim_non_highway_fragment}, we can compare the edges of a non-highway path $P'$ to the edges of the fragment $F_{P'}$ in $O(\sfrag)$ time. Similarly, we can compare $P'$ to all edges of a different fragment in $O(\sfrag)$ time. From Corollary \ref{corol:-layering_interested}, we know that we only need to compare a non-highway $P'$ to non-highways in $O(\log{n})$ different fragments, and these fragments are known to all vertices in $T(P')$. Hence, overall, we can compare $P'$ to all these fragments in $\tilde{O}(\sfrag)$ time. Moreover, the computations can be done in parallel for different orthogonal non-highways $P'$. As non-highways in the same layer are orthogonal, we process the graph according to the $O(\log{n})$ layers, and in iteration $i$, in $\tilde{O}(\sfrag)$ time, take care of all non-highways in layer $i$. This gives the following, for a full proof see Appendix \ref{sec:app_8_nh}. 

\begin{restatable}{claim}{nhtwoedges} \label{claim_nh_2respecting}
Let $e,e'$ be a pair of edges that minimize $\cut(e,e')$ such that $e,e'$ are either two non-highway edges in different fragments, or two edges in the same fragment where at least one of them is a non-highway edge. In $\tilde{O}(D+\sfrag)$ time all the vertices in the graph learn the values $\{f,f',\cut(f,f')\}$ for a pair of edges such that $\cut(f,f') \leq \cut(e,e')$.
\end{restatable}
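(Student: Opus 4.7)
The plan is to handle the two sub-cases (both edges non-highway in different fragments; both edges in the same fragment with at least one non-highway) by processing non-highway boughs \emph{layer by layer}, using the layering decomposition of the non-highways from Section~\ref{ssec:way_todo_layering}. Boughs of the same layer are pairwise orthogonal, so the subtrees $T(P')$ over such boughs are edge-disjoint and all computations confined to these subtrees can run in parallel. Combined with the bound $L = O(\log n)$ on the number of layers (Claim~\ref{claim:bound_number_layers}), a per-layer cost of $\tilde O(\sfrag)$ yields overall complexity $\tilde O(\sfrag)$ for this phase, plus $O(D)$ at the end to aggregate the winning pair over a BFS tree.

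Fix a layer $i$ and consider, in parallel, every non-highway bough $P'$ of layer $i$. For the ``same-fragment'' sub-case, we invoke Claim~\ref{claim_non_highway_fragment} once, which lets every edge $e' \in P'$ learn $\cut(e', e)$ for each $e \in F_{P'}$ that is above or orthogonal to $e'$ in $F_{P'}$, in $O(\sfrag)$ rounds in parallel across all boughs of layer $i$. Any same-fragment pair with at least one non-highway edge is captured: viewed from the lower endpoint the other edge is either above or orthogonal in $F_{P'}$, so it is compared. For the ``different-fragments'' sub-case, Corollary~\ref{corol:-layering_interested} supplies for each such $P'$ a list of at most $B_{path} = O(\log n)$ fragments $F$ containing a non-highway that $P'$ is potentially interested in, and, crucially, this list is already known to every vertex of $T(P')$.

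For each such target fragment $F$, Claim~\ref{claim_edge} guarantees an edge between $T(P'^\downarrow)$ and $F$; one such edge $f$ is picked by a single aggregate computation in $T(P')$ (vertices know which neighbours lie in $F$ via their stored fragment information) and then broadcast inside $T(P')$. Since the cover values $\{e,\cov(e)\}_{e \in F}$ are known to all vertices of $F$ after the preprocessing of Claim~\ref{claim_learn_cov}, we can apply Claim~\ref{claim_orthogonal_nh} to let every $e' \in P'$ learn $\cut(e',e)$ for every $e \in F$ in $O(\sfrag)$ rounds. Pipelining over the $O(\log n)$ targets and running in parallel over all orthogonal boughs of layer $i$ costs $\tilde O(\sfrag)$ per layer; the subtrees $T(P')$ being edge-disjoint and the edges $f$ being distinct prevents congestion, and each fragment $F$ only needs to push information outward through its connecting edges, not re-aggregate for each query.

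For correctness, note that by Claims~\ref{clm:clm:min-cut-interesting-path} and~\ref{claim:samp_correct} (through Lemma~\ref{lemma:find-int-samp}) any minimizing pair $(e,e')$ of the different-fragments type whose cut value is smaller than every $1$-respecting cut must lie in paths that are mutually potentially interesting w.h.p., so the fragment of $e'$ appears in the $O(\log n)$-list scanned from $P' \ni e$ (and symmetrically); otherwise a $1$-respecting cut is already $\leq \cut(e,e')$ and is produced by Section~\ref{sec:1-resp-cut}. Hence the pair $(f,f')$ achieving the global minimum among all pairs examined satisfies $\cut(f,f') \leq \cut(e,e')$. After all layers have been processed, a single $\min$-aggregation followed by a broadcast of the winning triple $\{f,f',\cut(f,f')\}$ over a BFS tree takes $O(D)$ rounds, giving total complexity $\tilde O(D+\sfrag)$. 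The only delicate step is verifying that the layer-parallel ``different-fragments'' computations do not congest the target fragments; this reduces to the observation that each application of Claim~\ref{claim_orthogonal_nh} is essentially local to $T(P')$ once the edge $f$ is known, which in turn relies on the orthogonality of same-layer boughs and on fragments pre-broadcasting their cover values only once.
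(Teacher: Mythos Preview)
Your proof is correct and follows essentially the same approach as the paper: layer-by-layer processing of non-highway boughs, invoking Claim~\ref{claim_non_highway_fragment} for the same-fragment case and Claim~\ref{claim_orthogonal_nh} (via Corollary~\ref{corol:-layering_interested} and Claim~\ref{claim_edge}) for the different-fragments case, with a final BFS aggregation. One small imprecision: Claim~\ref{claim_learn_cov} only gives each edge its own cover value; to satisfy the precondition of Claim~\ref{claim_orthogonal_nh} you need an additional $O(\sfrag)$ broadcast inside each fragment so that all vertices of $F$ know $\{e,\cov(e)\}_{e\in F}$ (the paper isolates this as Claim~\ref{claim_Tp_info}), but you implicitly acknowledge this in your closing remark about fragments ``pre-broadcasting their cover values only once.''
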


\remove{
\begin{enumerate}
\item $e$ and $e'$ are in the same non-highway $P'$. \label{case_same_nh}
\item $e'$ is in a non-highway $P'$ and $e$ is in a non-highway above it it the same fragment. \label{case_nh_fragment}
\item $e'$ is in a non-highway $P'$ and $e$ is in the highway of the same fragment. \label{case_nh_h_fragment}
\item $e'$ and $e$ are in orthogonal non-highways. \label{case_nh_orthogonal}
\end{enumerate}

Note that if $e'$ and $e$ are in non-highways in different fragments, they must be orthogonal, as the only edges above a non-highway in different fragments are highway edges. Hence, the above cases include all the options where $e$ and $e'$ are in non-highways, as well as the case that one of them is in a non-highway, and the other is in the highway of the same fragment.
The following claim is useful for the proof, and shows that it is efficient to broadcast information inside all non-highway paths in the same layer. 

\begin{claim} \label{claim_broadcast}
Fix a layer $j$, and assume that for each non-highway path $P$ in layer $j$ there are $k$ pieces of information of size $O(\log{n})$ where initially each one of them is known by some vertex in $T(P)$. In $O(\dfrag + k)$ rounds, all the vertices in $T(P)$ can learn all the $k$ pieces of information. In addition, this computation can be done in all non-highway paths $P$ in layer $j$ in parallel.
\end{claim}

\begin{proof}
Note that since $P$ is a non-highway path, the entire subtree $T(P)$ is in the fragment of $P$ and has diameter $\dfrag$.
To solve the task, we use pipelined upcast and broadcast in the subtree $T(P)$. First, we collect all $k$ pieces of information in the root $r_P$, and then broadcast them to the whole tree $T(P)$, using pipelining this takes $O(\dfrag + k)$ rounds.  Additionally, as the different trees $T(P)$ of paths $P$ in layer $j$ are edge disjoint by Observation \ref{obv:path_disjoint}, the computation can be done in parallel for all non-highway paths in layer $j$.  
\end{proof}

We first make sure that all vertices in $T(P')$, learn the values $\{e,\cov(e)\}_{e \in P'}$.

\begin{claim} \label{claim_Tp_info}
In $\tilde{O}(\dfrag)$ time, for all non-highways $P'$, all vertices in $T(P')$ learn the values $\{e,\cov(e)\}_{e \in P'}$.
\end{claim}

\begin{proof}
We work in $O(\log n)$ iterations corresponding to the layers. In iteration $i$, we take care of non-highways $P'$ in layer $i$.
We use Claim \ref{claim_broadcast} to let all vertices in $T(P')$ learn the values $\{e', \cov(e')\}_{e' \in P'}$. As the diameter of $P'$ is $O(\dfrag)$, this is $O(\dfrag)$ information, hence this computation takes $O(\dfrag)$ time by Claim \ref{claim_broadcast}. In addition, this can be done for all paths $P'$ in layer $i$ simultaneously. Note that all edges $e$ know the values $\cov(e)$ and their layer due to Claim \ref{claim_learn_cov} and Lemma \ref{lemma:layer_decomposition_inside_fragment_nonhighway}, hence the information $\{e', \cov(e')\}$ is initially known by the edge $e' \in P'$, as required for using the claim. The time complexity for all iterations is $\tilde{O}(\dfrag)$.
\end{proof}

We next discuss the first 3 cases.

\begin{claim} \label{claim_three_cases}
Cases \ref{case_same_nh}, \ref{case_nh_fragment}, and \ref{case_nh_h_fragment} can be solved in $\tilde{O}(D+\dfrag)$ time. At the end of the computation, all vertices in the graph know the values $\{e',e, \cut(e',e)\}$ for the two edges $e',e$ such that $\cut(e',e)$ is minimal, among these cases.
\end{claim}

\begin{proof}
We work in $O(\log n)$ iterations, where in iteration $i$ we take care of the case that $e' \in P'$, where $P'$ is a non-highway in layer $i$, and $e$ is either in $P'$ above $e'$, or above it in a non-highway in the same fragment, or in the highway of the same fragment.

First, from Claim \ref{claim_Tp_info}, all edges $e' \in P'$ know the information $\{e', \cov(e')\}_{e' \in P'}$.
Given this information, we can use Claim \ref{claim_non_highway_fragment} to let all edges $e' \in P'$, where $P'$ is in layer $i$, learn the values $\cut(e',e)$, for $e$ that is in the highway of the fragment, or in a non-highway above $e'$ in the fragment, this is done for all non-highways in layer $i$ in parallel and takes $O(\dfrag)$ time. In addition, we use Claim \ref{clm:both-edge-nh} to let all edges $e' \in P'$, where $P'$ is in layer $i$, to learn the values $\cut(e',e)$ where $e$ is an edge above $e'$ in $P'$. Again, the computation is done in all paths $P'$ in the same layer in parallel, and takes $O(\dfrag)$ time. This completes the description of iteration $i$. To take care of $P'$ in all layers, we run $O(\log n)$ such iterations, which takes $\tilde{O}(\dfrag)$ time.
After this computation, for each two edges $e',e$ in Cases \ref{case_same_nh}, \ref{case_nh_fragment}, and \ref{case_nh_h_fragment}, one of the two edges knows the values $\{e',e, \cut(e',e) \}.$

To find the minimum 2-respecting cut among these cases, we run a convergecast in a BFS tree to learn about the minimum value computed, then we can broadcast the information to all vertices, this takes $O(D)$ time.
\end{proof}

We next discuss the case that $e',e$ are in orthogonal non-highways.

\begin{claim} \label{claim_nh_orthogonal}
Case \ref{case_nh_orthogonal} can be solved in $\tilde{O}(D+\dfrag)$ time. At the end of the computation, all vertices in the graph know the values $\{e',e, \cut(e',e)\}$ for the two edges $e',e$ such that $\cut(e',e)$ is minimal, from this case.
\end{claim}

\begin{proof}
Here, we work in $O(\log^2{n})$ iterations $(i,j)$, for $1 \leq i \leq j \leq \log{n}$, where in iteration $(i,j)$ we take care of the case that $e' \in P'$ in layer $i$, and $e \in P$ in layer $j \geq i$, where $P',P$ are orthogonal non-highways.

We next focus on iteration $(i,j)$. Fix a non-highway $P'$ in layer $i$. Note that from Corollary \ref{corol:-layering_interested}, the path $P'$ is potentially interested in $\potintnum$ paths of layer $j \geq i$. In addition, due to Lemma \ref{lemma:parse_paths},  all vertices in $T(P')$ know 
the set of $\potintnum$ orthogonal paths of layer $j$, that $P'$ is potentially interested in, in particular they know the lowest and highest vertex in each such path. 
We next fix one such path, $P$. From Claim \ref{claim_edge}, there is an edge between $T(P^{\downarrow})$ and $T(P'^{\downarrow})$. 
We denote the first such edge as $f$, breaking symmetries according to the ids of vertices in $f$. Note that the subtrees $T(P^{\downarrow})$ are disjoint for paths in the same layer by Observation \ref{obv:path_disjoint}, hence $f$ is a different edge for different pairs $P',P$ such that $P'$ in layer $i$ and $P$ is in layer $j$. 

We next let all vertices in $T(P')$ learn about $f$. This is done as follows. First, we run an aggregate computation in $T(P')$ to compute the first edge with one endpoint in $T(P'^{\downarrow})$ and one endpoint in $T(P^{\downarrow})$. Note that all vertices in $T(P'^{\downarrow})$ adjacent to an edge with endpoint in $T(P^{\downarrow})$ can check that the second endpoint is in $T(P^{\downarrow})$ using the LCA labels of edges, as follows. Let $dec,anc$ be the lowest and highest vertices in $P$, respectively. A vertex $u$ is in $T(P^{\downarrow})$ iff $LCA(u,dec)$ is a vertex in $P$, which is not $anc$. This can be checked using LCA labels as all vertices in $T(P')$ know $anc,dec$. First, compute $w=LCA(u,dec)$ and then compare $w$ to $anc$ to check if it strictly below it or not. This allows computing the edge $f$ in $T(P')$, using broadcast, all vertices in $T(P')$ learn about $f$.
Also, from Claim \ref{claim_Tp_info}, all vertices in $T(P)$ know all values $\{e,\cov(e)\}_{e \in P}$. 

Given this information, we can use Claim \ref{claim_orthogonal_nh}, to let all edges $e' \in P'$, learn the values $\cut(e',e)$ for $e \in P$ in $O(\dfrag)$ time. To do so for all $\potintnum$ 
orthogonal paths $P$ of layer $j$ that $P'$ is potentially interested in, we run $\potintnum$ such computations, which takes $\tilde{O}(\dfrag)$ time, and can be done in parallel for all non-highways $P'$ in layer $i$. 
This completes the description of iteration $(i,j)$. Overall we have $O(\log^2 n)$ iterations, which results in a complexity of $\tilde{O}(\dfrag)$ time. After this, for each pair of edges $e',e$ in orthogonal non-highways, one of the edges knows the values $\{e',e,\cut(e',e)\}$. To find the minimum 2-respecting cut of this case we use convergecast and broadcast in a BFS tree, which takes $O(D)$ time.
\end{proof}

A schematic description of the algorithm for non-highways appears in Algorithm \ref{alg:nh-nh}.

\begin{center}
  \centering
  \begin{minipage}[H]{0.8\textwidth}
\begin{algorithm}[H]
\caption{Schematic algorithm for the non-highway case}\label{alg:nh-nh}
\begin{algorithmic}[1]
\Require From Lemma \ref{lemma:parse_paths} and Corollary \ref{corol:-layering_interested}, for each non-highway bough $P'$ in layer $i$ and for each $j \geq i$, all vertices in $T(P')$ know a set of orthogonal non-highway paths in layer $j$ from $\intpot{P'}_{ext}$.
\Statex \hrulefill
\State For each non-highway $P'$ in layer $1 \leq i \leq L$, all vertices in $T(P')$ learn the values $\{e,\cov(e)\}_{e \in P'}$.
\Statex \Comment{See Claim \ref{claim_Tp_info}}
\For{every layer $1 \leq i \leq L$}
	\For{Every non-highway $P'$ in layer $i$ in parallel}
		\State Use Claim \ref{claim_three_cases} to find the values of 2-respecting cuts $\{e',e\}$ where: Both $e'$ and $e$ are in $P'$, or $e' \in P'$ and $e$ is in the fragment highway of the same fragment, or $e' \in P'$ and $e$ is in a non-highway above $e'$ in the same fragment.
		\State At the end of the computation, for each one of the above cuts at least one of the edges $\{e,e'\}$ knows the values $\{e',e,\cut(e',e)\}$.
		
	\EndFor
\EndFor


\For{every layer $1 \leq i \leq L$}
	\For{Every non-highway $P'$ in layer $i$ in parallel}
	\For{Every layer $j \geq i$}
	\For{Every path $P$ in layer $j$ that $P'$ is potentially interested in}
	\State Find an edge $f$ between $T(P'^{\downarrow})$ and $T(P^{\downarrow})$ that exists from Claim \ref{claim_edge}.
	\State Use $f$ to route the values $\{e,\cov(e)\}_{e \in P}$ from $T(P)$ to $T(P').$
	\State Let all edges $e' \in P'$ compute the values $\{\cut(e',e)\}_{e \in P}$.
	\Statex \Comment{See Claims \ref{claim_nh_orthogonal} and \ref{claim_orthogonal_nh}}.

\EndFor
\EndFor
\EndFor
\EndFor
 
\State Communicate over a BFS tree to let all vertices learn the values $\{e',e,\cut(e',e)\}$ for edges $e',e$ in the above cases that minimize $\cut(e',e).$
\end{algorithmic}
\end{algorithm}
\end{minipage}
\end{center}
}

\subsection{Exactly one cut edge in a highway} \label{ssec:nh-high}
Here we discuss the case that the 2-respecting cut is defined by two edges $e',e$ such that $e'$ is in a non-highway $P'$, and $e$ is in a highway $P$ in different fragment. The case that $e$ is in a highway in the same fragment was already discussed in Section \ref{ssec:finding_cut_non_highway}.
We will deal separately with the case that there is an edge between $T(P')$ and the fragment $F_P$ of $P$, and the case there is no such edge. To do so, we first show the following.

\begin{claim} \label{claim_learn_edge}
In $\tilde{O}(\dfrag + \nfrag)$ time, for all non-highways $P'$, all vertices in $T(P')$ know for each fragment $F$ whether there is an edge between $T(P'^{\downarrow})$ and $F$, and the identity of an edge between $T(P'^{\downarrow})$ and $F$ if exists.
\end{claim}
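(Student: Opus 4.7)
The plan is to use pipelined aggregate-then-broadcast computations inside each $T(P')$, processed layer by layer according to the bough layering of the non-highways. The key structural facts we rely on are: (i) each non-highway $P'$ lies entirely inside a single fragment, so $T(P')$ has diameter $O(\dfrag)$; (ii) by Observation \ref{obv:path_disjoint}, the subtrees $T(P')$ of non-highways in the same bough layer are pairwise edge-disjoint, since boughs of the same layer are orthogonal; and (iii) by Claim \ref{claim:bound_number_layers} there are only $L=O(\log n)$ non-highway layers.

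As a one-round preprocessing step, every vertex announces its fragment identifier to each of its neighbors; this information fits in $O(\log n)$ bits since $\nfrag=O(\sqrt n)$. Combined with the LCA-label check of Claim \ref{claim_LCA_labels}, each vertex $v$ can then locally (a) tag every incident non-tree edge $\{v,u\}$ with the fragment containing $u$, and (b) test whether $v\in T(P'^{\downarrow})$ for any given non-highway $P'$ whose top and bottom endpoints it knows.

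The main loop is: for each layer $i=1,\dots,L$ in turn, and in parallel over every non-highway $P'$ of layer $i$, run $\nfrag$ pipelined aggregate-then-broadcast computations inside $T(P')$. The computation indexed by a fragment $F$ convergecasts toward $r_{P'}$ the smallest-ID non-tree edge with one endpoint in $T(P'^{\downarrow})$ and the other in $F$ (outputting a null token if none exists), and is then broadcast back down $T(P')$ so that every vertex of $T(P')$ learns all $\nfrag$ answers. Each aggregated message carries only $O(\log n)$ bits (a single edge-ID plus a flag), so by Claim \ref{claim_pipeline} together with $\mathrm{diam}(T(P'))=O(\dfrag)$, the whole pipeline for a single layer costs $O(\dfrag+\nfrag)$ rounds. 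Summing over the $L=O(\log n)$ layers yields the claimed $\tilde O(\dfrag+\nfrag)$ bound.

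The main subtlety is justifying the parallelism. Non-highways of different layers can have nested $T(P')$'s (one contained in another inside the same fragment), which is precisely why we serialize across layers. Within a single layer, however, the trees $T(P')$ are edge-disjoint by Observation \ref{obv:path_disjoint}, so Claim \ref{claim_pipeline} applies to all non-highways of that layer simultaneously with no congestion, and the per-layer cost genuinely remains $O(\dfrag+\nfrag)$.
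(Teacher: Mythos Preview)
Your proposal is correct and follows essentially the same approach as the paper: iterate over the $O(\log n)$ non-highway layers, and within each layer pipeline $O(\nfrag)$ aggregate-then-broadcast computations (one per fragment $F$) inside each $T(P')$, using the edge-disjointness of the trees $T(P')$ within a layer to run them in parallel. Your write-up is slightly more explicit about the preprocessing (exchanging fragment IDs across edges) and the membership test for $T(P'^{\downarrow})$, but the argument is the same.
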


\begin{proof}
We work in $O(\log n)$ iterations according to the layers. In iteration $i$, we take care of non-highways $P'$ in layer $i$. We work as follows. Given a fragment $F$, we run an aggregate computation in $T(P')$ to learn the identity of the first edge between $T(P'^{\downarrow})$ and $F$ if exists (for this, we use the fact that both endpoints of an edge can learn the fragments these endpoints belong to), then we broadcast the information to $T(P')$. To do so for all fragments, we run $O(\nfrag)$ aggregate and broadcast computations, which takes $O(\dfrag + \nfrag)$ time using pipelining. This can be done in parallel for all non-highways $P'$ in the same layer, as their trees $T(P')$ are edge-disjoint. Computing this for all layers, results in $\tilde{O}(\dfrag + \nfrag)$ time.  
\end{proof}

Next, we deal with the case that there is no edge between a non-highway and a highway. The main idea is that since there is no edge between the paths, one can employ Lemma \ref{lemma_nh_no_edge} in order to obtain the necessary information to compute the min 2-respecting cut between these paths.
\
\begin{claim} \label{claim_cut_nh_h_no_edge}
In $\tilde{O}(D + \dfrag + \nfrag)$ time, all vertices learn the values $\{e',e,\cut(e',e)\}$ for edges $e',e$ that minimize the expression $\cut(e',e)$, where $e'$ is in a non-highway $P'$, and $e$ is in a highway $P$, such that there is no edge between $T(P'^{\downarrow})$ and the fragment $F_P$ of $P$.
\end{claim}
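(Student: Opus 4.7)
The plan is to reduce the task to repeated applications of Lemma \ref{lemma_nh_no_edge}, each of which requires only $O(1)$ broadcast/aggregate computations inside $T(P')$, and then to pipeline and parallelize these across layers.

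First, I would gather the preprocessing information every non-highway $P'$ will need. Using Claim \ref{claim_cov_min}, in $O(D+\nfrag+\dfrag)$ time all vertices learn $\{e^P_{\min},\cov(e^P_{\min})\}$ for every highway $P$. Using Claim \ref{claim_cov_eP_nh}, in $O(\dfrag+\nfrag)$ time every edge $e'\in P'$ learns $\extcov(e',P)$ for every highway $P$, and this is done simultaneously for all non-highways in the same layer. Using Claim \ref{claim_learn_edge}, in $\tilde{O}(\dfrag+\nfrag)$ time every vertex of $T(P')$ learns, for each fragment $F$, whether there is an edge between $T(P'^{\downarrow})$ and $F$. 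Each edge $e'\in P'$ already knows $\cov(e')$ from Claim \ref{claim_learn_cov}. Hence, for every highway $P$ whose fragment $F_P$ has no edge to $T(P'^{\downarrow})$, the precondition of Lemma \ref{lemma_nh_no_edge} is met.

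Next, I would process the non-highways layer by layer (there are $L=O(\log n)$ layers by Claim \ref{claim:bound_number_layers}). Within a single layer $i$, all non-highways $P'$ in layer $i$ are orthogonal, so their subtrees $T(P')$ are edge-disjoint (Observation \ref{obv:path_disjoint}); therefore the aggregate/broadcast invocations inside different $T(P')$ can run in parallel. For each non-highway $P'$ in layer $i$, and for each highway $P$ such that there is no edge between $T(P'^{\downarrow})$ and $F_P$, I would invoke Lemma \ref{lemma_nh_no_edge}: a single aggregate and a single broadcast in $T(P')$ suffice to let every vertex of $T(P')$ learn the pair $(e',e)$ minimizing $\cut(e',e)$ over $e'\in P',\, e\in P$. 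There are at most $\nfrag$ highways to handle per $P'$, so pipelining these aggregate/broadcast computations in the edge-disjoint trees $T(P')$ takes $O(\dfrag+\nfrag)$ rounds per layer (Claim \ref{claim_pipeline}). Summing over $L=O(\log n)$ layers gives $\tilde{O}(\dfrag+\nfrag)$ total.

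Finally, after the layer-by-layer phase, for every qualifying pair $(P',P)$ at least one vertex knows the candidate value $\{e',e,\cut(e',e)\}$. I would then run a single convergecast-followed-by-broadcast over a BFS tree to let every vertex in the graph learn the globally minimum such triple; this costs $O(D)$ rounds. Adding everything up, the total round complexity is $\tilde{O}(D+\dfrag+\nfrag)$, as claimed. The main subtle point, and the one I would be most careful about, is ensuring that the $\nfrag$ pipelined invocations of Lemma \ref{lemma_nh_no_edge} for different highways $P$ can indeed share the same aggregate/broadcast machinery inside $T(P')$ without blowing up congestion—this follows because each invocation carries only $O(\log n)$ bits per highway and the trees $T(P')$ of same-layer non-highways are edge-disjoint, so Claim \ref{claim_pipeline} applies cleanly.
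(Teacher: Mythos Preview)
Your proposal is correct and follows essentially the same approach as the paper: preprocess via Claims \ref{claim_cov_min}, \ref{claim_cov_eP_nh}, and \ref{claim_learn_edge}, then iterate over the $O(\log n)$ layers, within each layer invoking Lemma \ref{lemma_nh_no_edge} once per highway (pipelined to $O(\dfrag+\nfrag)$ rounds) in parallel across the edge-disjoint trees $T(P')$, and finish with a BFS convergecast/broadcast. The only cosmetic difference is that you explicitly invoke Claim \ref{claim_pipeline} and Observation \ref{obv:path_disjoint} where the paper leaves these implicit.
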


\begin{proof}
We work in $O(\log n)$ iterations according to the layers. In iteration $i$, we take care of all non-highways $P'$ in layer $i$. 
We first let all edges $e' \in P'$ learn the values $\extcov(e',P)$ for all highways $P$, this takes $O(\dfrag + \nfrag)$ time using Claim \ref{claim_cov_eP_nh}, and can be done in all non-highways in the same layer simultaneously. Then, we let all vertices learn the values $\{e^P_{min}, \cov(e^P_{min})\}$ for all highways $P$, where $e^P_{min}$ is the edge $e \in P$ such that $\cov(e)$ is minimal. This takes $O(D+\nfrag + \dfrag)$ time by Claim \ref{claim_cov_min}. 

We next use this information to find the min 2-respecting cuts that have one edge in $P'$ and one edge in a highway $P$ such that there is no edge between $T(P'^{\downarrow})$ and the fragment $F_P$ of $P$. Note that all vertices in $T(P')$ know exactly the identity of all such highways from Claim \ref{claim_learn_edge}.
We next fix such highway $P$. We can use Lemma \ref{lemma_nh_no_edge}, to let all edges in $P'$, learn the values $\{e',e,\cut(e',e)\}$ for edges $e' \in P', e \in P$ such that $\cut(e',e)$ is minimal. This requires one aggregate and broadcast computations in $P'$. To do so for all such highways $P$, we do $O(\nfrag)$ computations, which takes $O(\dfrag + \nfrag)$ time, and can be done in parallel in different non-highways in layer $i$. To take care of non-highways in all layers, we have $O(\log n)$ iterations, which overall takes  $\tilde{O}(\dfrag + \nfrag)$ time.

After this, for each pair of a non-highway $P'$ and a highway $P$, where there is no edge between $T(P'^{\downarrow})$ and $F_P$, the vertices in $P'$ know the values $e',e,\cut(e',e)$ for edges $e' \in P', e \in P$ that minimize this expression. To learn the minimum such value over all pairs, we use convergecast and broadcast in a BFS tree, which takes $O(D)$ time.
\end{proof}

We next discuss the case there is an edge between a non-highway and a highway. Here, we use the partitioning described in Section \ref{sec:partitioning}, and bounds on the number of paths each path is potentially interested in from Section \ref{ssec:-interesting_lemma} to obtain a fast algorithm. Note that it is enough to compare a non-highway and a highway that are potentially interested in each other, as if $e' \in P',e \in P$ define the minimum 2-respecting cut, it holds that $P'$ and $P$ are potentially interested in each other.
The proof idea is as follows. First, we know that each non-highway is only potentially interested in $poly(\log{n})$ super-highways. To compare one non-highway $P'$ to a super-highway $P_H$ we use the path-partitioning lemma (Lemma \ref{lemma_partitioning}). After the partitioning, we route information from $P'$ to the fragment highways of $P_H$, which compute the relevant cut values. A fragment highway $P \in P_H$ only participates in the computation if it is potentially interested in a non-highway in the fragment $F_{P'}$, which bounds the total amount of computation. 
The full proof of the claim is deferred to Appendix \ref{sec:8_appendix}.

\begin{restatable}{claim}{clmshortnhlonghighedge} \label{clm:short-nh-long-high-edge}
In $\tilde{O}(D + \sfrag + \nfrag)$ time, all vertices learn the values $\{e',e,\cut(e',e)\}$ for edges $e',e$ that minimize the expression $\cut(e',e)$, where $e'$ is in a non-highway $P'$, and $e$ is in a highway $P$, such that there is an edge between $T(P'^{\downarrow})$ and the fragment $F_P$ of $P$, and such that $P'$ and $P$ are potentially interested in each other.
\end{restatable}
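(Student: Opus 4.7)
The plan is to process non-highways layer by layer (using the layering decomposition of the non-highways from Section \ref{ssec:-layerDecomp}), and within each layer work in parallel on all non-highways of that layer, since the trees $T(P')$ of different non-highways in the same layer are edge-disjoint by Observation \ref{obv:path_disjoint}. Fix a layer $i$ and consider a non-highway $P'$ in layer $i$; we will deal with pairs $(P', P)$ where $P$ is a fragment highway with $F_P \neq F_{P'}$, $P'$ and $P$ are potentially interested in each other, and there is an edge between $T(P'^{\downarrow})$ and $F_P$.

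First, by Corollary \ref{corol:-_layer_highway}, $P'$ is potentially interested in only $B_{path}$ pairwise orthogonal super-highways (bough highways), and Lemma \ref{lemma:parse_paths} ensures the vertices of $T(P')$ know the identity of these super-highways. For each such super-highway $P_H$, I would apply the path-partitioning lemma (Lemma \ref{lemma_partitioning}) to partition $P'$ into subsets $E'_1, \dots, E'_k$, one per fragment highway $P_j \in P_H$, with $\sum_j |E'_j| = O(\dfrag + k)$. Computing the partition requires knowing the $\cov$-values of the extremal edges of each fragment highway, which takes $\tilde O(D + \nfrag + \dfrag)$ to gather and broadcast once (Claim \ref{claim_cov_min}), and the partition itself is computed inside $T(P')$ in $O(\dfrag + k)$ time in parallel over all non-highways of the layer. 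By Lemma \ref{lemma_partitioning}, the minimum of $\cut(e',e)$ over $e' \in P', e \in P_H$ equals the minimum of $\cut(e',e)$ over $e' \in E'_j, e \in P_j$ across $j$.

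Next, for each fragment highway $P_j$ that is active for $P'$ (i.e., $P'$ is potentially interested in $P_j$) and has an edge to $T(P'^{\downarrow})$ (known from Claim \ref{claim_learn_edge}), I invoke Lemma \ref{lemma_paths_nh_h} in the direction that does the computation inside $F_{P_j}$ (option \ref{P1_compute}): we send the $O(|E'_j|)$ values $\{\cov(e'), \extcov(e', P_j)\}_{e' \in E'_j}$ through the connecting edge into $F_{P_j}$, then run $O(|E'_j|)$ pipelined aggregate/broadcast computations inside $F_{P_j}$. The $\extcov(e', P_j)$ values are precomputed for all $e' \in P'$ and all highways via Claim \ref{claim_cov_eP_nh}. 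Routing the per-edge information into the right fragment is the delicate point: I would have $P'$ first send the $O(|E'_j|)$ values to the edge $f$ via pipelined aggregate in $T(P')$, and then $f$ broadcasts them inside $F_{P_j}$. The total work inside any single fragment $F$ is controlled by Corollary \ref{corol:-highway_non_highway} (which says the fragment highway of $F$ is potentially interested in only $B_{path}$ non-highways in other fragments) together with the partitioning bound $\sum_j |E'_j| = O(\dfrag + k)$: summed over all non-highways $P'$ (in all layers) whose partitioning involves $F$, the total information handed to $F$ is $\tilde O(\sfrag + \nfrag)$, which pipelines inside $F_P$ in $\tilde O(\sfrag + \nfrag)$ rounds, and dilates to $\tilde O(D + \sfrag + \nfrag)$ for the at-most-one aggregate broadcast needed to synchronize per-fragment work. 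Finally, a single BFS-tree aggregate of the minimum found at each fragment reveals $\{e',e,\cut(e',e)\}$ to all vertices in $O(D)$.

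The main obstacle I expect is the careful accounting of congestion. Naively, each $P'$ may be partitioned against polylog super-highways, and each super-highway may contain up to $\nfrag$ fragment highways, so $P'$ could in principle be involved with $\Omega(\nfrag)$ fragment highways; only the partitioning lemma keeps the per-$P'$ outbound bandwidth at $\tilde O(\dfrag + \nfrag)$ rather than $\tilde O(\dfrag \cdot \nfrag)$. On the receiving side, we must not let a popular fragment highway be flooded, which is precisely where Corollary \ref{corol:-highway_non_highway} (``each fragment highway is potentially interested in only polylog non-highways'') is needed to bound the number of non-highways that actually ship data into any given $F_P$. Combining these two bounds and pipelining carefully gives the claimed $\tilde O(D + \sfrag + \nfrag)$ round complexity.
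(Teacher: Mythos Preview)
Your proposal is correct and follows essentially the same approach as the paper: layer-by-layer over non-highways, bound the super-highways via Corollary~\ref{corol:-_layer_highway}, partition $P'$ with Lemma~\ref{lemma_partitioning}, push the $|E'_j|$-sized packets into $F_{P_j}$ and compute there via Lemma~\ref{lemma_paths_nh_h} option~\ref{P1_compute}, and use Corollary~\ref{corol:-highway_non_highway} to cap the load at any receiving fragment. The only point the paper makes more explicit is the handshake over the edge $f$ to check whether $P_j$ is potentially interested in the fragment of $P'$ \emph{before} $P_j$ commits to the $O(|E'_j|)$ aggregates; this is what actually makes Corollary~\ref{corol:-highway_non_highway} bound the work at $F_{P_j}$ (otherwise arbitrarily many non-highways interested in $P_j$'s super-highway could ship data there), so you should state that step explicitly.
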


A schematic description of the algorithm for the non-highway-highway case appears in Algorithm \ref{alg:nh-h}.

\begin{center}
  \centering 
  \begin{minipage}[H]{0.8\textwidth}
\begin{algorithm}[H]
\caption{Schematic algorithm when exactly one edge is in a highway}\label{alg:nh-h}
\begin{algorithmic}[1]
\Require From Corollary \ref{corol:-_layer_highway}, for each non-highway bough $P'$ in layer $1 \leq i \leq L$, all vertices in $T(P')$ know a set of super-highways $P_H \in \intpot{P'}$. Each such super-highway is either completely above or completely orthogonal to the fragment of $P'$.
\Require From Corollary \ref{corol:-highway_non_highway}, for each fragment highway $P$ in a fragment $F_P$, all vertices in the fragment $F_P$ know the set of $O(\log n)$ fragments that contain non-highway paths that $P$ is potentially interested in, not including $F_P$. 
\Statex\hrulefill
\State For each non-highway $P'$ in layer $1\leq i \leq L$ and all fragments $F$, all vertices in $T(P')$ learn if there is an edge between $T(P'^{\downarrow})$ and the fragment $F$, and if so, the identity of such edge.
\Statex \Comment{See Claim \ref{claim_learn_edge}}

\For{every layer $1 \leq i \leq L$}
	\For{Every non-highway $P'$ in layer $i$ in parallel}
		\For{Every fragment highway $P$ where there is no edge between $T(P'^{\downarrow})$ and $F_P$}
		\State Compute $\{e',e,\cut(e',e)\}$ for $e' \in P', e \in P$ that minimize this expression.\label{line_no_edge}
		\Statex \Comment{See Claim \ref{claim_cut_nh_h_no_edge}}
		
		\EndFor
	\EndFor
\EndFor



\For{every layer $1 \leq i \leq L$}
	\For{Every non-highway $P'$ in layer $i$ in parallel}
		\For{Every super-highway $P_H \in \intpot{P'}$ (Represented by lowest fragment)}
			\State Let $P_1,...,P_k$ be the fragment highways of $P_H$.
			\State Partition the edges of $P'$ into sets $E'_1,...,E'_k$ such that we only need to compare $E'_i$ to $P_i$.
			\Statex \Comment{Use Lemma \ref{lemma_partitioning}}
			\For{Each fragment highway $P_j$ where there in edge $f$ between $T(P'^{\downarrow})$ and $F_{P_j}$ in parallel}
					\State Use $f$ to route the information $\{e',\cov(e'),\extcov(e',P_j)\}_{e' \in E'_j}$ from $P'$ to $P_j$.
					\State The cut values would be computed by the fragment highways $P_j$ that are potentially interested in $P'$, as described next. If $P_j$ is not potentially interested in $P'$ there is no need to compute the values.
					\Statex \Comment{See Claim \ref{clm:short-nh-long-high-edge}}
			\EndFor
		\EndFor
	\EndFor
	
	\For{every fragment highway $P$}
		\For{every fragment $F$ with set of edges $E_F$ that contains a non-highway path that $P$ is potentially interested in}
		\State Compute the values $\{\cut(e',e)\}_{e' \in E_F, e \in P}$ for all edges $e'$ where the values $\{e',\cov(e'),\extcov(e',P)\}$ were received from vertices in $F$, and specifically from the non-highway paths in $F$ that are potentially interested in $P$.
		\Statex \Comment{See Claim \ref{clm:short-nh-long-high-edge} and Lemma \ref{lemma_paths_nh_h}}
	\EndFor
\EndFor
\EndFor
 
\State Communicate over a BFS tree to let all vertices learn the values $\{e',e,\cut(e',e)\}$ for edges $e',e$ in the above cases that minimize $\cut(e',e).$
\end{algorithmic}
\end{algorithm}
\end{minipage}
\end{center}

\subsection{Both cut edges in highways}\label{ssec:highway_highway}

Now we turn to discussing the case of 2-respecting cuts when both cut edges $e$ and $e'$ are in different highway paths. The case when both edges are in the same highway will be discussed after that. We first show a claim analogous to Claim \ref{claim_learn_edge}.

\begin{claim} \label{claim_learn_edge_highway}
In $\tilde{O}(\dfrag + \nfrag)$ time, for all highways $P'$ (inside different fragments $F_{P'}$), all vertices in $F_{P'}$ know for each fragment $F$ whether there is an edge between $F_{P'}$ and $F$, and the identity of an edge between $F_{P'}$ and $F$ if exists.
\end{claim}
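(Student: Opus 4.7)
The plan is to mirror the argument given for Claim~\ref{claim_learn_edge} (the non-highway analogue), but now carry out the aggregate/broadcast computations inside each fragment rather than inside a subtree $T(P'^{\downarrow})$. The key point that makes this work is that, by the fragment decomposition (Lemma~\ref{lemma:strong_decomp_construct}), the fragments are edge-disjoint and each has diameter $O(\dfrag)$; moreover, each vertex already knows the fragment id of each neighbor (from Theorem~\ref{thm:-info_theorem_tree_edge}), so it can locally tell, for every one of its incident \emph{non-tree} edges, which fragment the other endpoint lies in.

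The algorithm runs in parallel over all fragments $F_{P'}$. For a fixed target fragment $F$, we perform one aggregate computation inside $F_{P'}$ (using Claim~\ref{claim_pipeline_fragment}) whose aggregation function at each vertex outputs the identifier of the lowest-id incident edge going to $F$ (or $\bot$ if none). This takes $O(\dfrag)$ rounds. We then broadcast the result inside $F_{P'}$, also in $O(\dfrag)$ rounds. Iterating this over all $O(\nfrag)$ fragments $F$ and pipelining the $O(\nfrag)$ aggregates and $O(\nfrag)$ broadcasts according to Claim~\ref{claim_pipeline_fragment} gives a total of $O(\dfrag + \nfrag)$ rounds inside each $F_{P'}$. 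Since the fragments are edge-disjoint, all $F_{P'}$'s proceed simultaneously with no congestion across fragments.

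At the end, each vertex in $F_{P'}$ has received, for every fragment $F$, either the identifier of a representative edge between $F_{P'}$ and $F$ or the indication that no such edge exists, which is exactly the stated guarantee. The only minor subtlety is to be careful that for the highway endpoints $r_{P'}$ and $d_{P'}$, which belong to more than one fragment, the aggregate is performed with respect to the fragment currently under consideration; this is handled locally by each vertex since it knows its full fragment membership from Lemma~\ref{lemma:strong_decomp_construct}.

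I do not foresee a serious obstacle here: the statement is essentially the highway counterpart of the already-proved Claim~\ref{claim_learn_edge}, and the only difference is swapping the ambient subtree $T(P'^{\downarrow})$ for the fragment $F_{P'}$, which is exactly the setting in which our fragment-pipelining primitive (Claim~\ref{claim_pipeline_fragment}) is designed to give the desired $O(\dfrag + \nfrag)$ bound. The mild bookkeeping issue is ensuring that the pipelined aggregates do not conflict with the pipelined broadcasts, which is precisely what Claim~\ref{claim_pipeline_fragment} guarantees.
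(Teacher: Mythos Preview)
Your proposal is correct and follows essentially the same approach as the paper: run an aggregate computation inside each fragment $F_{P'}$ to find the first edge to a fixed target fragment $F$, broadcast the result, pipeline over all $O(\nfrag)$ targets to get $O(\dfrag+\nfrag)$ rounds, and work in parallel across fragments since they are edge-disjoint. Your extra remarks about the shared endpoints $r_{P'},d_{P'}$ and the use of Claim~\ref{claim_pipeline_fragment} are fine but go slightly beyond what the paper spells out.
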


\begin{proof}
The proof is similar to that of Claim \ref{claim_learn_edge} as well. We run an aggregate computation inside $F_{P'}$ to learn the identity of the first edge between $F_{P'}$ and $F$, and if such an edge exist, we broadcast this information inside $F_{P'}$. This requires $O(\dfrag)$ rounds. As there are $\nfrag$ many fragments $F$, doing so for all fragments requires $\tO(\dfrag + \nfrag)$ rounds. We can do it for all short highways simultaneously as the fragments $F_{P'}$'s are disjoint for different short highways.
\end{proof}

The next will be a two step argument, each step requiring the monotonicity property as presented in Claim \ref{claim_monotonicity}: In the first step, we compute the complexity of comparing a highway path inside a fragment and a long highway path. This is where we use the path-partitioning trick on the short highway path to come up with an efficient algorithm---similar to what we have already seen in the previous section when we compared a non-highway path (which, by definition, is contained inside a fragment) and a long highway path. In the second step, we use this algorithm as a subroutine to come up with a divide and conquer technique for comparing a long highway to a long highway. This step also uses the monotonicity property of the minimum 2-respecting cut. Recall that, when we compare two long highway paths $P_{H_1}$ and $P_{H_2}$, a highway $P$ inside a fragment $F$ in either of the long highway paths is \textit{active} if $P$ is \textit{potentially interested} in the other long highway path. In both steps mentioned before, the complexity is in terms of the number of \textit{active} highways in the computation, and not in terms of the number of total highways in the computation. This is crucial because this will help us bound the complexity when we compare many pairs of long highway paths simultaneously---we will use Theorem \ref{thrm:pairing_up_highways} which bounds the number of pairs an \textit{active} highway takes part in is $O(\log^2 n)$. Next, we start with assuming the complexity of the first step, and show how to implement the second step. We then prove the complexity of the first step.

\begin{claim}\label{claim:blackbox_claim}
Let $P'$ be a highway completely contained in a fragment and $P_H$ be a highway path spread across many fragments $P_1, \cdots, P_k$ such that $P'$ is non-splittable w.r.t. $P_H$. Let $\ell$ many highways in $P_H$ are labelled as `active'. \begin{enumerate}
    \item In time $\tO(\ell + D + \dfrag)$ all vertices learn the value $\set{e, e',\cut(e,e')}$ for edges $e$ and $e'$ that minimize $\cut(e,e')$ where $e$ is in one of the `active' highways of $P_H$ and $e' \in P'$.
    
    \item The computation is done inside $P'$ and the active components of $P_H$ which is $\tO(\ell + \dfrag)$ bits of aggregate computation and, in addition, a broadcast of $\tO(\ell)$ bits of communication over the BFS tree of $G$ is performed.
    
\end{enumerate}
\end{claim}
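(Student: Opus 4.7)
The plan is to combine the path-partitioning lemma for highways (Lemma \ref{lemma_partitioning-highway}) with the two-highway comparison routines from Section \ref{sec:alg_short_path}, organizing the computation so that total communication stays within the claimed $\tO(\ell + D + \dfrag)$ budget. First I would apply Lemma \ref{lemma_partitioning-highway} with $P = P'$ and the $\ell$ active highways $P_1,\ldots,P_\ell$ of $P_H$, obtaining subsets $E_1,\ldots,E_\ell \subseteq P'$ with $\sum_i |E_i| = O(\dfrag + \ell)$ such that the minimum of $\cut(e',e)$ over $e'\in P'$ and $e$ in any active $P_j$ is attained at some $(e'_j,e_j)$ with $e'_j \in E_j$. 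The lemma delivers this partitioning to every vertex of $F_{P'}$ in $\tO(\dfrag + \ell + D)$ rounds. In parallel, a direct invocation of Claim \ref{claim_learn_edge_highway} tells every vertex of $F_{P'}$, for each active $P_i$, whether a bridging edge $f_i$ between $F_{P'}$ and $F_{P_i}$ exists.

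Next, for each pair $(E_i, P_i)$ I would run one of two routines depending on the existence of $f_i$. When there is no such edge, Lemma \ref{lem:paths_h_h-noedge} decomposes $\cut(e', e)$ into an $e'$-dependent part, an $e$-dependent part, and the global quantity $\extcov(P', P_i)$. All $\ell$ needed values $\extcov(P', P_i)$ are obtained via pipelined BFS aggregates (Claim \ref{clm:two-short-highway}) in $\tO(D + \ell)$ rounds; the two halves are then minimized internally by single aggregates in $F_{P'}$ restricted to $E_i$ and in $F_{P_i}$, pipelined across all such pairs with Claim \ref{claim_pipeline_fragment} for a total of $\tO(\dfrag + \ell)$ rounds. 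When $f_i$ does exist, I would invoke Lemma \ref{lemma_paths} and route $\{\cov(e'), \extcov(e', P_i)\}_{e' \in E_i}$ from $F_{P'}$ across $f_i$; the $\cov_F(e',e)$ values are then produced in $F_{P_i}$ by $|E_i|$ pipelined aggregates via Claim \ref{claim_covl}. The required $\extcov(e', P_i)$ values are precomputed by adapting Claim \ref{claim_cov_eP} to range only over the $\ell$ active highways rather than all of them, yielding $\tO(\dfrag + \ell + D)$ rounds. A final convergecast-then-broadcast over the BFS tree delivers the overall winning tuple $(e, e', \cut(e, e'))$ to every vertex in $\tO(D + \ell)$ rounds.

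The main obstacle is budget bookkeeping. Without the partitioning, routing the natural per-edge information from $F_{P'}$ to the $\ell$ distinct $F_{P_i}$'s through distinct bridging edges would need $\Theta(\dfrag \cdot \ell)$ rounds inside $F_{P'}$. The partitioning bound $\sum_i |E_i| = O(\dfrag + \ell)$ is precisely what compresses this to $\tO(\dfrag + \ell)$: each edge of $P'$ is on average consulted by only $O(1)$ fragments. The delicate verification is that all the pipelines --- the in-fragment aggregates for $\cov_F$ and for the no-edge partial minima, the BFS aggregates for $\extcov(P', P_i)$, and the cross-fragment data sent through each $f_i$ --- can be scheduled simultaneously while keeping the BFS-carried traffic at $\tO(\ell)$ bits, which together yield the claimed $\tO(\ell + D + \dfrag)$ total and the $\tO(\ell + \dfrag)$ bits of local plus $\tO(\ell)$ bits of global communication asserted in part~2 of the statement.
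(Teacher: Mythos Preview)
Your proposal is correct and follows essentially the same approach as the paper's proof: partition $P'$ via Lemma~\ref{lemma_partitioning-highway}, compute the global $\extcov(P',P_i)$ values by pipelined BFS aggregates (Claim~\ref{clm:two-short-highway}), handle the no-edge pairs with Lemma~\ref{lem:paths_h_h-noedge}, and handle the edge pairs by routing the $E_i$ data across $f_i$ and invoking Lemma~\ref{lemma_paths}. The only notable difference is that the paper invokes Claim~\ref{claim_cov_eP} in its full $O(\dfrag+\nfrag)$ form (treated as a one-time preprocessing across all fragments), whereas you explicitly restrict it to the $\ell$ active highways to keep the per-call cost at $\tO(\dfrag+\ell)$; this is a harmless and arguably cleaner variation.
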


\paragraph{An algorithm for two highway paths.} We will now show, assuming Claim \ref{claim:blackbox_claim}, how to compare different pairs of highways efficiently. To this end, we first focus on one pair of highway paths for now---this will showcase the divide and conquer technique that we want to employ. Later we show how to take care of all pairs of highways in parallel using Theorem \ref{thrm:pairing_up_highways}. Consider two highways $\high$ and $\way$ where $\high$ has $\ell_1$ many active short highways w.r.t. $\way$ and $\way$ has $\ell_2$ many active short highways w.r.t. $\high$. We make the following claim.

\begin{claim} \label{clm:two-long-highway}
Consider two highway paths $\high$ and $\way$ where $\high$ has $\ell_1$ many active highways w.r.t. $\way$ and $\way$ has $\ell_2$ many active highways w.r.t. $\high$. Also, assume that all vertices in $\high$ and $\way$ know the set of these active highways. 

Then there is an algorithm such that the minimum 2-respecting cut, where one edge from $\high$ and another tree edge from $\way$ is included, can be found in time $\tO(\ell_1 + \ell_2 + \dfrag +D)$. This computation requires $\tO(\ell_1 + \ell_2)$ bits of broadcast computation and at most $(\ell_1 + \ell_2 + \dfrag)$ bits of aggregate computation inside a fragment corresponds to an active highway. 
\end{claim}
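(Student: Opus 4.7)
The plan is to mimic the divide-and-conquer strategy from Mukhopadhyay--Nanongkai and sketched in Figure \ref{two_highway_intro}, using Claim \ref{claim:blackbox_claim} as the one-highway-to-long-path black box and the monotonicity property (Claim \ref{claim_monotonicity}) to split the problem into disjoint subproblems at each level.

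\medskip
\noindent\textbf{Recursive step.} Suppose we are asked to compare a sub-path $\high' \subseteq \high$ (containing some active highways w.r.t.\ $\way'$) with a sub-path $\way' \subseteq \way$. Let $P$ be the active fragment-highway of $\high'$ that sits roughly in the middle of the active highways of $\high'$ (with respect to their order along $\high'$). First invoke Claim \ref{claim:blackbox_claim} on $(P, \way')$: this returns a pair $(e_P, e')$ with $e_P \in P$, $e' \in \way'$ minimizing $\cut(e_P, e')$. Let $P'$ be the fragment-highway of $\way'$ containing $e'$, and invoke Claim \ref{claim:blackbox_claim} again on $(P', \high')$, obtaining a pair $(e'', f)$ minimizing $\cut(f, e'')$ with $f \in \high'$, $e'' \in \way'$. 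This gives us the best 2-respecting cut in $\high' \times \way'$ whose edge in $\high'$ lies in $P$ or whose edge in $\way'$ lies in $P'$.

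\medskip
\noindent\textbf{Splitting via monotonicity.} The remaining cases to consider are pairs $(e_1, e_2) \in \high' \times \way'$ where $e_1$ lies strictly above $P$ or strictly below $P$, and simultaneously $e_2$ avoids $P'$. By Claim \ref{claim_monotonicity} applied to $\high'$ and $\way'$ (oriented so that the vertex nearest the other path is the ``top''), if $e_1$ is strictly below $P$ in $\high'$ then the optimal partner in $\way'$ is at least as far from the meeting vertex as $e'$, i.e.\ it lies below $P'$; symmetrically for edges above $P$. Hence the remaining search space decomposes into two disjoint subproblems: $(\high'_{\mathrm{above}\,P}, \way'_{\mathrm{above}\,P'})$ and $(\high'_{\mathrm{below}\,P}, \way'_{\mathrm{below}\,P'})$. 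Recurse on both in parallel. The base case ($\high'$ or $\way'$ contains at most one active fragment-highway) is handled by one direct call to Claim \ref{claim:blackbox_claim}.

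\medskip
\noindent\textbf{Complexity analysis and the main obstacle.} Because $P$ is chosen as the median active highway of $\high'$, each recursive call at least halves the number of active highways in $\high'$ passed down one of the branches; balancing the analogous cut by $P'$ on $\way'$, the recursion tree has depth $O(\log n)$. At every level, the subproblems partition the active highways of the original pair, so the summed active counts telescope to $\ell_1 + \ell_2$. The cost of one level is therefore, by Claim \ref{claim:blackbox_claim}, at most $\tilde O(\dfrag)$ of local aggregation in parallel (disjoint fragments) plus the sum of BFS-tree broadcasts; pipelining these broadcasts over the shared BFS tree gives $\tilde O(D + \ell_1 + \ell_2)$ per level. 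Multiplying by the $O(\log n)$ recursion depth yields the claimed $\tilde O(\ell_1 + \ell_2 + \dfrag + D)$ bound, with the total broadcast bits being $\tilde O(\ell_1 + \ell_2)$ and the aggregate cost inside each active fragment being $\tilde O(\ell_1 + \ell_2 + \dfrag)$ as required. The main technical obstacle is precisely the pipelining argument: one must verify that parallel invocations of the black box, across disjoint fragments at the same recursion level, can share the BFS tree without multiplicatively losing a $\log n$ factor in broadcast, and that edges in the ``splitting'' highways $P$ and $P'$ (which appear in both child subproblems) are charged only $O(\log n)$ times overall rather than once per leaf of the recursion tree.
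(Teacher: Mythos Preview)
Your proposal is correct and follows essentially the same divide-and-conquer as the paper: pick the median active fragment-highway of $\high$, run the black box of Claim \ref{claim:blackbox_claim} against $\way$, then run it from the resulting partner $P'$ back against $\high$, and split into two disjoint subproblems via monotonicity with $O(\log n)$ recursion depth. Your closing worry is unfounded: since $P$ and $P'$ are fully handled by the two black-box calls, they are \emph{excluded} from both child subproblems (as your own ``strictly above/below'' phrasing indicates), so no edge is charged more than once per level.
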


\begin{proof}
Let us order the active components in $\high$ as $\high^1, \cdots, \high^{\ell_1}$, and similarly the active components of $\way$ are $\way^1, \cdots, \way^{\ell_2}$ such that $\high^1$ is the closest highway in $\high$ to $\way^1$ and vice versa (where the distance is measure via the unique path between $\high$ and $\way$, see Figure \ref{fig:two-highway} for reference). We first do the following two  comparisons, each between a short highway and a long highway:
\begin{enumerate}
    \item First $\high^{\ell_1/2}$ runs the algorithm from Claim \ref{claim:blackbox_claim} with $\way$ but only with the \textit{active} highways of $\way$. Let the tree edge from $\way$ taking part in this minimum 2-respecting cut is in component $\ell_i$ for $\way$. As we have learnt from Claim \ref{claim:blackbox_claim}, this requires $O(\ell_2 + \dfrag + D)$ rounds.
    
    \item Then $\way^{\ell_i}$ runs the algorithm from Claim \ref{claim:blackbox_claim} with $\high$. This requires $O(\ell_1 + \dfrag + D)$ rounds.
\end{enumerate}  
In total, these two comparisons require $O(\ell_1 + \ell_2 + \dfrag + D)$ rounds when run one after the other.

This gives rise to two disjoint subproblems (See Figure \ref{fig:two-highway} for reference): \begin{itemize}
    \item[(i)] Comparing the prefix of $\high$ starting from $\high^1$ till (but not including) highway $\ell_1/2$ (which we denote as $\high^t$) with the prefix of $\way$ starting from $\way^1$ till (but not including) highway $\ell_i$ (which we denote as $\way^t$); and,
    
    \item[(ii)] Comparing the suffix of $\high$ from (but not including) highway $\ell_1/2$ to $\high^{\ell_1}$ (which we denote as $\high^b$) with the suffix of $\way$ from (but not including) highway $\ell_i$ to $\way^{\ell_2}$ (which we denote as $\way^b)$.
\end{itemize}

\begin{figure}[h!]
    \centering
    \includegraphics[scale =0.6]{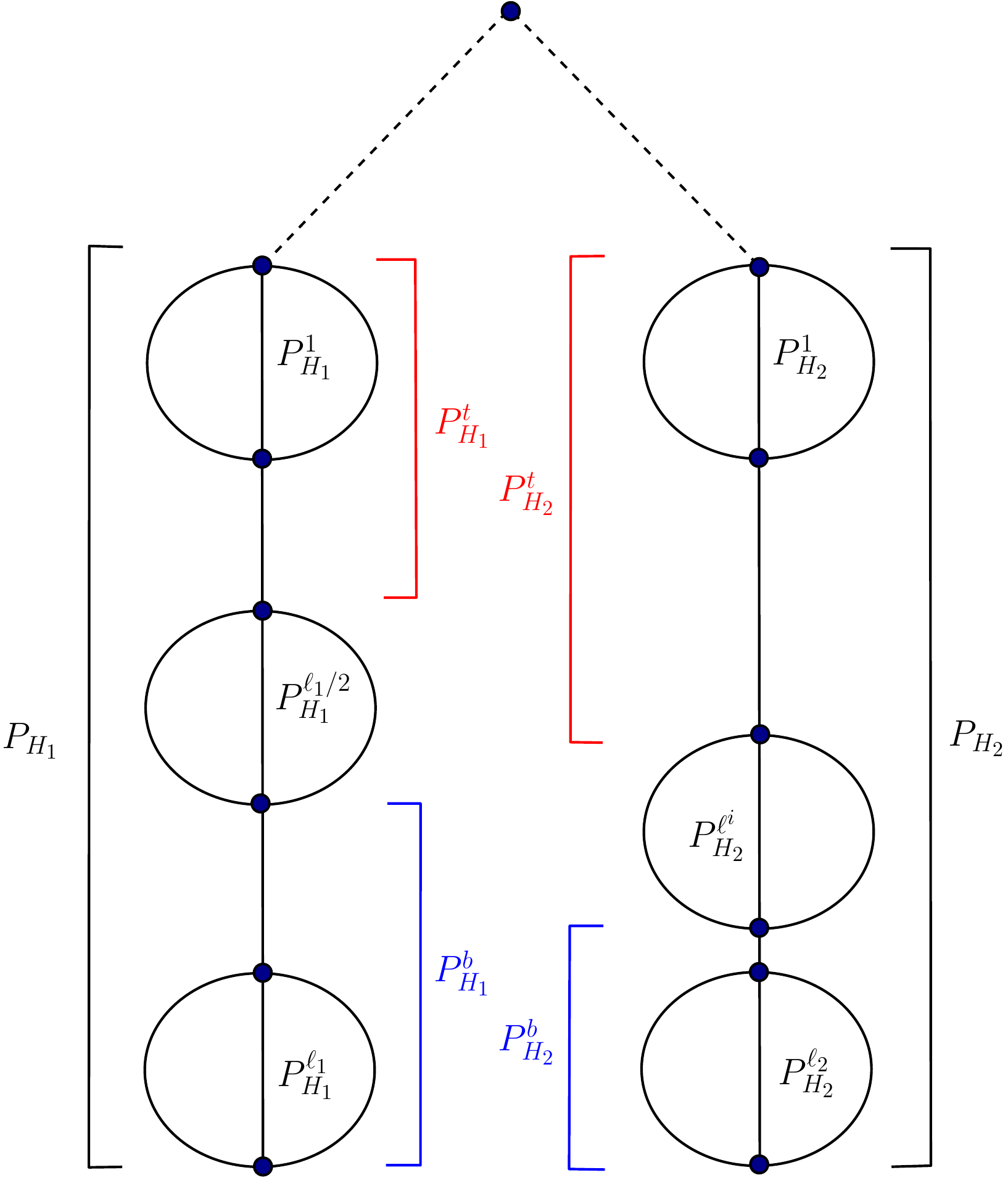}
    \caption{\small Recursion step of the algorithm for Claim \ref{clm:two-long-highway}}
    \label{fig:two-highway}
\end{figure}

Note that, because of the guarantee of Claim \ref{claim:blackbox_claim}, all edges of $\high$ and $\way$ know which subproblem they are included. Let the number of active components in $\high^t, \high^b, \way^t$ and $\way^b$ be $\ell_1^t, \ell_1^b, \ell_2^t, \ell_2^b$ respectively. Note that $\ell_1^t + \ell_1^b = \ell_1 - 1$, and $\ell_2^t + \ell_2^b = \ell_2 - 1$. We recurse parallelly in these two sub-problems by again choosing middle components in $\high^t$ and $\high^b$. The point to note over here is the following: Because comparing $\high^t$ with $\way^t$ and comparing $\high^b$ with $\way^b$ are disjoint subproblems, the computations within the fragments of the active highways for these two subproblems can be done parallelly. The total number of bits that need to be broadcast is $O(\ell_1^t + \ell_2^t)$ for the first subproblem, and $O(\ell_1^b + \ell_2^b)$ for the second subproblems, which can be pipelined: This requires $O(\ell_1^b + \ell_2^b + \ell_1^t + \ell_2^t + D)$ rounds, i.e, $O(\ell_1 + \ell_2 +D)$ rounds.  Hence, total number of rounds required in this iteration is $O(\ell_1 + \ell_2 + \dfrag + D)$. 

This argument extends in all levels of recursion, i.e., each level of recursion can be done in time $O(\ell_1 + \ell_2 + \dfrag + D)$. At the end, the algorithm compares many disjoint pairs of active highway and highway path in parallel such that the total number of active highways in all these pairs (including those in the highway paths) is $O(\ell_1 + \ell_2)$ (i.e., the leaves of the recursion tree correspond to comparing an active highway and a highway path). Using Claim \ref{claim:blackbox_claim}, this can be done in time $O(\ell_1 + \ell_2 +\dfrag + D)$ as well. Note that there are $\log \ell_1$ levels of recursion---this is because, in each level of recursion, the number of active highways in $P_H^t$ becomes half of what it was before. Hence the total time taken is $O(\ell_1 + \ell_2 + \dfrag + D)\log \ell_1 = \tO(\ell_1 + \ell_2 + \dfrag + D)$.

\paragraph{Correctness.} To argue the correctness, we need to argue that when we break a problem of comparing two highway paths into two disjoint subproblems (as is done in every recursion step), it is enough to solve these two subproblems to get the minimum 2-respecting cuts. We show that this holds for the first recursion call---similar argument extends to all recursion calls. We know that the minimum 2-respecting cut which has one tree-edge in $\high^{\ell_1/2}$ has another tree-edge in $\way^{\ell_i}$. We just need to show, at this point, that the minimum 2-respecting cut, which has one tree-edge in $\high^t$ must have another edge in $\way^t \circ \way^{\ell_i}$, and the minimum 2-respecting cut, which has one tree-edge in $\high^b$ must have another edge in $\way^{\ell_i} \circ \way^b$ ($\circ$ denotes concatenation). This follows immediately from the monotonicity of minimum 2-respecting cut (Claim \ref{claim_monotonicity}). Because we compare $\way^{\ell_i}$ with $\high$ to find out the minimum 2-respecting cut which has one edge in $\way^{\ell_i}$ and another edge in $\high$, it is now enough to consider the two disjoint subproblems. Hence the correctness follows.
\end{proof}


\paragraph{Dealing with many pairs of highway paths.} Now we consider the set $\cP$ as discussed in Theorem \ref{thrm:pairing_up_highways}. A high level schematic algorithm is given as Algorithm \ref{alg:highway_algo}. There are going to be $\log n$ many levels of recursion as before: It is instructive to keep in mind the algorithm from Claim \ref{clm:two-long-highway}. We will run this algorithm for each pair from $R \subseteq \cP \times \cP$ (as in Theorem \ref{thrm:pairing_up_highways}) in parallel, each of which will have $O(\log n)$ many levels of recursion. In level $j$ of recursion, we will complete the corresponding levels in all these instantiations of the algorithm before we move on to the next iteration. Let us denote set of highway pairs that we solve for in level $j$ as $R^{(j)}$, and we start with $R^{(1)} = R$. From Theorem \ref{thrm:pairing_up_highways}, we know that every highway component takes part in $B_{path} \cdot \log n$ many pairs in $\cP$ where it is active. We will maintain this invariant in all $R^{(j)}$. 

In the first iteration, we compare all pairs of $R^{(1)}$ simultaneously: The broadcast computation required for these computations are pipelined over a BFS tree of $G$, and the aggregate computations inside each fragment due to its participation in $B_{path} \cdot \log n$ many such pairs from $R$ are pipelines inside the fragment. Let us now try to compute the round complexity of each iteration: We compute how many bits are broadcast in total and how many bits are aggregated inside any fragment in total. By Claim \ref{clm:two-long-highway}, each pair $\high$ and $\way$ requires time $(\ell_1 + \ell_2 + \dfrag + D)$ time of which $\tO(\ell_1 + \ell_2)$ bits are broadcast over $G$. Note that each highway inside a fragment takes part in $B_{path} \cdot \log n = O(\log^2 n)$ many pairs as active highway. So the total number of bits that are broadcast can be upper bounded by \[\sum_{(P_i,P_j) \in R} (\ell_i + \ell_j) = \sum_{P:\text{component}} (\text{\# of pairs from } R \text{ where $P$ is active} )= O(\nfrag \cdot \log^2 n),
\]
which can be broadcast in time $O(\nfrag \cdot \log^2 n + D)$. For the internal computation within active components, we know that the total amount of bits aggregated inside an active fragment is at most $\tO(\ell_1 + \ell_2 + \dfrag)$-bits. Hence, by a similar calculation as above, the number of bits to be aggregated inside a fragment $F$ in total over all pairs of highway paths where $F$ appears as an active fragment is $\tO(\nfrag + \dfrag)$ bits which can be done in time $\tO(\nfrag + \dfrag)$ in pipelined fashion. Hence the total round complexity of the first iteration is $\tO(\nfrag + \dfrag + D)$.


In the second iteration, we have the following situation: $R^{(2)}$ is derived from $R^{(1)}$ in the following way. Each pair $(\high, \way) \in R^{(1)}$ now gives rise to at most two pairs as in Claim \ref{clm:two-long-highway}, namely $(\high^t, \way^t)$ and $(\high^b, \way^b)$. These pairs are included in $R^{(2)}$. Note that each edge of highway paths in $R^{(1)}$ knows which highway paths in $R^{(2)}$ it participates in. Also, because this decomposition is disjoint, each component takes part in $\log^3 n$ many pairs as active component as before---this is the invariant we wanted to maintain. The algorithm for this iteration is similar to that of the first iteration, except this time we perform on pairs coming from $R^{(2)}$. Hence, by a similar calculation as before we see that this iteration can be completed in time $\tO(\nfrag + \dfrag + D)$.

This concludes the following: As the invariant is maintained in each level of recursion, each level can be performed in time $\tO(\nfrag + \dfrag + D)$. The number of levels of recursion is $O(\log n)$, and hence the total time required is $\tO(\nfrag + \dfrag + D)$ as well.

\begin{center}
  \centering
  \begin{minipage}[H]{0.8\textwidth}
\begin{algorithm}[H]
\caption{Schematic algorithm for super highways}\label{alg:highway_algo}
\Require A set $R$ of non-splittable pairs of super highways $(\high, \way)$ that every vertex $v$ knows about.
\begin{algorithmic}[1]
\Procedure{SuperHighwayCompare}{$R$}
\For{all pairs $(\high,\way) \in R$}
\State Let $\high$ has $\ell_1$ many active fragments, and $\way$ has $\ell_2$ many active fragments.
    \If{$\ell_1 = 1$ or $\ell_2 = 1$}
        \State Compare $\high$ with $\way$. \Comment{Algorithm \ref{alg:highway_basic_algo}.}
    \EndIf
\State Compare $\high^{\ell_1/2}$ with active fragments of $\way$. Let the respected edge in $\way$ in $\way^{\ell_i}$.
\Comment{Algorithm \ref{alg:highway_basic_algo}.}
\State Compare $\way^{\ell_i}$ with active fragments of $\high$.
\Comment{Algorithm \ref{alg:highway_basic_algo}.}
\State Remove $(\high, \way)$ from $R$ and add $(\high^t, \way^t)$ and $(\high^b,\way^b)$.
\EndFor
\State \textbf{Run} {\sc SuperHighwayCompare} on $R$.
\EndProcedure

\Procedure{SuperHighwaySelfCompare}{$R$}
\For{every $P_H \in R$}
    \If{$\ell = 1$}
        \State Find min 2-respecting cut when both tree edges are from $P_H$. 
        \Statex \Comment{Run algorithm from \Cref{lem:both-edge-short-high} on $P_H$.}
    \EndIf
    \State Initialize $R' = \emptyset$.
    \State $P_H$ has $\ell$ many fragments. Break $P_H^t = P_1, \cdots, P_{\floor{\ell/2}}$, and $P_H^b = P_{\floor{\ell/2}+1}, \cdots, P_\ell$.
    \State Include $P_H^t$ and $P_H^b$ in $R'$.
    \State \textbf{Run} \Call{SuperHighwayCompare}{$R'$}.
    \State Remove $P_H$ from $R$ and include $P_H^t$ and $P_H^b$.
\EndFor
\State \textbf{Run} {\sc SuperHighwaySelfCompare} on $R$.
\EndProcedure
\end{algorithmic}
\end{algorithm}
\end{minipage}
\end{center}

\subsubsection{Proof of Claim \ref{claim:blackbox_claim}}

The idea is to use Lemma 6.19 and 6.20 in parallel with highway partitioning. A high level schematic algorithm is provided in Algorithm \ref{alg:highway_basic_algo}. The readers are encouraged to notice the similarity of this proof to that of Section \ref{ssec:nh-high}. Unfortunately, we cannot show an analogous claim as that of Claim \ref{claim_cut_nh_h_no_edge} for highways---this will invariably increase the round complexity. Instead, we focus on only the highways inside a fragment and the highway paths in consideration. Let the \textit{active} highways in $P_H$ be $P_1, \cdots, P_\ell$. We first look at the case when there is no edge between $F_{P'}$ and $F_{P_i}$ for any $i \in [\ell]$.  The idea is to use Lemma \ref{lem:paths_h_h-noedge} instead of Lemma \ref{lemma_nh_no_edge}. We need to make sure we can compute all the necessary information needed to apply Lemma \ref{lem:paths_h_h-noedge}. Clearly, we can use Claim \ref{clm:two-short-highway} to know the value of $\extcov(P', P_i)$ which is a broadcast of $O(1)$ bits. If we do it for all pairs $(P',P_i)$ where $P_i$ is an active component of $P_H$, then the number of bits to be broadcast is $O(\ell)$ where $\ell$ is the number of active components of $P_H$, and hence requires $O(D + \ell)$ rounds. At this point, we will use Lemma \ref{lem:paths_h_h-noedge} for all pairs $(P',P_i)$. Again, each of these instantiation of Lemma \ref{lem:paths_h_h-noedge} requires $O(1)$ many aggregate computations inside $F_{P'}$ and $F_{P_i}$ and $O(1)$ bits of broadcast communication. By pipelining these computations for different $(P',P_i)$ pairs, we get round complexity of $O(D + \dfrag + \ell)$.



Once we are done with these computations, we turn to the case when there is an edge between $F_{P'}$ and $F_{P_i}$. Wlog assume all active highways in $P_H$ are such that there is an edge between $F_{P'}$ and $F_{P_i}$ (the case when there is no such edge has already been dealt with). Note that, by Claim \ref{claim_learn_edge_highway}, vertices in $F_P$ also know of one edge between $F_P$ and each component $F_{P_i}$. First we invoke \Cref{lemma_partitioning-highway} to partition $P$ in $E_1, \cdots, E_\ell$ corresponding to $P_1, \cdots, P_\ell$ which are different active highways of $P_H$. This can be done in time $O(\dfrag + \ell)$ where, at the end, all vertices in $T(P')$ know the sets $E_1, \cdots, E_\ell$---the computation is entirely inside $T(P')$ and requires $O(\ell)$ bits of aggregate and broadcast computation. At this point, it is sufficient to compare the pairs $(E_i, P_i)$ for all $i \in [\ell]$.

The idea is similar to that of Claim \ref{clm:short-nh-long-high-edge}, but we would like to replace the algorithm of Lemma \ref{lemma_paths_nh_h} with Lemma \ref{lemma_paths}. For this, we need to check whether we can satisfy the premise of Lemma \ref{lemma_paths}. This is also almost identical to that of Lemma \ref{lemma_paths_nh_h}---we simply have to use the claims for highway instead of non-highway. For completeness, we provide this argument here. All vertices in $T(P')$ should learn the values $\{\cov(e),\extcov(e,P_i)\}_{e \in E_i}$ for all $1 \leq i \leq \ell$. To do so, we first let all edges $e \in P'$ learn the values $\extcov(e,P)$ for all highways $P$, this takes $O(\dfrag + \nfrag)$ time using Claim \ref{claim_cov_eP}, and can be done in all highways simultaneously. Note that, after this step, the edges $e \in P_i$ also know the value $\extcov(e, P')$. Then, the information $\{\cov(e),\extcov(e,P_i)\}_{e \in E_i}$ is known to the edge $e$.  To let all vertices in $T(P')$ learn it we use pipelined upcast and broadcast (similar to Claim \ref{claim_broadcast}) within $T(P')$. As we have $\sum_{i=1}^k |E_i| = O(\dfrag + \ell)$---which is the number of pairs $\{\cov(e),\extcov(e,P_i)\}$, and hence the number of bits, that needs to be distributed inside $T_F(P)$---this takes $O(\dfrag + \ell)$ time. From Lemma \ref{lemma_partitioning-highway}, we also have that all vertices in $T(P')$ know the identity of all edges in the sets $E_i$. The only additional information that Lemma \ref{lemma_paths} requires in its premise is for the vertices of $T(P')$ and $F_{P_i}$ to know the value $\extcov(P', P_i)$. We use Claim \ref{clm:two-short-highway} where $O(
\ell)$ many aggregate and broadcast computations over a BFS tree on $G$ are required. This takes time $O(D + \ell)$. This information is known to every vertex of the graph. We next discuss the information known in $P_i$. First, using upcast and broadcast in the fragment $F_{P_i}$ of $P_i$, we can make sure that all vertices in the fragment know all the values $\{\cov(e)\}_{e \in P_i}$, they can also learn the identity of the edge $f$ between $T(P')$ and $F_{P_i}$, as follows. As vertices in $T(P')$ know the identity of $f$, then $f$ has an endpoint that knows about it, and can inform the second endpoint in $F_{P_i}$. Then, the information can be broadcast in $F_{P_i}$. This is only done if $P_i$ is potentially interested in $P'$, hence only $B_{path}$ times for $P'$ (from Theorem \ref{thrm:pairing_up_highways}; as we have seen in Claim \ref{clm:two-long-highway}, $P'$ is an active highway inside a highway path orthogonal to $P_H$). This shows that vertices in $T_F(P)$ and $F_{P_i}$ have all the information needed for applying Lemma \ref{lemma_paths}.

\begin{center}
  \centering
  \begin{minipage}[H]{0.8\textwidth}
\begin{algorithm}[H]
\caption{Schematic algorithm for a fragment highway and a super highway}\label{alg:highway_basic_algo}
\hspace*{\algorithmicindent} \textbf{Input:} A fragment highway $P$ and a super highway $P_H$ with active fragments $P_1, \cdots, P_\ell$.\\
 \hspace*{\algorithmicindent} \textbf{Output:} A minimum 2-respecting cut $C_P=\cut(e,e')$ where $e \in P$ and $e' \in P_H$.
 \algrenewcommand\algorithmicdo{\textbf{do parallelly}}
\begin{algorithmic}[1]
\For{every $i \in [\ell]$}
    \State Compute $\extcov(P, P_i)$. \Comment{Claim \ref{clm:two-short-highway}.}
\EndFor
\For{every $i \in [\ell]$ such that there is no edge between $F_P$ and $F_{P_i}$}
    \State Compare $(P, P_i)$. \Comment{Lemma \ref{lem:paths_h_h-noedge}.}
\EndFor
\State Let active fragments $P_1, \cdots, P_k$ are such that there is an edge between $F_P$ and $F_{P_i}$ for all such $P_i$.
\State Partition $P$ into $E_1, \cdots, E_k$ w.r.t.~$P_1, \cdots, P_k$.
\Statex \Comment{\Cref{lemma_partitioning-highway}.}
\For{every pair $(E_i, P_i)$}
    \State Let the edge between $F_P$ and $F_{P_i}$ be $f$.
    \State Use $f$ to route the information $\{e,\cov(e),\extcov(e,P_i)\}_{e \in E_i}$ from $P$ to $P_i$.
    \State The cut values is computed by the fragment highways $P_i$ (which is an active fragment). Compute the values $\{\cut(e,e')\}_{e \in P, e' \in P_i}$ for all edges $e \in E_i$ where the values $\{e,\cov(e),\extcov(e,P_i)\}_{e \in E_i}$ were received from $P$.
    \Statex\Comment{Lemma \ref{lemma_paths}.}
\EndFor
\State Communicate over a BFS tree to let all vertices learn the values $\{e,e',\cut(e,e')\}$ for edges $e,e'$ in the above cases that minimize $\cut(e,e').$
\end{algorithmic}
\end{algorithm}
\end{minipage}
\end{center}

\subsection{Both cut edges in same highway}
Note the if both cut edges are in the same highway and \textit{in the same fragment}, then it is already dealt with in \ref{lem:both-edge-short-high}. So we are only interested in the case when the cut edges are in different fragments of the same highway path. Recall that these highway paths are actually maximal highway paths in some layer. We will use Claim \ref{clm:two-long-highway} in a divide and conquer fashion---a similar technique was used in Algorithm 3.3 in \cite{mukhopadhyay2019weighted}. The idea is simple: We know how to efficiently compare two disjoint highway paths. Given a highway path $P_H$ consisting of (not necessarily active\footnote{We do not need to consider active highways here, because by construction every highway takes part in exactly one computation in each iteration of the algorithm contrary to $B_{path}$ many computations as is the case in comparing two highway paths.}) highways $P_1, \cdots, P_\ell$, we will employ a divide and conquer technique which will run for $O(\log \ell)$ rounds. In round $i$, we will work on the set of highways $\cP^{(i)}$ which we will define below. Initially, in the first round, $\cP^{(1)} = \cP$ (See The set discussed in Theorem \ref{thrm:pairing_up_highways}). We will also maintain the invariant that the highways in $\cP^{(i)}$ are disjoint. To start with, by construction, the highways of $\cP$ are disjoint.
\begin{description}
\item[--] In the first rounds, we will compare the highway composed of $P_1, \cdots, P_{\floor{\ell/2}}$ (denote it as $P_H^1$) with the highway composed of $P_{\floor{\ell/2}+1}, \cdots, P_\ell$ (denote it as $P_H^2$) for every $P_H \in \cP^{(1)}$. Using Claim \ref{clm:two-long-highway}, we can do it in $O(\ell + \dfrag + D)$ rounds (out of which $O(\ell + D)$ rounds are needed for broadcasting $O(\ell)$ bits of information and the rest of the computation is local). We will do it for all highways in the set $\cP$ in parallel. As the highways in $\cP$ are disjoint and the number of fragments is $\nfrag$, it is easy to see that this can be done in $O(\nfrag + \dfrag + D)$ rounds. At the end, all vertices know the tuples $\{e,e',\cut(e,e')\}$ for each pair $(P_H^1, P_H^2)$ where $e \in P_H^1$ and $e' \in P_H^2$ and $(e,e')$ minimizes such $\cut(e,e')$.

\item[--] In the second round, we construct the set $\cP^{(2)}$ by putting $P_H^1$ and $P_H^2$ of all highways $P_H \in \cP^{(1)}$. Note that, because the vertices know the set $\cP^{(1)}$, they can locally compute the set $\cP^{(2)}$. We follow the same procedure as in the first round, i.e., we divide the highway paths in $\cP^{(2)}$ and compare them. Note that the highways of $\cP^{(2)}$ are disjoint as well---this is the invariant we wanted to maintain. Hence this step can also be performed in $O(\nfrag + \dfrag + D)$ rounds.

\item[--] We continue this divide and conquer procedure for $O(\log n)$ steps until each $P_H \in \cP^{(i)}$ are left with only 2 components. This case can be solved in $O(\dfrag + D)$ time. As before, note that every such component takes part in exactly one comparison. Hence, pipelining the broadcast computation, this round can be completed in $O(\nfrag  + \dfrag + D)$ time.
\end{description}

Hence total round complexity of $\tO(\nfrag + \dfrag + D)$. At the end, all vertices know a pair $(e,e')$ for each $\cP^{(i)}$ and for each super highway $P_H \in \cP^{(i)}$ which minimized $\cut(e,e')$ for that super highway $P_H$ where $e \in P_H^t$ and $e' \in P_H^b$. The vertices can choose the minimum among them by local comparison.

\paragraph{Correctness.} We need to argue that if the minimum 2-respecting cut include edge $e$ from highway $P_i$ and edge $e'$ in highway $P_j$, then $P_i$ and $P_j$ are compared (possibly as a highway in a highway path) in one of the iterations of the divide and conquer algorithm. This follows from the following observation: At the $k$-th iteration, if the $k$-th significant bit of the binary representation of $i$ and $j$ are different, then they are compared. Hence the correctness follows.

\section{The min-cut algorithm}\label{sec:schematic_description}

This section provides the schematic algorithm for finding min-cut of a weighted graph in $\tO(\sqrt n + D)$ time which proves Theorem \ref{thm:main}. We first start with the schematic algorithm for minimum 2-respecting cut, proving Theorem \ref{thm:intro:2-respect}. In every sense, Algorithm \ref{alg:2-res-min-cut} is the heart of this work.

\subsection{A schematic algorithm for minimum 2-respecting cut}
\label{sec:algo-2-res}

In this section, we give a schematic algorithm for finding minimum 2-respecting cut in CONGEST model where, at the end of the algorithm, every vertex $v$ knows the following information:
\begin{enumerate}
    \item The value of the cut, \label{info:value-cut}
    \item The tree-edges (at most two) which the cut respects, and \label{info:tree-cut-edge}
    \item The edges incident to it that cross the cut. \label{info:cut-edge}
\end{enumerate}

 We next provide the schematic algorithm for minimum 2-respecting cut. Note that this is a high-level overview---the details of each step can be found in corresponding section mentioned in the comment. Also, at the end of the algorithm, every vertex knows (\ref{info:value-cut}) and (\ref{info:tree-cut-edge}). By applying Observation \ref{obs:observation_learn_cut}, it is immediate that the vertices will also know (\ref{info:cut-edge}).

\remove{
We start with the following observation.

\begin{observation}
If, at the end of the algorithm, every vertex $v$ knows (\ref{info:tree-cut-edge}) from above, then by local computation $v$ can find out (\ref{info:cut-edge}).
\end{observation}

This observation follows from LCA checks that $v$ can do (Claim \ref{claim_LCA_labels}). We show it when there are two tree edges $(e, e')$ that the cut respects (i.e., when the 2-respecting cut is an \textit{exact 2-respecting cut}). When the cut respects only one edge of the spanning tree, the computation is even simpler.

Consider any edge $f= (u,v)$ which is incident on $v$. Note that, given $(e,e')$, $v$ can do an LCA check to find out which edges among $e$ and $e'$ is covered by $f$. The edge $f$ takes part in the cut if $f$ covers \textit{exactly} one edge among $e$ and $e'$---this can be computed inside $v$ without any communication. Hence $v$ can infer (\ref{info:cut-edge}) from (\ref{info:tree-cut-edge}) itself by local computation.

Equipped with this observation, now we provide the schematic algorithm for minimum 2-respecting cut. Note that this is a high-level overview---the details of each step can be found in corresponding section mentioned in the comment. Also, at the end of the algorithm, every vertex knows (\ref{info:value-cut}) and (\ref{info:tree-cut-edge}). By applying the previous observation, it is immediate that the vertices will also know (\ref{info:cut-edge}).
}

\paragraph{Very high-level description of Algorithm \ref{alg:2-res-min-cut}.} The algorithm is divided into mainly four parts, each is labeled properly in the schematic description for easy reference. These are as follows:
\begin{description}

\item[Tree decompositions.] The first step is to perform fragment decomposition on the spanning tree $T$ followed by layering decompositions. See Section \ref{ssec:-decompFragment} and \ref{ssec:-layerDecomp} for details. The vertices also assign LCA labels to edges such that it is easy to find out whether a non-tree edge covers a tree edge. See Section \ref{sec:lca}.

\item[Minimum 1-respecting cut.] In the next step, the algorithm computes the minimum 1-respecting cut using Claim \ref{claim_learn_cov}.  
Also see Section \ref{sec:1-resp-cut}.

\item[Sampling.] The next step is do the sampling procedure to compute the set $\intpot{e}$, the set of paths which $e$ is \textit{potentially interested} in, for each tree edge $e$, where each such path is represented by the id of the lowest fragment that intersects the path (See Section \ref{ssec:-samplingProcedure} for a definition). This lets the vertices know the necessary information to perform the algorithm for the non-highway non-highway case discussed in Section \ref{ssec:finding_cut_non_highway}, and Algorithms \ref{alg:nh-h} and \ref{alg:highway_algo}. See Section \ref{sec:lemma_and_samp} for details regarding the implementation of this sampling procedure and the routing of necessary information across the graph.

\item[Minimum exact 2-respecting cut.] Finally, the vertices run the algorithm from section \ref{ssec:finding_cut_non_highway}, and  Algorithms \ref{alg:nh-h} and \ref{alg:highway_algo} one after the other. The vertices compare the minimum 2-respecting cut found in each algorithm. The vertices output the minimum among the minimum exact 2-respecting cut and the minimum 1-respecting cut.
\end{description}

Because each step here can be performed in time $\tO(\sqrt n + D)$ (See relevant theorem mentioned in the schematic description), the total time complexity of $\tO(\sqrt n +D)$. This proves Theorem \ref{thm:intro:2-respect}.

\begin{center}
  \centering
  \begin{minipage}[H]{0.9\textwidth}
\begin{algorithm}[H]
\caption{Schematic algorithm for distributed minimum 2-respecting cut}\label{alg:2-res-min-cut}
\hspace*{\algorithmicindent} \textbf{Input:} \begin{enumerate}
    \item  Weighted graph $G=(V,E,w)$, where every vertex $v \in V$ knows the set of incident edges on it along with their weights, \item  A spanning tree $T$ of $G$ where every vertex $v \in V$ knows the set of incident edges of $T$ on it.
\end{enumerate}
 \hspace*{\algorithmicindent} \textbf{Output:} Every vertex $v\in V$ knows the edges incident on it which take part in a minimum 2-respecting cut $C_T$ w.r.t. $T$, and the value of the cut.
\begin{algorithmic}[1]
\Statex \hrulefill
\begin{mdframed}[backgroundcolor=blue!20,skipabove=5pt,skipbelow=5pt]
      \centering
      {\em Tree decompositions. (Section \ref{sec:building_blocks})}
    \end{mdframed}
\State Perform a \textit{fragment decomposition} with parameters $\nfrag=\dfrag=\sfrag=O(\sqrt{n})$. At the end, each vertex $v$ knows the information detailed in Lemma \ref{lemma:strong_decomp_construct}.
\Statex \Comment{\textcolor{blue}{See Section \ref{ssec:-decompFragment}.}}
\State Perform a \textit{layering decomposition} on the highways as in Lemma \ref{lemma:layer_decomposition_highway} and on the non-highways as in \Cref{lemma:layer_decomposition_inside_fragment_nonhighway}.
\Comment{\textcolor{blue}{See Sections \ref{ssec:-layerDecomp} and \ref{ssec:way_todo_layering}.}} 

\begin{mdframed}[backgroundcolor=blue!20,skipabove=5pt,skipbelow=5pt]
      \centering
      {\em Computing the minimum 1-respecting cut.}
    \end{mdframed}
\State Find out the minimum 1-respecting cut. \label{algline:1-res}
\Comment{\textcolor{blue}{See Claim \ref{claim_learn_cov} and Section \ref{sec:1-resp-cut}}} 

\begin{mdframed}[backgroundcolor=blue!20,skipabove=5pt,skipbelow=5pt]
      \centering
      {\em Sampling and routing. (Section \ref{sec:lemma_and_samp})}
    \end{mdframed}
\State Each edge $e\in T$ finds a set of potentially interesting paths $\intpot{e}$.
\Statex \Comment{\textcolor{blue}{See Lemma \ref{lemma:find-int-samp}, and Lemma \ref{lemma:parse_paths}.}}
\State Each non-highway bough $P'$ in layer $i$ routes information to $T(P')$ which consists of all ids of fragments $F$ that contain a non-highway path which is in $\intpot{P'}$.
\Statex \Comment{\textcolor{blue}{See Lemma \ref{lemma:parse_paths} and Corollary \ref{corol:-layering_interested}.}}
\State Each non highway bough $P$ routes the relevant information about super highways in $\intpot{P}$ which are completely above or completely orthogonal to the fragment of $P$.
\Statex\Comment{\textcolor{blue}{See Lemma \ref{lemma:parse_paths}, and Corollary \ref{corol:-_layer_highway}}.}
\State Each fragment highway $P$, routes the relevant information to $F_P$ about ids of fragments $F$ that contain non-highway  paths in $\intpot{P}$, not including $F_P$. 
\Statex\Comment{\textcolor{blue}{See Lemma \ref{lemma:parse_paths} and Corollary \ref{corol:-highway_non_highway}}}
\State Each vertex $v$ learns the set $R$ of pairs of super highways potentially interested in one another. 
\Statex\Comment{\textcolor{blue}{See Theorem \ref{thrm:pairing_up_highways}.}} 

\begin{mdframed}[backgroundcolor=blue!20,skipabove=5pt,skipbelow=5pt]
      \centering
      {\em Computing the minimum exact 2-respecting cut. (Section \ref{sec:2_respecting})}
    \end{mdframed}
\State Run the algorithm from Claim \ref{claim_nh_2respecting}.  \hfill \textbf{Record} the minimum cut. \label{algline:nh-nh}
\State Run Algorithm \ref{alg:nh-h}. \hfill\textbf{Record} the minimum cut. \label{algline:nh-h}
\State Run Algorithm \ref{alg:highway_algo} on $R$. \hfill\textbf{Record} the minimum cut. \label{algline:h-h} 

\State Output the minimum cut among what is recorded in Line \ref{algline:1-res}, \ref{algline:nh-nh}, \ref{algline:nh-h} and \ref{algline:h-h}.
\end{algorithmic}
\end{algorithm}
\end{minipage}
\end{center}

\subsection{The min-cut algorithm for weighted graphs}

Now we are ready to give a schematic description of minimum cut on a weighted graph. Note that this algorithm calls Algorithm \ref{alg:2-res-min-cut} as a subroutine. As discussed before, at the end of Algorithm \ref{alg:2-res-min-cut}, every vertex knows the value of a minimum 2-respecting cut w.r.t. a spanning tree $T$ along with the edges participating in the cut which are incident on it. 
\begin{center}
  \centering
  \begin{minipage}[H]{0.8\textwidth}
\begin{algorithm}[H]
\caption{Schematic algorithm for distributed min-cut}\label{alg:min-cut}
\hspace*{\algorithmicindent} \textbf{Input:} Weighted graph $G=(V,E,w)$, where every vertex $v \in V$ knows the set of incident edges on it along with their weights. \\
 \hspace*{\algorithmicindent} \textbf{Output:} Every vertex $v\in V$ knows the edges incident on it which take part in the minimum cut along with the value of that min-cut. \\
\begin{algorithmic}[1]
\State Perform a greedy tree packing on $G$ to obtain $\cT=\set{T_1,...,T_k}$, where $k = O(\log^{2.1} n)$.
\Statex \Comment{See Theorem \ref{thm:reduction_mincut_to_respect}}
\For{each $T \in \cT$} 
    \State Perform minimum 2-respecting cut algorithm w.r.t. $T$ (Algorithm \ref{alg:2-res-min-cut}). Let the cut obtained be $C_T$. \Comment{See Theorem \ref{thm:intro:2-respect}}
    \State Every vertex $v \in V$ knows the value of $C_T$ and the edges incident on $v$ which take part in $C_T$.
\EndFor \label{algline:1-resp-e}
\State Every vertex $v \in V$ chooses the $C_T$ which has minimum total weight.
\end{algorithmic}
\end{algorithm}
\end{minipage}
\end{center}

The greedy tree-packing can be performed in time $\tO(\sqrt n +D)$ as mentioned in Theorem \ref{thm:reduction_mincut_to_respect}. We also know that Algorithm \ref{alg:2-res-min-cut} takes time $\tO(\sqrt n + D)$. Hence, computing Algorithm \ref{alg:2-res-min-cut} for each $T \in \cal T$ takes time $|\cT| \times \tO(\sqrt n + D) = \tO(\sqrt n + D)$. Hence the total time complexity of Algorithm \ref{alg:min-cut} is $\tO(\sqrt n + D)$. This proves Theorem \ref{thm:main}.


\section*{Acknowledgment}

We would like to thank Keren Censor-Hillel for many valuable discussions.
This project has received funding from the European Research Council (ERC) under the European
Unions Horizon 2020 research and innovation programme under grant agreement No 715672 and 755839. Danupon
Nanongkai and Sagnik Mukhopadhyay are also partially supported by the Swedish Research Council (Reg.~No.~2015-04659 and 2019-05622). Michal Dory and Yuval Efron are supported in part by the Israel Science Foundation (grant no.~1696/14).

\bibliography{biblio,references}
\appendix
\section{Reduction to 2-respecting cut}\label{appen:redutction_to_respect}
In this section, we highlight key ideas from the proof of Theorem \ref{thm:reduction_mincut_to_respect}, which is proved in \cite{DagaHNS19}. In this section, we treat edges of weight $w$ in the graph $G$ as $w$ multi-edges. We begin some definitions and notations.

\begin{definition}\label{def:greedy_tree_packing}
Let $\cT$ be some set of spanning trees of a given multi-graph $G$. Denote the load of an edge $e$ by $L_{\cT}(e)=|\set{T\in \cT\mid e\in T}|$. Furthermore, a set $\cT=\set{T_1,...,,T_k}$ of spanning trees is called a greedy tree packing if for all $1\leq i\leq k$ it holds that $T_i$ is a minimum spanning tree with the respect to the load given by $L_{\cT_{i-1}}(e)$. Here $\cT_{i-1}=\set{T_1,...,T_{i-1}}$. 
\end{definition}

Next, we state the following known results. 
\begin{lemma}[\cite{Thorup07}]\label{lemma:lemma_tree_packing1}
Let $C$ be any cut of a multi-graph $G$ with at most $1.1 \cdot OPT$ many edges and $\mathcal T$ be a greedy tree packing with $OPT \cdot \ln m$ many trees. Then $C$ 2-respects at least $1/3$ fraction of trees in $\mathcal T$.
\end{lemma}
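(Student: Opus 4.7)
The plan is to translate the statement into a counting claim about the quantities $X_i := |C \cap T_i|$ and then prove it in two steps: an upper bound on $\sum_i X_i$ that uses the greedy packing, and a Markov-style pigeonhole that extracts the $1/3$ fraction of $2$-respecting trees. Since every $T_i$ is connected and $C$ is a cut, we already have $X_i\ge 1$ for each $i$. Writing $k=OPT\cdot\ln m$ and $c=|C|\le 1.1\,OPT$, if we can show that $\sum_{i=1}^{k} X_i \le \alpha\, k$ for some constant $\alpha$ (ideally close to $c/OPT\le 1.1$), then letting $b$ be the number of trees with $X_i\ge 3$ we get $\sum_i X_i\ge 3b+(k-b)=k+2b$, so $b\le (\alpha-1)k/2$. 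Choosing $\alpha\le 7/3$ yields $b\le 2k/3$, i.e.\ the remaining at least $k/3$ trees satisfy $X_i\le 2$, which is exactly the definition of $2$-respecting.

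To establish the sum bound, I would exploit the defining property of a greedy tree packing: each tree $T_i$ is a minimum spanning tree with respect to the load induced by $T_1,\ldots,T_{i-1}$. First I would recall Nash--Williams: since the edge connectivity of $G$ is $OPT$, the fractional spanning-tree packing number of $G$ is at least $OPT/2$. In particular, there exists a fractional distribution on spanning trees $\mu$ such that $\mathbb{E}_{T\sim\mu}[|C\cap T|]\le 2c/OPT$ for every cut $C$ (this is the ``dual'' side of Nash--Williams, and follows from the fact that a uniformly packed fractional solution places load at most $2/OPT$ on each edge). The greedy rule then forces each $T_i$ to be at least as cheap with respect to the current load as the tree sampled from $\mu$, which lets one control the total load accumulated on $C$ across the $k$ iterations. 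Formalizing this via a potential argument on $\Phi_i=\sum_{e\in C} L_i(e)^2$ (or equivalently via LP duality for the Held--Karp/Tutte--Nash--Williams polyhedron) should give $\sum_{i=1}^k X_i \le (2c/OPT)\,k + O(c\log m)$, and the lower-order $O(c\log m)$ term is dominated once $k=OPT\ln m$ and $c=\Theta(OPT)$.

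Plugging in $c\le 1.1\,OPT$, the bound becomes $\sum_i X_i \le 2.2\,k+O(OPT\log m)$, which is at most $(7/3)k$ for sufficiently large $m$ (and can be handled separately for small $m$ by direct inspection, e.g.\ using the fact that min cut is an integer). Combined with the pigeonhole step above, this gives at least $k/3$ trees that are $2$-respected by $C$, completing the argument.

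The main obstacle, and the step where Thorup's original argument is non-trivial, is the sum bound $\sum_i X_i \le (2c/OPT)k + \text{low order}$. The greedy rule does not directly control load on a specific cut, so one needs either (a) a clean LP/dual witness that each greedy MST has small fractional cost on $C$, or (b) a potential/amortization argument showing that the load on $C$-edges grows roughly at the same rate as the load on non-$C$-edges. Once this key lemma is in hand, everything else is elementary counting and Markov's inequality. I would present the result by isolating this key lemma first and then deriving the stated $1/3$ fraction as a short corollary.
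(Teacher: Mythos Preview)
The paper does not prove this lemma; it is quoted verbatim as a black box from Thorup~\cite{Thorup07}. So there is no paper proof to compare against, only the original Karger--Thorup argument.

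Your overall architecture is exactly the right one, and it is indeed how Karger and Thorup proceed: set $X_i=|C\cap T_i|\ge 1$, bound $\sum_i X_i$ via the packing, and extract the $1/3$ fraction by the counting step $b\le(\alpha-1)k/2$. That final step is clean and correct.

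The gap is in the packing step, and it is not just that ``Thorup's argument is non-trivial'' --- your stated bound does not close. You claim $\sum_i X_i\le (2c/OPT)\,k + O(c\log m)$ and then assert the second term is ``dominated'' once $k=OPT\ln m$. But with $c\le 1.1\,OPT$ that additive term is $O(OPT\log m)=\Theta(k)$, i.e.\ the \emph{same} order as the main term, not lower order. So after plugging in you get $\sum_i X_i\le (2.2+\Theta(1))k$, and you have no control on whether the hidden constant keeps you below $7/3$. To make this route work you would have to track the constant in the error term explicitly and show it is below $2/15\approx 0.13$, which your sketch does not do.

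There is also a second, subtler issue: the greedy rule says $T_i$ minimizes \emph{total} load, not load restricted to $C$. Your sentence ``the greedy rule then forces each $T_i$ to be at least as cheap with respect to the current load as the tree sampled from $\mu$'' is true for the all-edges cost, but that does not by itself bound $X_i=|T_i\cap C|$. The potential you propose, $\Phi_i=\sum_{e\in C}L_i(e)^2$, only sees edges of $C$, and the greedy MST has no reason to behave well with respect to such a restricted potential. Thorup's actual analysis uses an exponential potential over \emph{all} edges (a multiplicative-weights argument against the Nash--Williams fractional packing) to get a per-edge load bound of the form $L(e)\le k/\tau+O(\sqrt{(k/\tau)\ln m})$; summing this over the $c$ edges of $C$ and carefully balancing the parameters is what pins down the constant. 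Your sketch gestures at this but does not carry it out, and the specific potential you wrote will not do the job.
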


Note that thus if one implements a greedy tree packing for sufficiently many iterations, one obtains a greedy tree packing $\cT$ such that $\frac{1}{3}$ fraction of the trees in $\cT$ 2-respect the minimum cut in $G$. However, one needs to obtain a greedy tree packing with $\omega(OPT\log n)$ trees. This can be problematic since it might be the case that $OPT=\Omega(n)$. 
To circumvent this obstacle, we employ a sampling idea from \cite{karger1999random} which will reduce the value of $OPT$ in the graph we consider to be $O(\log n)$, while preserving the minimum cut.

\begin{lemma}[\cite{karger1999random}]\label{lemma:reduction_sampling_appendix}
Let $0<p<1$ and let $H=G_p$ be a random subgraph of $G$ resulting from keeping each edge of $G$ with probability $p$, and removing it with probability $1-p$. Let $OPT_H$ be the value of the min cut in $H$. If $p\cdot OPT=\omega(\log n)$, then it holds w.h.p. that $OPT_H=(1\pm o(1))p\cdot OPT$. Moreover, w.h.p., min cuts of $G$ are near-minimal in $H$ and vice versa, in the following sense. A min cut $C$ of $G$ has $(1\pm o(1))OPT_H$ edges crossing it in $H$, and a min cut $C_H$ in $H$ has $(1\pm o(1))OPT$ edges crossing it in $G$.
\end{lemma}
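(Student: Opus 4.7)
The plan is to use the standard Karger sampling framework based on two ingredients: (i) a multiplicative Chernoff bound applied to the number of edges of a fixed cut that survive sampling, and (ii) the well-known cut counting bound stating that for any $\alpha \ge 1$, the number of cuts of value at most $\alpha \cdot OPT$ in $G$ is at most $n^{2\alpha}$ (this follows from the analysis of Karger's contraction algorithm). The goal is to union-bound Chernoff's concentration over all cuts that could potentially be the min-cut of either $G$ or $H$.

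First I would fix any cut $C$ of $G$ with $c$ edges crossing it. The number $X_C$ of edges of $C$ that survive in $H$ is a sum of independent Bernoulli$(p)$ variables, so $\E[X_C] = pc$. Applying the multiplicative Chernoff bound with deviation $\varepsilon = \varepsilon(n) = o(1)$ (chosen slowly enough, e.g.\ $\varepsilon = \sqrt{\log n / (p \cdot OPT)} = o(1)$ by the hypothesis $p\cdot OPT = \omega(\log n)$), one gets
\[
\Pr\!\bigl[\,|X_C - pc|\ > \varepsilon\cdot pc\,\bigr] \;\le\; 2\exp\!\bigl(-\Omega(\varepsilon^2 \cdot pc)\bigr).
\]
For any cut $C$ with $c \le \alpha \cdot OPT$, the right-hand side is at most $2\exp(-\Omega(\varepsilon^2 \cdot p\cdot OPT))$, which can be made smaller than $n^{-(2\alpha + \Omega(1))}$ by choosing $\varepsilon$ appropriately (still $o(1)$) because $\varepsilon^2 \cdot p \cdot OPT = \omega(\log n)$.

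Next I would union bound. Set $\alpha = 1 + o(1)$ large enough to catch every cut that could plausibly be near-minimum in $H$; concretely, it suffices to take $\alpha = 2$ (or any constant larger than $1$). By the cut counting theorem there are at most $n^{2\alpha} = n^4$ such cuts in $G$, so with probability $1 - n^{-\Omega(1)}$ every cut $C$ of value at most $\alpha \cdot OPT$ satisfies $X_C = (1 \pm o(1))\, p \cdot c$. In particular, applying this to a fixed min-cut $C^*$ of $G$ yields $X_{C^*} = (1 \pm o(1))\, p \cdot OPT$, so $OPT_H \le (1+o(1))\, p\cdot OPT$. Conversely, every cut $C$ with $c > \alpha\cdot OPT$ has $\E[X_C] > \alpha p\cdot OPT$, and a symmetric Chernoff tail bound (still union-bounded via cut counting applied at coarser thresholds, e.g.\ by bucketing cuts with $c \in [2^i OPT, 2^{i+1} OPT]$ and noting that the surviving count is far above $(1+o(1))p\cdot OPT$ with probability doubly-exponentially small in $i$) shows that no such cut shrinks below $(1-o(1))\, p\cdot OPT$ in $H$. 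Combining both directions gives $OPT_H = (1\pm o(1))\, p \cdot OPT$ and shows that any min-cut of $G$ has $(1\pm o(1))\, OPT_H$ edges in $H$, and that any cut approximately minimum in $H$ must have come from a near-minimum cut of $G$, hence has $(1\pm o(1))\, OPT$ edges in $G$.

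The main obstacle is the union bound over the exponentially many cuts of $G$: naive Chernoff gives only $2^n$-type failure bounds, which is useless. The cut-counting theorem (that the number of cuts within factor $\alpha$ of the minimum is $n^{O(\alpha)}$) is what makes the argument go through, and the bucketing of large cuts by value $[2^i OPT, 2^{i+1} OPT]$ is needed to close the loop for arbitrarily large cuts while still having their sampled sizes remain above $p\cdot OPT$ w.h.p. The careful choice of $\varepsilon$ as a slowly decaying function of $n$, valid because $p\cdot OPT = \omega(\log n)$, is what delivers the $(1\pm o(1))$ factors in the conclusion.
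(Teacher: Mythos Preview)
The paper does not actually prove this lemma; it is cited from Karger~\cite{karger1999random} and used as a black box in the proof of Theorem~\ref{thm:reduction_mincut_to_respect}. Your proposal is the standard and correct proof of the result (Chernoff plus Karger's cut-counting bound plus bucketing for the union bound), which is essentially how Karger's original argument goes, so there is nothing to compare against in this paper.
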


We restate Theorem \ref{thm:reduction_mincut_to_respect} here for quick reference,  a proof of which appears in \cite{DagaHNS19}. We include it here for completeness.

\begin{theorem*}[Theorem \ref{thm:reduction_mincut_to_respect} restated]
Given a weighted graph $G$, in $\tilde{O}(\sqrt{n}+D)$ rounds, we can find a set of spanning trees ${\cal T}=\set{T_1,\cdots,T_k}$ for some $k=\Theta(\log ^{2.2} n)$ such that w.h.p. there exists a min-cut of $G$ which 2-respects at least one spanning tree $T\in \cal T$. Also, each node $v$ knows which edges incident to it are part of the spanning tree $T_i$, for $1\leq i\leq k$.
\end{theorem*}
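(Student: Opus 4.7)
The plan is to combine the three tools already laid out: a $(1+\epsilon)$-approximation of the min-cut to choose a sampling rate, Karger's edge-sampling (Lemma~\ref{lemma:reduction_sampling_appendix}) to reduce $OPT$ down to $\Theta(\log n)$, and a greedy tree packing (Lemma~\ref{lemma:lemma_tree_packing1}) on the sampled subgraph.

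First I would invoke the Nanongkai--Su $(1+\epsilon)$-approximation algorithm \cite{Nanongkai-Su} to compute in $\tO(\sqrt n + D)$ rounds an estimate $\lambda$ of $OPT$ such that $OPT \leq \lambda \leq 2\, OPT$. This lets every vertex agree on a single global value to derive a sampling parameter $p := c\log n / \lambda$ for a suitably large constant $c$, so that in $H = G_p$ we have $p\cdot OPT = \Theta(\log n)$ with high probability. By Lemma~\ref{lemma:reduction_sampling_appendix}, $OPT_H = (1\pm o(1))p\cdot OPT = \Theta(\log n)$, and moreover any min-cut $C$ of $G$ has at most $(1+o(1))\, OPT_H \le 1.1\, OPT_H$ edges in $H$. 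Each vertex can sample the incidence of each of its incident (multi-)edges locally (with shared randomness provided e.g.\ via $O(\log n)$ random bits broadcast by a leader in $O(D)$ rounds, or with each edge coordinated by its lower-ID endpoint), so constructing $H$ requires no additional communication.

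Second, I would compute a greedy tree packing of $H$ of size $k = \Theta(\log^{2.2} n) = \Theta(OPT_H \cdot \log n)$ using the distributed greedy tree packing routine of Daga--Henzinger--Nanongkai--Saranurak \cite{DagaHNS19}, which produces $k$ greedy MSTs one at a time, each in $\tO(\sqrt n + D)$ rounds with respect to the current load function $L_{\cT_{i-1}}(\cdot)$ (one can use the standard Kutten--Peleg / Garay--Kutten--Peleg MST algorithm, with each vertex maintaining for each of its incident edges a load counter updated locally after each tree is built). Because each tree is reported with each vertex learning which of its incident edges joins $T_i$, the total round complexity is $k \cdot \tO(\sqrt n + D) = \tO(\sqrt n + D)$, and at the end every vertex $v$ knows, for every $1\le i\le k$, which of its incident edges belong to $T_i$.

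By Lemma~\ref{lemma:lemma_tree_packing1} applied to $H$ with $OPT_H = \Theta(\log n)$ and cut $C$ (a min-cut of $G$, which is a $1.1$-approximation of $OPT_H$ in $H$), at least a $1/3$ fraction of the trees in $\cT$ are $2$-respected by $C$; in particular at least one is, which is exactly the conclusion claimed. The main obstacle I would expect is bookkeeping around the sampling step: ensuring that vertices consistently agree on which multi-edge copies survive (handled by having each edge's lower-ID endpoint perform the Bernoulli trials with shared random bits and inform the other endpoint once in $O(1)$ rounds), and verifying that the approximate $\lambda$ is accurate enough to make Lemma~\ref{lemma:reduction_sampling_appendix} apply with the right slack --- a constant-factor approximation suffices here since we only need $p\cdot OPT = \Theta(\log n)$, not a specific value. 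Putting these pieces together gives total round complexity $\tO(\sqrt n + D)$ and $k = \Theta(\log^{2.2} n)$ trees, as stated.
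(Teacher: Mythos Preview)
Your proposal follows essentially the same approach as the paper: approximate $OPT$ via \cite{Nanongkai-Su}, use the approximation to set a sampling probability, sample to reduce the min-cut to polylogarithmic, and then run a greedy tree packing by iterating a distributed MST algorithm. One small technical slip: Lemma~\ref{lemma:reduction_sampling_appendix} requires $p\cdot OPT=\omega(\log n)$, not $\Theta(\log n)$, which is why the paper takes $p\cdot OPT=\Theta(\log^{1.1} n)$ (and hence $OPT_H=\Theta(\log^{1.1}n)$, giving $\Theta(\log^{2.1}n)$--$\Theta(\log^{2.2}n)$ trees); with your choice $p=c\log n/\lambda$ the hypothesis of that lemma is not literally met, so you should bump the exponent slightly.
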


\begin{proof}[Proof of Theorem \ref{thm:reduction_mincut_to_respect}]
To prove Theorem \ref{thm:reduction_mincut_to_respect} using the Lemma \ref{lemma:reduction_sampling_appendix}, we would like to have $p\cdot OPT=\Theta(\log ^{1.1} n)$, so that we can construct a greedy tree packing in $G$ with $\Theta(\log ^{2.1} n)$ trees. A-priori we do not know the value of $OPT$, hence we do not know the sampling probability $p$ as well. We solve it by doing the following: We run a $(1 + \eps)$-algorithm for min-cut by Nanongkai and Su \cite{Nanongkai-Su} in $\tO(D + \sqrt n)$ time at the end of which every vertex knows a $(1 + \eps)$ approximation of the min-cut. Let us call this value as $\widetilde{OPT}$. Next we set $\tilde p = 2 \cdot O(\log^{1.1} n/ \widetilde{OPT})$. It is easy to see from the approximation guarantee of $\widetilde{OPT}$ that $p \leq \tilde p \leq 2p$. We use $\tilde p$ as the sampling probability.

As a side-note, notice that the quantity $\tilde p = 2 \cdot O(\log^{1.1} n/ \widetilde{OPT})$ needs to be at most 1 which implies that the following sampling process works only when $\widetilde{OPT} = \Omega(\log^{1.1} n)$. When $\widetilde{OPT} = o(\log^{1.1} n)$, we skip the sampling altogether and proceed to tree packing.

The sampling procedure itself is fairly simple. As mentioned before, the weighted edges are treated as multi-edges and the idea is to sample the unweighted edges uniformly and independently with probability $\tilde p$. To this end, given a weighted edge $e = \{u, v\}$ with weight $w(e)$, where $u \prec v$ in some global ordering of the vertices that they agree up on, $u$ samples unweighted edges uniformly and independently from $w(e)$ many unweighted edges corresponding to $e$, and sends across this number of sampled edges to $v$. This does not cause congestion as the number of bits needed to be transferred to $v$ is $O(\log n)$. Once this is done, the pair $(u,v)$ can order the sampled edges by themselves using some global ordering.

 Once the sampling is done, all that is left is constructing a greedy tree packing in $G$ (viewed as a multi-graph) with $\Theta(\log ^{2.1} n)$ trees. This can be done in $\tilde{O}(\sqrt{n}+D)$ rounds since constructing an MST in a given graph has complexity $\tilde{O}(\sqrt{n}+D)$ \cite{KuttenP98}. To start with, each vertex assumes all edges incident on it has load 0. For each MST computation, each vertex pair $(u,v)$ consider only the edge between them which has the smallest load---if there only one edge between them, then they consider that edge. This does not affect the computation because an MST will always include the edge with the smallest load from a set of parallel edges. If there are multiple edges between $u$ and $v$ with smallest load, they break tie arbitrarily. After each MST computation, $u$ and $v$ increase the load of this edge depending on whether this edge in included in the MST or not. An important point to note here is that the sampled edges between $u$ and $v$ can be distinguished by $u$ and $v$ by their load; if two such edges have same load, then they can be used interchangeably and it does not cause a problem in the execution of this algorithm. Hence, the vertices $u$ and $v$ can be in sync in all executions of the MST algorithm.
 \end{proof}

\section{Missing proofs from Section \ref{sec:building_blocks}}\label{sec:appen_building_blocks}

\subsection{Fragment decomposition}

\FragDecomp*

\begin{proof}

The proof is divided into two main steps: Decomposing $T$ into $O(\sqrt{n})$ components of size $O(\sqrt{n})$, and modifying these components to satisfy properties 2-4 depicted in the beginning of Section \ref{ssec:-decompFragment}.

\paragraph{Breaking down $T$.} Our goal here is breaking down $T$ into connected components $S_1,...,S_k$ with the following properties.
\begin{enumerate}
    \item $k=O(\sqrt{n})$.
    \item For all $i\in [k]$, $|S_i|=O(\sqrt{n})$.
    \item For all $i,j\in [k], i\neq j$, it holds that $|S_i\cap S_j|\leq 1$. Furthermore, if $|S_i\cap S_j|=1$, then $v\in S_i\cap S_j$ is the root of both of these connected components. In particular, all components are edge disjoint.
\end{enumerate}

We start with running the MST algorithm of Kutten and Peleg \cite{KuttenP98}. This algorithm decomposes $T$ into $O(\sqrt{n})$ vertex disjoint components of diameter $O(\sqrt{n})$ in $\tO(\sqrt{n}+D)$ rounds. Denote these components by $C_1,...,C_m$. 

We now describe a procedure that breaks down these components further into components of \emph{size} at most $O(\sqrt{n})$, while maintaining the number of components at most $O(\sqrt{n})$ as well.

Let $C_i,i\in [m]$ be some connected component, and denote the highest (closest to the root) vertex in $C_i$ by $r$, and denote by $L_i$ the set of lowest vertices in $C_i$, i.e. all vertices in $C_i$ whose all descendants are not in $C_i$.

Using standard aggregation on $T\cap C_i$, the vertices in $C_i$ can learn $|C_i|$. If $C_i<10\sqrt{n}$, then this component is marked \emph{good} and does not perform the rest of algorithm.

From here on in we assume that $|C_i|>10\sqrt{n}$.
Denote the parent vertex in $T$ of a given vertex $v$ by $p(v)$.
Denote by $D_i(v)$ the set of descendants of $v$ in $T$ which are in $C_i$.
Denote by $T_i(v)$ the tree $T(v)\cap C_i$ where $T(v)$ is the subtree of $T$ rooted at $v$.
We now perform an aggregate computation that starts at each vertex of $L_i$ and ends in $r$. Each vertex $v\in C_i$ holds a variable $count(v)$ initialized to zero and each vertex $v$ performs the following. 
Conceptually, we do a \emph{single} scan of the tree from the leaves to the root and a \emph{single} scan of the tree from the root to the leaves. The first involves an aggregate computation of the number of descendants in the subtree rooted at each given vertex, the latter assigns vertices to their respective new components.

\begin{enumerate}
    \item If $v\in L_i$, $v$ sends $1$ to $p(v)$
    \item $v$ waits to receive a message $m_u$ from each immediate descendant $u\in D_i(v)$, then $v$ sets $counter(v)=\sum_{u\in D_i} m_u+1$.
    \item If $\sqrt{n}\leq counter(v)\leq 5\sqrt{n}$, $v$ creates a new component 
    composing of vertices in $T_i(v)$ who were not yet assigned to any other new component in the following way. $v$ broadcasts its own id as the id of the new component down the tree $T_i(v)$ and tells $p(v)$ that $v$ started a new component as well and $0$ as the message $m_v$. Furthermore in this case, if $v$ receives from $p(v)$ a message that it was added to a new component (i.e., the id of some ancestor vertex of $v$), it ignores it and does not forward it down the tree. Any vertex in $T_i(v)$ that received the id of $v$ and wasn't yet assigned to any other component joins the component of $v$ whose id is the id of $v$.
    \item If $counter(v)>5\sqrt{n}$,  $v$ does the following. $v$ goes over the messages $m_u$ it received from each of its immediate descendants $u$ in some order and starts summing up their messages $m_u$ up to the point where $z$ satisfies $\sqrt{n}\leq z\leq 5\sqrt{n}$. Denote the set of descendants whose message were summed up to this point by $A$.
     $A$ now defines a new component (which is created in the same manner as the previous bullet point) which contains the trees $T_i(u)$ for each $u\in A$, except the vertices who were already assigned to a new component previously, $v$ is included in the new component as well. Then $v$ initializes $z$ to $0$ and resumes this component from the next vertex in the initial order. $v$ continues creating components in such a manner, giving the id $(v,i)$ to the $i$-th component created, until all of its descendants are in new components or until the last iteration, in which the summation of messages $m_u$ of the remaining descendants did not exceed $\sqrt{n}$. If the latter is the case, denote these remaining descendants by $A^*$, denote by $j$ the current iteration of the described procedure in this bullet. All vertices in $T_i(u),u\in A^*$ who were not yet assigned to a component are assigned to the component whose id is $(v,j-1)$. 
    
Note that $v$ is  being included in all of these components created in the procedure above, and is the only vertex in the intersection of any two of these new components..
    \item If $counter(v)<\sqrt{n}$, $v$ sends $counter(v)$ as the message $m_v$ to its parent $p(v)$.
    \item If $v=r$ and satisfies the previous bullet point, then $r$ starts a new component in the same manner as described in bullet point 3,  with all the vertices that were not assigned yet a component. This is done by broadcasting the id of that component, which is $r$ down the tree until this forwarding halts at vertices that were already assigned to other new components.
\end{enumerate}

Note that this procedure an be done in parallel in each of the original connected components since they are vertex disjoint.
 
 Since the diameter of each component in $O(\sqrt{n})$, and an MST can be computed in $\Tilde{\Theta}(\sqrt{n}+D)$ rounds, the round complexity of this procedure is $\Tilde{\Theta}(\sqrt{n}+D)$.
 
 For correctness, note that each newly created component, apart from the one rooted at the root of the component $r$, has size at least $\sqrt{n}$ and at most $O(\sqrt{n})$. Thus there are at most $2\sqrt{n}$ new components (at most an additional $\sqrt{n}$ smaller new components rooted at the roots of components). Furthermore, there are at most $O(\sqrt{n})$ good components, thus in total there at most $O(\sqrt{n})$ components.
 
The property that for all $i\in [k]$, $|S_i|=O(\sqrt{n})$ follows from the behavior of the algorithm, since every new component has size at most $O(\sqrt{n})$.

The property that for all $i,j\in [k],i\neq$ it holds that $|S_i\cap S_j|\leq 1$, and if $|S_i\cap S_j|=1$ then $v\in S_i\cap S_j$ is the root of both components holds by the explanations in bullet point 4 above.
 
 Denote the newly created components by $S_1,...,S_k$.
 Note that each node can identify its component by the ID that was broadcast to it from the root of the component, using an additional $O(\sqrt{n})$ rounds for broadcasting that ID down the tree. Note that this part of the procedure in implemented in parallel as all components are edge disjoint
 
 \paragraph{Modifying and creating structure.} Now, our goal is to modify $S_1,...,S_k$ to have properties 1-4 in the statement of the lemma depicted in the start of this appendix section, while maintaining the fact that the amount of components and the size of each component is $O(\sqrt{n})$.
 
 This part of the algorithm follows the lines of \cite{ghaffari2016near}, in which a similar decomposition is constructed, but with diameter instead of size guarantees, as explained in the beginning of the section. Later works such as \cite{dory2020} employed this procedure as well and proved additional claims regarding its final properties, which we use later.
 
 \paragraph{Marking vertices.}
Now, with $S_1,...,S_k$, we do the following. Let $A\subseteq V$ be the set of vertices which are in more than a single component. Clearly $|A|=O(\sqrt{n})$ since each two components can intersect at at most one vertex, and there are $O(\sqrt{n})$ components. Furthermore, let $B\subseteq E(T)$ be the set of tree edges connecting vertices which are in different components and neither of them are on more than one component. Again, $|B|=O(\sqrt{n})$ since $k=O(\sqrt{n})$. Thus, using a BFS tree, all the vertices in the graph can learn the sets $A$ and $B$.

We now proceed to describe the marking procedure. First of all, all vertices in $A$ are marked, all vertices that have an incident edge in $B$ are marked, and the root $r$ of the tree $T$ is marked. Furthermore, for each two marked vertices $v,u$ we mark $\lca{v,u}$ in the following way. We scan each component from the leaves to the root of the component, a leaf $v$ sends to his parent its ID if it's marked or $\emptyset$ otherwise. An internal vertex $v$ waits to receive messages from all its descendants, if it received exactly one ID, it forwards it to $p(v)$. If it received 2 IDs, it marks itself and forwards one of the IDs arbitrarily to $p(v)$. Otherwise, it sends $\emptyset$ to $p(v)$.

Clearly this procedure takes $O(\sqrt{n})$ rounds since all communication is done internally in the components and there is no congestion since components are edge disjoint.

In \cite[Lemma 2.7]{dory2020} the following claim is proven about this marking procedure. The properties of the components are a bit different, but the proof is exactly the same.
\begin{claim}\label{claim_appen:marked_vertices}
The set of marked vertices satisfies the following properties.
\begin{enumerate}
    \item The root $r$ is marked, and  each other vertex $v$ has a marked ancestor of distance at most $O(\sqrt{n})$.
    \item For each two marked vertices $v,u$, $\lca{v,u}$ is also marked.
    \item There are $O(\sqrt{n})$ marked vertices.
\end{enumerate}
\end{claim}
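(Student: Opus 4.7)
I will verify the three properties separately, handling Property~3 (size) first since it motivates the counting needed for Properties~1--2, then Property~1 (marked ancestors), and finally Property~2 (LCA closure).

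\emph{Property~3.} The initial marked set has size $O(\sqrt n)$: the root contributes $1$, the set $A$ contributes $|A|=O(\sqrt n)$, and the endpoints of $B$ contribute at most $2|B|=O(\sqrt n)$. The convergecast subsequently marks additional vertices (all of them LCAs of already-marked vertices). I will invoke the standard tree fact that the LCA-closure of any $k$-element subset of a tree has size at most $2k-1$; applied to the initial marked set this bounds the total size by $O(\sqrt n)$.

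\emph{Property~1.} The root $r$ is marked by definition. For an arbitrary vertex $v$, pick a component $S_i$ containing $v$ and let $r_i$ denote its root. Walking from $v$ up to $r_i$ stays inside $S_i$ and so takes at most $|S_i|-1 = O(\sqrt n)$ steps. I will then case-split on $r_i$: (i) if $r_i=r$ or $r_i\in A$ then $r_i$ itself is marked; (ii) otherwise $r_i$ has a parent $p$ in $T$ lying in a different component. If $p\in A$ then $p$ is marked and is one extra step above $r_i$, still at distance $O(\sqrt n)$ from $v$. If $p\notin A$ then the edge $\{r_i,p\}$ lies in $B$, whence $r_i$ is marked. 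In every case a marked ancestor of $v$ lies within $O(\sqrt n)$ steps.

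\emph{Property~2.} Since the components $S_1,\ldots,S_k$ are edge-disjoint and every tree edge lies in exactly one component, the per-component convergecasts together realise a single convergecast along $T$ (a vertex shared by two components, being the root of both, simply aggregates messages from children across both components). I will show that in this global convergecast every LCA of two initially-marked vertices becomes marked. Let $v,u$ be marked and $w=\lca(v,u)$; if $w\in\{v,u\}$ it is already marked, so assume otherwise. Then $v$ and $u$ lie in two distinct subtrees rooted at different children of $w$. Tracing the convergecast upward from $v$ and from $u$, in each direction a non-$\emptyset$ ID keeps being forwarded (the procedure only forwards $\emptyset$ when no marked descendant exists), so two different IDs arrive at $w$, which triggers $w$ to mark itself. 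To handle LCAs of LCAs, note that after $w$ is marked the procedure forwards one of the IDs further, and the same argument applies at any higher vertex where a third branch carrying a marked descendant joins; a short induction on the height of the LCA-closure tree then shows that every element of the LCA-closure of the initially-marked set is marked at the end.

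\emph{Main obstacle.} The only subtle point is Property~2, specifically checking that a single bottom-up pass suffices to mark all LCAs of LCAs, since only one ID is forwarded above each marking event. This is resolved by the observation that forwarding \emph{any} ID from the combined subtree is enough to certify ``marked vertex below''; a formal induction on the number of leaves in the LCA-closure tree makes this precise. Everything else is a routine case analysis using the structural properties of the decomposition.
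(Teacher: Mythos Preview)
The paper does not supply its own proof of this claim; it invokes \cite[Lemma~2.7]{dory2020} and remarks that ``the properties of the components are a bit different, but the proof is exactly the same.'' Your arguments for Properties~1 and~3 are correct and stand on their own: the $2k-1$ bound on the size of an LCA-closure is the right tool for Property~3, and the case analysis on the component root handles Property~1 cleanly.

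Your argument for Property~2 has a genuine gap. You assume that every tree edge lies in exactly one component and that the per-component convergecasts therefore glue into a single global convergecast along $T$. This premise is not supported by the paper's setup: the paper states that any vertex shared by two components $S_i,S_j$ is the root of \emph{both}. For a component root $v\neq r$ this forces the edge $\{v,p(v)\}$ to lie in no component at all (if it lay in some $S_j$ then $v\in S_j$ would have to be the root of $S_j$, contradicting that $p(v)\in S_j$ lies strictly above $v$). No per-component convergecast carries information across such a gap edge, so the ``global convergecast'' does not assemble. Concretely, one can have a component root $v$ with $v\notin A$ and $p(v)\in A$, so that $\{v,p(v)\}\notin B$ and $v$ is not initially marked, while two children $v_1,v_2$ of $v$ lie in $A$ (hence are marked) but belong to components rooted at $v_1,v_2$ that do not contain $v$; then $\lca{v_1,v_2}=v$, yet the convergecast run inside $v$'s own component never receives IDs from the $v_1$- or $v_2$-side and never marks $v$. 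Closing this gap would require either showing that every component root is initially marked---which does not follow from the marking rules as stated here---or appealing to the specific decomposition in \cite{dory2020}, where the components are arranged so that information genuinely passes upward across boundaries.
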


Lastly, using an aggregate computation over the tree inside of each component, each component makes sure that there is at least a single leaf of that component that is marked. If there was previously a marked leaf, the component does nothing. Otherwise, the leaf with the lowest id is marked.
Note that this still maintains the number of marked vertices at $O(\sqrt{n})$ since there are $O(\sqrt{n})$ components.

\paragraph{Defining fragments.} Lastly, we define the final fragments and highway paths in our final decomposition. For each marked vertex $d_F\neq r$, the path from $d_F$ to its closest marked ancestor ($r_F$) defines the highway path of the fragment $F$. Note that by definition no vertex on the path between $r_F$ and $d_F$ is marked, and no other descendant (except $d_F$) of any internal vertex on that path is marked as well, since otherwise, the vertex itself is marked as well. The fragment $F$ thus includes the path between $r_F$ and $d_F$, and all the descendants (not in the direction of $d_F$) of the internal vertices on that path. Now, if $v$ is a marked vertex, denote by $NM(v)$ all of $v$'s descendants that are not included yet in any fragment. In other words, these are exactly the descendants of $v$ that have no marked descendants themselves. If $v$ is already the root of some fragment $F$, and $v$ is \emph{not} the root of any component, then all of $NM(v)$ are added to $F$ as non highway vertices and edges. If $v$ is a root of both a fragment and a component, remember that $v$ can be the root of several components. Note that for each such component, $S$, there is already a fragment $F_S$ rooted at $v$ which is contained in $S$. This is true since each component, including $S$, has at least one marked leaf. Thus we can take the set $NM(v)$ and partition its elements (vertices) between the fragments rooted at $v$ such that each fragment is completely contained inside some component.

\paragraph{Proof of properties.}
We now turn to proving the listed properties in the statement of the lemma about the resulting decomposition. We restate these properties for the sake of clarity.

\begin{enumerate}
    \item Each tuple $t_F=(r_F, d_F)$ represents an edge-disjoint fragment (subtree) $F$ of $T$ rooted at $r_F$ with diameter $\dfrag = O(\sqrt{n})$ and size  $\sfrag = O(\sqrt{n})$. The vertex $r_F$ is an ancestor of all vertices in the fragment $F$ in $T$. It is important to note that no fragment has an empty highway that consists of a single vertex, thus each fragment in our decomposition has a unique identifier.
    
    \item Each fragment $F$ has a special vertex $d_F$ which is called the unique descendant of the fragment. The unique path between $r_F$ and $d_F$ is called the \textit{highway} of the fragment. Each fragment has a single highway path. The vertices $r_F$ and $d_F$ are the only two vertices of the fragment $F$ which can occur in other fragments.
    
    \item All edges that are not part of the highway, are called \emph{non-highway} edges. Each non-highway path is completely contained inside a single fragment.  
    
    \item Each edge of $T$ takes part in exactly one fragment $F$.
\end{enumerate}

The properties regarding the id, the highway path and non-highway edges of each fragment are satisfied by construction.

\textbf{Number of fragments.} Since there are $O(\sqrt{n})$ marked vertices, and each of them is the most bottom highway vertex of exactly one fragment, there are $O(\sqrt{n})$ edge disjoint final fragments in total.

\textbf{Size bound.} A key observation is that each fragment is completely contained inside the original component $S$ whose vertices belong to, and since the size of each original component is $O(\sqrt{n})$, so is the size of each fragment.

\textbf{Learning the skeleton tree.} There are $O(\sqrt{n})$ vertices that are the start and end of the highways of fragments ($r_F,d_F$), thus all the vertices in the graph can learn the IDs of these vertices along with the ID of the fragment each pair belongs to in $O(\sqrt{n}+D)$ rounds using a BFS tree. Using this information each vertex can immediately learn the topology of the skeleton tree.

Lastly, we prove the 4 conditions stated in the lemma regarding the information each vertex $v$ holds about the fragment $F$ it belongs to it holds that each vertex $v$ knows the following information about the fragment $F$ it belongs to:
\begin{enumerate}
    \item The identity $(r_F,d_F)$ of the fragment $F$.
    \item The complete structure of the skeleton tree $T_S$.
    \item All the edges of the highway of the fragment $F$.
    \item All the edges of the unique path connecting $v$ and $r_F$, and also the edges of the unique path connecting $v$ and $d_F$.
\end{enumerate}

Each node can know the id of its fragment by two standard broadcasts in the fragment, of the highest and lowest vertices on the highway.

Each vertex can also know the complete topology of the skeleton tree by using aggregate computations over a BFS tree of the graph and pipe-lining, and thus learning in $O(\sqrt{n}+D)$ (Since there are $O(\sqrt{n})$ components) round the ids of all fragments in the graph, from which he can deduce the topology of the skeleton tree.
The two last properties follow wince $\sfrag=O(\sqrt{n})$ and thus each node can learn the complete topology of its fragment in $O(\sqrt{n})$ rounds.

This completes the proof of the lemma.

\end{proof}

\subsection{Layering decomposition}

\LayerDecompNonHighway*
\begin{proof}
We fix one fragment $F$ and describe the computation in the fragment. As the computation is completely inside the fragment, we can work simultaneously in all different fragments. In each fragment $F$, we work on the subtree $T_F$ of the fragment, and run an aggregate computation from the leaves to the root. As we only consider non-highways, we stop the computation each time we reach a vertex in the highway. By the end of the computation, all non-highway edges have a layer number.

We next show that we can perform the layering on $T_F$ in $O(\dfrag)$ rounds, using only communication on the edges of $T_F$. Each edge $e$ in $T_F$ holds a number $\ell_e$ initialized to $0$. Now, each edge $e$ does the following. If $e$ is connected to a leaf, $e$ sets $\ell_e=1$, and sends its layer to its parent edge in $T_F$. Otherwise, denote by $e_1,...,e_m$ the descendants of $e$ in $T_F$. $e$ waits to receive $\ell_{e_1},...,\ell_{e_m}$. Then, denote by $\ell_{max}=\max\limits_{i=1,...,m} \ell_{e_i}$. If there are $i\neq j$ such that $\ell_{e_i}=\ell_{e_j}=\ell_{max}$, then $e$ sets $\ell_e=\ell_{max}+1$, otherwise, $e$ sets $\ell_e=\ell_{max}$. In both cases, $e$ sends to its parent edge the value $\ell_e$. 
This process terminates when we reach the edges adjacent to the highway after $O(\dfrag)$ rounds since the diameter of $T_F$ is $O(\dfrag)$, and we are performing a single aggregate computation over the tree $T_F$, and communicating only over the edges of $T_F$.
Correctness is implied from the definition of the layer decomposition and Observation \ref{obs:-no_down_layer}.
\end{proof}

\subsection{Information of edges}

\LearnInfoEdge*

\begin{proof}
Note that  for each tree edge $e$, it holds that $|\info(e)|=O(\log n)$. This is true since:
\begin{enumerate}
    \item Whether $e$ is a highway edge edge or a non-highway edge is $O(1)$ information. Also the information regarding the fragment of $e$ and the values $\cov(e),|\ECov(e)|$ are $O(\log n)$ bits.

\end{enumerate}

Since $\info(e')$ for a non-tree edge $e'$ is comprised of $\info(e)$ of exactly two tree edges, one can deduce that $|\info(e')|=O(\log n)$ as well. Note that, it suffices to prove the theorem for tree edges alone, since given a non-tree edge $e'=\set{v,u}$, if $v$ holds $\info(e'_1),e'_1=\set{p(v),v}$ and $u$ holds $\info(e'_2),e'_2=\set{p(u),u}$, then in $O(\log n)$ rounds, both $u,v$ can learn $\info(e')$. 

The information about whether an edge $e$ is a highway, or a non-highway edge, and the id of the fragment of $e$ is known by $e$ due to the fragment decomposition properties, proved in Lemma \ref{lemma:strong_decomp_construct}, since an edge $e$ in fragment $F$ can check whether the path from $e$ to $r_F$ and the path from $e$ to $d_F$ are vertex disjoint, and if so, $e$ marks itself as a highway edge, and otherwise, as a non-highway edge.

Now for the second bullet 
of the theorem, we employ Lemma \ref{lemma:routing_lemma_aggregate}, in the following way. 
     To learn $|\ECov(e)|$, $f$ is the addition function and the information of each edge in the graph is simply the integer $1$. After the computation is done $e$ adds 1 to count itself as well.


Similarly, as explained in Claim \ref{claim:learn_cov_value}, $e$ knows $\cov(e)$.
\end{proof}

\section{The sampling procedure: Lemma \ref{lemma:find-int-samp}}\label{sec:appen-samp}
Here we prove Lemma \ref{lemma:find-int-samp} which we restate for ease of reference. 

\FindIntSample*



Before proving Lemma \ref{lemma:find-int-samp} in its full generality, we first argue the same for \textit{unweighted graphs} for ease of presentation. Readers may chose to skip the following `warm-up' section.

\bsni
\paragraph{Warm-up: Unweighted graph.} We first prove Lemma \ref{lemma:find-int-samp} for the case of unweighted graph. The idea is simple: Each edge $e$ should sample $O(\log n)$ edges from the set $\ECov(e)$, and make decisions about other interesting edges based on the sampled edges. 

We assume Claim \ref{claim_learn_cov} such that every tree edge $e$ knows the value of $\cov(e)$. The sampling procedure runs in $O(\log n)$ iterations. We divide the tree edges into classes according to the values of $\cov(e)$, as follows. We put a tree edge $e$ in class $C_j$ if $\cov(e) \in [2^{j-1}+1,2^j]$. Note that the number of classes is bounded by $O(\log n)$.  Now, there are $O(\log n)$ many iterations, one for each class $C_j$. We describe iteration $j$ next:

\begin{description}
\item[Iteration $j$.] Each edge $e'$ first generates a unique identifier of length $O(\log n)$, denoted by $Id_e$, for this iteration. This can be done by sampling identifiers randomly from a range of size $n^5$ which each edge can do locally, and with high probability, the identifiers will be unique. Then, each edge $e'$ samples itself independently with probability $\frac{1}{2^j}$. Now, each tree edge $e\in C_j$ uses Claim \ref{claim:route_info} to learn about $\info(e^*)$ (which has a $\poly \log n$ bit representation; see Theorem \ref{thm:-info_theorem_tree_edge} and Claim \ref{claim:route_info})  of the minimal $Id$ edge sampled $e^*$ that covers $e$. This process is repeated $\beta=\log ^2 n$ times. Denote by $\ECovs(e)$ the set of edges from $\ECov(e)$ that were learned by $e$.\footnote{Note that, for making the elements sampled in $\ECovs(e)$ i.i.d., we needed the random $Id_e$ generated in the beginning. Each edge in the set $\ECovs(e)$ is a random edge from a set of i.i.d. sampled edges because each such edge has the minimum $Id_e$ w.r.t. a random ordering of those i.i.d. sampled edges. It is to be noted that similar argument cannot be made if we fixed a global ordering of the edges apriori, such as, fixing an ordering of $V$ and using the induced ordering on the edge set.} A simple concentration bound shows that the number of distinct edges sampled from the set $\ECov(e)$ is $\covs(e) = |\ECovs(e)|=\Omega(\log n)$.\footnote{This is only true when $\cov(e) = \omega(\log n)$. For the case when $\cov(e) = O(\log n)$, we do the same sampling procedure---in this case the set $\ECovs(e)$ is a multiset which w.h.p. will cover the set $\ECov(e)$ completely. We can do exact calculation for such cases, but this does not affect the probabilistic calculation presented next.} We keep $O(\log n)$-many elements in $\ECovs(e)$ and discard the rest.


We denote by $\ECovs(e,P) \subseteq \ECov(e,P)$ for some tree edge $e$ and some ancestor to descendant path $P$ as follows: $\ECovs(e,P)= \ECovs(e) \cap \ECov(e,P) = \set{e''\in \ECovs(e) \mid e''\in \ECov(e, P)}$, and $\covs(e,P) = | \ECovs(e,P)|$. Now we define for each tree edge $e$ the set of potentially interested paths as follows. Here, $P_e$ denotes the path from the root to $e$.
\[
\intpot{e}=\left\{P \mid \covs(e,P)\geq \frac{\covs(e)}{3}\right\}\cup P_e.
\]
\end{description}



We need to verify that property \ref{item:samp-1} and \ref{item:samp-2} of Lemma \ref{lemma:find-int-samp} hold for such a sampling procedure which is done next in Claim \ref{claim:sampling_new_correct_uw} and \ref{claim:samp_correct} respectively. But, first, we quickly check the round complexity of the sampling algorithm.
\begin{claim}\label{clm:round-comp-samp-uw}
The sampling algorithm for unweighted graph can be performed in $\tO(D + \sqrt n)$ rounds at the end of which each tree edge $e$ (and the vertices in $e$) knows the set $\set{\info(e')\mid e'\in \ECovs(e)}$.
\end{claim}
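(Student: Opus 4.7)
The bulk of the work is bookkeeping: I need to argue that every communication step described in the warm-up can be carried out within the budget of Lemma~\ref{lemma:routing_lemma_aggregate}. The plan is to walk through the three levels of looping in the algorithm (over classes, over repetitions within a class, and within a single repetition) and charge each to $\tilde{O}(D+\sqrt n)$ rounds.

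\emph{Per-repetition cost.} Fix a class $C_j$ and one repetition inside iteration $j$. Every non-tree edge $e'$ locally flips a coin to decide whether to be sampled, and if sampled it draws the random identifier $\mathrm{Id}_{e'}\in[n^5]$ locally as well, so this step requires no communication. To let every tree edge $e\in C_j$ learn $\info(e^*)$ of the sampled covering edge $e^*$ of smallest identifier, I invoke Lemma~\ref{lemma:routing_lemma_aggregate} directly with the following choices: the input held by a non-tree edge $e'$ is the pair $(\mathrm{Id}_{e'},\info(e'))$ if $e'$ was sampled in this repetition, and a dummy ``null'' token otherwise; the commutative aggregate $f$ returns the non-null input with the smallest identifier (ties cannot occur w.h.p.~since identifiers come from a $\mathrm{poly}(n)$ range). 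By Theorem~\ref{thm:-info_theorem_tree_edge} each $\info(e')$ is $O(\log n)$ bits, and $\mathrm{Id}_{e'}$ is $O(\log n)$ bits, so the inputs and output of $f$ are $O(\log n)$ bits, exactly the setting required by Lemma~\ref{lemma:routing_lemma_aggregate}. Plugging in $\dfrag=\nfrag=O(\sqrt n)$ from Lemma~\ref{lemma:strong_decomp_construct} gives a round complexity of $O(D+\sqrt n)$ for one repetition, after which every tree edge $e$ knows the single smallest-identifier sample in $\ECov(e)$ together with its $\info$. Tree edges of other classes can simply ignore the output in this repetition.

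\emph{Outer loops.} Each class $C_j$ performs $\beta=O(\log^2 n)$ independent repetitions, and there are $O(\log n)$ classes. Since repetitions are run sequentially, the cost for iteration $j$ is $O((D+\sqrt n)\log^2 n)$, and summing over the $O(\log n)$ classes gives a total of $O((D+\sqrt n)\log^3 n)=\tilde O(D+\sqrt n)$ rounds. At the end, tree edge $e$ has collected (at most) $\beta$ sampled edges from $\ECov(e)$, one per repetition of its own class, together with the $\info$ of each; discarding duplicates and keeping only $O(\log n)$ of them, as described in the algorithm, is a local operation at $e$.

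\emph{Main obstacle.} I do not expect any real technical difficulty in this claim, because the claim is purely about round complexity (the concentration / distinctness statements about $\ECovs(e)$ belong to the correctness claims that follow). The one place to be careful is why the $O(\log n)$-bit budget of Lemma~\ref{lemma:routing_lemma_aggregate} suffices: this is where Theorem~\ref{thm:-info_theorem_tree_edge} is crucial, as it guarantees $|\info(e')|=O(\log n)$ for every edge, so the aggregate we pipe through the fragment decomposition really does fit into a single round per hop. Given that, the pipelining via fragments and a BFS tree inside Lemma~\ref{lemma:routing_lemma_aggregate} does all the heavy lifting.
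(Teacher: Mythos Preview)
Your proposal is correct and follows essentially the same approach as the paper: count the $O(\log n)$ classes times $\beta$ repetitions per class, each costing one invocation of the routing primitive, for a total of $\tilde O(D+\sqrt n)$ rounds. The only cosmetic difference is that the paper phrases the per-repetition step via Claim~\ref{claim:route_info}, whereas you invoke Lemma~\ref{lemma:routing_lemma_aggregate} directly with the ``smallest-identifier'' aggregate; since Claim~\ref{claim:route_info} is itself a one-line corollary of Lemma~\ref{lemma:routing_lemma_aggregate}, the two arguments are interchangeable.
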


\begin{proof}
The communication happens only in two parts of the algorithm:
\begin{enumerate}
    \item[(i)] The unique identifier of each edge is sent across from one end-point of the edge to the other end-point simultaneously on all edges. As mentioned before, this can be done in one round as there is no congestion on any edge. As there are $O(\log n)$ iterations, the total number of rounds for this operation is also $O(\log n)$.
    
    \item[(ii)] In each iteration, each tree edge $e$ active in that iteration (i.e., $\cov(e)$ falls in the range corresponding to that iteration) uses Claim \ref{claim:route_info} to learn about of $\info(e')$ of $O(\log n)$ many (possibly distinct) edges $e'$ which covers $e$. Each invocation of Claim \ref{claim:route_info} requires $O(D + \sqrt n)$ rounds. As there are $O(\log n)$ many such invocations in each iteration and as there are $O(\log n)$ many iterations, the total round complexity is $\tO(D + \sqrt n)$.
\end{enumerate}
Hence the round complexity of the sampling algorithm is $\tO(D + \sqrt n)$.
\end{proof}


Now we turn to verify Property \ref{item:samp-1} and \ref{item:samp-2} of Lemma \ref{lemma:find-int-samp}.
\begin{claim}\label{claim:sampling_new_correct_uw}
W.h.p, given a tree edge $e$ and an ancestor to descendant path $P^\ast$ such that $P^\ast \in \interesting(e)$, we have $\covs(e,P^\ast)\geq \frac{\covs(e)}{3}$.
\end{claim}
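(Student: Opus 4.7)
} The plan is a standard Chernoff-plus-union-bound argument once we justify that the entries of $\ECovs(e)$ behave like i.i.d.\ uniform samples from $\ECov(e)$. First, I would fix the iteration $j$ with $e\in C_j$ and observe that in each of the $\beta=\log^{2}n$ rounds of that iteration, every edge of $\ECov(e)$ is sampled independently with probability $2^{-j}$; since $\cov(e)\in[2^{j-1}+1,2^{j}]$, the expected number of sampled edges covering $e$ per round lies in $[1/2,1]$. Conditioned on at least one such edge being sampled, the tie-breaking by the fresh random identifier $Id_{e'}$ makes the retrieved edge a uniformly random element of the set of sampled edges covering $e$, and hence a uniformly random element of $\ECov(e)$. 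A straightforward Chernoff calculation then shows that with probability $1-n^{-\Omega(1)}$, over the $\beta=\log^{2}n$ rounds we collect at least $\Omega(\log n)$ distinct samples, so after keeping $O(\log n)$ of them the resulting multiset can be coupled with $t=\Theta(\log n)$ i.i.d.\ uniform draws from $\ECov(e)$ (discarding the rare coincidences if necessary).

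Next I would condition on this i.i.d.\ view and fix a path $P^{\ast}\in\interesting(e)$. Each draw lands in $\ECov(e,P^{\ast})$ with probability $p\coloneqq\cov(e,P^{\ast})/\cov(e)>1/2$. Writing $\covs(e,P^{\ast})=\sum_{i=1}^{t}X_i$ with $X_i$ i.i.d.\ Bernoulli$(p)$, we have $\E[\covs(e,P^{\ast})]>t/2$. Setting the deviation parameter so that $(1-\delta)\cdot (t/2) = t/3$, i.e.\ $\delta=1/3$, a multiplicative Chernoff bound gives
\[
\Pr\!\left[\covs(e,P^{\ast})<\tfrac{t}{3}\right]\;\le\;\Pr\!\left[\sum X_i<(1-\delta)\E\sum X_i\right]\;\le\;\exp\!\left(-\tfrac{\delta^{2}}{2}\cdot\tfrac{t}{2}\right)\;=\;n^{-\Omega(1)}.
\]
Since $\covs(e)\le t$ by construction, this implies $\covs(e,P^{\ast})\ge \covs(e)/3$.

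Finally I would take a union bound. There are $O(n)$ tree edges, and for each $e$ there are at most $O(n^{2})$ ancestor-to-descendant paths $P^{\ast}$, so at most $O(n^{3})$ pairs to control. By choosing the hidden constant in $t=\Theta(\log n)$ large enough, each failure probability is at most $n^{-c}$ for any desired $c$, and the union bound over all pairs $(e,P^{\ast})$ with $P^{\ast}\in\interesting(e)$ still leaves overall failure probability $n^{-\Omega(1)}$.

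The only real subtlety I expect is the coupling in the first paragraph: the entries of $\ECovs(e)$ are selected as minimum-$Id$ samples across the $\beta$ rounds, and then deduplicated, so they are not literally i.i.d. I would handle this by the standard principle already flagged in the excerpt's footnote -- the random $Id_{e'}$ assignments, redrawn each iteration, make the minimum-$Id$ sample uniform on the (nonempty) set of sampled covering edges, which together with the independence of rounds yields the required coupling to i.i.d.\ uniform draws on $\ECov(e)$ (with negligible loss from collisions). Everything else is routine Chernoff.
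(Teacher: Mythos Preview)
Your proposal is correct and follows essentially the same approach as the paper: fix the iteration $j$ for which $e\in C_j$, argue that the retained samples behave like $\Theta(\log n)$ i.i.d.\ uniform draws from $\ECov(e)$, note that each draw hits $\ECov(e,P^\ast)$ with probability greater than $1/2$, and apply a multiplicative Chernoff bound to get below $1/3$ with probability $n^{-\Omega(1)}$. The paper's own proof is a two-line sketch of exactly this, and it defers the union bound over tree edges to the sentence following the claim; you spelled out the coupling subtlety (which the paper only flags in a footnote) and made the union bound explicit, but there is no substantive difference in strategy.
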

\begin{proof}
By definition of interesting paths, we know that $\cov(e,e')>\frac{\cov(e)}{2}$. Let $j$ be the index such that $e\in C_j$ (i.e., $\cov(e) \in [2^{j-1}+1, 2^j]$ and let us observe sub iteration $j$. It is easy to see that, on expectation, the value of $\covs(e,P^\ast) > \covs(e)/2$. Hence, by an application of Chernoff bound, we see
\begin{align*}
    \Pr[\covs(e,P^\ast) > \covs(e)/3] \geq 1 - o(1).
\end{align*}
This means that, w.h.p., $P^\ast \in \intpot{e}$ which verifies Property \ref{item:samp-1} of Lemma \ref{lemma:find-int-samp}.
\end{proof}
Using union bound, we see that such inequality holds for every tree-edge w.h.p. We actually achieve something stronger by this random process, as mentioned in the following claim.

\begin{claim}\label{claim:samp_correct}
 For every path $P \in \intpot{e}$, $\cov(e, P) \geq \cov(e)/6$ w.h.p.
\end{claim}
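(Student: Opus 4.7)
}
The plan is to prove the contrapositive in a concentrated form: for any fixed ancestor-to-descendant path $P$ with $\cov(e,P) < \cov(e)/6$, the event $P \in \intpot{e}$ occurs with probability at most $1/\poly(n)$. Then, since $\intpot{e}$ contains only paths $P$ satisfying $\covs(e,P) \geq \covs(e)/3$ (together with the harmless ancestor path $P_e$, for which the claim is trivial because $\cov(e, P_e) = \cov(e)$), a union bound over all $O(n^2)$ ancestor-to-descendant paths and all $n-1$ tree edges will finish the proof.

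The key structural observation is that the sampling procedure, despite its distributed implementation, effectively delivers to $e$ a batch of approximately uniform independent samples from $\ECov(e)$. Indeed, in iteration $j$ with $\cov(e) \in [2^{j-1}+1, 2^j]$, each non-tree edge independently includes itself with probability $2^{-j}$, and the edge that reaches $e$ via Lemma \ref{lemma:routing_lemma_aggregate} is the one with minimum random $Id$ among those included. Because the $Id$s are chosen uniformly at random and independently of which edges included themselves, the surviving edge is, conditioned on its existence, uniformly distributed in $\ECov(e)$; repeating this $\beta = \log^2 n$ times with independent fresh randomness produces i.i.d. uniform samples. After keeping $\Theta(\log n)$ of them, we obtain $\covs(e) = \Theta(\log n)$ i.i.d.~uniform samples from $\ECov(e)$.

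Given this, fix $P$ with $\cov(e,P) < \cov(e)/6$. For each sample $X_i \in \ECovs(e)$, let $Y_i = \mathbbm{1}[X_i \in \ECov(e,P)]$; the $Y_i$'s are independent Bernoulli variables with $\Pr[Y_i=1] = \cov(e,P)/\cov(e) < 1/6$, so $\mu := \E[\sum_i Y_i] < \covs(e)/6$. The event $P \in \intpot{e}$ is precisely $\sum_i Y_i \geq \covs(e)/3 \geq \mu + \covs(e)/6$. A Chernoff/Hoeffding bound then yields
\[
\Pr\!\left[\,\textstyle\sum_i Y_i \ge \mu + \covs(e)/6 \,\right] \;\le\; \exp\!\left(-\Omega\!\left(\covs(e)\right)\right) \;\le\; \exp(-\Omega(\log n)) \;=\; n^{-\Omega(1)},
\]
where the hidden constant in $\Omega(\log n)$ can be made arbitrarily large by increasing $\beta$, the number of sampling rounds used to produce $\covs(e)$. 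Choosing it large enough gives failure probability at most $1/n^{c}$ for any desired constant $c$.

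The last step is the two-level union bound: first over the (at most $n^2$) choices of an ancestor-to-descendant path $P$ for a fixed tree edge $e$, and then over the $n-1$ tree edges $e \in T$. Taking $c$ sufficiently large absorbs both factors and gives that w.h.p.\ every $P \in \intpot{e}$ has $\cov(e,P) \ge \cov(e)/6$ simultaneously for all tree edges $e$. The main subtlety---and the only non-routine step---is justifying that the min-$Id$ rule really yields (near-)uniform samples in $\ECov(e)$; once that is in place, the concentration and union-bound arguments are standard. A technicality to address separately is the low-$\cov(e)$ regime where $\cov(e) = O(\log n)$: there the sampled multiset essentially covers $\ECov(e)$ with high probability, and the claim can be verified by direct counting rather than via Chernoff.
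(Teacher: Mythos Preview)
Your proposal is correct and takes essentially the same approach as the paper: both fix a path $P$ with $\cov(e,P)<\cov(e)/6$, observe that each sample lands in $\ECov(e,P)$ with probability $<1/6$, and apply a Chernoff bound to show $\covs(e,P)\geq \covs(e)/3$ is $1/\poly(n)$-unlikely. Your write-up is actually more complete than the paper's terse argument, as you explicitly spell out the union bound over all ancestor-to-descendant paths and all tree edges, the trivial case of $P_e$, and the low-$\cov(e)$ regime, all of which the paper leaves implicit or relegates to a footnote.
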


\begin{proof}
Suppose not. This means that there is a path $P$ such that $\covs(e, P) \geq \covs(e)/3$ (i.e., $P \in \intpot{e}$) but $\cov(e, P) = \eps \cdot \cov(e)$ where $\eps < 1/6$. As before, we see that $\expt{}{\covs(e,P)} = \eps \cdot \covs(e)$, and by an application of Chernoff bound, we have $\Pr[\covs(e, P) \leq 2\eps \cdot \covs(e) < \covs(e)/3] \geq 1 - o(1)$ which is a contradiction. This verifies Property \ref{item:samp-2} of Lemma \ref{lemma:find-int-samp}.
\end{proof}

\bsni
\begin{proof}[Proof of Lemma \ref{lemma:find-int-samp}]
 To prove Lemma \ref{lemma:find-int-samp}, we have to do similar argument as above for \textit{weighted graphs}. The argument is actually very similar to the unweighted case---we treat every weighted edge $e$ with weight $w(e)$ to be $w(e)$ many parallel unweighted edges. As before, the sampling procedure runs in $O(\log n)$ iterations, one of each class $C_j$. We next point out the difference in iteration $j$ from the case of unweighted graphs. In the rest of the proof, an unweighted edge represents an edge from the purported multi-edge description of a weighted edge. So it is instructive to view the graph as an unweighted graph where multi-edges are allowed.

\begin{description}
\item[Iteration $j$.] We first mention one of the main differences from the unweighted sampling procedure. In the case of unweighted sampling, we assumed that for an edge $e = \{u,v\}$ one of the vertices (say $u$ assuming $u \prec v$ in the total ordering of $V$) will sample the unique id $Id_e$ for $e$ and send it across to $v$. This can happen for all edges simultaneously with no congestion, and requires one additional round of communication. We cannot afford to do exactly that when the edges are weighted, because in this case $u$ will generate $w(e)$ many unique identifiers, one for each of its unweighted edges in its representation of unweighted multi-edge, for the edge $e$. Sending this across to $v$ using the edge $e$ can potentially incur considerable congestion which we cannot afford. We avoid congestion in the following way. The following sampling is done entirely by $u$. For each weighted edge $e = \{u,v\}$, $u$ samples each of the $w(e)$ many identifiers corresponding to $e$ independently with probability $1/2^j$. At this point, for each edge $e$, one of its end-points holds a set of sampled identifiers. Now, for every edge $e$, the end-point which holds this set of identifier will send across the minimum identifier to the other end-point using the edge $e$. This does not cause any congestion. From this point onward, we identify every weighted edge $e$ by the minimum identifier $Id'(e)$ that is sampled by the sampling procedure just mentioned (If there is no such minimum identifier for an edge $e$, we ignore that edge). The rest of the sampling procedure remains similar to that of the unweighted case: Each tree $e$ learns about $\info(e^\ast)$ of the edge $e^\ast$ with minimum $Id'(e^\ast)$ that covers $e'$ using Claim \ref{claim:route_info}. This process is repeated $\beta = \log^2 n$ times and thus each tree edge $e'$ samples $O(\log n)$ many distinct identifiers uniformly from the set of identifiers corresponding to the set of weighted edges $\ECov(e)$. If we imagine each weighted edge as a set of parallel unweighted edges, each with unique identifier, this set of sampled identifiers corresponds to the set of such unweighted edges which are sampled---we denote this set as $\ECovs(e)$. We also denote by $\covs(e,P)$ the number of such unweighted edges that covers both $e$ and $P$. Again, as before, each tree edge $e$ defines $\intpot{e} = \{ P \mid \covs(e,P) \geq \covs(e)/3\}$. It is not hard to see the following: (a) Each identifier in the set $\ECovs(e)$ is included in the set independently and with identical probability (i.e., with probability $1/2^j$) from the set of identifiers corresponding to the set of weighted edges $\ECov(e)$ and (ii) the number of distinct identifiers in the set $\ECovs(e)$ is at least $\log n$. Hence \ref{claim:sampling_new_correct_uw} and \ref{claim:samp_correct} hold for this case as well by similar calculation. 
\end{description}

We now analyze the round complexity of this sampling algorithm. We make a similar claim as Claim \ref{clm:round-comp-samp-uw}. The proof is also similar and hence skipped.

\begin{claim}\label{clm:round-comp-samp-w}
The sampling algorithm for weighted graph can be performed in $O(D + \sqrt n)$ rounds at the end of which each tree edge $e$ (and the vertices in $e$) knows the set $\set{\info(e')\mid e'\in \ECovs(e)}$.
\end{claim}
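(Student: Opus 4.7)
The plan is to mirror the argument of Claim \ref{clm:round-comp-samp-uw}, accounting for the two differences introduced by working on a weighted (multi-)graph: (i) the initial identifier-generation step is now local at one endpoint of each edge followed by a single cross-edge send, and (ii) the routing step uses $\info(e^\ast)$, whose size is still $O(\log n)$ bits by Theorem \ref{thm:-info_theorem_tree_edge}, so Claim \ref{claim:route_info} applies without modification.

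First I would break the communication cost into the two ingredients of a single iteration $j$. For the identifier step, each endpoint $u$ (the smaller one in the global ordering) samples, entirely locally and without communication, each of the $w(e)$ virtual identifiers for every incident weighted edge $e=\{u,v\}$ independently with probability $2^{-j}$, and keeps the minimum surviving identifier (if any). Then $u$ sends this single $O(\log n)$-bit minimum identifier to $v$ across $e$. Since each endpoint transmits on each incident edge a single $O(\log n)$-bit message, this causes no congestion and takes one round. Summed over $O(\log n)$ iterations, this part contributes $O(\log n)$ rounds.

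Next I would account for the routing step. Within iteration $j$, each active tree edge $e \in C_j$ wishes to learn $\info(e^\ast)$ for the covering non-tree edge $e^\ast$ of minimum $Id'(e^\ast)$; this is precisely the setting of Claim \ref{claim:route_info} with $t = |\info(e^\ast)| = O(\log n)$ bits of information, and hence costs $O(D + \max\{\dfrag,\nfrag\}) = O(D + \sqrt n)$ rounds by the parameters of Lemma \ref{lemma:strong_decomp_construct}. Repeating this $\beta = \log^2 n$ times (to accumulate $\log n$ distinct samples per active edge) and then summing over all $O(\log n)$ classes $C_j$ yields a total of $\tilde{O}(D + \sqrt n)$ rounds for this part.

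Combining the two parts gives the claimed $\tilde{O}(D + \sqrt n)$-round bound. Finally, I would note that at the conclusion of the sampling procedure every active tree edge $e$ (and hence the endpoints of $e$, which are the vertices doing the computation) holds the set $\{\info(e')\mid e'\in \ECovs(e)\}$, because each of the $\log^2 n$ invocations of Claim \ref{claim:route_info} delivers one such $\info(e')$ directly to an endpoint of $e$. The only mildly subtle point is to verify that the per-edge congestion introduced by pipelining the $\log^2 n$ instances of Claim \ref{claim:route_info} stays within $\tilde{O}(D+\sqrt n)$, but this follows immediately since each invocation is itself a commutative aggregate of $O(\log n)$ bits and Lemma \ref{lemma:routing_lemma_aggregate} composes over $\poly\log n$ repetitions with only a $\poly\log n$ overhead; no genuine obstacle arises beyond this bookkeeping.
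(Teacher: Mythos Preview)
Your proposal is correct and takes essentially the same approach as the paper: the paper simply states that the proof is similar to that of Claim~\ref{clm:round-comp-samp-uw} and skips it, and your write-up faithfully carries out exactly that adaptation, handling the one new ingredient (local sampling of identifiers at one endpoint followed by a single cross-edge transmission of the minimum) and otherwise invoking Claim~\ref{claim:route_info} as in the unweighted case.
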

\end{proof}

\section{Missing proofs from Section \ref{sec:lemma_and_samp}}\label{sec:appen_five}

Below we prove Lemma \ref{lemma:parse_paths} in Section \ref{appn:learn_int_path}, and Theorem \ref{thrm:pairing_up_highways} in Section \ref{appn:hw-pairing}.

\subsection{Learning interesting paths} \label{appn:learn_int_path}

\PathParsing*

For proving the above lemma, we first require some useful claims.
\begin{claim}\label{claim:parsing_stuff}
For each tree edge $e$, given $\info(e')$ for each $e'\in \ECovs(e)$, $e$ can construct internally the set $\intpot{e}$.
\end{claim}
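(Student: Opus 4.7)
\textbf{Proof plan for Claim \ref{claim:parsing_stuff}.} The plan is to argue that $\{\info(e') : e' \in \ECovs(e)\}$ already encodes all the combinatorial data that $e$ needs in order to decide, for every candidate path $P$, whether $\covs(e,P) \geq \covs(e)/3$, and hence to assemble $\intpot{e}$ purely by local computation inside $e$. The reduction rests on the observation that $\info(e')$ for a non-tree edge $e'=\{u,v\}$ carries the identities of the two endpoint tree edges $\{p(u),u\}$ and $\{p(v),v\}$ together with their fragment ids and highway / non-highway labels; and that vertex ids are LCA labels, so by Claim~\ref{claim_LCA_labels} $e$ can decide, for any tree edge whose label it holds, whether that edge is covered by $e'$. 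Together with the globally known skeleton tree (Lemma~\ref{lemma:strong_decomp_construct}) this lets $e$ decide for any ancestor-to-descendant path it can describe whether $e'$ covers it, without any additional communication.

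Given this, I would proceed in three internal steps. First, \emph{enumerate candidates}: by Definition~\ref{def:edge_knows_path} each path is identified by its highway / non-highway type and the id of its lowest fragment. Any $P\neq P_e$ with $\covs(e,P)\geq \covs(e)/3 \geq 1$ is covered by at least two sampled edges, so (as made precise below) its lowest vertex coincides with $\lca(u_1,u_2)$ for endpoints $u_1,u_2$ of a pair of edges from $\ECovs(e)$. Since $e$ can compute each such $\lca$ from the labels in $\info(\cdot)$, and since the skeleton tree tells it which fragment any vertex belongs to, the candidate set has size $O(|\ECovs(e)|^2) = O(\log^2 n)$ and can be produced entirely offline. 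Second, \emph{score each candidate}: for a candidate $P$ with lowest edge $e_P^{\mathrm{low}}$ and highest edge $e_P^{\mathrm{high}}$ (both readable from the skeleton tree together with the fragment id identifying $P$), $e$ tests each $e'\in \ECovs(e)$ for whether its tree path covers both $e_P^{\mathrm{low}}$ and $e_P^{\mathrm{high}}$; this is a constant number of LCA checks per $(e',P)$ pair. Summing the indicator over $e'$ gives $\covs(e,P)$ exactly. Third, \emph{assemble} $\intpot{e}$ by including every candidate whose count is $\geq \covs(e)/3$, and adjoining $P_e$ (whose lowest fragment is the fragment of $e$, already part of $\info(e)$).

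The main obstacle is justifying the completeness of the candidate enumeration in the first step, i.e.\ verifying the LCA observation depicted in Figure~\ref{fig:LCA_figure}: for every $P$ with $\covs(e,P)\geq \covs(e)/3$, the lowest vertex $v$ of $P$ is realized as $\lca(u_1,u_2)$ for some pair of endpoints $u_1,u_2$ of edges in $\ECovs(e,P)$. The argument I would give is that, by Claim~\ref{claim_subtree}, every edge in $\ECovs(e,P)$ has one endpoint in the subtree $T_v$ rooted at $v$; if all these endpoints happened to lie inside a single strict subtree of $v$ (i.e.\ inside $T_{c}$ for one child $c$ of $v$), then the fragment containing the edge $\{v,c\}$ would be a strictly lower witness of $P$, contradicting the choice of the lowest fragment as the identifier of $P$. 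Hence at least two such endpoints sit in distinct child subtrees and their $\lca$ is $v$ itself, so $v$ does appear in the candidate list. Once this is established, correctness of the internal procedure follows immediately, since $e$ computes $\covs(e,P)$ for every $P$ whose identifier might satisfy the threshold, and the threshold test is purely arithmetic.
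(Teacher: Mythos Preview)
Your plan coincides with the paper's: enumerate candidate bottom vertices as $\lca{u_1,u_2}$ over pairs of endpoints of edges in $\ECovs(e)$, score each candidate by counting how many sampled edges cover the corresponding path, threshold at $\covs(e)/3$, and adjoin $P_e$. The paper isolates the completeness of the enumeration as a stand-alone ``LCA claim'' (Claim~\ref{claim_lca_computation_of_pairs_appen}) whose explicit hypothesis is that $P$ is \emph{maximal} in $\intpot{e}$; your contradiction via ``a strictly lower witness \dots\ contradicting the choice of the lowest fragment as the identifier of $P$'' is this maximality argument in disguise, and you should state the hypothesis that way. The identifier of a fixed path $P$ does not change just because the sampled endpoints lie further down; what the argument actually shows is that if all those endpoints sat in a single $T_c$ then $\covs(e,P\cup\{v,c\})=\covs(e,P)$, so $P$ was not maximal.

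One technical slip: $e_P^{\mathrm{high}}$ is \emph{not} readable from the skeleton tree plus the fragment id of $P$; the top of the candidate path depends on $e$. The paper instead computes $z=\lca{u,v}$ (with $u$ the lower endpoint of $e$) to pin down the path $P_v^z$, and the scoring then just tests, for each sampled edge, whether one endpoint lies in $T_v$ and whether the other lies inside or outside $\dw{e}$ --- so the top edge is never actually needed. Finally, the paper explicitly discards non-maximal candidates at the end, which your description omits.
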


Before diving deeper into the proof, we prove the following useful claim. Intuitively, in graph theoretic terms, we prove here that given any set $A$ of vertices in a tree, the set of LCA's of any subset $S\subseteq A$ of these vertices is contained in the set of LCA's of \emph{pairs} of vertices from $A$ (See Figure \ref{fig:LCA_figure_appen} for an illustration). The notation in the formal statement of the claim adds information that is useful for the way we employ the claim. 

\begin{figure}[h!]
    \centering
    \includegraphics[width=\textwidth, height=9cm]{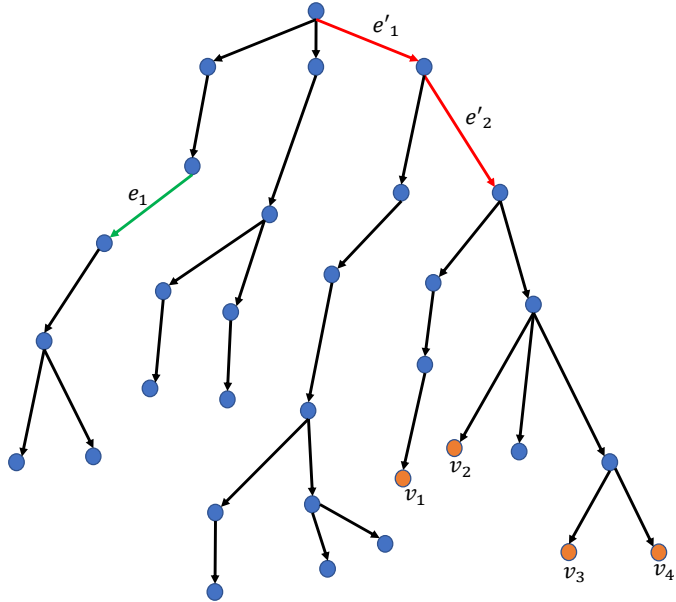}
    \caption{\small  In this figure we have the green edge $e_1$, and the path $P$ which consists of $e'_1,e'_2$, and satisfies $P\in \intpot{e}$. $v_1,v_2,v_3,v_4$ denote the vertices in $T(P)$ that have incident edges in $\ECovs(e_1)$. Note that there exists a pair of vertices (e.g. $v_1,v_3$) such that $\lca{v_1,v_3}$ is exactly the lowest vertex of $P$.}
    \label{fig:LCA_figure_appen}
\end{figure}

\begin{claim}[LCA claim]\label{claim_lca_computation_of_pairs_appen}
Let $e$ be some tree edge, and let $P$ be some root to descendant path such that $e$ is potentially interested in $P$, i.e. $P\in \intpot{e}$. Denote by $v$ the lowest vertex of $P$, and by $e_v$ the lowest edge of $P$ . Furthermore, $P$ is maximal in the sense that $e$ is not potentially interested in any ancestor to descendant path that strictly contains $P$.  For each $e'\in \ECovs(e)$, denote by $e'_1,e'_2$ its respective tree edges. Denote by $D =     \set{e'_i \mid e'\in \ECovs(e,P),i=1,2}$. 

Then, there exists $e^1,e^2\in D$ such that $\lca{e^1,e^2}=v$.
\end{claim}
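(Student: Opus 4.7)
The plan is to argue by case analysis on where the ``lower'' endpoints of the sampled covering edges land in $T(v)$, and to use the maximality hypothesis on $P$ exactly once, to rule out the only bad configuration.

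\textbf{Setup.} First I would observe that for every sampled non-tree edge $e' = \{x,y\}$ in $\ECovs(e,P)$, the unique $x$--$y$ tree path contains all of $P$; in particular it crosses the edge $e_v = \{p(v),v\}$. Hence exactly one endpoint of $e'$ lies in $T(v)$ (the subtree rooted at the lowest vertex $v$ of $P$), while the other lies outside $T(v)$ (above $r_P$, or in a disjoint subtree when $P$ is orthogonal to the covering path). The tree edge associated with the endpoint inside $T(v)$ has the form $\{p(u),u\}$ with $u \in T(v)$; call the collection of all such edges $D_v \subseteq D$. Every element of $D_v$ has its ``lower vertex'' $u$ inside $T(v)$, so $\lca{u_1,u_2} \in T(v)$ for any two edges in $D_v$; it suffices to exhibit two members of $D_v$ whose lower vertices $u_1, u_2$ satisfy $\lca{u_1,u_2} = v$.

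\textbf{Case analysis.} I would split into three easy cases. (i) If some $e^* \in D_v$ has lower vertex $u = v$, then taking $e^1 = e^2 = e^*$ gives $\lca{e^1,e^2} = \lca{v,v} = v$ by the definition of the LCA of two edges. (ii) If two edges in $D_v$ have lower vertices $u_1 \in T(c_1)$ and $u_2 \in T(c_2)$ for two distinct children $c_1 \neq c_2$ of $v$, then $\lca{u_1,u_2} = v$ and we are done. (iii) The remaining (bad) case is that every lower vertex $u$ appearing in $D_v$ is strictly inside a single subtree $T(c)$ for one fixed child $c$ of $v$; this is the case I need to rule out.

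\textbf{Using maximality to eliminate case (iii).} The hard part (and the only place where maximality enters) will be excluding case (iii). Assume all lower endpoints lie in $T(c)$ for a single child $c$ of $v$. Consider the strictly longer ancestor-to-descendant path $P' = P \cup \{v,c\}$ obtained by appending the edge $\{v,c\}$ at the bottom of $P$. For every $e' \in \ECovs(e,P)$, the endpoint of $e'$ in $T(v)$ actually lies in $T(c)$, so the unique tree path traversing $e'$ passes through the edge $\{v,c\}$ in addition to all edges of $P$; hence $e'$ still covers $P'$, and it still covers $e$ (since $P' \supseteq P$ as a set of edges the path covers). This shows $\ECovs(e,P) \subseteq \ECovs(e,P')$ and therefore
\[
\covs(e,P') \;\ge\; \covs(e,P) \;\ge\; \covs(e)/3,
\]
so $P' \in \intpot{e}$. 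But $P' \supsetneq P$ is an ancestor-to-descendant path strictly containing $P$ in which $e$ is potentially interested, contradicting the maximality of $P$. This eliminates case (iii) and completes the proof.

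\textbf{What I expect to be the main obstacle.} The case analysis itself is elementary; the subtle point is being careful with the definition of ``covers'' to justify the inclusion $\ECovs(e,P) \subseteq \ECovs(e,P')$ in case (iii) across all three admissible configurations of $P$ relative to $e$ (orthogonal to $e$, strictly above $e$, or inside $e^{\downarrow}$). In each configuration one needs to verify that the ``other'' endpoint of a covering edge $e'$ is not affected by appending $\{v,c\}$ at the bottom of $P$, so that the additional edge $\{v,c\}$ really is traversed by the $x$--$y$ tree path whenever the lower endpoint is in $T(c)$. Once this small verification is carried out uniformly for the three configurations, the contradiction with maximality is immediate.
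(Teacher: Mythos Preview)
Your proof is correct and reaches the same conclusion as the paper via a somewhat different route. The paper argues more abstractly: it fixes any $e'\in D$, lets $v_*$ be the highest vertex among $\{\lca{e',e^*}\mid e^*\in D\}$, proves by a short contradiction that $v_*$ dominates every pairwise LCA (so all of $D$'s lower vertices live in $T(v_*)$), and then invokes maximality of $P$ to force $v_*=v$. You instead do a direct case split on how the lower endpoints of $D_v$ distribute over the children of $v$, and invoke maximality only to kill the single-child case.

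Both arguments hinge on exactly the same mechanism---maximality of $P$ forbids all sampled endpoints in $T(v)$ from sitting inside a strict subtree $T(w)$ with $w$ properly below $v$---so neither is conceptually deeper. Your version is a bit more elementary: it avoids the intermediate lemma that the highest LCA with respect to one fixed $e'$ is already the global highest LCA of pairs in $D$, and it exhibits the witnessing pair $e^1,e^2$ explicitly (either $e^1=e^2$ with lower vertex $v$, or two edges straddling distinct children). The paper's version is slightly more uniform: it never needs to name a specific child $c$, and the extension it constructs is directly the path to $v_*$ rather than a one-edge extension. In practice both are equally easy to make rigorous; the configuration check you flag as the ``main obstacle'' is the same small verification the paper glosses over.
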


\begin{proof}
Since $e$ is potentially interested in $P$, there is some edge $e'\in D$ such that $e'$ has an endpoint in $ \dw{e_v}$, this is since $\covs(e,P)\geq \frac{\covs(e)}{3}$, and every edge that covers $P$ has an endpoint in $\dw{e_v}$. Now that we know that $D$ is not empty, let $e'\in D$ be some edge, and denote by $v_*$ the highest vertex in the set $\set{\lca{e',e^*}\mid e^*\in D}$. Note that such a vertex exists since $D$ is not empty and each vertex in the aforementioned set is an ancestor of $e'$, so there is a well defined order on these vertices, in terms of closeness to the root.

Our goal will be to prove that $v_*=v$, and this will conclude the proof. For that, we first show that $\set{\lca{e_1,e_2}\mid e_1,e_2\in D}\subseteq \dw{v_*}$, i.e., $v^*$ is the highest vertex among all local common ancestors of all pairs of edges in $D$. Assume towards a contradiction that this does not hold and denote by $e_1,e_2\in D$ edges that satisfy $\lca{e_1,e_2}\not\in \dw{v_*}$. Now, since both $\lca{e',e_1},\lca{e',e_2}$ are ancestors of $e'$, one of these vertices is higher than the other, assume w.l.o.g. it is $\lca{e',e_1}$, it must hold also that $\lca{e',e_1}\not\in \dw{v_*}$, since otherwise we get that both $e_1,e_2\in \dw{v_*}$, then so is their LCA, which is a contradiction. But since $\lca{e',e_1}$ is an ancestor of $e'$, it is either a descendant of $v_*$, or an ancestor of it. the former contradicts  $\lca{e',e_1}\not\in \dw{v_*}$, and the latter contradicts the fact that $v^*$ the highest vertex in the set $\set{\lca{e',e^*}\mid e^*\in D}$. Either way, we arrive at a contradiction. Thus $\set{\lca{e_1,e_2}\mid e_1,e_2\in D}\subseteq \dw{v_*}$, as desired.

Now in order to show that $v_*=v$, we note that basically all edges in $D$ are by definition exactly all the edges adjacent to edges in $\ECovs(e,P)$. Thus we proved that $\covs(e,P_{v_*})\geq \frac{\covs(e)}{3}$, where $P_{v_*}$ is the path from $\lca{e,v_*}$  to $v_*$. Thus $e$ is potentially interested in $P_{v_*}$, and since $P$ is maximal with respect to that, we deduce that $v_*=v$ as required.
\end{proof}

\begin{proof}[Proof of Claim \ref{claim:parsing_stuff}]

Let $e'=\{u,v\}\in \ECovs(e)$ where $v \in \dw{e}$. Denote by $e_1,e_2$ the edges $e_1=\{p(u),u\},e_2=\{p(v),v\}$. $\info(e')$ contains $\info(e_1),\info(e_2)$, i.e.

\begin{itemize}

\item Whether $e_1,e_2$ are highway/non-highway edges. 

\item The id of the fragments of $e_1,e_2$. From this, using Observation \ref{obs:knowing_allfragments_from_one}, $e$ can deduce the set of fragments $F$ that satisfy $F\cap P_{e_1}\neq \emptyset$ or $F\cap P_{e_2}\neq \emptyset$.

\end{itemize}

Now, with the LCA claim proven, in order to figure out the root to descendant paths that are in $\intpot{e}$, it suffices for $e$ does the following internally. Denote by $S$ the set $\set{e'_i\mid e'\in \ECovs(e),i=1,2}$. For each pair $e_1,e_2\in S$, $e$ computes $v=\lca{e_1,e_2}$, and then computes $z=\lca{u,v}$ ($u$ represents the lower vertex of $e$) and then computes $\covs(e,P_v^z)$ in the following way ($P_v^z$ is the ancestor to descendant path from $z$ to $v$).  $e$ computed $z$, and based on that divides into two cases:
\begin{enumerate}
    \item If $z=u$, then for each edge $e'=\set{v',u'}\in \ECovs(e)$, $e$ computes $\lca{v,v'},\lca{v,u'}$, if neither of these equal $v$, $e$ moves on to the next edge. Otherwise, assume w.l.o.g $\lca{v,v'}=v$. $e$ checks whether $\lca{u,u'}\neq u$, and if so concludes that $e'\in \ECovs(e,P^z_v)$ and adds $w(e')$ to $\covs(e,P_v)$. Otherwise, $e$ moves on to the next edge.
    \item If $z\neq u$, then for each edge $e'=\set{v',u'}\in \ECovs(e)$, $e$ computes $\lca{v,v'},\lca{v,u'}$, if neither of these equal $v$, $e$ moves on to the next edge. Otherwise, assume w.l.o.g $\lca{v,v'}=v$. $e$ checks whether $\lca{u,u'}=u$, and if so concludes that $e'\in \ECovs(e,P_v)$ and adds $w(e')$ to $\covs(e,P^z_v)$. Otherwise, $e$ moves on to the next edge.
\end{enumerate}

$e$ then keeps all paths that satisfy $\covs(e,P)\geq \frac{\covs(e)}{3}$. Note that for all such paths $e$ knows the path $P$ (See definition \ref{def:edge_knows_path}). This is true since for each $e'\in \ECovs(e,P)$, $\info(e)$ contains all the relevant information about $P$ (See \ref{thm:-info_theorem_tree_edge} and discussion afterwards about non-tree edges), and $e$ knows the ending vertex $v$ of $P$, and the starting vertex $z$ of $P$.
Note that $e$ can also know whether $v$  is the bottom end of a non-highway edge using LCA computations, i.e. if $P$ contains non-highway edges. Furthermore note that $e$ can know the fragments of $v,z$ by the fact that $e$ knows both the skeleton tree (See Lemma \ref{lemma:strong_decomp_construct}) and the fragments of all nodes with incident edges in $\ECovs(e)$ (See definition \ref{sec:-info_tree_dge}). Thus $e$ knows all the necessary information about all paths in $\intpot{e}$.  The resulting set of paths is $\intpot{e}$, up to two modifications. $e$ always includes the path from the root to $e$, $P_e$ in $\intpot{e}$, $e$ knows this path by the information gathered by $e$ in Theorem \ref{thm:-info_theorem_tree_edge}.


Lastly, $e$ discards from $\intpot{e}$ any path $P$ that is not maximal, i.e. such that there exists a path $P'\in \intpot{e}$ that satisfies $P\subsetneq P'$. The resulting set is precisely $\intpot{e}$ as constructed in Lemma \ref{lemma:find-int-samp}.  
\end{proof}

\begin{proof}[Proof of Lemma \ref{lemma:parse_paths}]

Now that we showed that an edge $e$ can internally construct its own set potentially interested paths $\intpot{e}$, the lemma follows by using standard routing techniques.
The length of any fragment highway and any non-highway path is $\Tilde{O}(\dfrag)$. For each edge $e$ it holds that $|\intpot{e}|=O(\log n)$ , and the representation of each path is $O(\log n)$ bits, thus each such path $P$ can upcast $\intpot{e}$ for each of its edges up to the highest vertex, and then broadcast it down the path in $\tO(\dfrag)$ rounds. This can be done in parallel since all non-highway boughs and all fragment highways are edge disjoint. Now, the last part of the lemma follows since any non-highway ancestor to descendant path is of length at most $\tO(\dfrag)$, thus each non-highway edge $e$ can broadcast down its tree the set $\intpot{e}$ along with its layer in the layer decomposition to allow edges below to distinguish between different paths, and using pipe-lining the desired statement in achieved in $\tO(\dfrag)$ rounds. Furthermore, given a highway path $P$ of a fragment $F$, The vertices of $P$ can in $O(\dfrag)$ send $\intpot{P}$ to the root of $F$. Then, again in $O(\dfrag)$ rounds, the root of $F$ can send $\intpot{P}$ to all the vertices of $F$, as required.

Note that since each path in $\intpot{e}$ for every $e$ is represented by some fragment $F$ (the lowest fragment of the path), each edge now holds a set of fragments that exactly represent the set of paths that $e$'s  respective path is potentially interested in. Each edge $e$ in a path $P$ also knows for each of these fragments whether $P$ is potentially interested in a non-highway path in that fragment or not (see Definition \ref{def:edge_knows_path}). Thus, each edge $e$ can discard from the resulting set of fragments those fragments that are both not the lowest and $P$ is not potentially interested in a non-highway path in $F$ (i.e. fragments $F$ in the set $\intpot{P}$ such that there is a descendant fragment $F'$ of $F$ in $\intpot{P}$ and $P$ is not potentially interested in a non-highway path in $F$).
Note that this can be done internally in each tree edge $e$, since each tree edge knows the skeleton tree (See Lemma \ref{lemma:strong_decomp_construct}).

The resulting set accurately represents $\intpot{P}$ for a given bough non-highway or fragment highway $P$.

\end{proof}

\subsection{Proof of highway pairing theorem.} \label{appn:hw-pairing}

\HighwayPairingTheorem*

\begin{proof}

Define the following initial pairing
\begin{gather*}
    R^*=\set{(P_1,P_2)\in \cP \times \cP \mid\\ P_1\ \text{is potentially interested in } P_2\ \text{and vice versa}}
\end{gather*}

We first prove that $R^*$ satisfies all requirements of the theorem, except for maybe the last one, and then we modify $R^*$ to meet all requirements. 
Clearly $R^*$ satisfies condition 1 of the theorem. Now, for the other requirements.

\paragraph{Requirement 2.} Consider some fragment $F$ in some bough highway path $P$, and the set $R_F$. Now, By Observation \ref{obs:orthogonal_maximal_fragment_highways} and  by Corollary \ref{corol:-_fragment_highway}, we know that $P_F$ is potentially interested in at most $B_{path}$  highway paths of any given layer.
Thus, since the number of layers is $O(\log n)$, $P_F$ is potentially interested in at most $B_{path}\cdot \log n$ highway paths of a given layer. Thus there can be at most $O(B_{path}\cdot \log n)$ pairs $(P,P')\in R^*$ s.t. $P_F$ is active with respect to that pair. This concludes requirement 2 of the theorem.

\paragraph{Requirement 3.} For the third requirement, assume $e_1,e_2$ are such edges and $P_1,P_2$ their respective bough highway paths (Each path is represented by its lowest and highest fragment). Now, by Lemma \ref{lemma:find-int-samp}, and Lemma \ref{lemma:parse_paths} and discussion right afterwards, we know that w.h.p $P'_1\in \intpot{e_2}$ and $P'_2\in \intpot{e_1}$ for some paths $P'_1\subseteq P_1, P'_2\subseteq P_2$. Thus by definition of $R^*$, we can deduce that $(P,P')\in R^*$. 

\paragraph{Requirement 4.} Now, for the fourth requirement. Consider a fragment $F$, each bough highway path that $P_F$  is potentially interested in is known to some vertex in $P_F$. This is true by definition \ref{def:-interest_path_cover} interesting paths and Claim \ref{lemma:parse_paths}. 
Furthermore, since there are $N_{frag}=O(\sqrt{n})$ fragments, and each highway path contained in a single fragment is active in at most $O(B_{path}\cdot \log n)$ pairs $(P,P')\in R$, one can deduce that $|R^*| = O(\sqrt{n}\cdot B_{path}\cdot \log n)$. Thus, using a BFS tree of the network, the network can elect some leader vertex $v_0$, and upcast 
towards it the elements of $R$, where each vertex $v$ 
sends the pairs of which it is participating in and its fragment is active along with the Id of the fragment $F$ that $v$ is contained in (and thus $F$ is active in all these pairs). Again, this can be done since each bough highway path that $P_F$ is potentially is known to some vertex in $P_F$. Thus all pairs in $R^*$ in which $P_F$ is active are known to vertices in $P_F$. Then, $v_0$ broadcasts these pairs along with the active fragments for each pair back to all other nodes. Using pipe-lining, both of these procedures can be completed in $\Tilde{O}(D+\sqrt{n})$ rounds, as required. This is true since the total amount of Ids of active fragments that $v_0$ needs to send is $O(\nfrag \cdot B_{path}\cdot \log n)$.

\paragraph{Requirement 5.}
Now, each vertex $v$ does the following internally. $v$ goes over all pairs $(P_1,P_2)\in R^*$, if $P_1,P_2$ don't split one another, $v$ does nothing and continues to the next pair. Otherwise, $v$ removes the pair $(P_1,P_2)$ from $R^*$ and does the following. Assume w.l.o.g that $P_1$ splits $P_2$, $v$ knows both paths $P_1,P_2$, and therefore $v$ knows the fragment in  which $P_1,P_2$ intersect, denote it by $F$ (It is a part of $P_2$, but not $P_1$). $v$ splits the path $P_2$ into 2 paths, $P'_2,P''_2$. Here, $P'_2$ is the sub-path of $P_2$ which includes all edges of all fragments higher than $F$, and $P''_2$ includes all the rest of the edges. Note that Both $P'_2,P''_2$ are super highways. Now $v$ considers the pairs $(P_1,P'_2),(P_1,P''_2)$, note that both of these pairs do not split each other. Now, since $v$ knows the active fragments of $P_2$ with respect to the pair $(P_1,P_2)$, $v$ knows whether $P'_2,P''_2$ are potentially interested in $P_1$, and if so, knows the active fragments in the pairs $(P_1,P'_2),(P_1,P''_2)$. If one of $P'_2,P''_2$ is not potentially interested in $P_1$, $v$ discards the appropriate pair. Then, $v$ adds to $R^*$ the pair $(P_1,P'_2)$ if it wasn't discarded, and the pair $(P_1,P''_2)$ if it wasn't discarded.

Denote the resulting set by $R$.

The first criteria of the theorem holds.
for the second criteria, since the number of pairs each fragment is active in in the transition from $R^*$ to $R$ doubled at most, this criteria holds as well.

Since each vertex knows $R^*$, and each vertex constructs $R$ without any further communication, the third criteria also holds.

The fifth criteria holds since  by the modification we made to $R^*$, every pair of paths in $R$ don't split one another. This concludes the proof.
\end{proof}
\section{Missing proofs from Section \ref{sec:alg_short_path}}\label{sec:app_short_paths}
\subsection{Basic subroutines on a tree}\label{sec:app_short_paths_prelim}

\pipeline*

\begin{proof}
As explained above, each broadcast or aggregate computation requires $O(D_{T'})$ time, and the computation requires only communication on edges of the tree. To complete the proof, we explain how to pipeline the computations efficiently. By the description, each broadcast requires sending only one message from each vertex to each one of its children, when we go over the tree from the root to the leaves, hence clearly we can pipeline such computations. Similarly, an aggregate computation requires sending one message from each vertex to its parent, when we go over the tree from the leaves to the root, which can also be pipelined easily. Since the two types of computations send messages in opposite directions, they do not interfere with one another, and we can run broadcast and aggregate computations in the same time, which results in a complexity of $O(D_{T'} + c_1 + c_2)$ time.
\end{proof}

\pipelineFragment*

\begin{proof}
Pipelining of broadcast and aggregate computations is already discussed in the proof of Claim \ref{claim_pipeline}. We next explain how we pipeline also aggregate computations where $d_P$ is the root. To pipeline aggregate computations in two different directions (where $r_P$ and $d_P$ are the roots in the two computations), we work as follows. Note that orienting edges towards $d_P$ only changes the orientation of highway edges (see Figure \ref{orientation_pic} for an illustration), hence in all subtrees attached to the highway, the communication pattern in the two computations is identical, and we can just pipeline them as before. On highway edges, we send messages in opposite directions in both computations, hence they do not interfere and we can run them on the same time. To pipeline a broadcast (from $r_P$) and an aggregate computation with root $d_P$, we work as follows. On the subtrees attached to the highway, we send messages on opposite directions in these computations, hence they do not interfere. On highway edges, we send messages on the same direction, from $r_P$ towards $d_P$, since both computations send one message per edge in the same direction, we can just pipeline them, as we would pipeline two broadcast computations. This completes the proof.
\end{proof}

\subsection{Simple cases where $P'$ is a non-highway}\label{sec:app_short_paths_nh}

\CompareFragments*

\begin{proof}
As all vertices in $F$ know the values $\{e,\cov(e)\}_{e \in F}$, and the edge $f$ has an endpoint in $F$ and an endpoint in $T(P')$, it knows this information, and can pass it to all vertices in $T(P')$ using $O(\sfrag)$ aggregate and broadcast computations in $T(P').$ Note that since the subtrees $T(P'^{\downarrow})$ are disjoint for non-highways not in the same root to leaf path, the edge $f=\{u,v\}$ where $u \in P', v \in F$ cannot be used by other paths in the fragment of $P'$, which allows working in parallel as needed. As $F$ and $P'$ are in different fragments, and $P'$ is a non-highway, we have that all edges $e \in F$ are not in $T(P')$, hence we can use Claim \ref{claim_non_highway} to let all edges $e' \in P'$ learn the values $\{e,\cut(e',e)\}$, this takes $O(\sfrag)$ time using pipelining, and can be done in parallel for different non-highways $P'$ not in the same root to leaf path. 
\end{proof}

\NhFragment*

\begin{proof}
First, in $O(\sfrag)$ time, all edges $e'$ in $F_{P'}$ can learn the values $\{e,\cov(e)\}$ for all edges $e \in F_{P'}$ by aggregate and broadcast computations in the fragment. This can be done for all fragments simultaneously.

We can now use Claim \ref{claim_non_highway} to compare the edges of $P'$ to all edges outside $T(P')$ in the fragment $F_{P'}$. This takes $O(\sfrag)$ time using pipelining, and can be done in parallel for different paths not in the same root to leaf path. After this, each edge $e' \in P'$, knows the values $\cut(e',e)$ for all $e \in F_{P'} \setminus T(P').$

To complete the proof, we show how to compare pairs of edges in $T(P')$. We show that any edge $e' \in T(P')$ can compute the values $\cut(e',e)$ for edges $e \in T(P')$ that are above or orthogonal to $e'$. To compute it, we fix an edge $e=\{v,p(v)\} \in T(P')$ and run an aggregate computation inside $T(P')$ as described in the proof of Claim \ref{claim_non_highway}, with the difference that we stop it when we reach any vertex that is equal to $v$ or an ancestor of $v$. This computation only reaches edges that are below or orthogonal to $e$, and hence $e$ is not in the subtree below them which is enough for the correctness of the computation. By the end, all edges $e'$ that are below or orthogonal to $e$ in $T(P')$, know the value $\cut(e',e)$, as needed. Using pipelining we can run such computations for all edges $e \in T(P')$ in $O(\sfrag)$ time.
As the computations are inside $T(P')$, it can be done in parallel for other paths not in the same root to leaf path.
\end{proof}

\subsection{$P'$ is a non-highway and $P$ is a highway}\label{sec:app_short_paths_nhh}

\coverNhH*

\begin{proof}
First note that by Claim \ref{claim_subtree}, any edge $x$ that covers $e'$ has one endpoint in the subtree $T(P'^\downarrow)$ which is inside the fragment $F_{P'}$ of $P'$. 
Now the edges that cover $e'$ and $e$ are all the edges with one endpoint in $T(P')$ that cover $e'$ and $e$. If the second endpoint of the edge is in the fragment $F_P$, these edges are counted in $\cov_{F}(e',e)$. To complete the proof, we need to show that any edge $x$ that covers $e'$ and $e$ and has both endpoints outside $F_P$, must cover the whole highway $P$. If $x = \{u,v\}$ covers both $e'$ and $e$ with both endpoints outside $F_P$ and $u \in T(P')$, the unique $u-v$ path in the tree starts in the fragment $F_{P'}$, and then enters and leaves the fragment $F_{P}$. From the structure of the decomposition, the only two vertices in $F_P$ that are connected by an edge to vertices outside the fragment are the ancestor $r_P$ and descendant $d_P$ of the fragment, hence any path that enters and leaves $F_P$ must include $r_P$ and $d_P$ and the whole path between them, which is the highway $P$. Hence, $x$ must cover the whole highway $P$ as needed. 
\end{proof}

\paragraph{Computing $\cov_{F}(e',e)$.}

\CovFnh*

\begin{proof}
The proof is similar to the proof of Claim \ref{claim_non_highway}. When we fix an edge $e \in P$, computing $\cov_{F}(e',e)$ for all edges $e' \in P'$ is an aggregate computation in $T(P').$ This follows as the edges counted in $\cov_{F}(e',e)$ for $e' = \{v',p(v')\}$ are all the edges with one endpoint in the subtree of $v'$ (which is contained in $T(P')$) and one endpoint in $F_P$ that cover $e'$ and $e$. To compute the cost of those we just run an aggregate computation in $T(P')$. Note that for each edge $x$ with an endpoint $v$ in $T(P')$, $v$ knows if the second endpoint is in $F_P$, and also can deduce from the LCA labels using Claim \ref{claim_LCA_labels} whether $x$ covers $e$ and $e'$, which allows computing the aggregate function. At the end of the computation, each vertex $v' \in T(P')$ knows exactly the cost of edges in its subtree that cover $e'$ and $e$ with the second endpoint in $F_P$, which is exactly $\cov_{F}(e',e)$ for $e' = \{v',p(v')\}.$ 
\end{proof}

\CovFh*

\begin{proof}
Here we break into cases according to the connection between $P'$ and $P$.
One case is that they are orthogonal, and one case is that $P$ is a highway above $P'$.
Note that from the structure of the decomposition, there are no other cases, as any path in the tree between a descendant and an ancestor starts with a non-highway and highways above it, so we cannot have $P'$ above $P$. Note also that all vertices know the complete structure of the skeleton tree, and can deduce accordingly in which one of the two cases we are.

\emph{Case 1: $P'$ and $P$ are orthogonal.} Here we know that $T(P')$ and $T(P)$ are disjoint, hence from Claim \ref{claim_subtree} it follows that all edges that cover $e' \in P$ and $e \in P$ have one endpoint in $T(P')$ and one endpoint in $T(P).$ Note that $T(P)$ is not necessarily contained in $F_P$, but since we are interested only in computing $\cov_{F}(e',e)$, we are only interested in edges with one endpoint in $T(P')$ and one endpoint in $T(P) \cap F_P$ that cover $e'$ and $e$. Computing the total cost of these edges requires one aggregate computation in $T(P) \cap F_P$, and follows exactly the computation described in the proof of Claim \ref{claim_covF_nh}.

\emph{Case 2: $P$ is above $P'$.} Here we need to do an aggregate computation in the reverse direction, to explain this we first take a closer look on edges that cover $e'$ and $e$ in this case. Since $e'$ is below $e$ in the tree, any tree path that contains both of them must include the whole tree path between them, which in particular includes a part between the descendent $d_P$ of the fragment $F_P$ to $e \in P.$ Hence, we need to sum the cost of edges that cover $e'$ and $e$ and also the whole path between $e$ and $d_P$. To compute it, it would be helpful to reverse the orientation in the fragment $F_P$, such that now $d_P$ is the \emph{root} of the fragment. Now if $e = \{v,d(v)\}$ where $d(v)$ is the vertex closer to $d_P$, all edges that cover $e'$ and the path between $e$ and $d_P$ and have an endpoint in $F_P$, must have an endpoint in the subtree of $v$ according to the new orientation (the subtree that includes everything below $v$ when we think about $d_P$ as the root). Hence, to compute the total cost of edges that cover $e'$ and $e$ and have one endpoint in $F_P$, we just do one aggregate computation in $F_P$ in the reverse direction to sum the costs of these edges. At the end of the computation, the vertex $v$ such that $e = \{v,d(v)\}$ knows exactly $\cov_F(e',e).$ Similarly, for each edge $e \in P$, one of its endpoints knows $\cov_F(e',e)$ as needed.
\end{proof}

\paragraph{Computing $\extcov(e',P)$.}

\ClaimExtnh*

\begin{proof}
Let $e' = \{v',p(v')\} \in P'$, and fix a highway $P$. The edges that cover $e'$ have one endpoint in $T(P') \subseteq F_{P'}$ by Claim \ref{claim_subtree}. Hence, to compute the total cost of edges that cover $e'$ and the whole highway $P$ we just need to do one aggregate computation in $T(P').$ To implement the computation, we need to explain how given an edge $x$ we know if it covers the whole highway $P$ and if both its endpoints are outside $F_P.$ The second is immediate. For the first, note that $x$ covers the whole highway $P$ iff it covers the highest and lowest edges in the highway, since in this case it must cover the whole path between them which is the highway. As the highest and lowest edges in the highway are known to all vertices by Claim \ref{claim_high_low}, and since we can use LCA labels of edges to learn if $x$ covers some edge by Claim \ref{claim_LCA_labels}, we can compute the aggregate function. At the end of the computation, each vertex $v' \in P'$ knows the total cost of edges in its subtree that cover the whole highway $P$ and have both endpoints outside $F_P$, this is exactly $\extcov(e',P)$ for the edge $\{v',p(v')\}.$
This requires one aggregate computation. To compute the values $\extcov(e',P)$ for all highways we run $O(\nfrag)$ such computations, which results in $O(\dfrag + \nfrag)$ time as we pipeline the computations. As the whole computation was inside $T(P')$, we can run in parallel in disjoint subtrees, and in particular we can run the computations in parallel for all non-highways $P'$ in the same layer.
\end{proof}

\subsection{$P'$ and $P$ are highways}\label{sec:app_short_paths_hh}

\covHighway*

\begin{proof}
First, if $x = \{u,v\}$ is an edge that covers $e \in P$, then the unique tree path between $u$ and $v$ contains $e$. As the the only vertices in $F_{P}$ that are connected by an edge to a vertex outside $F_{P}$ are the root $r_P$ and unique descendant $d_P$ of the fragment, any path in the tree that contains $e$ and have both endpoints outside the fragment, must also contain the whole highway between $r_P$ and $d_P$ (as otherwise, at least one of the endpoints would be inside the fragment). Hence, if both endpoints of $x$ are outside $F_P$, it covers the whole highway $P$. 

Now any edge $x$ that covers $e'$ and $e$, has the following options.
\begin{enumerate}
\item $x$ has both endpoints outside $F_{P'}$ and $F_P$, in this case it follows that $x$ must cover the whole highways $P'$ and $P$, which is exactly counted by $\extcov(P',P).$
\item $x$ has one endpoint in $F_{P'}$ but two endpoints outside $F_P$, in this case $x$ must cover the whole highway $P$, which is exactly counted in $\extcov(e', P).$
\item $x$ has one endpoint in $F_{P}$ but two endpoints outside $F_{P'}$, in this case $x$ must cover the whole highway $P'$, which is exactly counted in $\extcov(e, P').$
\item $x$ has two endpoints in $F_{P'} \cup F_{P}$, in this case it must be the case that $x$ has one endpoint in $F_{P'}$ and one endpoint in $F_{P}$, as otherwise the unique tree path defined by $x$ is contained entirely in one of the fragments and cannot cover both edges $e' \in P', e \in P$. This is counted by $\cov_{F}(e',e)$.
\end{enumerate} 
\end{proof}

\paragraph{Computing $\extcov(P,P')$.}

\claimCovH*

\begin{proof}
Given some edge $x = \{u,v\}$, it knows if its endpoints are outside $F_P \cup F_{P'}$. In addition, it can learn if it covers the highway $P$ as follows. An edge $x$ covers the whole highway $P$ iff it covers both the highest and lowest edges in $P$: $e_1, e_2$, because in this case it follows that the unique tree path between $u$ and $v$ contains the unique tree path between $e_1$ and $e_2$ which is the highway $P$. Using the LCA labels of edges and Claim \ref{claim_LCA_labels}, $x$ can learn if it covers $e_1,e_2$. Similarly, $x$ can learn if it covers the whole highway $P'$, and then deduce if it covers both $P$ and $P'$. Now to compute $\extcov(P,P')$ we sum the cost of all edges that cover both $P$ and $P'$ and have both endpoints outside $F_P \cup F_{P'}$, this is a sum of values that can be computed using an aggregate computation in a BFS tree. For each edge $x$ outside $F_P \cup F_{P'}$ that covers both $P$ and $P'$, one of its endpoints represents $x$ in the computation, and adds its cost to the sum computed. To let all vertices learn the value $\extcov(P,P')$ we use a broadcast in the BFS tree. If we have $k$ different pairs, we pipeline the computations to get a complexity of $O(D+k)$.    
\end{proof}

\paragraph{Computing $\cov_F(e',e)$.}

\ClaimCovHF*

\begin{proof}
If we look at the highways $P$ and $P'$, there are several cases:
\begin{enumerate}
\item $P$ and $P'$ are orthogonal. \label{case_or}
\item $P$ and $P'$ are in the same root to leaf path with $P'$ below $P$. \label{case_below}
\item $P$ and $P'$ are in the same root to leaf path with $P$ below $P'$. \label{case_above}
\end{enumerate}
Note that all vertices know the complete structure of the skeleton tree, hence they can distinguish between the cases.
We start with the first two cases.

\emph{Cases \ref{case_or} and \ref{case_below}}. In these cases, we show that any edge $x$ that covers $e \in P$ and some edge $e' = \{v',p(v')\} \in P'$ and has endpoint in $F_{P'}$ has an endpoint in the subtree of $v'$ in $F_{P'}$. 
This is justified as follows. We know that any edge $x$ that covers $e'$ has an endpoint $u$ in the subtree rooted at $v'$ by Claim \ref{claim_subtree}. Assume to the contrary that $u \not \in F_{P'}$, but rather the second endpoint of $x$, $u'$, is in $F_{P'}$. Now the path between $u$ and $u'$ has a part in $F_{P'}$, and a part below it, as $u$ is in the subtree rooted at $v' \in P'$. This path can only cover edges in $F_{P'}$ or below it, but we are in the case that $P$ is orthogonal to $P'$ or above it, hence it does not have any edge in $F_{P'}$ or below it, which means that $x$ cannot cover $e \in P$, a contradiction.

Hence, we know that any edge that covers $e \in P$ and $e' \in P'$ and also has an endpoint in $F_{P'}$ has an endpoint in $T(P') \cap F_{P'}$.
Hence, given an edge $e \in P$, computing the values $\cov_{F}(e',e)$ for all edges $e' \in P'$ requires one aggregate computation in $F_{P'}$, in which every vertex $v' \in F_{P'}$ learns the total cost of edges adjacent to its subtree in $F_{P'}$ that cover also $e$, and have the second endpoint in $F_P$. As explained in the proof of Claim \ref{claim_non_highway} this is an aggregate computation in the subtree, and it can be computed as for each non-tree edge $x$ we can deduce it it covers $e$ using its LCA labels (see Claim \ref{claim_LCA_labels}). At the end of the computation, the vertex $v'$ such that $e' = \{v',p(v')\}$ knows exactly the value $\cov_{F} (e',e).$

\emph{Case \ref{case_above}.} In this case $P$ and $P'$ are in the same root to leaf path with $P$ below $P'$. 
Let $e' \in P', e \in P$, we start by analysing the structure of edges that cover $e'$ and $e$ in this case. Since $P$ is below $P'$, any edge $x$ that covers $e'$ and $e$ covers the whole path between $e'$ and $e$. In particular, it covers the whole path between $e' \in P'$ to the unique descendant $d_{P'}$ of the fragment $F_{P'}$. Hence, we need to sum the costs of edges that cover $e'$ and $e$, have endpoints in $F_{P'}$ and $F_P$, and also cover the whole path between $e'$ and $d_P$.
To compute the cost of these edges, it would be helpful to change the orientation in the fragment $F_{P'}$, such that now $d_{P'}$ is the \emph{root}. We write $e' = \{v,d(v)\}$ where $d(v)$ is the vertex closer to $d_{P'}$ (this is the reverse orientation compared to the previous cases). Now, given an edge $e \in P$, we want to compute the total cost of edges adjacent to the subtree of $v$ in $F_{P'}$ that cover $e,e'$ and have the second endpoint in $F_P$, where the subtree is with respect to the new orientation (having $d_{P'}$ as the new root). This gives exactly $\cov_{F} (e',e).$ Again, this is an aggregate computation in the fragment $F_{P'}.$

Note that the aggregate computation in Case \ref{case_above} is in a different direction than the aggregate computations in Cases \ref{case_or} and \ref{case_below}. However, since all vertices know the structure of the skeleton tree, they all know in which case we are, and can change the orientation accordingly, which only requires changing the orientation for highway edges, as discussed in Section \ref{sec:preliminaries_paths}.
\end{proof}

\paragraph{Computing $\extcov(e,P)$.}

\ClaimCoveP*

\begin{proof}
First, we fix two highways $P',P$, and show how all edges $e' \in P'$ compute $\extcov(e',P).$ 
The edges that cover $e'$ and the whole highway $P$ are exactly the edges that cover $e'$ and the highest and lowest edges in $P$: $e_1,e_2.$
To compute $\extcov(e',P)$, which is the sum of all such edges that have one endpoint in $F_{P'}$ and one endpoint outside $F_{P'} \cup F_{P}$ we use an aggregate computation in $F_{P'}$. This aggregate computation is very similar to the computation done in the proof of Claim \ref{claim_covl}, with the difference that now instead of summing the cost of edges that cover $e'$ and a specific edge $e$, we sum the cost of edges that cover $e'$ and $e_1,e_2$, and also make sure that the second endpoint is not in $F_{P'} \cup F_{P}$, the computation can be done in the same manner. For each highway we have one aggregate computation to compute $\extcov(e',P)$ for all edges $e' \in P'.$ To do so for all highways $P$, we need $\nfrag$ such computations. Since the whole computation was done inside $F_{P'}$, we can do the same computation in all fragments simultaneously, the overall complexity is $O(\dfrag + \nfrag)$.
\end{proof}

\subsection{Both edges in the same highway}\label{sec:both-edge-highway}


Here we also cover the case when both edges defining the cut are in the same highway $P$ inside a fragment. Let $e_1$ and $e_2$ be the two edges in a highway $P$ inside a fragment $F$ where $e_1$ is closer to the root than $e_2$. Let us also denote the top edge of $P$ (closest to the root of $G$) inside $F$ to be $e_r$ and the bottom edge of $P$ (farthest from the root of $G$) inside $F$ to be $e_d$. As before, we look at the $\cov(e_1, e_2)$. We see that $\cov(e_1,e_2)$ can be broken up into the following parts (See Figure \ref{fig:both-edge-high})\footnote{This is a slight abuse of notation of $\cov(\cdot)$ but is clear from the context.}:
\begin{itemize}
    \item The cost of edges that cover $e_1, e_2$ and have both end-points inside $F_P$: $\cov_F(e_1,e_2)$,
    
    \item The cost of edges with one end point in $F_P$ and the other end-point is a descendant of $e_d$ outside $F_P$ (i.e., occurs below $e_d$): $\extcov(e_1, e_d)$,
    
    \item The cost of edges with one end point in $F_P$ and the other end-point is an ancestor of $e_r$ outside $F_P$ (i.e., occurs above $e_r$): $\extcov(e_2, e_r)$, and
    
    \item The cost of edges that cover $P$ and both end-points outside $F_P$: $\extcov(P)$.
\end{itemize}

\begin{figure}[h!]
    \centering
    \includegraphics[scale =0.6]{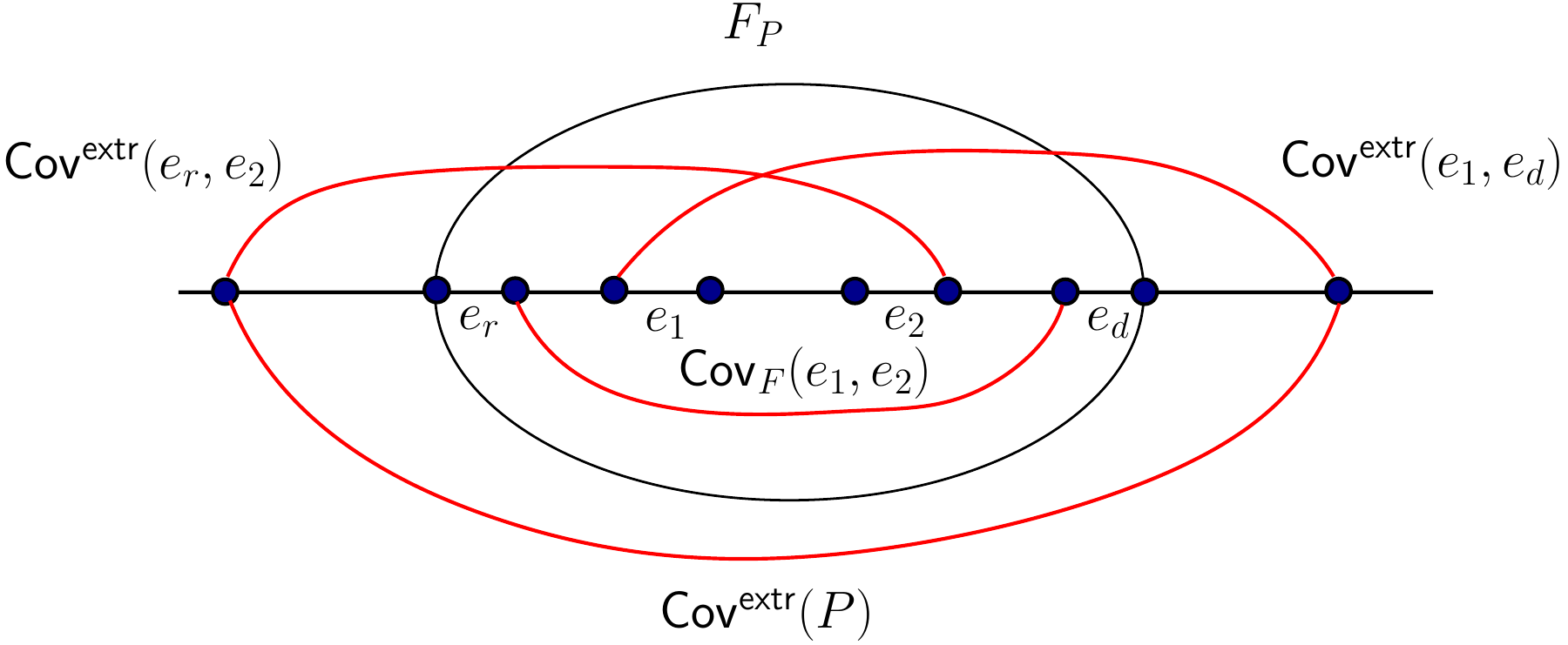}
    \caption{\small Example of both cut edges $e_1$ and $e_2$ in the same fragment highway in the fragment $F_P$.}
    \label{fig:both-edge-high}
\end{figure}


We make the following observation.

\begin{observation}
Let $e_1$ and $e_2$ be two edges on a fragment highway $P$ inside a fragment $F$. Then,
\begin{align*}
    \cov(e_1,e_2) = \cov_F(e_1, e_2) + \extcov(e_1, e_d) + \extcov(e_r, e_2) + \extcov(P).
\end{align*}
\end{observation}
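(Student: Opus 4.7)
The proof mirrors the logic of Claim \ref{claim_cov_highways}, but specialized to the case where both cut edges lie on the same fragment highway $P$. The plan is to partition $\ECov(e_1,e_2)$, the set of non-tree edges covering both $e_1$ and $e_2$, into four disjoint groups based on how their endpoints relate to $F_P$, and to verify that the summed weights of these groups match the four terms on the right-hand side. The key structural input is that (i) any $x=\{u,v\}$ that covers both $e_1$ and $e_2$ must have the tree path from $u$ to $v$ contain the entire highway segment between $e_1$ and $e_2$, and (ii) by the fragment decomposition guarantee, $r_P$ and $d_P$ are the only vertices of $F_P$ adjacent in $T$ to vertices outside $F_P$, so any portion of the $u \leadsto v$ path that crosses the boundary of $F_P$ must do so at $r_P$ (going up) or $d_P$ (going down).

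With this in hand, I would split each $x \in \ECov(e_1,e_2)$ into one of four cases according to how many of $\{u,v\}$ lie in $F_P$. If both endpoints are in $F_P$, then $x$ is counted exactly by $\cov_F(e_1,e_2)$. If both endpoints lie outside $F_P$, the tree path $u \leadsto v$ must both enter and leave $F_P$, forcing it to pass through both $r_P$ and $d_P$, and hence cover the entire highway $P$; such $x$ contribute to $\extcov(P)$. In the remaining mixed case, exactly one endpoint of $x$ lies in $F_P$ and the other lies outside, so the tree path exits $F_P$ through exactly one of $r_P$ or $d_P$: if it exits through $d_P$, the path covers $e_d$ in addition to $e_1$ and $e_2$, placing $x$ in $\extcov(e_1, e_d)$; if it exits through $r_P$, the path covers $e_r$ in addition to $e_1$ and $e_2$, placing $x$ in $\extcov(e_r, e_2)$.

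To ensure the mixed case is exhaustive and cleanly split, I would invoke Claim \ref{claim_subtree} twice: because $x$ covers $e_2$, exactly one endpoint (say $u$) lies in the subtree below $e_2$, and because $x$ covers $e_1$, the other endpoint $v$ lies outside the subtree below $e_1$. Since $F_P$ is a connected subtree of $T$ whose only adjacencies to the exterior are through $r_P$ and $d_P$, the "outside" endpoint in a mixed case is unambiguously classified as either lying in the subtree of $T$ beneath $d_P$ or otherwise reached by leaving $F_P$ through $r_P$, matching the definitions of $\extcov(e_1,e_d)$ and $\extcov(e_r,e_2)$ respectively. Adding the four weight sums then yields the stated identity.

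The main obstacle I anticipate is verifying carefully that the "one-in, one-out" edges split cleanly between the $d_P$-exit and $r_P$-exit subcases with no overlap and no omission. This reduces to the fragment decomposition property that every tree path leaving $F_P$ uses exactly one of the two boundary vertices $r_P$ or $d_P$; combined with the covering requirement that the path cross both $e_1$ and $e_2$ (forcing the "inside" endpoint below $e_2$ and the "outside" endpoint not below $e_1$), this pins down the exit portal uniquely from the side of the outside endpoint.
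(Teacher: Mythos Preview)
Your proposal is correct and follows exactly the case analysis the paper has in mind; indeed, the paper states this as an observation without proof, relying on the reader to carry out precisely the four-way split by endpoint location that you describe, mirroring Claim~\ref{claim_cov_highways}. Your use of Claim~\ref{claim_subtree} to pin down that one endpoint lies below $e_2$ and the other lies outside the subtree below $e_1$, together with the fragment boundary property that any tree path leaving $F_P$ does so through $r_P$ or $d_P$, is the right way to make the split between $\extcov(e_1,e_d)$ and $\extcov(e_r,e_2)$ exhaustive and disjoint.
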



\paragraph{Computing $\cov_F(e_1,e_2)$.} 
\begin{claim}
Let $P$ be the fragment highway of a fragment $F_P$. Using two aggregate computation in time $O(\dfrag)$ all edges $e \in P$ can compute the value $\cov_F(e, e')$ where $e'$ is any other edge in $P$. Moreover, this can be done in all fragments parallely.
\end{claim}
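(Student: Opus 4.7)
The plan is to reduce the claim to two aggregate computations inside $F_P$, one handling pairs $(e, e')$ where $e$ lies strictly below $e'$ on the highway and one handling pairs where $e$ lies strictly above $e'$. First I would fix an edge $e' = \{v', p(v')\}$ on $P$ and have all vertices of $F_P$ learn the identity of $v'$ via a single broadcast inside $F_P$, which costs $O(\dfrag)$ rounds by Claim \ref{claim_pipeline_fragment}. Every vertex of $F_P$ can then use its LCA label (Claim \ref{claim_LCA_labels}) to classify itself as being in $T_{v'} \cap F_P$ or in $F_P \setminus T_{v'}$. Similarly, for every non-tree edge $x = \{u,w\}$ with both endpoints in $F_P$, the endpoints can check locally whether $x$ is an \emph{$e'$-covering interior edge}, i.e.\ whether exactly one of $u,w$ lies in $T_{v'}$; if so, call the endpoint inside $T_{v'}$ the low-end of $x$ and the other the high-end.

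For the first case (edges $e = \{v, p(v)\}$ strictly below $e'$, so $v \in T_{v'}$), observe that $\cov_F(e, e')$ is precisely the total weight of $e'$-covering interior edges whose low-end lies in $T_v \cap F_P$: such an edge necessarily has a high-end above $v'$, hence above $v$, and so covers both $e$ and $e'$. One bottom-up aggregate inside $T_{v'} \cap F_P$, where each low-end contributes the weight of its edge, lets every highway vertex $v \in T_{v'}$ learn the sum over $T_v \cap F_P$ and therefore the value $\cov_F(\{v,p(v)\}, e')$. For the second case (edges $e = \{v,p(v)\}$ strictly above $e'$, so $v' \in T_v$), an $e'$-covering interior edge additionally covers $e$ iff its high-end lies in $F_P \setminus T_v$. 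Letting $W$ denote the total weight of all $e'$-covering interior edges (obtained as a byproduct of the first aggregate, then broadcast), a single bottom-up aggregate in $F_P$ in which each high-end contributes the weight of its edge computes, at every vertex $v$, the subtree sum $Q(v)$ of high-end contributions within $T_v \cap F_P$. For $v$ strictly above $v'$ the edge $\{v,p(v)\}$ then recovers $\cov_F(e, e') = W - Q(v)$.

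Both aggregates are standard convergecasts inside $F_P$ and, together with the auxiliary broadcast of $v'$ and of $W$, run in $O(\dfrag)$ rounds by Claim \ref{claim_pipeline_fragment}. Because the whole procedure is confined to $F_P$ and uses only tree edges of $F_P$, it can be executed in parallel across all fragments without contention, giving the parallelism part of the claim.

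The main subtlety I expect is making the two case-analyses combine cleanly without ambiguity in the endpoint classification: each non-tree edge in $F_P$ must have its two endpoints agree on who is the low-end and who is the high-end. This reduces to two symmetric LCA queries (one at each endpoint against $v'$), both of which are well-defined once $v'$ has been broadcast. Everything else is routine aggregation, handled by the pipelining machinery of Claim \ref{claim_pipeline_fragment}.
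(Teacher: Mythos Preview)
Your per-$e'$ procedure is correct and close in spirit to the paper's: both of you fix one highway edge and then use one convergecast for partners lying below it and one for partners lying above. The paper handles the ``above'' case with a reverse-direction aggregate (rooting $F_P$ at $d_P$ as in Claim~\ref{claim_pipeline_fragment}), whereas you use a complement trick ($W - Q(v)$); both are valid, and yours avoids the re-orientation at the cost of one extra broadcast. (Incidentally, the broadcast of $v'$ is unnecessary: by Lemma~\ref{lemma:strong_decomp_construct} every vertex of $F_P$ already knows the entire highway.)

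There is, however, a real gap. The claim requires that every $e\in P$ learn $\cov_F(e,e')$ for \emph{every} other $e'\in P$; this is exactly how it is consumed in Lemma~\ref{lem:both-edge-short-high}. Your write-up fixes a single $e'$ and performs a constant number of tree primitives, which yields the values for that one $e'$ only. Your opening sentence (``reduce the claim to two aggregate computations'') overstates what the construction delivers: two aggregates return $O(\log n)$ bits per vertex and cannot encode the full two-dimensional array $\{\cov_F(e,e')\}_{e,e'\in P}$. To finish, you must iterate your procedure over all $O(\dfrag)$ choices of $e'$ and argue that the resulting $O(\dfrag)$ aggregates and broadcasts pipeline to $O(\dfrag)$ rounds via Claim~\ref{claim_pipeline_fragment}. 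The paper does precisely this (``this can be done for all $e_1$ in a pipe-lined fashion \ldots\ this takes $O(\dfrag)$ rounds in total''); without that step your bound is $O(\dfrag)$ per $e'$ and hence $O(\dfrag^2)$ overall.
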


\begin{proof}
This is similar to the proof of Claim \ref{claim_non_highway_fragment}, but we have to perform both forward and reverse aggregate computation inside the fragment $F$ (as in Claim \ref{claim_pipeline_fragment}). Fix $e_1 \in P$. For any $e \in P$ below $e_1$, $e$ can find out the total weight of the edges that covers $e_1$ and $e$ and have both end-points in $F$ by one forward aggregate computation as follows: Any edge $x$ which has one end point in $F$ on a descendant vertex of $e$ can find out whether it covers $e_1$ and have other end-point inside $F$ by an LCA comparison (as in Claim \ref{claim_LCA_labels}). Hence, by a forward aggregate computation inside $F$, starting from the leaf nodes $T(F_P)$ (i.e., $T$ restricted to $F_P$) and aggregating the weight of such non-tree edges, every $e$ below $e_1$ can learn $\cov_F(e_1, e)$. The aggregate computation takes time $O(\dfrag)$. Moreover, this can be done for all $e_1$ in a pipe-lined fashion. As there are $O(\dfrag)$ many edges in $P$, this takes $O(\dfrag)$-rounds in total. For edges $e \in P$ above $e_1$, we do the exact same aggregate computation but in reverse direction on $T(F_P)$ (See Claim \ref{claim_pipeline_fragment}). This also takes $O(\dfrag)$ times. Moreover, as these are aggregate computations inside $F$, this can be done simultaneously in all fragments.
\end{proof}

\paragraph{Computing $\extcov(e, e_d)$ and $ \extcov(e_r, e)$.}

\begin{claim}
Let $P$ be the fragment highway of a fragment $F_P$. Also, let all vertices in $F_P$ know the identity of $e_r$ and $e_d$. Then, by two aggregate computation in time $O(\dfrag)$, each edge $e\in P$ know the value of $\extcov(e, e_d)$ and $\extcov(e, e_r)$. Moreover, this can be done in different fragments parallelly.
\end{claim}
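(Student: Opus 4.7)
The plan is to reduce each of the two required quantities to a single pipelined aggregate inside $F_P$, exploiting the dual orientation afforded by Claim \ref{claim_pipeline_fragment}: one forward pass rooted at $r_P$ will deliver all values $\extcov(e, e_d)$, and one reverse pass rooted at $d_P$ will deliver all values $\extcov(e, e_r)$. Since both passes stay entirely inside $F_P$ and distinct fragments are edge-disjoint, parallelism across fragments will come for free.

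For the forward pass I would first have each vertex $u \in F_P$ locally compute a scalar $m^{\downarrow}(u)$, namely the total weight of non-tree edges incident to $u$ whose other endpoint lies outside $F_P$ and is a descendant of $d_F$. Both conditions are checkable from the LCA labels of Claim \ref{claim_LCA_labels}, using only the known identities of $e_d, r_F, d_F$, so this step requires no communication. I would set $m^{\downarrow}(d_F) = 0$ in order to suppress non-tree edges leaving $d_F$ strictly downward, which do not cover $e_d$. A single convergecast on $T(F_P)$ rooted at $r_P$ then delivers to every vertex $v$ the value $A^{\downarrow}(v) = \sum_{u \in T_v \cap F_P} m^{\downarrow}(u)$, pipelined with a broadcast that disseminates the total $M^{\downarrow} = A^{\downarrow}(r_P)$ to every vertex. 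For a highway edge $e = \{a,c\}$ with $c$ on the $d_F$-side of $a$, a non-tree edge $x = \{u,v\}$ contributes to $\extcov(e, e_d)$ precisely when $u \in F_P$, $v$ is a descendant of $d_F$ outside $F_P$, and $u \notin T_c \cap F_P$ --- the last condition being equivalent to $x$ covering $e$ given that it already covers $e_d$, which is ensured at the boundary by the $m^{\downarrow}(d_F) = 0$ convention. Hence
\[
\extcov(e, e_d) \;=\; M^{\downarrow} - A^{\downarrow}(c),
\]
which the endpoint $c$ reads off locally. The reverse pass is completely symmetric: reorient $F_P$ with $d_P$ as the root (Claim \ref{claim_pipeline_fragment}), define $m^{\uparrow}(u)$ as the total weight of non-tree edges from $u$ to ancestors of $r_F$ outside $F_P$ with $m^{\uparrow}(r_F) = 0$, aggregate to obtain $A^{\uparrow}$ and $M^{\uparrow}$, and read off $\extcov(e, e_r) = M^{\uparrow} - A^{\uparrow}(a)$ at the upper endpoint $a$ of $e$.

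The routine parts --- pipelining a convergecast with a broadcast in $O(\dfrag)$ rounds, running the two passes simultaneously in opposite orientations, and repeating the procedure in all fragments in parallel --- all follow from Claim \ref{claim_pipeline_fragment} together with the edge-disjointness of fragments. The main obstacle I anticipate is purely bookkeeping: correctly encoding the predicate ``$x$ covers both $e$ and the relevant boundary highway edge'' as a single subtree sum, and in particular handling the boundary vertices $d_F$ and $r_F$ so that only non-tree edges that truly exit $F_P$ through $e_d$ (respectively $e_r$) are counted. Once the local masses are set up with those boundary corrections, correctness reduces to the standard observation that a non-tree edge covers a highway edge $e$ exactly when its $F_P$-endpoint lies on the complementary side of $e$ inside $F_P$.
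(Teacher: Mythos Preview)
Your proposal is correct and uses essentially the same idea as the paper: one forward and one reverse aggregate inside $F_P$, with LCA labels identifying the relevant non-tree edges and edge-disjointness of fragments giving parallelism. The one difference is that the paper pairs the orientations the other way around---the forward pass directly yields $\extcov(e,e_r)$ and the reverse pass directly yields $\extcov(e,e_d)$---so each value is read off at the edge without your complement trick $M-A(\cdot)$ and the extra broadcast; with that pairing your boundary corrections $m^{\downarrow}(d_F)=0$ and $m^{\uparrow}(r_F)=0$ become unnecessary as well (and in fact they are already harmless in your version, since those contributions cancel in the difference).
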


\begin{proof}
 As in Claim \ref{claim_cov_eP_nh}, any edge $x$ which has one end point inside $F$ and covers $e \in P$ will know whether it covers the last edge of $P$. Similarly any edge $y$ which has one end point inside $F$ and covers $e$ will know whether it covers the first edge of $P$. Hence, by an aggregate computation inside fragment $F$, each edge $e$ will know the value of $\extcov(e, e_r)$ by aggregating all edges which covers $e$ and have one end-point below $e$ and the other end-point outside $F$ covering $e_r$ (can be found using LCA labels, see Claim \ref{claim_LCA_labels}). Similarly, using another aggregate computation inside $F$ in reverse direction (see Claim \ref{claim_pipeline_fragment}), each edge $e$ will know the value of $\extcov(e, e_d)$. By Claim \ref{claim_pipeline_fragment}, this can be done in time $O(\dfrag)$. As these are aggregate computations inside $F$, this can be done simultaneously in all fragments.
 \end{proof}

\paragraph{Computing $\extcov(P)$.}
\begin{claim}
Let $P$ be the fragment highway of $F_P$. Using one aggregate and one broadcast computation over a BFS tree of $G$, every vertex will know the value of $\extcov(P)$. For $k$ many fragments, this can be done in time $O(k + D)$.
\end{claim}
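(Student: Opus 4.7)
The plan is to compute $\extcov(P)$ by a single aggregate-then-broadcast over a BFS tree of $G$, and then pipeline $k$ such computations to get the $O(D+k)$ bound. Let $e_r$ and $e_d$ denote the top and bottom edges of the highway $P$ inside $F_P$. By Claim \ref{claim_high_low}, all vertices already know the identities (and hence LCA labels) of $e_r$ and $e_d$ for every highway, so the premise that the endpoints of $P$ are globally known is already in place and requires no additional rounds.

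First, I would have each non-tree edge $x = \{u,v\}$ locally decide whether it contributes to $\extcov(P)$. Using Claim \ref{claim_LCA_labels}, the endpoint of $x$ that is responsible for it (say $u$, breaking ties by vertex id so that the edge is counted exactly once) can test whether $x$ covers both $e_r$ and $e_d$, which is equivalent to covering the whole highway $P$ since $P$ is the unique tree path between them. At the same time, $u$ and $v$ each know their own fragment memberships from Lemma \ref{lemma:strong_decomp_construct}, so $u$ can check whether both endpoints of $x$ lie outside $F_P$ (each endpoint's fragment id is $O(\log n)$ bits, and $v$'s fragment id was sent to $u$ as part of the edge-information routine of Theorem \ref{thm:-info_theorem_tree_edge}). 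If both tests pass, $u$ contributes $w(x)$ to the sum for $P$; otherwise it contributes $0$.

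Next, I would do a standard aggregate computation on a BFS tree to sum these contributions, and then a broadcast so every vertex learns $\extcov(P)$. Each of these operations involves $O(\log n)$ bits and takes $O(D)$ rounds, giving the single-fragment bound stated in the claim.

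For $k$ fragments the only thing to verify is pipelining. Since each edge $x$ can internally precompute its contribution for every highway (there are $\nfrag$ of them, and for each one $x$ performs $O(1)$ LCA/fragment checks) before any communication takes place, the $k$ aggregate sums can be started at successive leaf-to-root rounds and the $k$ broadcasts interleaved downward, in the standard way. This yields $O(D+k)$ rounds in total. The only mild obstacle is the bookkeeping for exactly-once counting of each non-tree edge, which is handled by fixing a global tie-break (e.g., the smaller-id endpoint is the contributor); no deeper structural argument is needed beyond what the preceding section already supplies.
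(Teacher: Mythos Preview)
Your proof is correct and essentially identical to the paper's: both have each non-tree edge locally test (via LCA labels against $e_r,e_d$ and fragment membership) whether it contributes to $\extcov(P)$, then perform one aggregate plus one broadcast on a BFS tree, pipelining the $k$ computations for $O(D+k)$ rounds. You supply a bit more detail (the tie-break for exactly-once counting and the provenance of the neighbor's fragment id), but the argument is the same.
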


\begin{proof}
This can be done similar to Claim \ref{clm:two-short-highway}. Every edge knows whether it covers both $e_r$ and $e_d$ or not, and also whether both of its endpoints outside $F_P$ or not. Now, by using one aggregate computation over a BFS tree of $G$, we can compute the total weight of such edges, and then by using the same BFS tree, this information can be broadcasted to all vertices. This takes $O(D)$ rounds of communication. For different fragments, this computation can be pipelined, and hence can be performed in $O(k + D)$ time.
\end{proof}

\paragraph{An algorithm for a highway path.} We now explain how we can compute $\cut(e, e')$ when both $e$ and $e'$ belong to a  fragment highway path $P$ of a fragment $F$.

\begin{lemma} \label{lem:both-edge-short-high}
Let $P$ be a fragment highway of a fragment $F$. Also, let every edge $e$ know the value of (i) $\cov(e)$, (ii) $\cov_F(e, e')$ for all other edges $e' \in P$, (iii) $\extcov(e, e_r)$ and $\extcov(e, e_d)$, and (iv) $\extcov(P)$. Then, in $O(\dfrag)$ rounds, each pair of edges $(e, e')$ in $P$ can compute the value of $\cut(e, e')$. Moreover, this can be done parallelly in all fragments $F$.
\end{lemma}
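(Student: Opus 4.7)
The plan is to reduce the statement to a single pipelined upcast--broadcast along the highway $P$ inside the fragment $F$. Observe first that by Claim \ref{claim_cover} together with the decomposition of $\cov(e,e')$ stated just before the lemma, whenever $e_1, e_2 \in P$ with $e_1$ closer to the root than $e_2$, we have
\begin{equation*}
\cut(e_1,e_2) \;=\; \cov(e_1) + \cov(e_2) - 2\bigl(\cov_F(e_1,e_2) + \extcov(e_1, e_d) + \extcov(e_r, e_2) + \extcov(P)\bigr).
\end{equation*}
Thus, in order for every edge to evaluate $\cut(e_1,e_2)$ for every pair $(e_1,e_2)$ it belongs to, it suffices that every edge $e \in P$ learns the $O(\log n)$-bit tuple $\tau(e') = (e',\cov(e'),\extcov(e',e_r),\extcov(e',e_d))$ for every other edge $e' \in P$; the remaining quantities $\cov_F(e,e')$ and $\extcov(P)$ are already known by hypotheses (ii) and (iv).

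To distribute these tuples along $P$, I would perform one aggregate computation from $e_d$ up to $e_r$ along the highway, in which each edge adds its own tuple to the stream of tuples moving toward $e_r$, and then one broadcast in the reverse direction from $e_r$ back down to $e_d$, in which every tuple is forwarded to every edge of $P$. By Claim~\ref{claim_pipeline_fragment}, these can be pipelined inside $F$, and since $|P| = O(\dfrag)$ tuples each of $O(\log n)$ bits are moved, the two passes take $O(\dfrag + |P|) = O(\dfrag)$ rounds. Because every message is routed exclusively on the highway edges inside $F$, the same procedure can be executed in parallel across all fragments without congestion, matching the last sentence of the lemma statement.

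Once the pipelined exchange terminates, every edge $e \in P$ holds the full collection $\{\tau(e')\}_{e' \in P}$, together with $\{\cov_F(e,e')\}_{e' \in P}$ and $\extcov(P)$, so it can locally evaluate the displayed formula for $\cut(e,e')$ for every other edge $e' \in P$ without any further communication. There is no real obstacle here beyond checking that the data needed is exactly what has been collected: the only point to be a bit careful about is the asymmetric roles of $e_r$ and $e_d$ in the formula, i.e.\ that for a pair $(e_1,e_2)$ with $e_1$ above $e_2$ one uses $\extcov(e_1,e_d)$ from $\tau(e_1)$ and $\extcov(e_r,e_2) = \extcov(e_2,e_r)$ from $\tau(e_2)$, and the orientation of the pair is determined by the identities in the tuples, which are included in $\tau(\cdot)$.
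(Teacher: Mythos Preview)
Your proposal is correct and follows essentially the same approach as the paper: both arguments observe that, given the formula for $\cut(e_1,e_2)$, the only missing pieces at each edge are the tuples $(\cov(e'),\extcov(e',e_r),\extcov(e',e_d))$ of the other highway edges, and obtain them via a pipelined collect-and-broadcast inside the fragment in $O(\dfrag)$ rounds. Your write-up is in fact slightly more careful than the paper's (you explicitly include $\cov(e')$ in the broadcast tuple and spell out the asymmetric use of $e_r$ versus $e_d$), but the underlying idea is identical.
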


\begin{proof}
Fix a pair of such edges $(e, e')$ where $e$ is above $e'$. For $e$ and $e'$ to know the value of $\cut(e, e')$, $e$ needs to know the value of $\extcov(e', e_r)$ which is with $e'$ and $e'$ needs to know the value of $\extcov(e, e_d)$ which is with $e$. To this end, each edge $e$ broadcasts the values $\extcov(e, e_r)$ and $\extcov(e, e_d)$ inside the fragment $F$. This is, in total, $\tO(\dfrag)$ bits of information broadcasted over a BFS tree of depth $O(\dfrag)$ (because of the guarantee that the diameter of $F$ is $O(\dfrag)$), hence can be done in time $\tO(\dfrag)$ in pipelined fashion. As the broadcast is happening inside $F$, this can be done parallelly for all fragments $F$.
\end{proof}

\section{Missing proofs from Section \ref{sec:2_respecting}} \label{sec:8_appendix}

\subsection{Simple cases with non-highways} \label{sec:app_8_nh}

In this section, we prove the following.

\nhtwoedges*

We first show that it is efficient to broadcast information inside all non-highway paths in the same layer. 

\begin{claim} \label{claim_broadcast}
Fix a layer $j$, and assume that for each non-highway path $P$ in layer $j$ there are $k$ pieces of information of size $O(\log{n})$ where initially each one of them is known by some vertex in $T(P)$. In $O(\dfrag + k)$ rounds, all the vertices in $T(P)$ can learn all the $k$ pieces of information. In addition, this computation can be done in all non-highway paths $P$ in layer $j$ in parallel.
\end{claim}

\begin{proof}
Note that since $P$ is a non-highway path, the entire subtree $T(P)$ is in the fragment of $P$ and has diameter $\dfrag$.
To solve the task, we use pipelined upcast and broadcast in the subtree $T(P)$. First, we collect all $k$ pieces of information in the root $r_P$, and then broadcast them to the whole tree $T(P)$, using pipelining this takes $O(\dfrag + k)$ rounds.  Additionally, as the different trees $T(P)$ of paths $P$ in layer $j$ are edge disjoint by Observation \ref{obv:path_disjoint}, the computation can be done in parallel for all non-highway paths in layer $j$.  
\end{proof}

Additionally we make sure that all vertices in a fragment $F$ know all values $\{e,\cov(e)\}_{e \in F}$, this can be done in $O(\sfrag)$ time by broadcast in the fragment $F$. Note that the value $\cov(e)$ is initially known by $e$ from Claim \ref{claim_learn_cov}.

\begin{claim} \label{claim_Tp_info}
In $O(\sfrag)$ time, for all fragments $F$, all vertices in $F$ learn the values $\{e,\cov(e)\}_{e \in F}$.
\end{claim}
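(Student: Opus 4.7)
The plan is to fix a fragment $F$ and show that every vertex in $F$ can learn $\{e, \cov(e)\}_{e \in F}$ via one pipelined upcast followed by one pipelined broadcast inside $F$; then argue that the same procedure can be executed in parallel in all fragments since fragments are edge-disjoint.

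First, by Claim \ref{claim_learn_cov} each tree edge $e$ already knows $\cov(e)$, so for every $e \in F$ one of its endpoints holds the $O(\log n)$-bit pair $(e, \cov(e))$. There are $|F| = O(\sfrag)$ such pairs to be distributed inside $F$. Using Claim \ref{lemma:strong_decomp_construct}, the fragment $F$ has a designated root $r_F$ and the subtree structure required for aggregate/broadcast computations with diameter $\dfrag = O(\sqrt{n})$. The strategy is: (i) run a pipelined upcast (aggregate computation where each vertex simply forwards received messages together with its own) from the leaves of $F$ toward $r_F$, during which every pair $(e, \cov(e))$ travels to $r_F$; and (ii) run a pipelined broadcast from $r_F$ down to all vertices of $F$. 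By Claim \ref{claim_pipeline_fragment}, both a $c_1$-message broadcast and a $c_2$-message aggregate computation inside a fragment can be completed in $O(\dfrag + c_1 + c_2)$ rounds. Here $c_1 = c_2 = O(\sfrag)$, and since $\dfrag = \sfrag = O(\sqrt n)$, the total round complexity is $O(\sfrag)$.

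For parallelism across fragments: the fragment decomposition (Lemma \ref{lemma:strong_decomp_construct}) guarantees that fragments are edge-disjoint, with only the root and descendant vertices $r_F, d_F$ possibly shared among multiple fragments. Since the upcast and broadcast only use edges internal to each fragment, different fragments do not compete for edge bandwidth; a vertex shared between fragments simply participates in each fragment's computation over the corresponding (disjoint) set of incident edges. Hence all fragments can run the procedure simultaneously without increasing the round count beyond $O(\sfrag)$.

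I do not foresee a significant obstacle here — the main point is that everything is $O(\sqrt n)$-sized and purely local to a fragment, so a textbook pipelined convergecast-plus-broadcast suffices. The one minor subtlety to keep in mind is ensuring that the pipelining bounds from Claim \ref{claim_pipeline_fragment} apply cleanly even though the information originates from edges rather than from vertices, but this is immediate since each edge designates one of its endpoints as the initial holder of $(e, \cov(e))$.
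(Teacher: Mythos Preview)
Your proposal is correct and follows essentially the same approach as the paper: the paper's proof is a one-line remark that the values $\cov(e)$ are already known to each edge by Claim~\ref{claim_learn_cov}, so a pipelined upcast and broadcast inside each fragment (which is of size $O(\sfrag)$ and edge-disjoint from other fragments) suffices. Your write-up spells out the pipelining and parallelism details more carefully, but there is no substantive difference.
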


\remove{
We next make sure that all vertices in $T(P')$, learn the values $\{e,\cov(e)\}_{e \in P'}$.

\begin{claim} \label{claim_Tp_info}
In $\tilde{O}(\dfrag)$ time, for all non-highways $P'$, all vertices in $T(P')$ learn the values $\{e,\cov(e)\}_{e \in P'}$.
\end{claim}

\begin{proof}
We work in $O(\log n)$ iterations corresponding to the layers. In iteration $i$, we take care of non-highways $P'$ in layer $i$.
We use Claim \ref{claim_broadcast} to let all vertices in $T(P')$ learn the values $\{e', \cov(e')\}_{e' \in P'}$. As the diameter of $P'$ is $O(\dfrag)$, this is $O(\dfrag)$ information, hence this computation takes $O(\dfrag)$ time by Claim \ref{claim_broadcast}. In addition, this can be done for all paths $P'$ in layer $i$ simultaneously. Note that all edges $e$ know the values $\cov(e)$ and their layer due to Claim \ref{claim_learn_cov} and Lemma \ref{lemma:layer_decomposition_inside_fragment_nonhighway}, hence the information $\{e', \cov(e')\}$ is initially known by the edge $e' \in P'$, as required for using the claim. The time complexity for all iterations is $\tilde{O}(\dfrag)$.
\end{proof}
}

Claim \ref{claim_Tp_info} in particular makes sure that all vertices in $T(P')$ for a non-highway $P'$ in layer $i$ know the complete structure of $P'$, they just look at the unique tree path of layer $i$ with edges above them in the fragment, if exists.
We now prove Claim \ref{claim_nh_2respecting}.

\begin{proof}[Proof of Claim \ref{claim_nh_2respecting}]
We work in $O(\log{n})$ iterations according to the layers. We next fix an iteration $i$ and a non-highway path $P'$ in layer $i$, all the computations we discuss can be done in parallel for all non-highways in the same layer.

We first use Claim \ref{claim_non_highway_fragment} to compare all edges of $P'$ to all edges above them or orthogonal to them in the same fragment in $O(\sfrag)$ time. Note that eventually we apply this computation for all non-highways, which guarantees that for each pair of edges $e,e'$ in the same fragment where at least one of them is a non-highway, one of the edges learns $\cut(e,e')$. If they are orthogonal both learn the cut value, and if one is above the other, the lower edge learns it.

Next we use Claim \ref{claim_orthogonal_nh} to compare $P'$ to non-highways in other fragments that $P'$ is potentially interested in. First, we know that there are only $O(\log{n})$ fragments $F$ such that $P'$ is potentially interested in the fragment $F$, and all vertices in $T(P')$ know these fragments by Corollary \ref{corol:-layering_interested}. Additionally, if $P'$ is potentially interested in a non-highway $P$ in the fragment $F$ there must be an edge between $T^{\downarrow}(P')$ and $F$ from Claim \ref{claim_edge}. Using aggregate computations in $T(P')$ we can find such edges $f$ to all relevant fragments $F$ in $O(\dfrag + \log{n})$ time. 
We also know that all vertices in the fragment $F$ know all the values $\{e,\cov(e)\}_{e \in F}$. Hence, all the requirements of Claim \ref{claim_orthogonal_nh} are satisfied, and we can use it to let all edges $e' \in P'$ learn all the values $\cut(e',e)$ for all $e \in F$ in $O(\sfrag)$ time, for a fragment $F$ that has a non-highway that $P'$ is potentially interested in. Doing so for all fragments $F$ that $P'$ is potentially interested in takes $\tilde{O}(\sfrag)$ time.

All the claims we use work also if we work in all the paths of the same layer simultaneously. Since we have $O(\log{n})$ layers, we compute all cut values in $\tilde{O}(\sfrag)$ time. 
Finally, we can let all vertices in the graph learn the values $\{e,e',\cut(e,e')\}$ for the min 2-respecting cut we found in $O(D)$ time using minimum computation on a BFS tree. Since we compared all non-highways that are potentially interested in each other, as well as all pairs of edges in the same fragment where at least one edge is a non-highway, the claim follows. Note that in our computation we also computed some additional cuts (since we compare $P'$ to the complete fragment $F$ and not just to a specific non-highway), which can only decrease the value of the min 2-respecting cut we find.
\end{proof}

\remove{
A schematic description of the algorithm for non-highways appears in Algorithm \ref{alg:nh-nh}. \mtodo{this is out-dated, should see if we want a new description or just remove it}

\begin{center}
  \centering
  \begin{minipage}[H]{0.8\textwidth}
\begin{algorithm}[H]
\caption{Schematic algorithm for the non-highway case}\label{alg:nh-nh}
\begin{algorithmic}[1]
\Require From Lemma \ref{lemma:parse_paths} and Corollary \ref{corol:-layering_interested}, for each non-highway bough $P'$ in layer $i$ and for each $j \geq i$, all vertices in $T(P')$ know a set of orthogonal non-highway paths in layer $j$ from $\intpot{P'}$.
\Statex \hrulefill
\State For each non-highway $P'$ in layer $1 \leq i \leq L$, all vertices in $T(P')$ learn the values $\{e,\cov(e)\}_{e \in P'}$.
\Statex \Comment{See Claim \ref{claim_Tp_info}}
\For{every layer $1 \leq i \leq L$}
	\For{Every non-highway $P'$ in layer $i$ in parallel}
		\State Use Claim \ref{claim_three_cases} to find the values of 2-respecting cuts $\{e',e\}$ where: Both $e'$ and $e$ are in $P'$, or $e' \in P'$ and $e$ is in the fragment highway of the same fragment, or $e' \in P'$ and $e$ is in a non-highway above $e'$ in the same fragment.
		\State At the end of the computation, for each one of the above cuts at least one of the edges $\{e,e'\}$ knows the values $\{e',e,\cut(e',e)\}$.
		
	\EndFor
\EndFor


\For{every layer $1 \leq i \leq L$}
	\For{Every non-highway $P'$ in layer $i$ in parallel}
	\For{Every layer $j \geq i$}
	\For{Every path $P$ in layer $j$ that $P'$ is potentially interested in}
	\State Find an edge $f$ between $T(P'^{\downarrow})$ and $T(P^{\downarrow})$ that exists from Claim \ref{claim_edge}.
	\State Use $f$ to route the values $\{e,\cov(e)\}_{e \in P}$ from $T(P)$ to $T(P').$
	\State Let all edges $e' \in P'$ compute the values $\{\cut(e',e)\}_{e \in P}$.
	\Statex \Comment{See Claims \ref{claim_nh_orthogonal} and \ref{claim_orthogonal_nh}}.

\EndFor
\EndFor
\EndFor
\EndFor
 
\State Communicate over a BFS tree to let all vertices learn the values $\{e',e,\cut(e',e)\}$ for edges $e',e$ in the above cases that minimize $\cut(e',e).$
\end{algorithmic}
\end{algorithm}
\end{minipage}
\end{center}
}

\subsection{Proofs for non-highway highway case}

\clmshortnhlonghighedge*

\begin{proof}
We start by fixing a non-highway path $P'$ of layer $j$ and a long path composed of highways $P_H$ that $P'$ is potentially interested in, and describe the computation needed to compare $P'$ and $P_H$. Later, we explain how to do many such computations in parallel. Note that from Corollary \ref{corol:-_layer_highway}, and by iterating over all layers of the skeleton tree, one can deduce that each non-highway path $P'$ is only potentially interested in $O(\log^2{n})$ 
such paths $P_H$, that is because each edge keeps bough highway paths in its set of potentially interested paths (Namely, $P'$ knows the lowest fragment of each of these paths). Also, from Lemma \ref{lemma:parse_paths}, all vertices in $T(P')$ know exactly the identity of all such paths $P_H$. Also, from the structure of the decomposition, any path $P_H$ composed of highways, that does not contain the highway in the fragment of $P'$ is either completely orthogonal to $P'$ or completely above $P'$.

Let $P_1,...,P_k$ be the different highways in $P_H$ going from the lowest to highest in the tree.
We use Lemma \ref{lemma_partitioning} to break the edges of $P'$ to subsets $E'_1,...,E'_k$, such that it is enough to solve the problems defined by the pairs $(P_i, E'_i)$, i.e., compare only the edges $E'_i$ to $P_i$ (see Lemma \ref{lemma_partitioning} for the exact statement). To do so, all vertices should know the values $\{e,\cov(e)\}$, for all edges $e$ that are highest or lowest in some highway, this can be obtained in $O(D + \nfrag)$ time using Claim \ref{claim_high_low}. Applying Lemma \ref{lemma_partitioning} takes $O(\dfrag + k)$ time, and the computation can be done in parallel for different non-highways in layer $i$. The Lemma guarantees that $\sum_{i=1}^k |E'_i| = O(\dfrag + k).$

Now we would like to compare $E'_i$ to $P_i$ for all $1 \leq i \leq k$. If there is no edge between $T(P'^{\downarrow})$ and $P_i$, the minimum 2-respecting cut with one edge in $P'$ and one edge in $P_i$ was already computed in Claim \ref{claim_cut_nh_h_no_edge}, so we only need to take care of highways $P_i$ such that there is an edge $f$ between $T(P'^{\downarrow})$ and $F_{P_i}$. Also, all vertices in $T(P')$ already know about the edge $f$ from Claim \ref{claim_learn_edge}. Moreover, we only need to consider highways $P_i$ that are potentially interested in the path $P'$. From Corollary \ref{corol:-highway_non_highway} and Lemma \ref{lemma:parse_paths}, each highway $P_i$ is only potentially interested in $O(\log n)$ non-highways $P'$ which are in different fragments, and all the vertices in the fragment of $P_i$ know the list of fragments that contain non-highway paths that $P_i$ is potentially interested in.  Hence, by communicating on the edge $f$ we can learn whether $P_i$ is potentially interested in the fragment of $P'$. Note that this is a different edge for different pairs $P',P_i$, as the subtrees $T(P'^{\downarrow})$ are disjoint, and also the different fragments are disjoint.

For the highways $P_i$ left after the above discussion, we use Lemma \ref{lemma_paths_nh_h} to compare $E'_i$ to $P_i$. For this, all vertices in $T(P')$ should learn the values $\{\cov(e'),\extcov(e',P_i)\}_{e' \in E'_i}$ for all $1 \leq i \leq k$. To do so, we first let all edges $e' \in P'$ learn the values $\extcov(e',P)$ for all highways $P$, this takes $O(\dfrag + \nfrag)$ time using Claim \ref{claim_cov_eP_nh}, and can be done in all non-highways in the same layer simultaneously. Then, the information $\{\cov(e'),\extcov(e',P_i)\}_{e' \in E'_i}$ is known to the edge $e'$. To let all vertices in $T(P')$ learn it we use Claim \ref{claim_broadcast}. As we have $\sum_{i=1}^k |E'_i| = O(\dfrag + k)$, this takes $O(\dfrag + k)$ time. From Lemma \ref{lemma_partitioning}, we also have that all vertices in $T(P')$ know the identity of all edges in the sets $E'_i$. We next discuss the information known in $P_i$. First, using upcast and broadcast in the fragment $F_{P_i}$ of $P_i$, we can make sure that all vertices in the fragment know all the values $\{\cov(e)\}_{e \in P_i}$, they can also learn the identity of the edge $f$ between $T(P'^{\downarrow})$ and $F_{P_i}$, as follows. As vertices in $T(P'^{\downarrow})$ know the identity of $f$, then $f$ has an endpoint that knows about it, and can inform the second endpoint in $F_{P_i}$. Then, the information can be broadcast in $F_{P_i}$. This is only done if $P_i$ is potentially interested in the fragment of $P'$, hence only for $O(\log n)$ different fragments
of non-highways $P'$ in layer $j$. This shows that vertices in $T(P')$ and $F_{P_i}$ have all the information needed for applying Lemma \ref{lemma_paths_nh_h}.

From Lemma \ref{lemma_paths_nh_h}, using $O(|E'_i|)$ aggregate and broadcast computations in $F_{P_i}$, each edge $e \in F_{P_i}$ would know the values $\cut(e',e)$ for all edges $e' \in E'_i$. As $P_i$ is potentially interested in non-highways in $O(\log n)$ 
different fragments, participating in all computations takes at most $\tilde{O}(\sfrag)$ time. 
Moreover, the computation was inside $F_{P_i}$, hence we can work in parallel in different fragments. This allows comparing $E'_i$ to $P_i$ for all $1 \leq i \leq k$ in parallel. This concludes the description of comparing $P'$ to $P_H$. To compare $P'$ to all $O(\log^2 n)$ long highway paths $P_H$ that $P'$ is potentially interested in, we repeat this computation $O(\log^2 n)$ times (the partitioning requires $O(\dfrag + k)$ time for each of these computations separately, the parts within different highways can be done in parallel). Also, as discussed throughout, this computation can be done for all non-highways in layer $j$ in parallel (this results in $\tilde{O}(\sfrag)$ time inside each highway as explained above). To take care of non-highways of all layers, we have $O(\log n)$ such iterations. The overall complexity is $\tilde{O}(D + \sfrag + \nfrag)$.

At the end of the computation, for each pair of a non-highway $P'$ and a highway $P$ that have an edge between them, and are also potentially interested in each other, we have a vertex that knows the values $e',e,\cut(e',e)$ for $e' \in P', e \in P$ that minimize $\cut(e',e).$ To learn the minimum value over all such pairs, we use broadcast and convergecast in a BFS tree which takes $O(D)$ time.
\end{proof}
\end{document}